\documentclass[12pt]{article}
\usepackage{amsfonts}
\usepackage{amssymb}
\usepackage{amsmath}
\usepackage{amsthm}
\usepackage{bbm}
\usepackage{xcolor}
\usepackage{natbib}
\usepackage{rotating}
\usepackage{booktabs}
\usepackage{rotating}
\usepackage{mathtools}
\usepackage[colorlinks=true, linkcolor=blue, citecolor=blue, urlcolor=blue]{hyperref}
\usepackage{graphicx} 
\usepackage{array}    
\usepackage{adjustbox}

\renewenvironment{proof}{\par\noindent}{\hfill$\square$}

\renewcommand{\baselinestretch}{1.3}
\newcommand{\bc}{\begin{center}}
\newcommand{\ec}{\end{center}}
\newcommand{\be}{\begin{equation}}
\newcommand{\ee}{\end{equation}}
\newcommand{\bea}{\begin{eqnarray}}
\newcommand{\eea}{\end{eqnarray}}
\newcommand{\bean}{\begin{eqnarray*}}
\newcommand{\eean}{\end{eqnarray*}}

\newtheorem{theorem}{Theorem}

\newtheorem{assumption}{Assumption}

\newtheorem{corollary}{Corollary}

\newtheorem{definition}{Definition}

\newtheorem{lemma}{Lemma}

\newtheorem{proposition}{Proposition}

\theoremstyle{definition} 
\newtheorem{remark}{Remark}

\newcounter{pkt}
\newenvironment{tlist}{%
  \begin{list}{(\roman{pkt})}{%
    \usecounter{pkt}%
    \setlength{\leftmargin}{0.5em}%
    \setlength{\itemindent}{0pt}%
    \setlength{\labelwidth}{0pt}%
    \setlength{\labelsep}{0.5em}
    \parskip0ex%
    \parsep0ex%
    \itemsep0ex%
    \topsep0ex%
  }%
}{%
  \end{list}%
}

\def\EE{\mathbb{E}}

\def\RR{\mathbb{R}}
\def\Proj{\operatorname{Proj}}
\newcommand{\argmin}{\operatorname*{argmin}}
\newcommand{\vech}{\operatorname*{vech}}
\newcommand{\diag}{\operatorname*{diag}}
\newcommand{\Cov}{\operatorname*{Cov}}
\newcommand{\Var}{\operatorname*{Var}}

\newcounter{saveeqn}

\begin{document}

\title{\vspace*{-0.5 in}  \textbf{How Well Are State-Dependent Local Projections Capturing Nonlinearities?}}

\author{Zhiheng You\thanks{
		\setlength{\baselineskip}{4mm}  Correspondence: Z. You: Department of Economics, University of Pennsylvania, Philadelphia, PA 19104-6297. Email: zhyou@sas.upenn.edu. I am grateful to Frank Schorfheide, Wayne Gao, and Xu Cheng for their invaluable advice. I also thank seminar participants at the University of Pennsylvania,  the Midwest Econometrics Group Conference 2025, the 40th Annual Canadian Econometrics Study Group Meeting, and the Fall 2025 Midwest Macroeconomics Meeting for helpful discussions.} \\
	{\em \large University of Pennsylvania}}

\date{\large \today}
\maketitle

\begin{abstract} 
We use quadratic vector autoregressions, motivated by pruned second-order perturbation solutions to DSGE models, as a laboratory to evaluate how well popular local projection (LP) specifications recover true impulse responses in nonlinear environments. We derive closed-form population impulse responses under each specification and compare them to truth.
Linear LP fails to capture nonlinearities when the shock is symmetrically distributed. 
State-dependent LP specifications capture distinct aspects of nonlinearity: interacting the shock with its sign captures asymmetric effects, while interacting the shock with observable state proxies captures state dependence. However, their gains over linear LP are concentrated in tail shocks or states, and for the latter depend on proxy quality.
Our proposed specification---augmenting linear LP with a squared shock term and shock-state proxy interactions---best approximates true responses.
We also establish valid estimation and inference procedures for this specification. In a monetary policy application, we find state dependence, while higher-order effects differ across outcomes.
(JEL C22, C32, E52)
\end{abstract}

\noindent {\footnotesize  {\em Key words:} Dynamic Causal Effect, Impulse Response, Local Projections, Misspecification, Monetary Policy, State-Dependence}

\thispagestyle{empty}

\clearpage
\setcounter{page}{1}

\section{Introduction}
\label{sec:intro}

Macroeconomic shocks often propagate through the economy in fundamentally nonlinear ways ---e.g., fiscal multipliers vary with the state of the business cycle \citep{AuerbachGorodnichenko2012}, monetary policy transmission is nonlinear at the zero lower bound \citep{EggertssonWoodford2003}, and financial frictions generate endogenous amplification and crisis dynamics \citep{BrunnermeierSannikov2014}. 
To study these dynamics empirically, local projections (LP; \citealp{Jorda2005}) have become one of the most popular methods for estimating impulse response functions. While LPs offer flexibility and simplicity in implementation, capturing general nonlinearities through them remains challenging.  Although fully nonparametric approaches exist (see \citealp{GoncalvesEtAl2024}), parametric specifications remain dominant in applied work due to their ease of implementation and the fact that the relatively small sample sizes typical in macroeconomic data limit the precision of nonparametric estimates. Yet capturing general nonlinearities through parametric specifications remains challenging, as researchers must take explicit stands on the specific form of nonlinearity---choices that can substantially affect conclusions about economic dynamics.

The empirical literature has developed various LP specifications to navigate these challenges. The simplest and most widely used is linear LPs, which directly regresses future outcomes on current shocks. Recent work by \cite{KolesarPlagborg-Moller2025} provides an important theoretical foundation for this approach, showing that even when the true data-generating process is nonlinear, a linear LP on observed shocks recovers a weighted average of marginal effects with respect to the shock size. This result suggests that linear LPs can provide a meaningful causal summary of nonlinear dynamics, making them a natural starting point for empirical analysis---what they aptly describe as a good candidate for ``the first column" of any careful empirical study. While linear LPs offer this valuable baseline, empirical researchers have increasingly turned to state-dependent LP specifications to explore nonlinearities more explicitly. These specifications can be broadly classified into two categories. The first category includes specifications that interact the shock with its own sign, as in \cite{BenZeevEtAl2023}, \cite{FurceriEtAl2018}, and \cite{AlbrizioEtAl2020}. The second category interacts the shock with lagged variables or transformations thereof. For example, \cite{AuerbachGorodnichenko2013} use a smooth transition probability between ``recession" and ``expansion" states constructed from lagged observables, while \cite{AuerEtAl2021} use the lagged leverage ratio as the state variable.

These different specifications reflect researchers' focus on different aspects of nonlinearity that may be most relevant for their specific research questions. Despite the prevalence of these approaches in applied research, fundamental questions remain: What specific aspects of nonlinearity does each specification capture, and how well does it do so? Can we propose a specification that better approximates the true impulse responses? Answering these questions requires a framework that is both rich enough to generate economically meaningful nonlinearities and tractable enough to derive analytical results. Therefore, we focus on a tractable yet rich class of nonlinear models as our data-generating process: quadratic vector autoregression (QVAR) models. QVAR generalizes the QAR model introduced by \cite{AruobaEtAl2017}, which is the univariate case of QVAR, and is motivated by pruned perturbation solutions of DSGE models (\citealp{KimEtAl2008}). Higher-order perturbation methods are widely used in quantitative macroeconomics to capture key nonlinearities inherent in models and data; see \cite{FernandezEtAl2015} for stochastic volatility, \cite{Andreasen2012} for time-varying risk premia, and \cite{KimRuge-Murcia2009} for asymmetric price adjustment.
This makes the QVAR framework theoretically grounded in the DSGE literature and empirically tractable. Under this class of data-generating processes, the true impulse response features a state-dependent first-order component as well as a second-order component, providing a natural benchmark for evaluating different LP specifications.

Our first contribution is to demonstrate the limitations of linear LPs in capturing nonlinear dynamics of the QVAR class. We show that the population impulse response recovered from a linear LP under QVAR coincides with that recovered under a simple linear VAR---thus linear LP fails to capture any nonlinearities.
Linear LP averages over state variation and thus misses state dependence; under symmetric shocks, the remaining second-order effects also integrate to zero under the relevant weight function. This latter result depends critically on the symmetry of the weight function, which follows from our assumption that shocks are symmetrically distributed. Importantly, many widely used identified shocks in empirical macroeconomic studies also exhibit approximately symmetric distributions, suggesting that researchers should be cautious when interpreting results from purely linear LPs.

Our second contribution is to systematically evaluate widely-used state-dependent LP specifications. We focus on two prominent specifications: a shock-based specification that interacts the shock with its sign, and a lag-based specification that interacts the shock with a lagged observable proxy for the true latent state.
Under the univariate QVAR data-generating process, we derive the exact population impulse responses implied by each specification and show that both are misspecified. To evaluate their performance, we introduce a novel distance measure for impulse responses that enables meaningful comparison across specifications---even when they condition on different observables and when the conditioning variable does not coincide with the latent state.
Using a large-sample numerical illustration of the population distance measure, we map each specification's performance across the joint space of the lagged latent state and the shock.
We complement these results with a closed-form analytical comparison of the conditional mean-squared approximation error across specifications.

Our findings reveal that each specification captures a distinct aspect of the underlying nonlinearity: the shock-based specification captures second-order effects through asymmetric responses to positive and negative shocks, while the lag-based specification captures state dependence by allowing responses to vary with initial conditions. This validates the use of these specifications when researchers are primarily interested in the specific type of nonlinearity that each captures. However, we also uncover an important caveat: the shock-based specification improves upon linear LP only when the shock exceeds a modest threshold, with gains increasing in shock size; the lag-based specification reduces approximation error only to the extent that the chosen observable proxy is informative about the latent state, so gains are typically larger in tail states.
In the middle regions of these distributions where most observations typically lie, both specifications perform similarly to linear LP. 

Our third contribution is to propose an improved specification that more closely approximates the true impulse responses when they exhibit both higher-order effects and state-dependent nonlinear features, and to provide the associated estimation and inference theory. 
We first propose an infeasible specification that recovers the exact responses by augmenting the linear specification with a squared shock term and an interaction between the shock and the lagged latent state. We then propose a feasible analogue that replaces the latent state with its observable proxy. Among implementable LPs, this feasible analogue achieves the smallest distance to the truth. Unlike existing state-dependent specifications which improve upon linear LPs mainly at the tails of their respective distributions, our proposed specification makes substantial improvements across the entire joint distribution of shocks and latent states. We recommend it when both higher-order effects and state dependence are plausible sources of nonlinearity in the unknown data-generating process.
Compared to the fully nonparametric approach in \cite{GoncalvesEtAl2024}, \textit{Feas} is more parsimonious and easier to interpret, and it admits formal inference for the implied impulse responses. We therefore view \textit{Feas} as a practical baseline specification, with \textit{NPLP} serving as a complementary benchmark for assessing the role of functional-form restrictions.

On the inference side, we show that standard HAC/HAR confidence intervals are asymptotically valid for impulse responses estimated with our feasible specification when the proxy and control variables are constructed from finitely many lags. A common empirical shortcut is to augment the regression with lagged controls and then report Eicker--Huber--White (EHW) standard errors. While this is valid for correctly specified linear LPs \citep{MontielOleaEtAl2021}, it is not generally valid in our setting. In multivariate QVAR systems, omitted nonlinear cross-shock terms in the residual can leave the regression score serially correlated at horizons $h\ge1$ even when the latent state is observed without error. With proxy error, the score can also acquire a predictable component on impact, so EHW can fail already at $h=0$. 
For this reason, we recommend HAC/HAR as the default inference method for \textit{Feas}.

As an empirical application, we revisit the effects of monetary policy shocks following \cite{Ramey2016}. Using the Romer and Romer (2004) monetary policy shocks, we compare three LP specifications: a purely linear specification, our proposed feasible specification, and nonparametric LP. The results reveal economically meaningful state dependence: contractionary monetary shocks have substantially larger effects on real activity during economic troughs than during peaks, with the differences most pronounced at medium horizons. Higher-order effects are most important for unemployment at medium horizons and for the federal funds rate at short-to-medium horizons, while they are comparatively modest for industrial production and CPI. Overall, the application provides practical guidance for applied researchers on implementing state-dependent LPs.

Our work is most closely related to \cite{KolesarPlagborg-Moller2025}, who show that the estimand of linear LP onto observed shocks or proxies can be interpreted as an average marginal effect under a general nonlinear data-generating processes, while identification approaches via heteroskedasticity or non-Gaussianity of shocks are highly sensitive to departures from linearity. We focus on the case where applied researchers have direct measures of shocks, and show that linear LP fails to capture the nonlinearities in the QVAR class with symmetric shocks. In their discussion of \cite{KolesarPlagborg-Moller2025}, \citet{HerbstJohannsen2025} show that under a univariate QVAR with a symmetric shock, the population impulse response implied by linear LP coincides with the AR response. Relatedly, the discussion by \citet{GoncalvesEtAl2025} show more generally that linear LPs can average out nonlinearities under an even average structural function and symmetric shock distribution. We replicate and extend these insights to multivariate QVARs, and characterize how misspecified state-dependent LPs recover the ``missing'' nonlinear component---pinpointing where the gains over the linear specification are largest across shocks and states.

The univariate QVAR model, referred to QAR, was introduced by \cite{AruobaEtAl2017}. We extend it to the multivariate case, which is more relevant for empirical analysis. 
This class of models connects to the literature on pruned perturbation solutions to DSGE models. Higher-order perturbation is a workhorse solution method in quantitative macroeconomics; see \citet{FernandezEtAl2016} for a survey.
However, higher-order approximations to DSGE models often generate explosive sample paths even when the corresponding linearized solution is stable, as the higher-order terms create unstable steady states \citep{KimEtAl2008}. 
To eliminate these explosive dynamics, \cite{KimEtAl2008} propose pruning for second-order approximations---removing terms of higher-order effects than the considered approximation order when iterating the system forward. This pruning approach, extended to any approximation order by \cite{AndreasenEtAl2018}, ensures stationarity and ergodicity while maintaining the model's ability to capture nonlinearities, and motivates the QVAR model used in our analysis.

Empirically, state-dependent LPs have become widespread in empirical macroeconomics. Applications include \cite{AuerbachGorodnichenko2013} on fiscal policy over the business cycle, \cite{JordaEtAl2013} on financial crises, \cite{RameyZubairy2018} on government spending multipliers, and \cite{JordaEtAl2020} on monetary interventions, among others. On the theoretical side, \cite{GonccalvesEtAl2024} show that state-dependent LPs can asymptotically recover population responses to shocks regardless of shock size when the state is exogenous, but only recover conditional responses to infinitesimal shocks when the state depends on macroeconomic shocks. However, they focus on the case where the state-dependent model implied by their LP regression is correctly specified. In contrast, we focus on the case where all state-dependent LP specifications are misspecified and evaluate what they recover---following the principle that while all models are wrong, some are useful.

The remainder of the paper is organized as follows.
Section~\ref{sec:lp.failure} introduces the QAR model and shows the failure of linear LP.
Section~\ref{sec:state.lp.how.well} evaluates the performance of existing and proposed specifications under QAR. Section~\ref{sec:multi.extension} extends the analysis to multivariate QVAR.
Section~\ref{sec:feas.inference} discusses estimation and inference for the proposed specification.
Section~\ref{sec:empiric.exp} provides an empirical application to monetary policy shocks, and Section~\ref{sec:conclusion} concludes. Additional derivations, proofs, and results are provided in the Online Appendix.

\section{Failure of Linear Local Projections in QAR Framework}
\label{sec:lp.failure}

In this section, we demonstrate the fundamental limitations of linear LPs in capturing nonlinear dynamics: linear LPs fail to detect any nonlinearities in the class of models we study. For expositional purpose, we use the QAR model of \citet[ABS]{AruobaEtAl2017} as the data-generating process (DGP). In Section~\ref{sec:multi.extension}, we extend the analysis to its multivariate counterpart and show that the same conclusion continues to hold.

\subsection{Setup: QAR Model}
\label{subsec:qar.model}

\noindent {\bf Model.\ } The QAR(1,1) model is derived from a second-order perturbation approximation to the solution of a nonlinear difference equation
\be
y_t=f\left(y_{t-1}, \omega u_t\right), \quad u_t \stackrel{i i d}{\sim} N(0,1) .
\ee
Specifically, the QAR(1,1) model can be written as
\be
\begin{aligned}
& y_t=\phi_1 y_{t-1}+\phi_2 s_{t-1}^2+ (1+\gamma s_{t-1})\sigma u_t, \quad u_t \overset{\text{i.i.d.}}{\sim} \mathcal{N}(0,1),\\
& s_t=\phi_1 s_{t-1}+\sigma u_t, \quad\left|\phi_1\right|<1, 
\end{aligned}
\label{QAR.hetero}
\ee
where $y_t$ is the outcome variable, $s_t$ is a latent state variable that evolves as an AR(1) process, and $u_t$ is the exogenous structural shock driving both the state and outcome equations.
 
There are several distinct features of this model. First, as pointed out by ABS, unlike the alternative specification 
$$y_t=\phi_1 y_{t-1}+\phi_2 y_{t-1}^2+u_t, \quad 0<\phi_1<1,\ \phi_2>0,$$
the QAR model generates a recursively linear structure with a unique steady state and nonexplosive dynamics for suitably restricted values of $\phi_1$, which guarantees the stationarity of the process. Second, the model generates conditional heteroskedasticity: the conditional variance of $y_t$ is given by
$
\Var_{t-1}\left[y_t\right]=\left(1+\gamma s_{t-1}\right)^2 \sigma^2,
$
which depends on the state variable $s_{t-1}$.

Inspired by the causal framework in \cite{KolesarPlagborg-Moller2025}, we can represent $y_{t+h}$ as a \textit{structural function} $\psi_h$ of shock $u_t$ and predetermined variables and future shocks $U_{h,t+h}=(y_{t-1}, s_{t-1}, u_{t+1},\ldots, u_{t+h})$ that are independent of $u_t$, i.e.,  
\be
    y_{t+h}=\psi_h\left(u_t, U_{h,t+h}\right), \quad u_t \perp U_{h,t+h}.
\ee
The formula of $\psi_h$ is provided in the Online Appendix. We also define the \textit{average structural function} $\Psi_h(u) \equiv \mathbb{E}\left[\psi_h\left(u, U_{h,t+h}\right)\right],$ where the expectation is taken over the marginal distribution of $U_{h,t+h}$.

\noindent {\bf True Impulse Responses.\ } Our definition of the impulse response function (IRF) is based on the conditional average response (CAR) proposed in \cite{GonccalvesEtAl2024}, which compares the baseline outcome $y_{t+h}$ with the counterfactual outcome that would have been observed if there was a one-time shock to $u_{t}$, conditional on the information set at $t-1$.
Specifically, the CAR to a one-time shock of fixed size $\delta$ in $u_{t}$ is
$$
\begin{aligned}
\operatorname{CAR}_h(\mathcal{F}, \delta)
& = \mathbb{E}\left[\psi_h(u_{t}+\delta, U_{h,t+h})- \psi_h(u_{t}, U_{h,t+h}) \mid \mathcal{F}_{t-1} = \mathcal{F} \right],    
\end{aligned}
$$
where $\mathcal{F}_{t-1}$ denotes the entire past information set.

Given the expression of the structural function $\psi_h$ for the QAR(1,1) model, we can derive the CAR. The following proposition states the results. 
\begin{proposition} \label{prop:true.car}
    The CAR for the QAR(1,1) model is 
    \be
    \operatorname{CAR}_h(s, \delta) =
    \underbrace{\sigma \phi_1^{h}\,\delta}_{\text{baseline effect}}
    +
    \underbrace{a_h\, s\,\delta}_{\text{state-dependent effect}}
    +
    \underbrace{q_h\, \delta^2}_{\text{higher-order effect}},
    \label{eq:car_three_terms}
    \ee
    where $s$ is the realized value of state $s_{t-1}$, and the \emph{state-dependent loading} $a_h$ and the \emph{higher-order coefficient} $q_h$ are
    \be
    a_h \equiv \sigma \phi_1^{h}\!\left(\gamma + 2\phi_2\frac{1-\phi_1^h}{1-\phi_1}\right),
    \qquad
    q_h \equiv \phi_2 \sigma^2 \frac{\phi_1^{h-1}-\phi_1^{2h-1}}{1-\phi_1}.
    \label{eq:ah_qh_def}
    \ee
\end{proposition}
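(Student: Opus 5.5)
The plan is a direct computation. First I write $y_{t+h}$ explicitly as a function of $u_t$ and $U_{h,t+h}$, which gives the structural function $\psi_h$. Then I perturb $u_t$ by $\delta$ and take conditional expectations given $\mathcal{F}_{t-1}$. I treat $s_{t-1}$ as part of $\mathcal{F}_{t-1}$, as the notation $\operatorname{CAR}_h(s,\delta)$ presumes.

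\textbf{Step 1 (closed forms for state and outcome).} Iterating the state equation in (\ref{QAR.hetero}) gives, for $j\ge 0$,
\[
s_{t+j}=\phi_1^{j+1}s_{t-1}+\sigma\sum_{i=0}^{j}\phi_1^{j-i}u_{t+i}.
\]
Iterating the outcome equation gives
\[
y_{t+h}=\phi_1^{h+1}y_{t-1}+\sum_{j=0}^{h}\phi_1^{h-j}\Bigl[\phi_2 s_{t+j-1}^2+(1+\gamma s_{t+j-1})\sigma u_{t+j}\Bigr].
\]
This is $\psi_h(u_t,U_{h,t+h})$ once each $s_{t+j-1}$ is substituted from the first display.

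\textbf{Step 2 (effect of the perturbation).} Replacing $u_t$ by $u_t+\delta$ leaves $y_{t-1}$ and $s_{t-1}$ unchanged. It shifts $s_{t+j-1}$ by $c_j\equiv\phi_1^{j-1}\sigma\delta$ for every $j\ge1$. The change in $y_{t+h}$ therefore splits into two pieces.

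\begin{itemize}
\item The $j=0$ summand changes by $\phi_1^h(1+\gamma s_{t-1})\sigma\delta$.
\item Each $j\ge1$ summand changes by
\[
\phi_1^{h-j}\bigl[\phi_2(2s_{t+j-1}c_j+c_j^2)+\gamma c_j\sigma u_{t+j}\bigr].
\]
\end{itemize}

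\textbf{Step 3 (conditional expectation).} We have $\mathbb{E}[u_{t+j}\mid\mathcal{F}_{t-1}]=0$ for $j\ge1$, and $c_j$ is nonrandom, so the $\gamma c_j\sigma u_{t+j}$ terms vanish. Since $u_t,\dots,u_{t+j-1}$ are mean zero and independent of $\mathcal{F}_{t-1}$, $\mathbb{E}[s_{t+j-1}\mid\mathcal{F}_{t-1}]=\phi_1^{j}s$. The CAR then equals
\[
\sigma\phi_1^h\delta+\sigma\gamma\phi_1^h s\delta+2\phi_2\sigma s\delta\sum_{j=1}^h\phi_1^{h+j-1}+\phi_2\sigma^2\delta^2\sum_{j=1}^h\phi_1^{h+j-2}.
\]

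\textbf{Step 4 (summing the geometric series).} The two sums evaluate to
\[
\sum_{j=1}^h\phi_1^{h+j-1}=\phi_1^h\frac{1-\phi_1^h}{1-\phi_1},
\qquad
\sum_{j=1}^h\phi_1^{h+j-2}=\frac{\phi_1^{h-1}-\phi_1^{2h-1}}{1-\phi_1}.
\]
Collecting terms gives exactly the baseline effect, $a_h s\delta$ and $q_h\delta^2$ in (\ref{eq:ah_qh_def}). For $h=0$ both sums are empty, so $q_0=0$ and $a_0=\sigma\gamma$, consistent with the formula.

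\textbf{Main point of care.} The only delicate part is the bookkeeping of how the shock propagates through the quadratic term. The cross term $2s_{t+j-1}c_j$ is the source of the state-dependent loading beyond $\gamma$, and $c_j^2$ is the source of $q_h$. Both rest on the fact that the conditional mean of $s_{t+j-1}$ depends on $\mathcal{F}_{t-1}$ only through $s_{t-1}$, which is what makes the CAR a function of $(s,\delta)$ alone.
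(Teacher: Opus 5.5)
Your proposal is correct and follows essentially the same route as the paper: write $y_{t+h}$ as the structural function of $u_t$ and $U_{h,t+h}$, perturb $u_t$ by $\delta$, and take expectations given $\mathcal{F}_{t-1}$, where only $s_{t-1}$ survives. Tracking the deterministic shift $c_j=\sigma\phi_1^{j-1}\delta$ in $s_{t+j-1}$ is a tidier version of the paper's moving-average expansion, and it also handles the $h=0$ case explicitly.
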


Although we initially define the IRF as a function of the realized value of the past information set $\mathcal{F}_{t-1}$, our calculations indicate that the CAR depends only on the realized value $s$ of the state variable $s_{t-1}$. It is therefore sufficient to redefine this causal parameter as a function of $s$ alone.

The decomposition in \eqref{eq:car_three_terms} highlights three distinct components of the true impulse response. The \emph{baseline effect} $\sigma\phi_1^h\delta$ is the response that a linear AR(1) model would produce---it is the component that does not depend on the state or the shock size nonlinearly. The \emph{state-dependent effect} $a_h\, s\,\delta$ captures how the response varies with the initial state $s_{t-1}$: when $a_h>0$, a higher state amplifies the shock's effect. The \emph{higher-order effect} $q_h\,\delta^2$ captures the nonlinear dependence on the shock size, and this effect does not depend on the state $s_{t-1}$. The coefficients $a_h$ and $q_h$ will serve as the key building blocks throughout the analysis.

\begin{remark}[Alternative IRF Concepts for Nonlinear Models]
    The literature offers various IRF definitions that differ in (i) whether they study infinitesimal shocks (the conditional marginal response in \cite{GonccalvesEtAl2024}) or finite shocks (the CAR or the generalized IRF in \cite{KoopEtAl1996}), (ii) whether they condition on the state at $t-1$ (CAR) or average over them (KP's concept), and (iii) in regime-switching contexts, whether states evolve naturally or remain fixed (regime-dependent impulse responses in \cite{EhrmannEtAl2003}). These choices reflect different economic questions: state-dependent IRFs reveal varying transmission mechanisms across states; unconditional IRFs provide average treatment effects; evolving states capture full propagation for policy analysis; fixed states isolate within-regime mechanisms. See the Online Appendix for formal definitions and relationships.
\end{remark}

\subsection{Why Linear Local Projections Fail in QAR} 
\label{subsec:failure.lp}

From now on, we suppose the empirical researcher directly observes the shock of interest $u_t$, and aims to use local projections to study the effect of shock $u_t$ on outcome variable $y_{t+h}$. This is a common empirical practice, where researchers use previously identified shocks---such as monetary policy shocks obtained via narrative approaches---as regressors in their LP specifications. We begin with a purely linear LP specification, which we refer to as \textit{Linear}:
\be
y_{t+h} = \beta_h u_t + \pi_h^{\prime} W_t + \epsilon_{h,t+h},
\label{spec:pure.linear}
\ee
where $W_t$ is a vector of control variables that are independent of shock $u_{t}$. Since the true DGP is nonlinear, \eqref{spec:pure.linear} is generally misspecified; nevertheless, the population coefficients $(\beta_h,\pi_h)$ are well-defined as the coefficients in the population linear projection (equivalently, the probability limit of the OLS estimator) of $y_{t+h}$ on $(u_t,W_t)$ under the true DGP. These coefficients define the population IRF as the counterfactual response obtained by perturbing the shock in the population regression while holding other regressors fixed.
We use the same definition for every other empirical LP specification considered below.

We first show that linear LP fails to capture any nonlinearities when the true underlying DGP is QAR(1,1), which reproduces the finding in \cite{HerbstJohannsen2025}.
\begin{assumption} \label{assump:qar.dgp}
 Assume that $\left\{y_t, s_t, u_t\right\}$ is generated by the QAR(1,1) model, with the process initialized in the infinite past. 
\end{assumption}

\begin{proposition} \label{prop:failure.lp}
   Under Assumption~\ref{assump:qar.dgp}, the population IRF to a shock of magnitude $\delta$ to $u_t$ implied by \textit{Linear} is given by $\operatorname{IRF}^{Linear}(\delta; h) = \beta_h \delta = \sigma \phi_1^h \delta.$ This expression coincides with the population IRF under the true DGP being an AR(1) process:
$y_{t} = \phi_1 y_{t-1} + \sigma u_t, \quad u_t \overset{\text{i.i.d.}}{\sim} \mathcal{N}(0,1).$
\end{proposition}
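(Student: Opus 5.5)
Since $W_t$ is independent of $u_t$ and $u_t$ has mean zero (and $W_t$ may be taken to include a constant), the Frisch--Waugh--Lovell theorem gives the population coefficient in closed form:
\[
\beta_h=\frac{\Cov(u_t,y_{t+h})}{\Var(u_t)}=\EE[u_t y_{t+h}],
\]
because $u_t$ is orthogonal to $W_t$. The plan is therefore to compute $\EE[u_t y_{t+h}]$ under the QAR(1,1) model and show that it equals $\sigma\phi_1^h$. This is exactly the coefficient produced by the AR(1) process $y_t=\phi_1 y_{t-1}+\sigma u_t$, for which $y_{t+h}=\phi_1^{h+1}y_{t-1}+\sigma\sum_{j=0}^h\phi_1^{h-j}u_{t+j}$, so $\EE[u_t y_{t+h}]=\sigma\phi_1^h$ immediately.

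To compute $\EE[u_t y_{t+h}]$ under the QAR model, I will iterate \eqref{QAR.hetero} forward. This gives
\[
y_{t+h}=\phi_1^{h+1}y_{t-1}+\sum_{j=0}^{h}\phi_1^{h-j}\Bigl[\phi_2 s_{t+j-1}^2+(1+\gamma s_{t+j-1})\sigma u_{t+j}\Bigr],
\]
where $s_{t+j-1}=\phi_1^{j}s_{t-1}+\sigma\sum_{k=0}^{j-1}\phi_1^{j-1-k}u_{t+k}$. I then take the expectation of $u_t$ times each term.
\begin{itemize}
\item[(i)] The term $\phi_1^{h+1}y_{t-1}$ contributes zero, since $y_{t-1}$ is independent of $u_t$.
\item[(ii)] For the terms $(1+\gamma s_{t+j-1})\sigma u_{t+j}$: when $j=0$, we get $\EE[(1+\gamma s_{t-1})\sigma u_t^2]=\sigma$, using $\EE[s_{t-1}]=0$ and independence. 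When $j\ge1$, $u_{t+j}$ is independent of $(u_t,s_{t+j-1})$ and has mean zero, so the term vanishes. Together these give $\sigma\phi_1^h$.
\item[(iii)] For the quadratic terms $\phi_2\EE[u_t s_{t+j-1}^2]$: expanding $s_{t+j-1}^2$ gives a sum of products of degree two in $(s_{t-1},u_t,\dots,u_{t+j-1})$. Multiplying by $u_t$ yields degree-three moments of jointly Gaussian, mean-zero variables. Note that $s_{t-1}$ is itself Gaussian, being a linear filter of past Gaussian shocks under Assumption~\ref{assump:qar.dgp}. All such odd moments vanish by symmetry. Concretely, $\EE[u_t^3]=0$, $\EE[u_t s_{t-1}^2]=\EE[u_t]\EE[s_{t-1}^2]=0$, and cross terms such as $\EE[u_t^2 s_{t-1}]$ or $\EE[u_t^2u_{t+k}]$ with $k\ge1$ are zero because one factor is independent with mean zero.
\end{itemize}
Summing the three pieces gives $\beta_h=\sigma\phi_1^h$. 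Hence $\operatorname{IRF}^{Linear}(\delta;h)=\beta_h\delta=\sigma\phi_1^h\delta$, which matches the AR(1) response.

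As a cross-check, the same conclusion follows from Proposition~\ref{prop:true.car}. Writing $y_{t+h}=\EE[y_{t+h}\mid s_{t-1},\ldots]+\sigma\phi_1^h u_t+a_h s_{t-1}u_t+q_h(u_t^2-1)+\text{(terms orthogonal to }u_t)$, the state-dependent piece is orthogonal to $u_t$ because $\EE[s_{t-1}]=0$, and the higher-order piece is orthogonal because $\EE[u_t^3]=0$. The main point needing care is the justification of the control-variable step: $W_t$ must be independent of $u_t$ and must not include functions of $u_t$ such as $u_t^2$. Given this, FWL reduces $\beta_h$ to a simple covariance. A second point is verifying the stationarity and finite moments needed for the population projection to be well defined, which follows from $|\phi_1|<1$ and Gaussianity.
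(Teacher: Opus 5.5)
Your proposal is correct and follows essentially the same route as the paper. Both reduce $\beta_h$ to $\EE[u_t y_{t+h}]$, expand $y_{t+h}$ by forward iteration, and show that every term except $\sigma\phi_1^h\EE[u_t^2]$ vanishes by independence and vanishing odd moments, then compare with the AR(1) moment. Your explicit Frisch--Waugh--Lovell justification for dropping $W_t$ and your uniform treatment of $h=0$ and $h\ge 1$ are small refinements; the paper simply writes $\beta_h=\EE[y_{t+h}u_t]/\Var(u_t)$ and handles $h=0$ separately.
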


\noindent {\bf Comparison with \cite{KolesarPlagborg-Moller2025}.\ } 
KP show that the population coefficient of a linear LP with observed shocks provides a scalar causal summary of the nonlinear causal effects. Formally, they prove that under some mild assumptions, the population coefficient
\be
\beta_h=\int \omega_u(u) \Psi_h^{\prime}(u) d u, \quad \text { where } \quad \omega_u(u) \equiv \frac{\operatorname{Cov}\left(\mathbbm{1}\left\{u_t \geq u\right\}, u_t\right)}{\Var\left(u_t\right)} \ge 0,
\label{eq:causal.weight}
\ee
i.e., $\beta_h$ is the weighted average of marginal effect $\Psi_h^{\prime}(u)$.
At first glance, this appears to contradict our result: when the true DGP is QAR(1,1), Proposition~\ref{prop:failure.lp} suggests that the linear LP coefficient does not provide any information on the nonlinearities in the model, how can it be a summary of the nonlinear effects?

To reconcile this, first note that the relevant causal objects are different. KP characterize the derivative of the unconditional average structural function \(\Psi_h(u)\), whereas our benchmark CAR is defined conditional on the realized state. Because 
$
\Psi_h(u)=\mathbb{E}[\psi_h(u,U_{h,t+h})]
$
averages over \(U_{h,t+h}\), and hence over \(s_{t-1}\), any state dependence is integrated out by construction before KP's causal weighting is applied. In our QAR setting, the only nonlinearity that survives in \(\Psi_h(u)\) is therefore the quadratic term in \(u\). For QAR(1,1), average structural function
$$
\Psi_h(u) = \mathbb{E}\left[\psi_h\left(u, U_{h,t+h}\right)\right]  = \begin{dcases}
     \sigma \phi_1^{h} u + \text{const}, \quad h=0,\\
     \sigma \phi_1^{h} u +  q_h\, u^2+ \text{const}, \quad h\ge 1,\\
\end{dcases} 
$$
where $q_h$ is the higher-order coefficient defined in \eqref{eq:ah_qh_def}.
Proposition 1 in KP implies that, for $h\ge 1$,
\begin{equation*}
\begin{aligned}
\beta_h & =\int_{\mathbb{R}} \omega_u(u) \Psi_h^{\prime}(u) \ d u \\
& = \sigma \phi_1^{h} \underbrace{\int_{\mathbb{R}} \omega_u(u) \ d u}_{=1} + 2 q_h \underbrace{\int_{\mathbb{R}} \omega_u(u) u\ d u}_{=0} = \sigma \phi_1^{h}.
\end{aligned}
\end{equation*}
as $\omega_u(u) = \operatorname{Cov}\left(\mathbbm{1}\left\{u_t \geq u\right\}, u_t\right)/ \Var\left(u_t\right)$: (a) is symmetric around $0$ if $u_t$ is symmetrically distributed around 0, (b) integrates to 1. For $h=0,$ the marginal effect is constant, i.e., $\Psi_h^{\prime}(u)=\sigma$ for all $u$. This is because the second-order effect on $y_{t+h}$ arises from the quadratic term  $s_{t+h-1}^2$, and $u_t$ influences this term only starting from $h=1$ onwards. Therefore, $u_t$ affects $y_t$ solely through the first-order term $(1+\gamma s_{t-1})\sigma u_t$ and we have $\beta_0 = \sigma.$ These results coincide with what we found in our previous calculation. 

The key insight, echoing the discussion in \citet{GoncalvesEtAl2025}, is that the marginal second-order effects---an odd function of the shock---cancels between positive and negative shocks when the causal weight function is symmetric, i.e., $\omega_u(u)=\omega_u(-u).$ In our setting, this symmetry arises under the assumption that shock $u_t$ is symmetrically distributed around zero.
Do empirical shock measures satisfy this symmetry condition? KP estimate causal weight functions for a range of identified shocks in the literature. Several well-cited shock measures---such as the government spending shock in \cite{BlanchardPerotti2002} and the technology shock in \cite{Fernald2014}---exhibit approximately symmetric distributions. In contrast, other shocks, such as the military news shock in \cite{Ramey2011}, display extreme fat-tailed or asymmetric properties. For the former class of shocks, one should be cautious when using a purely linear LP to capture nonlinear effects, whereas for the latter, a linear LP may still provide a reasonable summary of the overall nonlinear response.

\begin{remark}

If the true DGP features not only second-order dynamics but also third-order effects, then the linear LP estimates will provide a causal summary of the first- and third-order effects while missing the second-order effects. This occurs because the marginal third-order effects do not integrate to zero,  i.e. $\int_{\mathbb{R}} \omega_u(u) u^2 \ d u \neq 0$, unlike the second-order effects.
\end{remark}

\section{Performance of State-Dependent Local Projections}
\label{sec:state.lp.how.well}

So far, we have shown that linear LP fails to capture any nonlinearities when the true DGP follows a QAR(1,1) model. In empirical work, researchers often use state-dependent LPs to capture nonlinearities. This raises the central question of the paper: to what extent do commonly used state-dependent LP specifications recover the true impulse responses when the underlying process is nonlinear, specifically in the case of a QAR(1,1) model?

\subsection{State-Dependent Local Projections}
\label{subsec:state.lp}
 In this section, we present four state-dependent LP specifications---two from the existing literature and two that we propose---and derive their implied population IRFs under the QAR(1,1) DGP. The population IRFs, expressed in terms of the building blocks $a_h$ and $q_h$ defined in \eqref{eq:ah_qh_def}, will serve as the basis for evaluating each specification's ability to approximate the true CAR.

\medskip
\noindent{\textbf{Shock-Based Specification.}\ }
We consider the following empirical specification, referred to as \textit{AsymLP}:
\be
y_{t+h}=S_{t}\left[\alpha_h^{(+)} + \beta_h^{(+)} u_{t}+ {\pi_{h}^{(+)}}^{\prime} W_t\right]+\left(1-S_{t}\right)\left[\alpha_h^{(-)} + \beta_h^{(-)} u_{t}+ {\pi_{h}^{(-)}}^{ \prime} W_t\right]+ \epsilon_{h,t+h},
\label{spec:shock.state}
\ee
where binary state $S_t=\mathbbm{1}\{u_{t}>0\}$ indicates whether the contemporaneous shock $u_t$ is positive or not, and $W_t$ is a vector of control variables that are independent of shock $u_{t}$ (e.g., lagged outcome and shocks) but exclude the constant term. $\beta_h^{(+)}$ can be interpreted as the impulse response coefficient when $S_{t}=1,$ and similar for $\beta_h^{(-)}.$ $\alpha_h^{(+)}$ and $\alpha_h^{(-)}$ are state-dependent constant terms. The choice of state variable $S_t$ is intended to capture the asymmetric effects of a shock.

Proposition \ref{prop:irf.shock.state} provides the implied population IRF of  \textit{AsymLP} under the QAR(1,1) DGP.

\begin{proposition} \label{prop:irf.shock.state}
    Under Assumption~\ref{assump:qar.dgp}, the population IRF to a shock of magnitude $\delta$ to $u_t$ implied by \textit{AsymLP} is    
    \be
        \operatorname{IRF}^{AsymLP}(S, \delta; h) = \begin{cases}
            \beta_h^{(+)} \delta, \text{\quad if } S=1, \\
            \beta_h^{(-)} \delta, \text{\quad if } S=0,
        \end{cases}
    \ee
    where, with $q_h$ as in \eqref{eq:ah_qh_def} and $m\equiv \sqrt{2/\pi}/(1-2/\pi)$,
    \be
    \beta_h^{(+)}=\sigma\phi_1^{h} + m \cdot q_h,
    \qquad
    \beta_h^{(-)}=\sigma\phi_1^{h} - m \cdot q_h.
    \label{eq:beta_asym_simplified}
    \ee
\end{proposition}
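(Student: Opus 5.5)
The plan is to write $y_{t+h}$ through its structural representation $y_{t+h}=\psi_h(u_t,U_{h,t+h})$ and compute the population projection in \eqref{spec:shock.state}. Because $S_t$ splits the sample, the projection separates into two regressions. One is run on the subpopulation $\{u_t>0\}$ with regressors $(1,u_t,W_t)$, the other on $\{u_t\le 0\}$. Since $W_t$ is independent of $u_t$, it is also independent of $u_t$ conditional on the sign event. Within each regime, then, the controls are uncorrelated with the centered shock and drop out of the slope by Frisch--Waugh. This gives
$$\beta_h^{(\pm)}=\frac{\Cov\left(y_{t+h},u_t\mid \pm u_t>0\right)}{\Var\left(u_t\mid \pm u_t>0\right)}.$$

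Next, I replace $y_{t+h}$ by its conditional expectation given $u_t$. Because $U_{h,t+h}\perp u_t$, $\mathbb{E}[y_{t+h}\mid u_t=u]=\Psi_h(u)$, and this remains valid after conditioning on the sign of $u_t$. From the average structural function stated above, $\Psi_h(u)=\sigma\phi_1^h u+q_h u^2+\text{const}$, with $q_0=0$ at $h=0$, which is consistent with \eqref{eq:ah_qh_def}. One point needs care: the state-dependent term $a_h s_{t-1}u_t$ appears in $\psi_h$ but has mean zero given $u_t$, because $\mathbb{E}[s_{t-1}]=0$ and $s_{t-1}\perp u_t$. I will verify this directly from the formula for $\psi_h$ in the Online Appendix. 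With that in place,
$$\beta_h^{(\pm)}=\sigma\phi_1^h+q_h\frac{\Cov(u_t^2,u_t\mid \pm u_t>0)}{\Var(u_t\mid \pm u_t>0)}.$$

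The remaining step is the half-normal moment computation. For $u_t>0$, $u_t$ is half-normal with $\mathbb{E}u=\sqrt{2/\pi}$, $\mathbb{E}u^2=1$, $\mathbb{E}u^3=2\sqrt{2/\pi}$ and $\Var=1-2/\pi$. This gives $\Cov(u^2,u)=2\sqrt{2/\pi}-\sqrt{2/\pi}=\sqrt{2/\pi}$, so the ratio equals $m$. For $u_t<0$, symmetry flips the signs of the odd moments, so the covariance becomes $-\sqrt{2/\pi}$ and the ratio becomes $-m$. Finally, the IRF is obtained by perturbing $u_t$ by $\delta$ while holding $S$ and $W_t$ fixed, which yields $\beta_h^{(S)}\delta$.

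I expect the main obstacle to be justifying the reduction from $y_{t+h}$ to $\Psi_h(u_t)$ inside each regime. This requires checking that every term of $\psi_h$ which involves both $u_t$ and $(s_{t-1},y_{t-1},u_{t+1},\dots)$ has zero conditional mean given $u_t$. It also requires that the controls, which may include lagged $y$ and $s$-related variables correlated with $\psi_h$, do not contaminate the slope. Independence of $W_t$ from $u_t$ handles the second issue via Frisch--Waugh. The first issue follows from the structure of $\psi_h$ established earlier in the paper.
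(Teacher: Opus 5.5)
Your proposal is correct and follows essentially the same route as the paper. The paper's normal-equation formula for $\beta_h^{(\pm)}$ is exactly your $\Cov(y_{t+h},u_t\mid \pm u_t>0)/\Var(u_t\mid \pm u_t>0)$ before dividing through by $\EE[S_t]$, and its quadratic-in-$u_t$ decomposition of $y_{t+h}$ plays the role of your reduction to $\Psi_h(u_t)$, ending with the same truncated-normal moments; your explicit Frisch--Waugh treatment of the controls and your use of $\Psi_h$ (which averages out the $u_t\times$future-shock cross terms) make precise two steps the paper's proof leaves implicit.
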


\begin{remark}
    One should always include a constant term in \textit{AsymLP}. Otherwise, the population coefficient
   $
    \beta_h^{(+)}  = \mathbb{E}[S_t u_t y_{t+h}]/\mathbb{E}[S_t u_t^2]. 
   $
    We can show that when $h\rightarrow \infty,$ $\beta_h^{(+)}$ takes the limit $2\phi_2 \sigma^2 \phi(0)/(1-\phi_1)(1-\phi_1^2).$ Then the population IRF does not converge to zero when $u_t>0$. Similar for the case where $u_t\le 0.$
\end{remark}

\noindent{\textbf{Lag-Based Specification.}\ }
We consider the following empirical specification, referred to as \textit{LagLP}:
\be
y_{t+h}= \beta_h^{(0)} u_{t}+ {\pi_{h}^{(0)}}^{\prime} W_t + y_{t-1} \left(\beta_h^{(1)} u_{t}+ {\pi_{h}^{(1)}}^{\prime} W_t\right) +\epsilon_{h,t+h},
\label{cont.state.spec}
\ee
where lagged outcome variable $y_{t-1}$ serves as a continuous state variable, and $W_t$ is a vector of control variables (possibly including a constant term) that is independent of shock $u_t$. This is the order-$1$ polynomial state-dependent LPs considered in \cite{AuerEtAl2021}. The choice of state variable $y_{t-1}$ is intended to capture different effects of a shock when the lagged outcome variable was at different levels (e.g., the economy is at recession or expansion).

Proposition \ref{prop:irf.lag.state} provides the implied population IRF of \textit{LagLP} under the QAR(1,1) DGP.
\begin{proposition} \label{prop:irf.lag.state}
    Under Assumption~\ref{assump:qar.dgp}, the population IRF to a shock of magnitude $\delta$ to $u_t$ implied by \textit{LagLP} is
    \be
    \operatorname{IRF}^{LagLP}(y, \delta; h) = (\beta_h^{(0)} + \beta_h^{(1)} y)\delta,
    \ee
    where, with $a_h$ as in \eqref{eq:ah_qh_def},
    \be
    \beta_h^{(1)}
    =
    a_h\cdot \frac{\sigma^2/(1-\phi_1^2)}{\Var(y_{t-1})},
    \qquad
    \beta_h^{(0)}=\sigma\phi_1^{h}-\beta_h^{(1)}\EE[y_{t-1}].
    \label{eq:beta_lag_simplified}
    \ee
\end{proposition}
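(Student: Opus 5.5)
The plan is to exploit the independence of $u_t$ from everything dated $t-1$ and earlier, so that the population projection defining \textit{LagLP} splits into two separate blocks. Write the regressor vector as $X_t=(u_t,\;y_{t-1}u_t,\;W_t',\;y_{t-1}W_t')'$. Since $u_t$ is independent of $(y_{t-1},W_t)$ with $\EE[u_t]=0$, every cross moment between the shock block $(u_t,y_{t-1}u_t)$ and the control block $(W_t,y_{t-1}W_t)$ vanishes. For example, $\EE[y_{t-1}u_t\cdot y_{t-1}W_t]=\EE[u_t]\,\EE[y_{t-1}^2W_t]=0$. Hence the Gram matrix is block diagonal, and $(\beta_h^{(0)},\beta_h^{(1)})$ solves
\[
\EE\!\left[u_t^2\begin{pmatrix}1 & y_{t-1}\\ y_{t-1} & y_{t-1}^2\end{pmatrix}\right]\begin{pmatrix}\beta_h^{(0)}\\ \beta_h^{(1)}\end{pmatrix}=\EE\!\left[u_t\begin{pmatrix}1\\ y_{t-1}\end{pmatrix}y_{t+h}\right].
\]
Using $\EE[u_t^2]=1$ and independence, the left matrix equals $\EE[(1,y_{t-1})'(1,y_{t-1})]$. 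Therefore $(\beta_h^{(0)},\beta_h^{(1)})$ are the coefficients of a population regression of $g_h\equiv\EE[u_ty_{t+h}\mid\mathcal F_{t-1}]$ on a constant and $y_{t-1}$.

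The key step is computing $g_h$. I will iterate the state equation,
\[
s_{t+j}=\phi_1^{j+1}s_{t-1}+\sigma\sum_{i=0}^{j}\phi_1^{j-i}u_{t+i},
\]
and substitute it into the recursion for $y_{t+h}$. This expresses $y_{t+h}$ as the sum of four pieces:
\begin{itemize}
\item an $\mathcal F_{t-1}$-measurable term;
\item a term linear in $u_t$, with coefficient $\sigma\phi_1^h+a_h s_{t-1}$;
\item the term $q_h u_t^2$;
\item terms involving future shocks $u_{t+1},\dots,u_{t+h}$.
\end{itemize}
The linear coefficient is consistent with Proposition~\ref{prop:true.car}; the same bookkeeping produces \eqref{eq:ah_qh_def}. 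Each future-shock term either has zero mean conditional on $(u_t,\mathcal F_{t-1})$, or is a product $u_t\,u_{t+i}$ or a square $u_{t+i}^2$ whose contribution to $\EE[u_t\,\cdot\mid\mathcal F_{t-1}]$ vanishes because $u_{t+i}$ is independent with mean zero, or because $\EE[u_t]=0$. Multiplying by $u_t$ and using $\EE[u_t^2]=1$ and $\EE[u_t^3]=0$ (Gaussian symmetry) then gives
\[
g_h=\sigma\phi_1^h+a_h s_{t-1}.
\]
This bookkeeping is the main obstacle. The care needed is in tracking how $u_t$ enters the squared terms $\phi_2 s_{t+j-1}^2$ and the heteroskedastic terms $\gamma s_{t+j-1}\sigma u_{t+j}$ in order to recover the linear coefficient $a_h$. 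I would shortcut it by noting that the coefficient of $\delta$ in the CAR of Proposition~\ref{prop:true.car} is exactly the coefficient on $u_t$ in $\EE[y_{t+h}\mid u_t,\mathcal F_{t-1}]$, since that conditional mean is quadratic in $u_t$.

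Regressing $g_h$ on $(1,y_{t-1})$ then gives
\[
\beta_h^{(1)}=\frac{a_h\Cov(s_{t-1},y_{t-1})}{\Var(y_{t-1})},\qquad
\beta_h^{(0)}=\sigma\phi_1^h+a_h\EE[s_{t-1}]-\beta_h^{(1)}\EE[y_{t-1}].
\]
Since $\EE[s_{t-1}]=0$, the formula for $\beta_h^{(0)}$ reduces to the one in \eqref{eq:beta_lag_simplified}. It remains to show $\Cov(y_t,s_t)=\sigma^2/(1-\phi_1^2)$. By stationarity under Assumption~\ref{assump:qar.dgp}, let $C=\Cov(y_t,s_t)$ and expand using \eqref{QAR.hetero}:
\begin{itemize}
\item $\Cov(\phi_1y_{t-1},\phi_1s_{t-1})=\phi_1^2C$;
\item $\Cov(\phi_2s_{t-1}^2,\phi_1s_{t-1})=\phi_1\phi_2\EE[s_{t-1}^3]=0$, since $s_{t-1}$ is mean-zero Gaussian;
\item terms pairing a $t-1$ quantity with $\sigma u_t$ vanish by independence;
\item $\Cov((1+\gamma s_{t-1})\sigma u_t,\phi_1 s_{t-1})=0$, because $\EE[u_t]=0$;
\item $\Cov((1+\gamma s_{t-1})\sigma u_t,\sigma u_t)=\sigma^2\EE[1+\gamma s_{t-1}]=\sigma^2$.
\end{itemize}
Thus $C=\phi_1^2C+\sigma^2$, which gives $C=\sigma^2/(1-\phi_1^2)$ and completes \eqref{eq:beta_lag_simplified}. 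Finally, the IRF expression $(\beta_h^{(0)}+\beta_h^{(1)}y)\delta$ follows from perturbing $u_t$ by $\delta$ in the population regression while holding the other regressors fixed.
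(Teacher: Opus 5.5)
Your proposal is correct and follows essentially the paper's route. Both arguments use block-diagonal normal equations to reduce to the $2\times 2$ system in $(\beta_h^{(0)},\beta_h^{(1)})$, and then rely on the moments $\EE[u_ty_{t+h}\mid\mathcal F_{t-1}]=\sigma\phi_1^h+a_hs_{t-1}$ and $\Cov(s_{t-1},y_{t-1})=\sigma^2/(1-\phi_1^2)$, the latter being the paper's Lemma~\ref{lem:align_y_to_s_main}. Recasting the system as a projection of $g_h$ on $(1,y_{t-1})$ is just a tidier way of organizing the paper's term-by-term expansion of $\EE[y_{t+h}y_{t-1}u_t]$.
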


\begin{remark}
    Another popular state-dependent LP specification that constructs the state variable based on lagged variables is:
$$
y_{t+h}=F(z_{t-1})\left[\alpha_h^{(R)} + \beta_h^{(R)} u_{t}+ {\pi_{h}^{(R)}}^{\prime} W_t\right]+\left(1-F(z_{t-1})\right)\left[\alpha_h^{(E)} + \beta_h^{(E)} u_{t}+ {\pi_{h}^{(E)}}^{ \prime} W_t\right]+ \epsilon_{h,t+h},
$$
where $F(z_{t-1})\in [0,1]$ denotes the probability of recession estimated from lagged variables $z_{t-1}$, and $\beta_h^{(R)}$ and $\beta_h^{(E)}$ represent the impulse response coefficients condition on the economy being in a recession or an expansion, respectively. This specification can be viewed as the LP analog of the smooth-transition VAR proposed by \cite{AuerbachGorodnichenko2012}.
\end{remark}

\begin{remark}
    One may combine the shock-based and lag-based specifications by interacting the lagged observable with the binary state $S_t=\mathbbm{1}\{u_{t}>0\}$. This approach has been used in several empirical macroeconomics papers; see, for example, \cite{AlesinaEtAl2018}, \cite{AuerbachGorodnichenko2016}, \cite{BernardiniEtAl2020}, and \cite{BornEtAl2020}. In our setting, we can specify the empirical specification \textit{Mixed} as:
$$
\begin{aligned}
y_{t+h} &= S_{t}\left[\beta_h^{(0, +)} u_{t}+ {\pi_{h}^{(0, +)}}^{\prime} W_t + y_{t-1} \left(\beta_h^{(1, +)} u_{t}+ {\pi_{h}^{(1, +)}}^{\prime} W_t\right)\right] \\
& + (1 - S_{t}) \left[\beta_h^{(0, -)} u_{t}+ {\pi_{h}^{(0, -)}}^{\prime} W_t + y_{t-1} \left(\beta_h^{(1, -)} u_{t}+ {\pi_{h}^{(1, -)}}^{\prime} W_t\right)\right]  +\epsilon_{h,t+h}.
\end{aligned}
$$
\end{remark}

\noindent{\textbf{Targeting the True Impulse Responses.}\ }
The decomposition of the CAR in \eqref{eq:car_three_terms} suggests a general design principle: the nonlinearity in the LP specification must match the nonlinearity in the data-generating process.  Neither \textit{AsymLP} nor \textit{LagLP} includes regressors for \emph{both} the state-dependent effect and the higher-order effect, so neither can fully approximate the CAR.

We now introduce two specifications that are guided by this principle. The first is infeasible but serves as a benchmark; the second is its implementable analogue.

\noindent {\bf Infeasible Specification.\ } 
The three-term structure of the CAR suggests the following (infeasible) empirical specification, referred to as \textit{Infeas}:
\be
    y_{t+h} = \kappa_{h0} + \kappa_{h1} u_t + \kappa_{h2}\, s_{t-1}u_t + \kappa_{h3}\, u_t^2 + \epsilon_{h,t+h}.
\label{infeasible.spec}
\ee
What does specification \textit{Infeas} recover under the QAR(1,1) DGP?

\begin{proposition} \label{prop:irf.infeas}
    Under Assumption~\ref{assump:qar.dgp}, the population IRF to a shock of magnitude $\delta$ to $u_t$ implied by \textit{Infeas} is $\operatorname{IRF}^{Infeas}(s, \delta; h) = \kappa_{h1} \delta + \kappa_{h2} \,s \delta + \kappa_{h3}\, \delta^2$, with $\kappa_{h1}$, $\kappa_{h2}$, and $\kappa_{h3}$  such that the IRF coincides with the true CAR.
\end{proposition}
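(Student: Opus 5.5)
The plan is to show that the conditional expectation $\EE[y_{t+h}\mid s_{t-1},u_t]$ lies exactly in the span of the \textit{Infeas} regressors $(1,u_t,s_{t-1}u_t,u_t^2)$ plus a term depending only on $s_{t-1}$ that is uncorrelated with them. If so, the population projection coefficients $\kappa_{h1},\kappa_{h2},\kappa_{h3}$ equal the coefficients of that conditional mean, and the implied IRF reproduces the CAR of Proposition~\ref{prop:true.car}.

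\textbf{Step 1: conditional mean.} First iterate the recursion in \eqref{QAR.hetero} forward to write $y_{t+h}$ in terms of $s_{t-1}$, $y_{t-1}$, $u_t$ and the future shocks. This is the structural function $\psi_h$ from the Online Appendix. Note that $s_{t+j-1}=\phi_1^{j}s_{t-1}+\sigma\sum_{k=0}^{j-1}\phi_1^{j-1-k}u_{t+k}$. Taking expectations conditional on $(y_{t-1},s_{t-1},u_t)$ kills every term that is linear in future shocks. The squares $s_{t+j-1}^2$ contribute $\phi_1^{2j}s_{t-1}^2$, the cross term $2\sigma\phi_1^{2j-1}s_{t-1}u_t$, the square $\sigma^2\phi_1^{2j-2}u_t^2$, and constants. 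Collecting terms gives
\[
\EE[y_{t+h}\mid y_{t-1},s_{t-1},u_t]=c_h+\phi_1^{h+1}y_{t-1}+g_h(s_{t-1})+\sigma\phi_1^h u_t+a_h s_{t-1}u_t+q_h u_t^2 .
\]
Here $g_h$ is quadratic in $s_{t-1}$, and the coefficients on $u_t$, $s_{t-1}u_t$ and $u_t^2$ are exactly those in \eqref{eq:car_three_terms}. This is the same computation that yields Proposition~\ref{prop:true.car}, so I take it as given: differencing it in $u_t$ produces $\sigma\phi_1^h\delta+a_hs\delta+q_h\delta^2$.

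\textbf{Step 2: projection.} Let $R_h=c_h+\phi_1^{h+1}y_{t-1}+g_h(s_{t-1})$ collect the terms depending only on the past. The regression of $y_{t+h}$ on $X_t=(1,u_t,s_{t-1}u_t,u_t^2)$ then equals the projection of $\EE[y_{t+h}\mid \cdot]$ on $X_t$. That in turn equals the true coefficients plus the projection of $R_h$ on $X_t$. Because $u_t\perp(y_{t-1},s_{t-1})$ and $\EE u_t=\EE u_t^3=0$, we have $\Cov(R_h,u_t)=0$ and $\Cov(R_h,s_{t-1}u_t)=\EE[R_hs_{t-1}]\EE[u_t]=0$. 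Likewise $\Cov(R_h,u_t^2)=\Cov(R_h,1)\cdot\EE u_t^2-\EE R_h\EE u_t^2=0$, since the factors are independent. So after demeaning, $R_h$ is orthogonal to the non-constant regressors, and its projection loads only on the constant.

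\textbf{Step 3: conclusion.} It remains to check that the Gram matrix of $X_t$ is nonsingular. Its entries are moments of the independent Gaussian $u_t$ and the stationary Gaussian $s_{t-1}$, so this is immediate. The projection coefficients are therefore $\kappa_{h1}=\sigma\phi_1^h$, $\kappa_{h2}=a_h$ and $\kappa_{h3}=q_h$, with $R_h$'s mean absorbed into $\kappa_{h0}$. Perturbing $u_t$ by $\delta$ in the fitted regression, with $s_{t-1}=s$ held fixed, gives $\kappa_{h1}\delta+\kappa_{h2}s\delta+\kappa_{h3}[(u_t+\delta)^2-u_t^2]$.

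The main obstacle, modest but real, is this last expression. The perturbation yields $q_h(2u_t\delta+\delta^2)$, not $q_h\delta^2$. To match the CAR, the IRF must be evaluated in the same way as the CAR, which conditions on $\mathcal{F}_{t-1}$ and averages over $u_t$. Averaging over $u_t$ sets $\EE[2u_t\delta]=0$ and leaves exactly $q_h\delta^2$. I would state this convention explicitly, since it is what makes the quadratic term line up with the definition of $\operatorname{CAR}_h$.
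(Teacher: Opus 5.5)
Your proposal is correct and follows essentially the same route as the paper. The paper also writes $y_{t+h}$ through the structural function and reads off the coefficients on $u_t$, $s_{t-1}u_t$, $u_t^2$ as $\sigma\phi_1^h$, $a_h$, $q_h$. It then shows that the demeaned remainder is orthogonal to the regressors; the only difference is that it keeps the future-shock terms in that remainder rather than conditioning them away first as you do. Your point about the $2u_t\delta$ term is fair: the paper leaves this convention implicit and simply reads the IRF off as $\kappa_{h1}\delta+\kappa_{h2}s\delta+\kappa_{h3}\delta^2$, which is what you get either by comparing $u_t=\delta$ with $u_t=0$ or by averaging over $u_t$ as you propose.
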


\noindent {\bf Feasible Specification.\ } Since $s_{t-1}$ is not directly observable, one alternative would be using $y_{t-1}$ as a proxy, i.e.,
\be
    y_{t+h} = \theta_{h0} + \theta_{h1} u_t + \theta_{h2}\, y_{t-1}u_t + \theta_{h3}\, u_t^2 + \pi_h^{\prime} W_t + \epsilon_{h,t+h},
\label{feasible.spec}
\ee
where $W_t$ is a vector of control variables that are independent of shock $u_{t}$.
We refer to this specification as \textit{Feas}.
A natural question is, how well does this feasible specification perform in capturing the nonlinearity? To answer this, we first derive the population IRF implied by \textit{Feas} when the DGP is QAR(1,1), as stated in Proposition~\ref{prop:irf.feas}.

\begin{proposition} \label{prop:irf.feas}
     Under Assumption~\ref{assump:qar.dgp}, the population IRF to a shock of magnitude $\delta$ to $u_t$ implied by \textit{Feas} can be written as
     \be
     \operatorname{IRF}^{Feas}(y, \delta; h) = \beta_h^{(0)} \delta+\beta_h^{(1)} y\delta+q_h \delta^2,
     \label{eq:irf_feas_simplified}
     \ee
     where $\beta_h^{(0)}$ and $\beta_h^{(1)}$ are given in Proposition~\ref{prop:irf.lag.state} and $q_h$ is the higher-order coefficient defined in \eqref{eq:ah_qh_def}.

\end{proposition}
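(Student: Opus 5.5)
The plan is to compute the population linear projection of $y_{t+h}$ on the regressors $(1, u_t, y_{t-1}u_t, u_t^2, W_t)$ using the structural decomposition of $y_{t+h}$ that underlies Proposition~\ref{prop:true.car}. Iterating the QAR(1,1) forward gives
\[
y_{t+h} = G_h + \sigma\phi_1^h u_t + a_h s_{t-1}u_t + q_h u_t^2 + R_{h,t+h},
\]
where $G_h$ is a function of $(y_{t-1},s_{t-1})$ only and $R_{h,t+h}$ collects the future shocks $u_{t+1},\dots,u_{t+h}$. The shock enters through $s_{t+j-1}=\phi_1^{j}s_{t-1}+\sigma\phi_1^{j-1}u_t+\dots$, both inside the quadratic terms $\phi_2 s_{t+j-1}^2$ and inside the heteroskedastic loadings $\gamma s_{t+j-1}\sigma u_{t+j}$. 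The future-shock remainder $R_{h,t+h}$ may still depend on $u_t$ (for instance, terms like $\gamma\sigma^2\phi_1^{j-1}u_t u_{t+j}$), but it is mean zero conditional on $(u_t,\mathcal{F}_{t-1})$. Because $R$ has zero conditional mean, it is orthogonal to every regressor, which are all functions of $(u_t,\mathcal{F}_{t-1})$. I therefore work with $\EE[y_{t+h}\mid u_t,\mathcal{F}_{t-1}]$, whose $u_t$-dependent part is exactly $\sigma\phi_1^h u_t+a_h s_{t-1}u_t+q_h u_t^2$, consistent with the CAR in Proposition~\ref{prop:true.car}.

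Next I project term by term, using linearity of projection. The key tool is independence of $u_t\sim N(0,1)$ from $(y_{t-1},s_{t-1},W_t)$. This makes the regressor blocks $\{u_t,\ y_{t-1}u_t\}$, $\{u_t^2-1\}$ and $\{1,W_t\}$ mutually orthogonal after centering $u_t^2$. The two relevant cross moments vanish:
\[
\EE[u_t\cdot u_t^2]=0, \qquad \EE[y_{t-1}u_t\,(u_t^2-1)]=\EE[y_{t-1}]\,\EE[u_t^3-u_t]=0.
\]
Similarly $\EE[u_t\,g(W_t,y_{t-1},s_{t-1})]=0$, and $y_{t-1}u_t$ is orthogonal to any function of lagged variables. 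With the constant included, the term $q_h u_t^2$ projects onto $\theta_{h3}=q_h$, with $q_h$ absorbed into the constant. The term $G_h$ projects onto $\{1,W_t\}$ only. The term $\sigma\phi_1^h u_t$ projects onto itself.

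The remaining term $a_h s_{t-1}u_t$ projects onto $\mathrm{span}(u_t, y_{t-1}u_t)$. The Gram matrix here is $\EE[(1,y_{t-1})'(1,y_{t-1})]$ because $\EE[u_t^2]=1$. This reduces to the population regression of $s_{t-1}$ on $(1,y_{t-1})$, with slope
\[
\Cov(s_{t-1},y_{t-1})/\Var(y_{t-1}).
\]
I then show $\Cov(s_{t-1},y_{t-1})=\Var(s_{t-1})=\sigma^2/(1-\phi_1^2)$. To do this, write $y_{t-1}-s_{t-1}=\phi_1(y_{t-2}-s_{t-2})+\phi_2 s_{t-2}^2+\gamma\sigma s_{t-2}u_{t-1}$ and solve backwards. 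Its covariance with $s_{t-1}$ involves only third moments of the Gaussian $s$, which vanish, and terms $\EE[s_{t-k}u_{t-k+1}\,s_{t-1}]$, which equal $\sigma\phi_1^{k-2}\EE[s^2_{t-k}u^2_{t-k+1}]\cdot 0$. A careful check is needed here, since $\EE[s_{t-k}u_{t-k+1}u_{t-k+1}]=\EE[s_{t-k}]=0$, and this yields zero.

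The slope is therefore $a_h\sigma^2/((1-\phi_1^2)\Var(y_{t-1}))=\beta_h^{(1)}$. The $u_t$ coefficient becomes $\sigma\phi_1^h+a_h(\EE s_{t-1}-\beta_h^{(1)}/a_h\cdot a_h\EE y_{t-1})$, which reduces to $\sigma\phi_1^h-\beta_h^{(1)}\EE[y_{t-1}]=\beta_h^{(0)}$ since $\EE s_{t-1}=0$. This matches Proposition~\ref{prop:irf.lag.state}. Perturbing $u_t$ by $\delta$ while holding other regressors fixed gives the IRF, where the quadratic term contributes $q_h\delta^2$ in the paper's convention, as for \textit{Infeas} in Proposition~\ref{prop:irf.infeas}.

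The main obstacle is the covariance computation $\Cov(s_{t-1},y_{t-1})=\Var(s_{t-1})$, together with verifying that the presence of $W_t$ does not alter $\theta_{h1},\theta_{h2}$. For the latter I would argue that $W_t$ is independent of $u_t$, so $y_{t-1}u_t$ and $u_t$ are orthogonal to $W_t$. For the former I would first try the backward-recursion argument, relying on Gaussian odd-moment vanishing.
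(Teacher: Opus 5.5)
Your proposal is correct. It rests on the same mechanism as the paper's proof: independence of $u_t$ from $(y_{t-1},s_{t-1},W_t)$ decouples the Gram matrix, so $\theta_{h3}$ comes from $\EE[(u_t^2-1)y_{t+h}]=2q_h$, and $(\theta_{h1},\theta_{h2})$ solve the same $2\times 2$ system as in \textit{LagLP}. (The paper obtains $\theta_{h3}$ by differencing the first and last normal equations, which is equivalent to your centering of $u_t^2$.)

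Where you differ is in how you close the $2\times2$ system. The paper observes that it coincides with the one in the proof of Proposition~\ref{prop:irf.lag.state} and inherits $(\beta_h^{(0)},\beta_h^{(1)})$ from the explicit computation of $\EE[y_{t+h}y_{t-1}u_t]$ done there. You instead project $a_h s_{t-1}u_t$ onto $(u_t,y_{t-1}u_t)$. This yields $\theta_{h2}=a_h\lambda$ with $\lambda\equiv\Cov(s_{t-1},y_{t-1})/\Var(y_{t-1})$, and then you need $\Cov(s_{t-1},y_{t-1})=\sigma^2/(1-\phi_1^2)$ to match the closed form. Your route is more self-contained, and it delivers the interpretable representation $\beta_h^{(1)}=a_h\lambda$, which the paper only establishes later (proof of Theorem~\ref{thm:cond_u_main}, via Lemma~\ref{lem:align_y_to_s_main}).

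Your conditioning on $(u_t,\mathcal F_{t-1})$ is also more careful than the paper's assertion that the remainder $R_{t,h}$ is independent of $u_t$. That remainder actually contains products $u_tu_{t+j}$, which are harmless only because they are conditionally mean zero.

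For the covariance identity, the paper's one-line recursion is shorter than your backward sum: $\EE[s_ty_t]=\phi_1^2\EE[s_{t-1}y_{t-1}]+\sigma^2$, using $\EE[s_{t-1}^3]=0$.

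Two of your displayed expressions are garbled and should read
$\EE[s_{t-k}u_{t-k+1}s_{t-1}]=\phi_1^{k-1}\EE[s_{t-k}^2u_{t-k+1}]+\sigma\phi_1^{k-2}\EE[s_{t-k}u_{t-k+1}^2]=0$
and, for the $u_t$ coefficient,
$\sigma\phi_1^h+a_h(\EE[s_{t-1}]-\lambda\EE[y_{t-1}])=\sigma\phi_1^h-\beta_h^{(1)}\EE[y_{t-1}]$.
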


\begin{remark}
   One may also follow \cite{GoncalvesEtAl2024} and estimate the CAR fully nonparametrically; we refer to this approach as \textit{NPLP}. Their idea is to first obtain a nonparametric estimator $\hat{g}_h(s, u)$ of $g_h(s, u) \equiv \mathbb{E}\left[y_{t+h} \mid s_{t-1}=s, u_{t}=u\right]$, and then estimate $\operatorname{CAR}_h(s, \delta)$ as
$$
\widehat{\operatorname{CAR}}_h(s, \delta)=\frac{1}{T} \sum_{t=1}^T\left(\hat{g}_h\left(s, u_{t}+\delta\right)-\hat{g}_h\left(s, u_{t}\right)\right).
$$ Since the latent state variable $s_{t-1}$ is unobservable, one can replace it with its proxy $y_{t-1}$, as in our feasible specification. We implement a control-adjusted version of this approach for comparison in the empirical application in Section~\ref{sec:empiric.exp}.
\label{rem:nplp}
\end{remark}

To summarize the population IRFs: \textit{Linear} captures only the baseline effect $\sigma\phi_1^h\delta$; \textit{AsymLP} additionally captures part of the higher-order effect (through the sign of the shock); \textit{LagLP} additionally captures part of the state-dependent effect (through the observable proxy $y_{t-1}$); and \textit{Feas} captures both the state-dependent and higher-order effects simultaneously. The question remains: how large are these differences quantitatively? We turn to this question next.

\subsection{A Distance Measure for Impulse Responses}
\label{subsec:irf.dist.measure}

To construct a distance measure for comparing IRFs, one might naturally consider the area under the absolute difference between them. However, an important complication arises: the IRFs are defined with respect to different conditioning variables. More precisely, the true CAR is a function of the realized value of latent variable $s_{t-1}$, while the IRFs implied by empirical specifications are functions of the realized value of other observables $z_t$---for example, the binary state $S_t$ in \textit{AsymLP}, the lagged outcome $y_{t-1}$ in \textit{LagLP}, or the pair $\left(y_{t-1}, S_t\right)$ in \textit{Mixed}. As a result, these IRFs cannot be directly compared.

To address this issue, we compare the true CAR and the specification-implied IRF on the same underlying realizations of $\left(s_{t-1},u_t,z_t\right)$ and define distance as their conditional mean squared difference. By varying the conditioning event $\mathcal{C}$, the measure can be evaluated either unconditionally or over particular regions of the state-shock space.

\begin{definition}[Conditional mean-squared approximation error]
\label{def:cond_mse}
Fix a horizon $h\ge 0$, a specification $spec\in\{\textit{Linear},\textit{AsymLP},\textit{LagLP},\textit{Feas}\}$, and a conditioning event $\mathcal{C}$ on $\left(s_{t-1},u_t,z_t\right)$.  The \emph{conditional approximation error} at horizon $h$ is
\be
\mathcal{E}^{spec}_h(\mathcal{C})
\;\equiv\;
\EE\!\left[
\Big(
\operatorname{CAR}_h(s_{t-1},u_t)-\operatorname{IRF}^{spec}(z_t,u_t;h)
\Big)^2
\;\Bigg|\;
\mathcal{C}
\right],
\label{eq:cond_approx_error}
\ee
where $\operatorname{IRF}^{spec}$ is evaluated at the realized conditioning variable of the specification (e.g.\ $y_{t-1}$ for \textit{LagLP} and \textit{Feas}, $S_t$ for \textit{AsymLP}).
The \emph{aggregate distance} over horizons $0,\ldots,H$ is
\be
D^{spec}(\mathcal{C})
\;\equiv\;
\left(\sum_{h=0}^{H}\mathcal{E}^{spec}_h(\mathcal{C})\right)^{1/2}.
\label{eq:aggregate_distance}
\ee
\end{definition}

\begin{remark}[Special cases]\label{rem:distance_special_cases}
The following instantiations of Definition~\ref{def:cond_mse} play distinct roles in the subsequent analysis.
\begin{tlist}
\item \textit{Unconditional distance.\ } Setting $\mathcal{C}=\Omega$ (the entire sample space) gives
\be
D^{spec}(\Omega)
= \mathbb{E}\!\left[\sum_{h=0}^H\Big(\operatorname{CAR}_h(s_{t-1},u_t)-\operatorname{IRF}^{spec}(z_t,u_t;h)\Big)^2\right]^{1/2},
\label{eq:uncond_distance}
\ee
which is the overall distance used in the numerical illustration (Section~\ref{subsec:mc.exp}).
\item \textit{Conditional distance over a region $\mathcal{S}$.\ } Setting $\mathcal{C}=\{(s_{t-1},u_t)\in\mathcal{S}\}$ for a region $\mathcal{S}\subset\RR^2$ gives $D^{spec}(\mathcal{S})$, which is used to pinpoint where in the $(s_{t-1},u_t)$ plane each specification matches the true CAR well (Section~\ref{subsec:mc.exp}, Figure~\ref{fig:avg.dist}).
\item \textit{Shock-conditional MSE.\ } Setting $\mathcal{C}=\{u_t=\delta\}$ and focusing on a single horizon yields
\be
\mathcal{L}^{spec}_h(\delta)
\;\equiv\;
\mathcal{E}^{spec}_h\!\left(\{u_t=\delta\}\right),
\label{eq:def_loss_u_main}
\ee
whose closed-form expressions are derived in Theorem~\ref{thm:cond_u_main}.
\item \textit{State-conditional MSE.\ } Setting $\mathcal{C}=\{s_{t-1}=s\}$ and focusing on a single horizon yields
\be
\mathcal{R}^{spec}_h(s)
\;\equiv\;
\mathcal{E}^{spec}_h\!\left(\{s_{t-1}=s\}\right),
\label{eq:def_loss_s_main}
\ee
whose closed-form expressions are derived in Theorem~\ref{thm:cond_s_main}.
\end{tlist}

\noindent Note that these cases are connected by the law of iterated expectations:
$$
D^{spec}(\Omega)^2 = \sum_{h=0}^{H}\EE\!\left[\mathcal{L}^{spec}_h(u_t)\right] = \sum_{h=0}^{H}\EE\!\left[\mathcal{R}^{spec}_h(s_{t-1})\right].
$$
\end{remark}

The unconditional distance \eqref{eq:uncond_distance} can be estimated through the following procedure. First, we simulate the model once for $T$ periods, obtaining $\{(s_{t-1}, u_t, z_t)\}_{t=1}^{T}$. Next, for each $(s_{t-1}, u_t, z_t)$, we compute $\Delta(s_{t-1}, u_t, z_t):= \sum_{h=0}^H\left(\operatorname{CAR}_h(s_{t-1}, u_t)-\operatorname{IRF}^{spec}(z_t, u_t; h)\right)^2.$
Finally, we calculate the distance as
$
\hat{D}=\left(\frac{1}{T}\sum_{t=1}^{T}\Delta(s_{t-1}, u_t, z_t)\right)^{1/2}.
$

\subsection{Numerical Illustration}
\label{subsec:mc.exp}
In this section, we provide a numerical illustration of the population approximation errors of the four LP specifications, drawing on the population IRFs derived in Section~\ref{subsec:state.lp} and the distance measure introduced in Section~\ref{subsec:irf.dist.measure}. We simulate the QAR(1,1) model in Eq~(\ref{QAR.hetero}) once with a large sample size of $T=10,000$ so that the sample analogue $\hat D$ is numerically close to the underlying population distance. The goal is therefore to isolate population misspecification bias rather than to evaluate finite-sample performance. We set the model parameters as follows: $H=10$, $\phi_1=0.5$, $\sigma=1$, $\phi_2=0.2$, and $\gamma=0.1$, where $\phi_2$ and $\gamma$ govern the degree of nonlinearity in the model. These values are chosen to match the scale of the posterior-median estimates reported by ABS when fitting the QAR model to U.S. data.

Figures~\ref{fig:compare.irf.s} and~\ref{fig:compare.irf.u} compare the IRFs implied by the model and by empirical specifications. In Figure~\ref{fig:compare.irf.s}, we fix the shock size to 1. We observe that the impulse response becomes larger (in absolute value) as \( s_{t-1} \) increases. This reflects the state-dependent nature of the true IRF. The source of this pattern can be seen from the formula for the CAR: the state-dependent loading $a_h$ is strictly positive under our parameterization, so \( s_{t-1} \) acts as an amplifier of the shock's effect. \textit{LagLP} qualitatively captures this relationship: the response increases with \( y_{t-1} \), which serves as a proxy for \( s_{t-1} \). By contrast, the IRF implied by \textit{Linear} is invariant across all values of \( s \). In Figure~\ref{fig:compare.irf.u}, we vary the magnitude of the shock \( \delta \). The true CAR exhibits asymmetric responses: positive and negative shocks of the same magnitude lead to effects of different sizes in absolute value. This asymmetry arises from the higher-order coefficient $q_h$, which is positive under our parameterization. \textit{AsymLP} captures this feature: a positive shock (\( S = 1 \)) leads to a larger response than a negative one (\( S = 0 \)). The IRFs implied by \textit{Linear}, however, are symmetric.

\begin{figure}[t!]
  \setlength{\abovecaptionskip}{0cm}
	\caption{Comparing Impulse Responses: Varying $s$ and $y$}
	\label{fig:compare.irf.s}  
	\begin{center} 
		\begin{tabular}{cc}
			True & \textit{Linear} \\
			\includegraphics[width=3in]{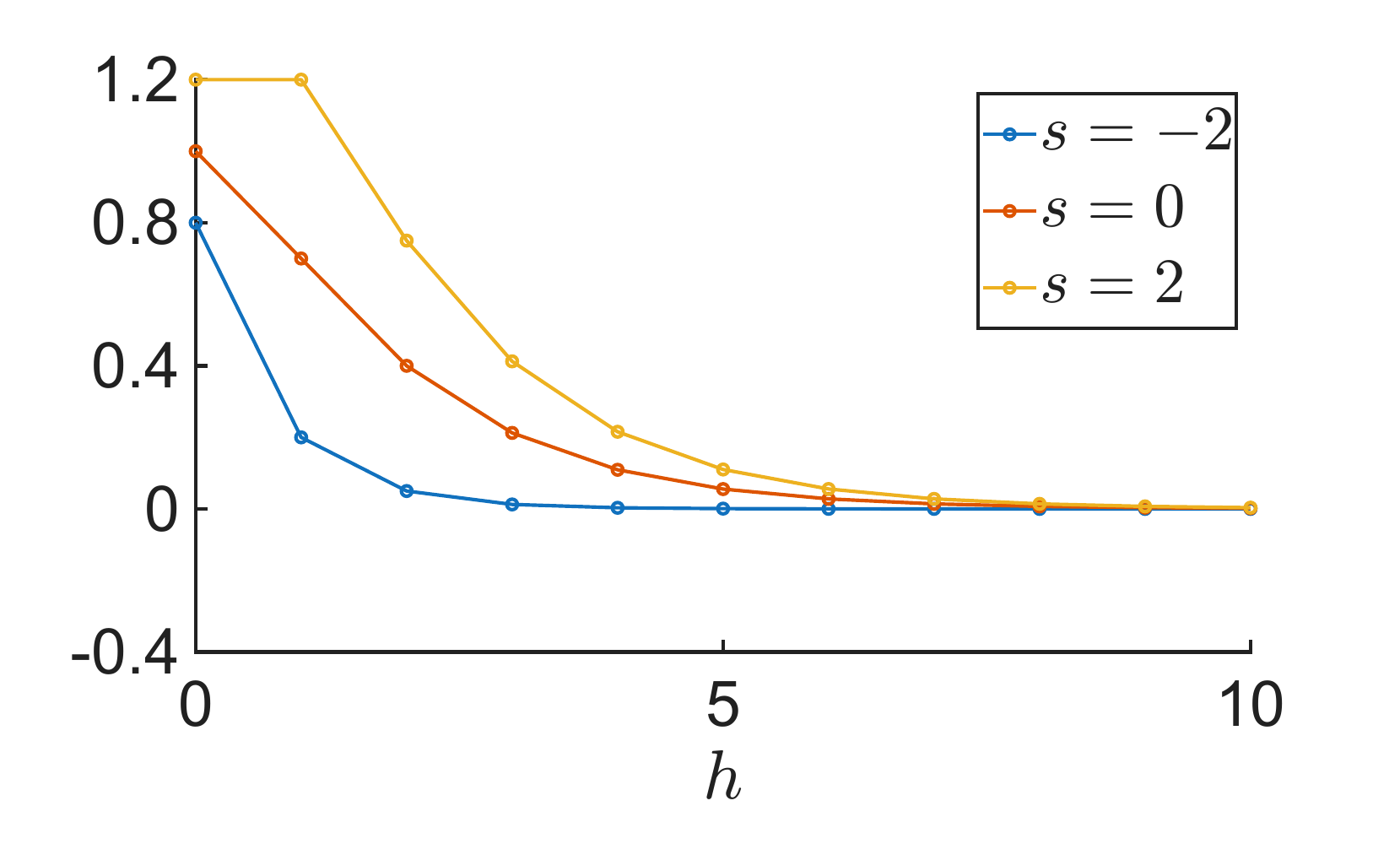} &
			\includegraphics[width=3in]{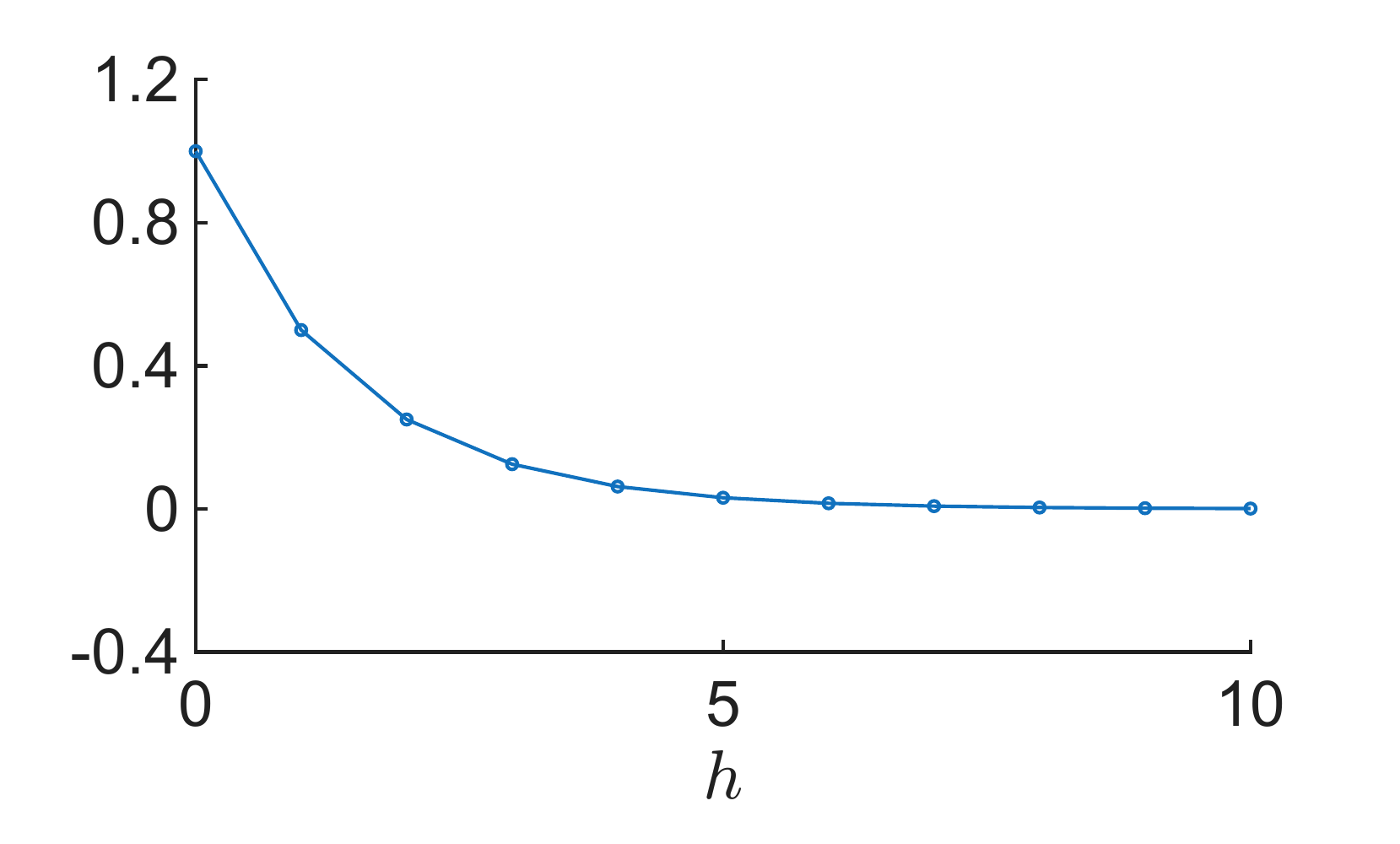} \\
			  \textit{LagLP} & \\
			\includegraphics[width=3in]{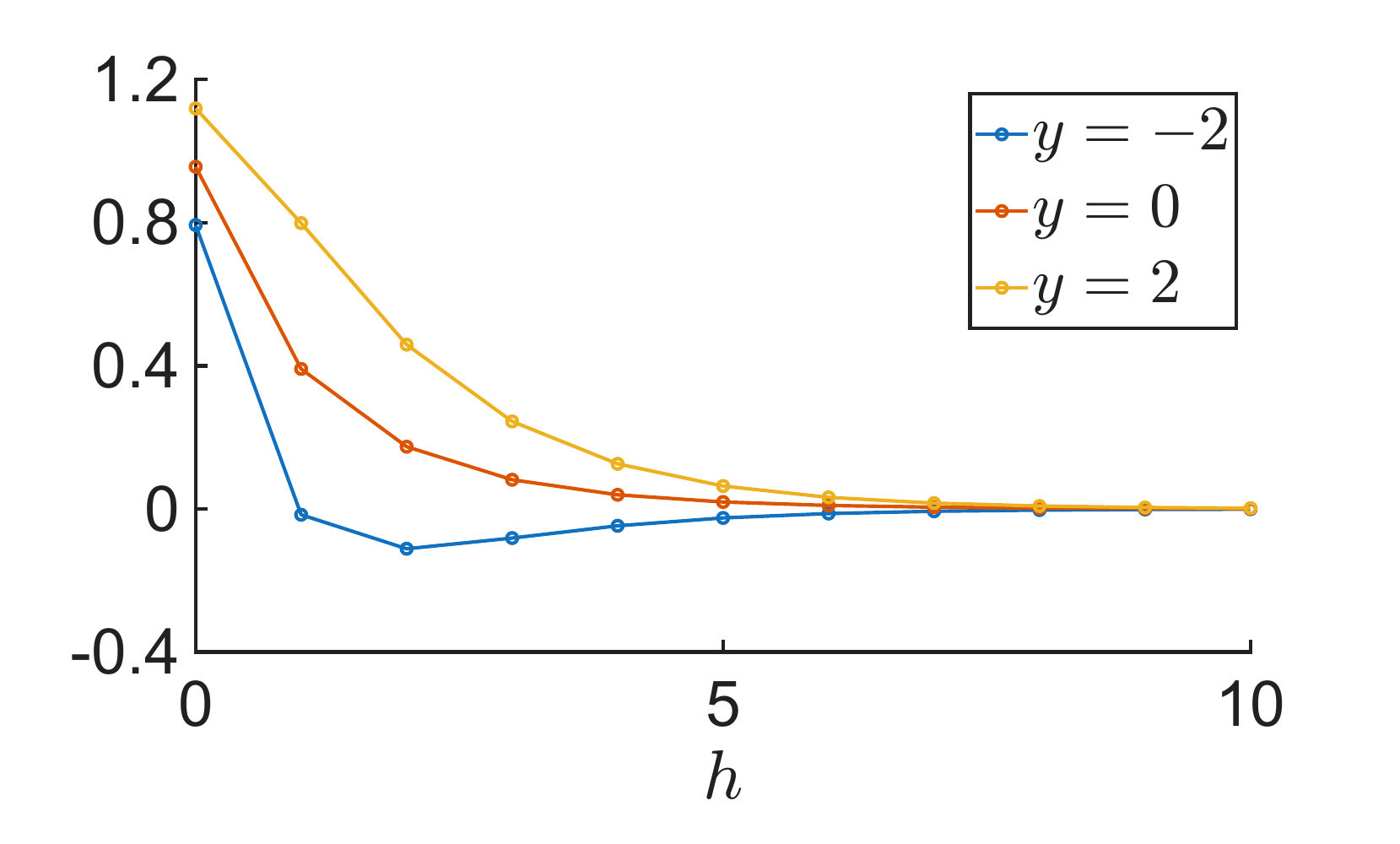} & \\		
		\end{tabular}
	\end{center}
	{\footnotesize {\em Notes}: IRFs when the magnitude of shock is fixed at $\delta=1$. True: $\operatorname{CAR}_h(-2, 1)$, $\operatorname{CAR}_h(0, 1)$, and $\operatorname{CAR}_h(2, 1)$; \textit{Linear}: $\operatorname{IRF}^{Linear}(1; h)$; \textit{LagLP}: $\operatorname{IRF}^{LagLP}(-2, 1; h)$, $\operatorname{IRF}^{LagLP}(0, 1; h)$, $\operatorname{IRF}^{LagLP}(2, 1; h)$.}\setlength{\baselineskip}{4mm}
\end{figure}

\begin{figure}[t!]
  \setlength{\abovecaptionskip}{0cm}
	\caption{Comparing Impulse Responses: Varying $\delta$ and $S$}
	\label{fig:compare.irf.u}  
	\begin{center} 
		\begin{tabular}{cc}
			True & \textit{AsymLP} \\
			\includegraphics[width=3in]{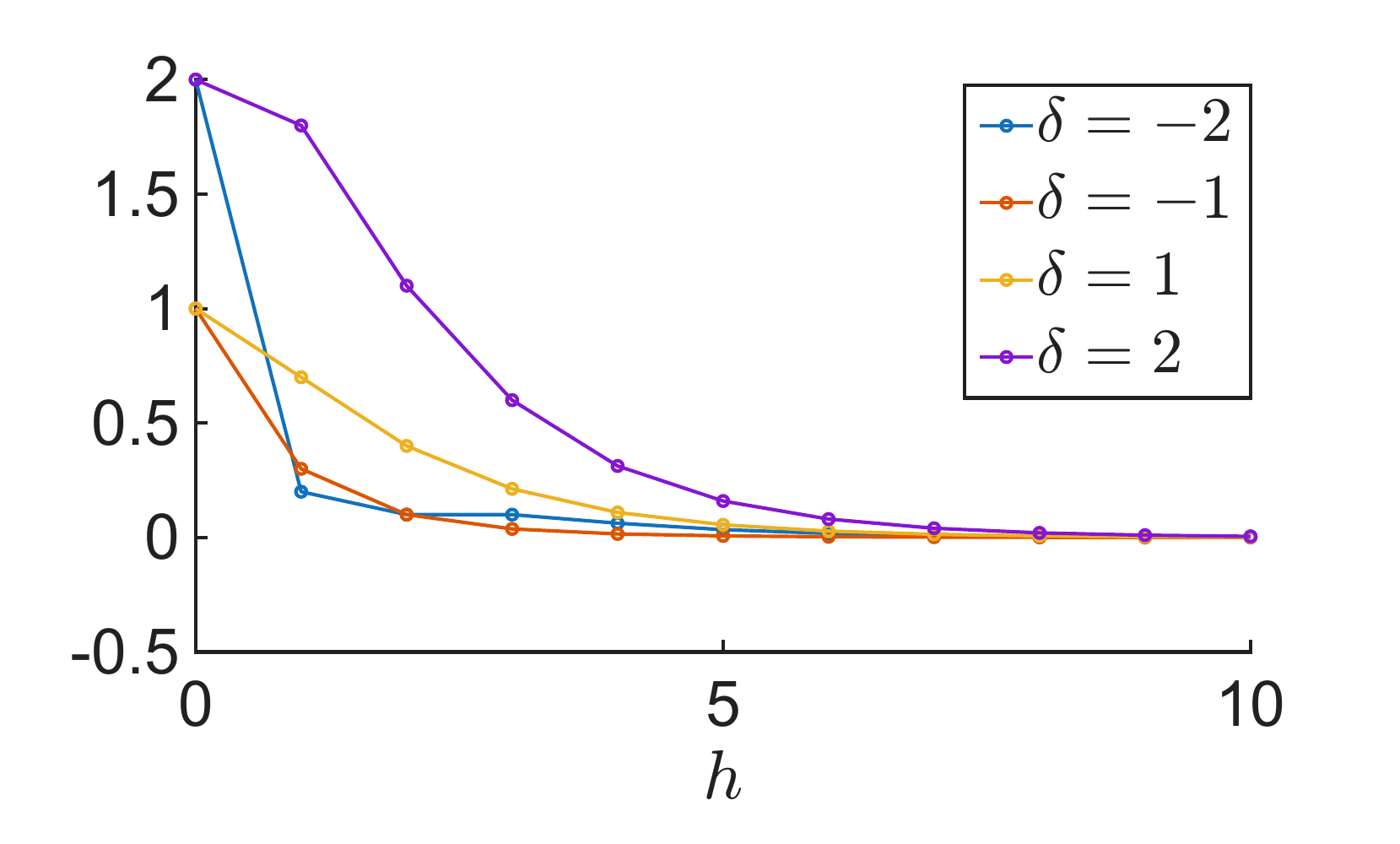} &
			\includegraphics[width=3in]{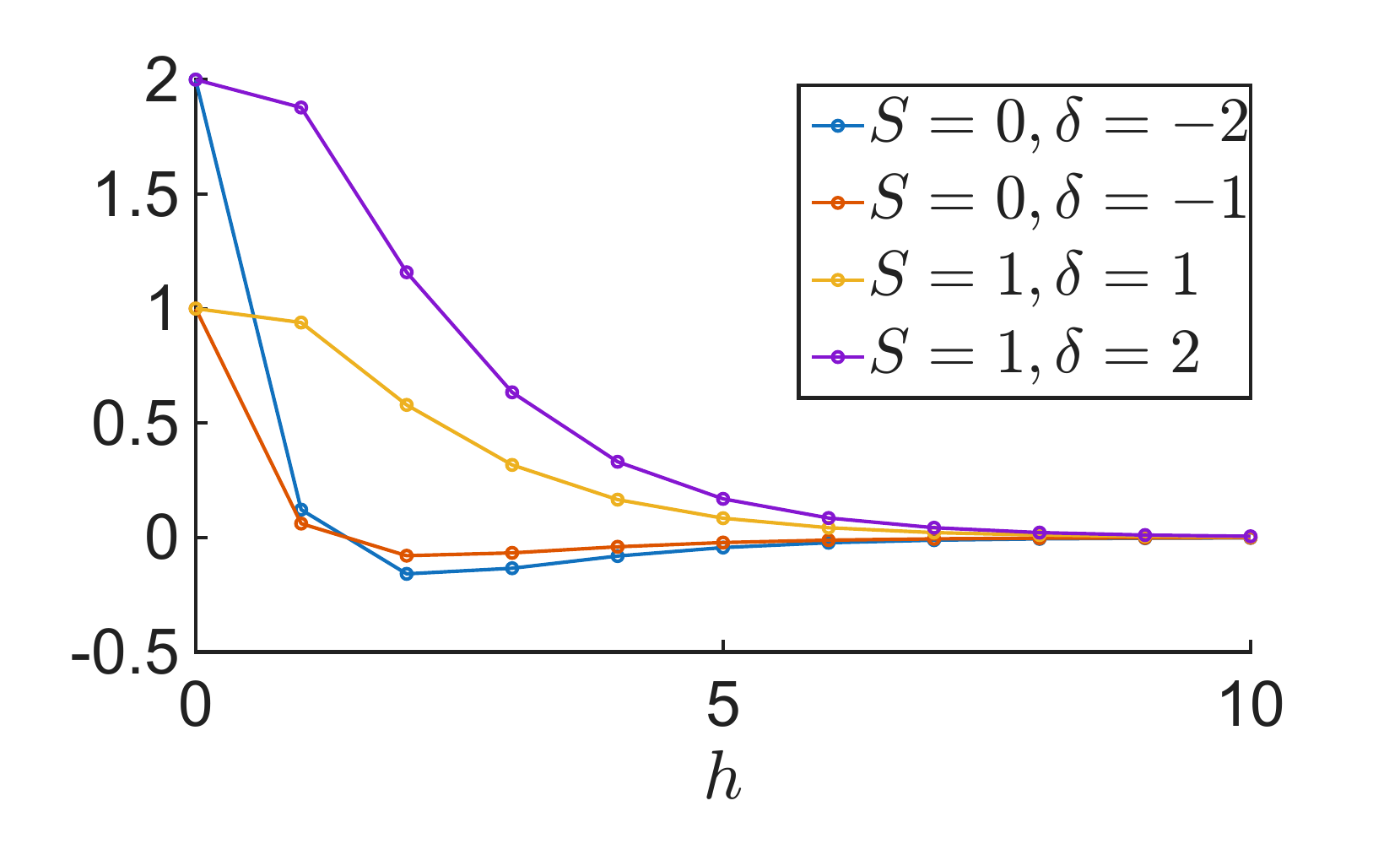} \\
		\end{tabular}
	\end{center}
	{\footnotesize {\em Notes}: We fix $s=0$ for the true CAR. 
    True: $-\operatorname{CAR}_h(0, -2)$, $-\operatorname{CAR}_h(0, -1)$, $\operatorname{CAR}_h(0, 1)$, $\operatorname{CAR}_h(0, 2)$; \textit{AsymLP}: $-\operatorname{IRF}^{AsymLP}(0, -2; h)$, $-\operatorname{IRF}^{AsymLP}(0, -1; h)$, $\operatorname{IRF}^{AsymLP}(1, 1; h)$, $\operatorname{IRF}^{AsymLP}(1, 2; h)$. We negate the impulse response of a negative shock for better comparison.}\setlength{\baselineskip}{4mm}
\end{figure}

The estimated unconditional distance measure \( \hat{D} \) is 0.61 for the purely linear LP, 0.47 for \textit{AsymLP}, 0.50 for \textit{LagLP}, and 0.18 for \textit{Feas}, confirming that both existing state-dependent specifications improve upon the linear benchmark and that \textit{Feas} delivers the best overall approximation.
To further understand where these improvements come from, we estimate the conditional distance over bins of either \( s_{t-1} \) or \( u_t \), as defined in Remark~\ref{rem:distance_special_cases}(ii):
\[
\hat D = \left(\frac{1}{|\mathcal{I}|} \sum_{t \in \mathcal{I}} \Delta(s_{t-1}, u_t, z_t)\right)^{1/2}, \quad \text{where } \mathcal{I} = \{t \mid s^{i-1} \leq s_{t-1} \leq s^i \} \text{ or } \{t \mid u^{i-1} \leq u_t \leq u^i \}.
\]

The first column of Figure~\ref{fig:avg.dist} reports the conditional distance over $s_{t-1}$ bins. \textit{LagLP} outperforms \textit{Linear}, while \textit{AsymLP} yields no material improvement---both \textit{AsymLP} and \textit{Linear} fail to capture state dependence. The second column shows the conditional distance over $u_t$ bins. Here, \textit{AsymLP} outperforms \textit{Linear}, while \textit{LagLP} offers little gain---both \textit{LagLP} and \textit{Linear} fail to capture the higher-order effects. These patterns confirm that each existing state-dependent specification captures a distinct aspect of nonlinearity: \textit{LagLP} captures state dependence since $y_{t-1}$ approximates the latent state $s_{t-1}$, while \textit{AsymLP} with $S_t=\mathbbm{1}\{u_{t}>0\}$ captures the higher-order effects through asymmetric responses to positive and negative shocks.

A closer examination of the first three rows of Figure~\ref{fig:avg.dist} reveals where these improvements concentrate. For \textit{LagLP}, the gains over \textit{Linear} occur primarily in the extreme bins of $s_{t-1}$, with minimal improvement in the middle bins. Similarly, \textit{AsymLP}'s gains over \textit{Linear} appear only in the tails of the shock distribution, with negligible improvement in the center. This pattern reveals an important caveat: while state-dependent specifications improve upon linear LPs on average, these gains concentrate at the extremes. Both specifications perform similarly to the linear LP in the middle regions of their respective distributions---where most observations typically lie. This finding has clear empirical implications. Researchers studying large, rare shocks would benefit from shock-based specifications, as these events fall in the tails where such specifications excel. Those examining responses in extreme economic conditions should employ lag-based specifications, which capture state dependence at the extremes. However, for typical shocks in normal conditions, the added complexity of state-dependent specifications offers limited value over simpler linear LPs.

The bottom row of Figure~\ref{fig:avg.dist} presents the results for \textit{Feas}. The conditional distance to the true CAR, measured over both $s_{t-1}$ bins and $u_t$ bins, is visibly smaller than that under all three preceding specifications.
Crucially, unlike the existing state-dependent specifications that only improve upon linear LPs at the tails, our proposed specification achieves substantial improvements across the entire distribution—both in the extreme bins and, importantly, in the middle regions where most observations lie. This uniform improvement is reflected in the large reduction in conditional distance: \textit{Feas} yields a mean distance of 0.18, which is the smallest across all specifications.

\begin{figure}[t!]
  \setlength{\abovecaptionskip}{0cm}
  \caption{Conditional Distance by $s_{t-1}$ and $u_t$ Bins}
  \label{fig:avg.dist}
  \begin{center}
  \renewcommand{\arraystretch}{0.8} 
  \begin{tabular}{ccc}
      & $s_{t-1}$ bins & $u_t$ bins \\[-10pt]
      \rotatebox{90}{\parbox{5 cm}{\centering \textit{Linear}}}
        & \includegraphics[width=0.42\textwidth]{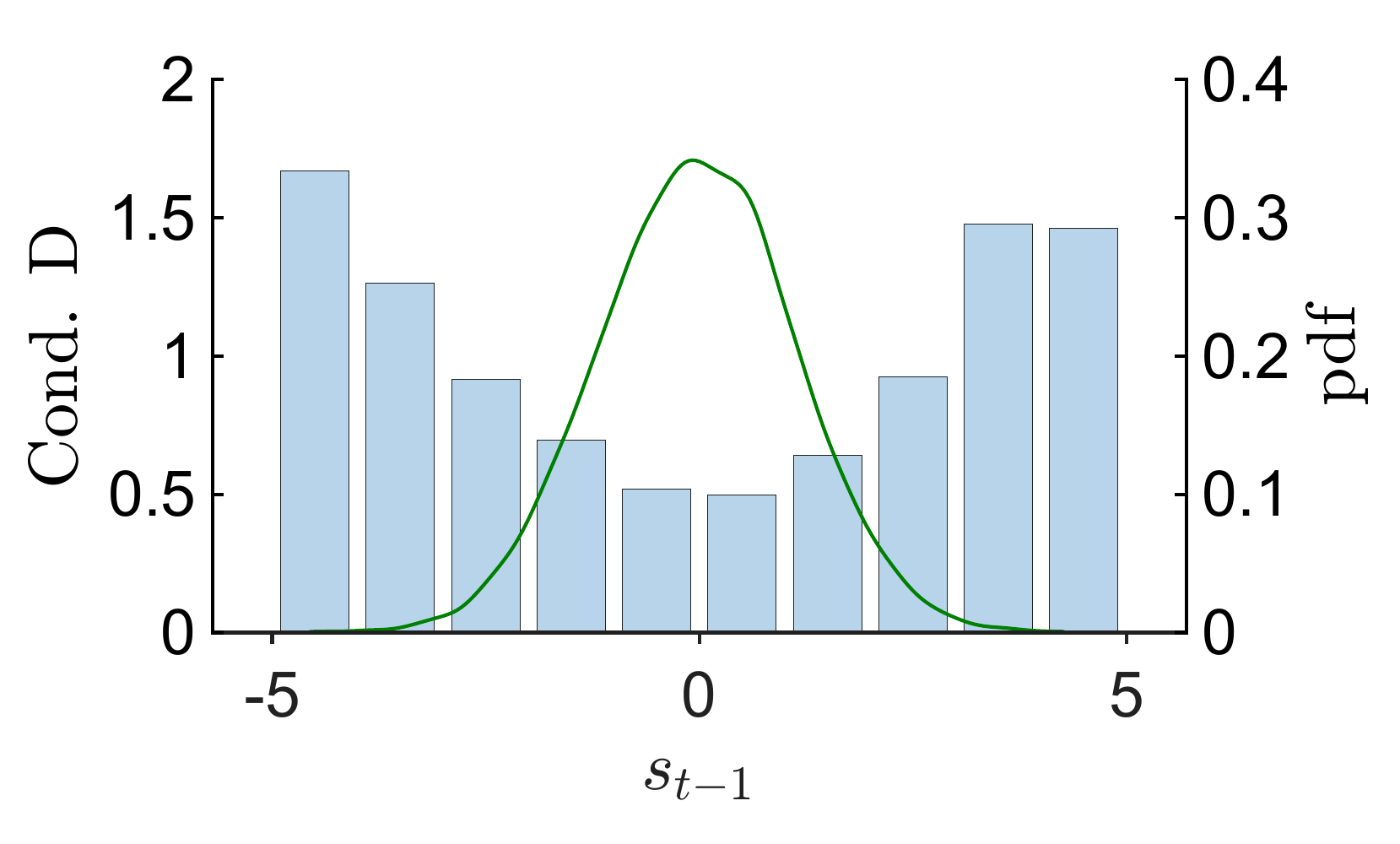}
        & \includegraphics[width=0.42\textwidth]{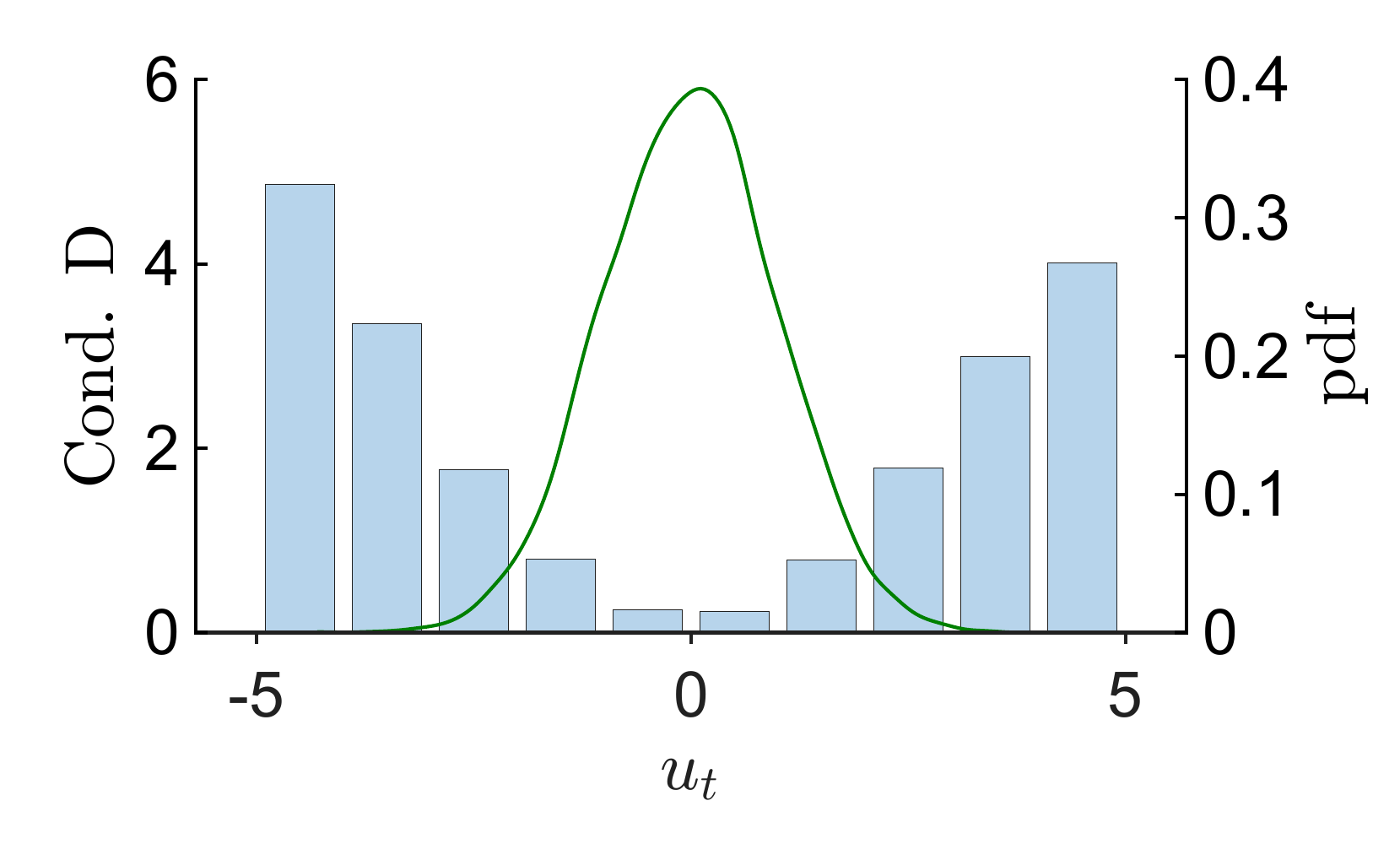} \\[-10pt]
     \rotatebox{90}{\parbox{5 cm}{\centering \textit{AsymLP}}}
        & \includegraphics[width=0.42\textwidth]{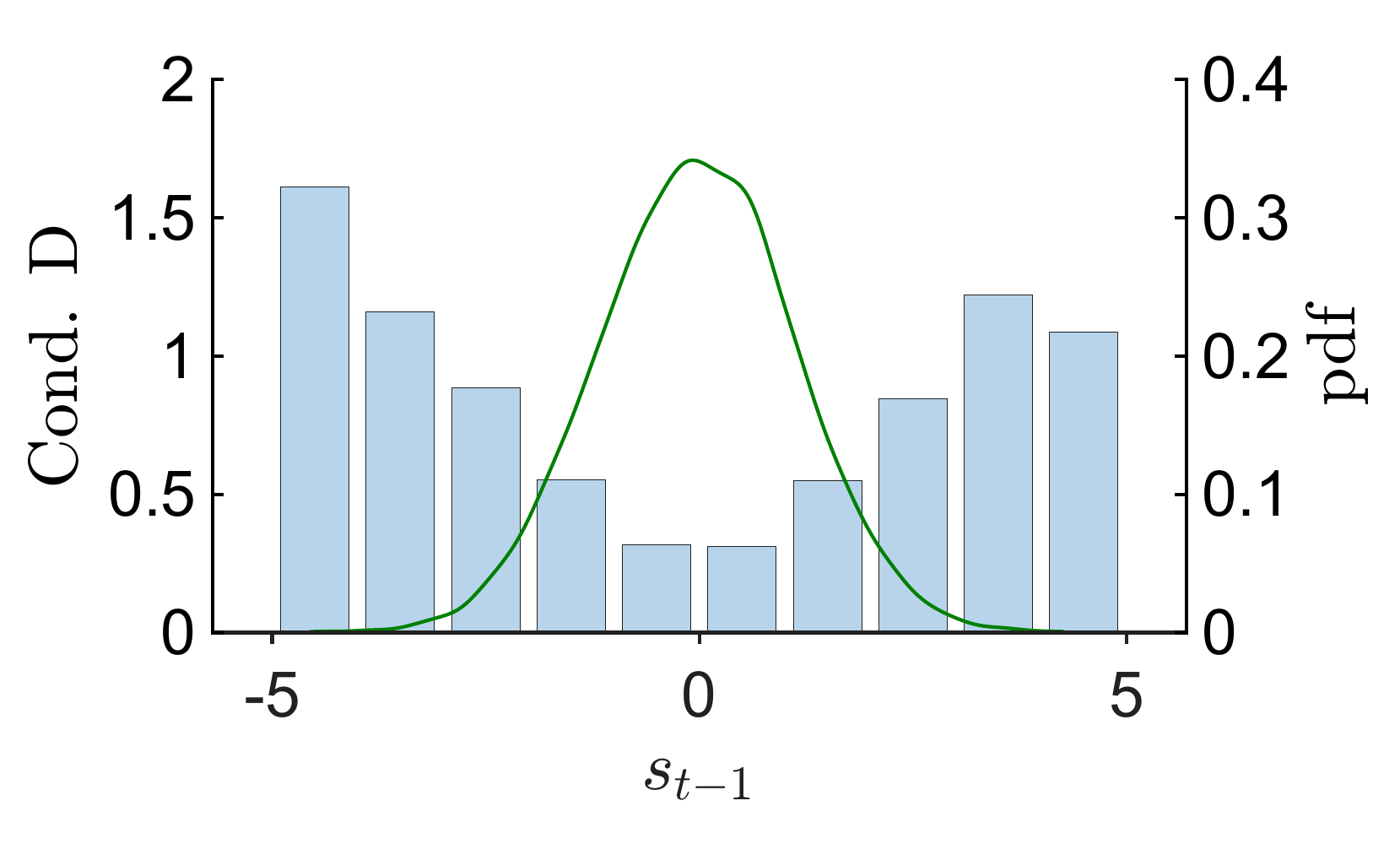}
        & \includegraphics[width=0.42\textwidth]{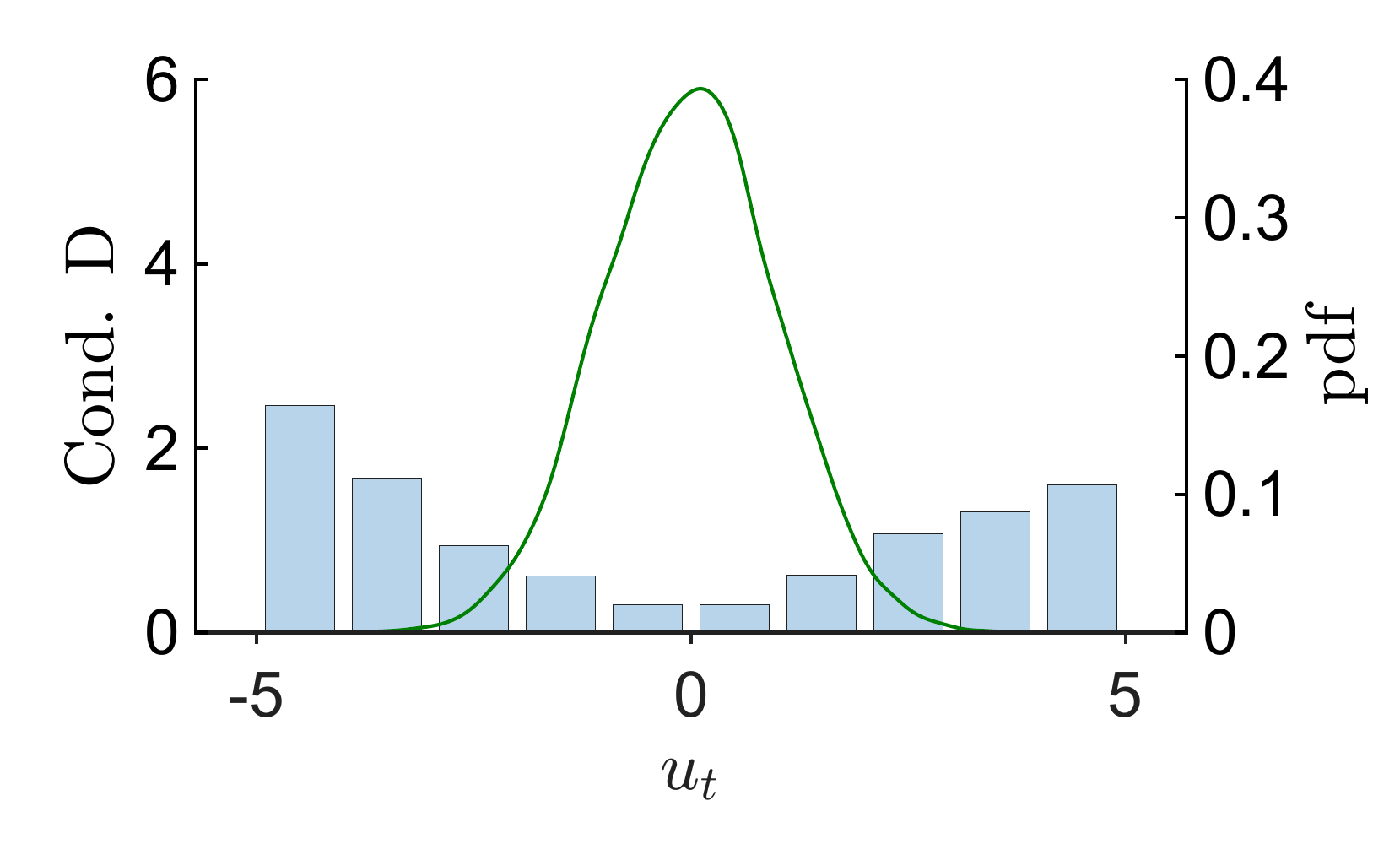} \\[-10pt]
     \rotatebox{90}{\parbox{5 cm}{\centering \textit{LagLP}}}
        & \includegraphics[width=0.42\textwidth]{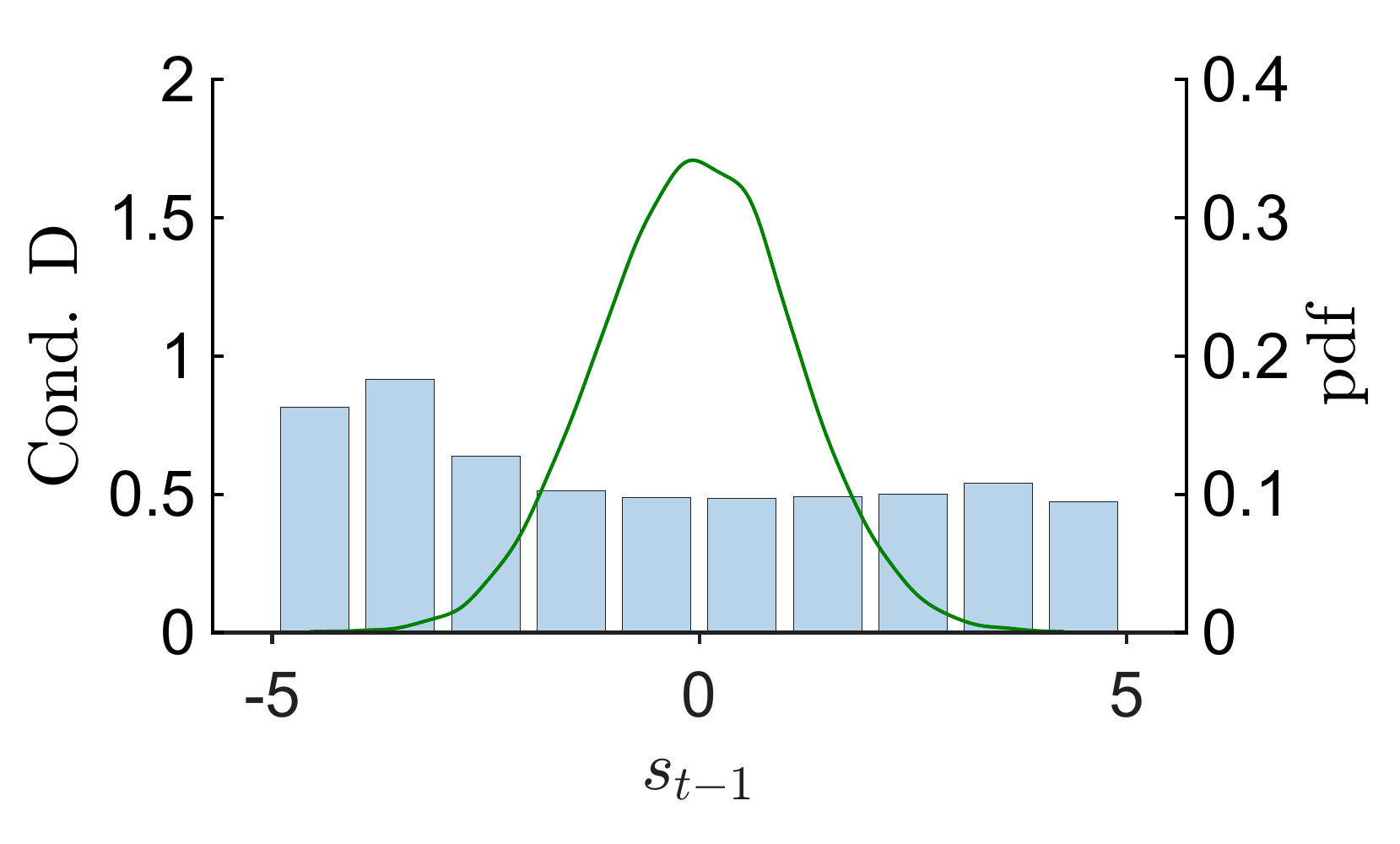}
        & \includegraphics[width=0.42\textwidth]{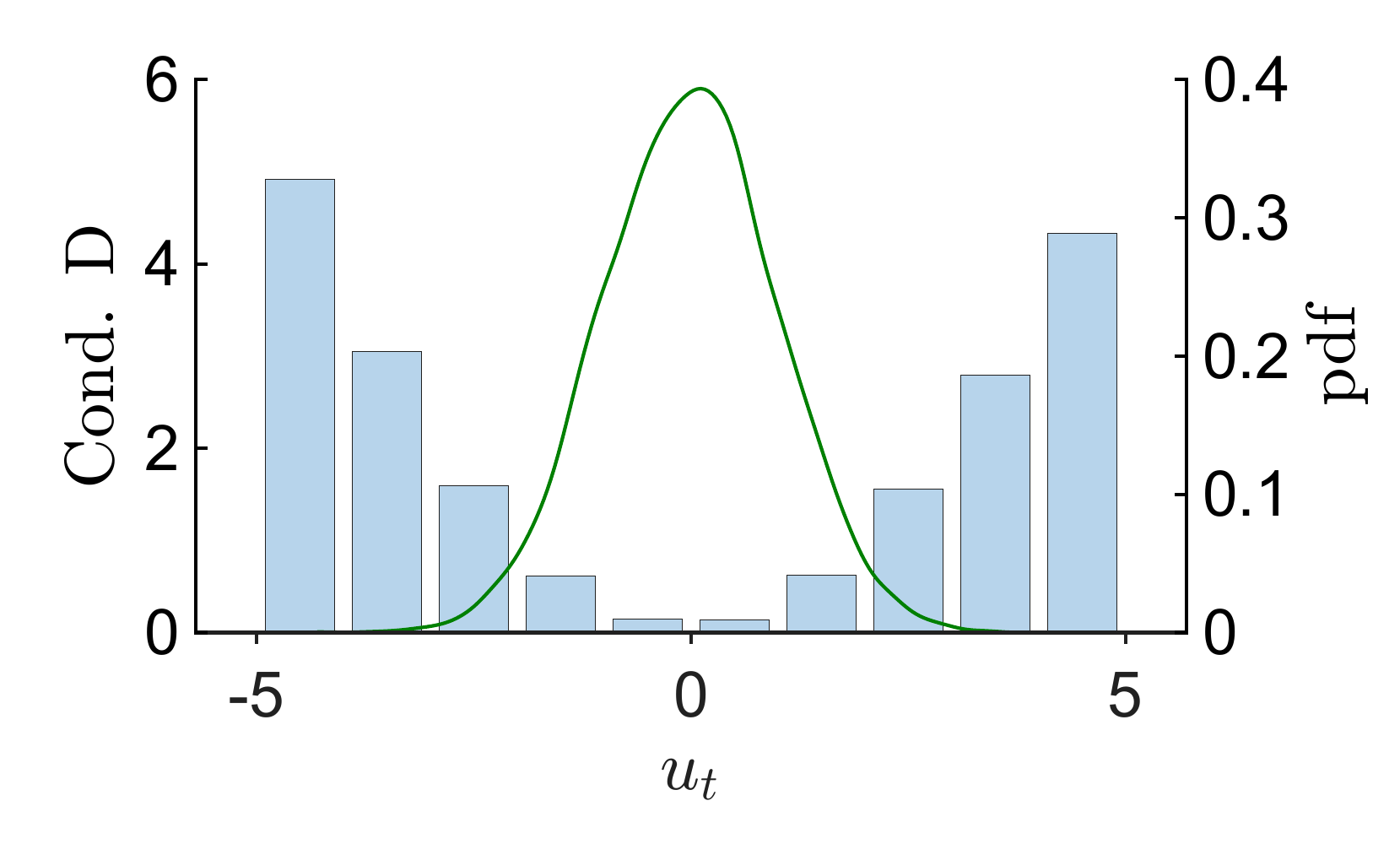} \\[-10pt]
    \rotatebox{90}{\parbox{5 cm}{\centering \textit{Feas}}}
        &  \includegraphics[width=0.42\textwidth]{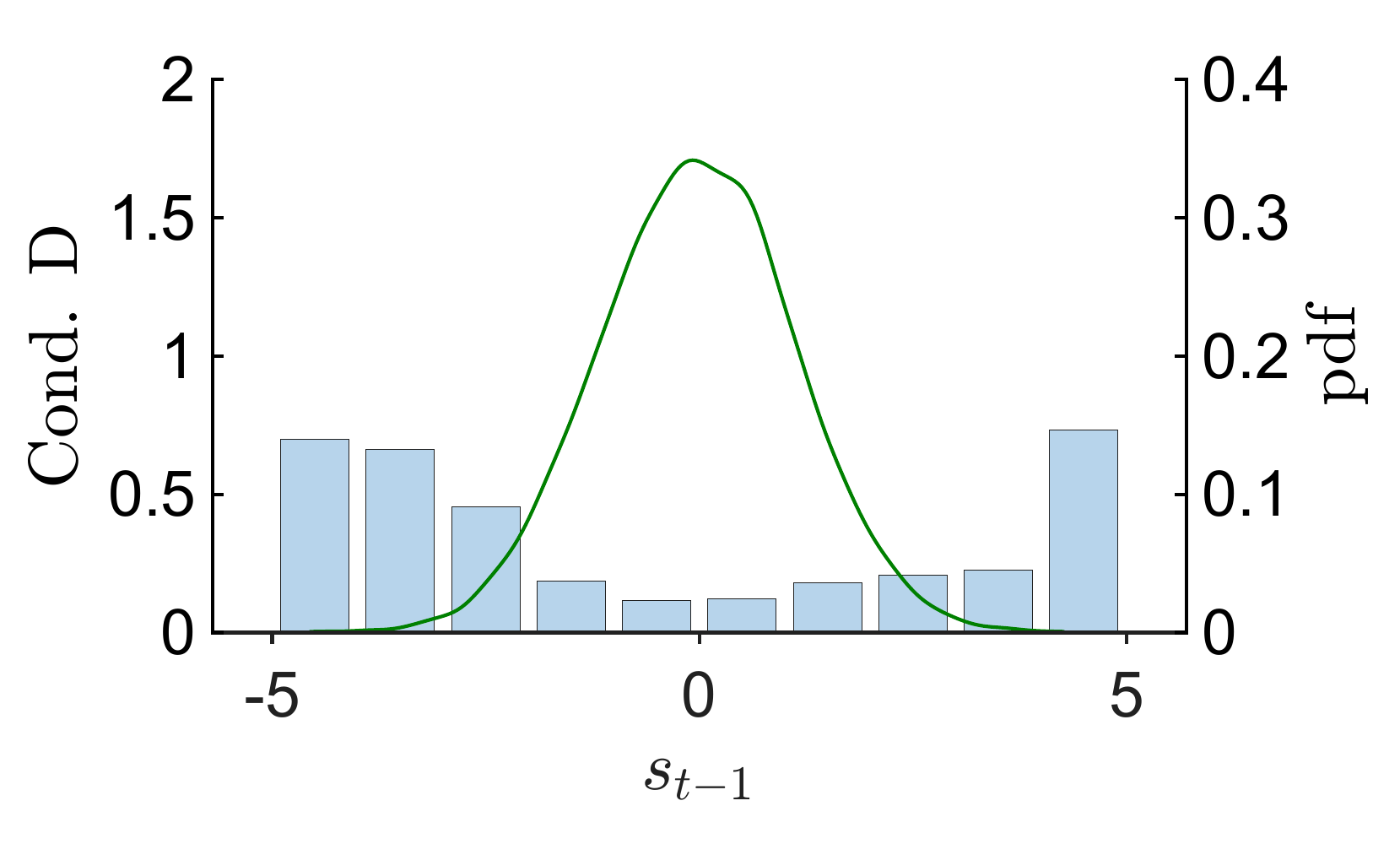} 
        &  \includegraphics[width=0.42\textwidth]{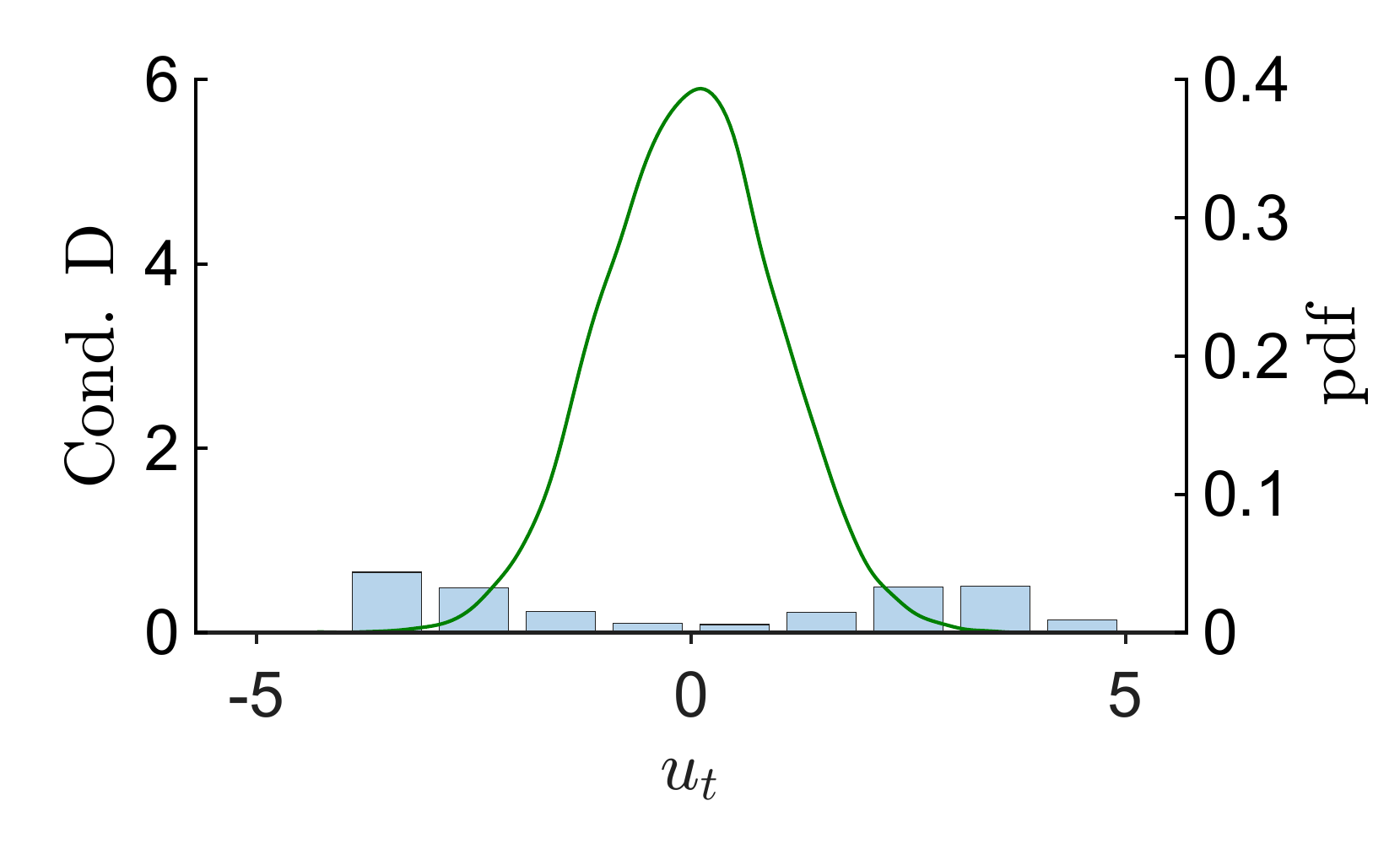} \\
  \end{tabular}
  \end{center}
    {\footnotesize 
    \emph{Notes}: 
    Each row corresponds to one specification.  
    Left column: Blue bars show the conditional distance over $s_{t-1}$ bins; green lines show kernel density estimates for $s_{t-1}$ in the simulated sample.  
    Right column: Blue bars show the conditional distance over $u_t$ bins; green lines show kernel density estimates for $u_t$ in the simulated sample.
    }
  \setlength{\baselineskip}{4mm}
\end{figure}

\clearpage
\subsection{Analytic Results}
\label{subsec:analyytic}
In this section, we establish analytical guarantees for the numerical illustration results within the same QAR(1,1) setting. Specifically, we derive closed-form expressions for the shock-conditional MSE $\mathcal{L}^{spec}_h(\delta)$ and the state-conditional MSE $\mathcal{R}^{spec}_h(s)$ introduced in Remark~\ref{rem:distance_special_cases}(iii)--(iv) (see \eqref{eq:def_loss_u_main} and \eqref{eq:def_loss_s_main}).

The next two theorems summarize the shock- and state-conditioned approximation errors and the induced ranking across specifications.

\begin{theorem}[Conditional MSE comparisons given $u_t=\delta$]
\label{thm:cond_u_main}
Maintain Assumption~\ref{assump:qar.dgp}. Fix $h\ge 0$ and let $\delta\in\RR$.
Let $a_h$ and $q_h$ be as in \eqref{eq:ah_qh_def}, $m$ as in Proposition~\ref{prop:irf.shock.state}, and define $\sigma_s^2 = \Var(s_{t-1})$, $\Cov(s_{t-1},y_{t-1})=\sigma_{sy}$, $\sigma_y^2 = \Var(y_{t-1})$, and 
$
\sigma^2_{s|y} = \sigma_s^2-\sigma_{sy}^2/\sigma_y^2.
$
Then the conditional losses in Definition~\ref{def:cond_mse} satisfy
$$
\mathcal{L}^{Linear}_h(\delta)=a_h^2\delta^2\sigma_s^2 + q_h^2\delta^4,
\quad\quad
\mathcal{L}^{LagLP}_h(\delta)=a_h^2\delta^2\sigma^2_{s|y} + q_h^2\delta^4,
\label{eq:loss_u_lin_lag_main}
$$
$$
\mathcal{L}^{Feas}_h(\delta)=a_h^2\delta^2\sigma^2_{s|y},
\quad\quad
\mathcal{L}^{AsymLP}_h(\delta)=a_h^2\delta^2\sigma_s^2 + q_h^2(\delta^2-m|\delta|)^2.
\label{eq:loss_u_feas_asym_main}
$$
In particular, for every $\delta\in\RR$,
$$
\mathcal{L}^{Feas}_h(\delta)\le \mathcal{L}^{LagLP}_h(\delta)\le \mathcal{L}^{Linear}_h(\delta),
\qquad
\mathcal{L}^{Feas}_h(\delta)\le \mathcal{L}^{AsymLP}_h(\delta),
\label{eq:rank_u_main}
$$
and
$$
\mathcal{L}^{Linear}_h(\delta)-\mathcal{L}^{LagLP}_h(\delta)
=
a_h^2\delta^2\cdot\frac{\sigma_{sy}^2}{\sigma_y^2},
\qquad
\mathcal{L}^{Linear}_h(\delta)-\mathcal{L}^{AsymLP}_h(\delta)
=
q_h^2\big(2m|\delta|^3-m^2\delta^2\big).
\label{eq:gains_u_main}
$$
\end{theorem}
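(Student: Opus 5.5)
The plan is to write each specification's approximation error, conditional on $u_t=\delta$, as an explicit function of $(s_{t-1},y_{t-1})$. We then take conditional second moments using the independence of $u_t$ from $(s_{t-1},y_{t-1})$, which holds because both are functions of $u_{t-1},u_{t-2},\dots$ under Assumption~\ref{assump:qar.dgp}. By Proposition~\ref{prop:true.car}, $\operatorname{CAR}_h(s_{t-1},\delta)=\sigma\phi_1^h\delta+a_h s_{t-1}\delta+q_h\delta^2$. We also use $\EE[s_{t-1}]=0$, since $s_t$ is a zero-mean Gaussian AR(1). The specification IRFs are taken from Propositions~\ref{prop:failure.lp}, \ref{prop:irf.shock.state}, \ref{prop:irf.lag.state} and \ref{prop:irf.feas}.

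\textbf{Step 1: the errors case by case.} Conditioning on $\{u_t=\delta\}$ leaves the joint law of $(s_{t-1},y_{t-1})$ unchanged, so each loss reduces to an unconditional moment in these variables with $\delta$ treated as fixed.
\begin{itemize}
\item \emph{Linear.} The error is $a_h s_{t-1}\delta+q_h\delta^2$. Its mean square is $a_h^2\delta^2\sigma_s^2+q_h^2\delta^4$, because the cross term is proportional to $\EE[s_{t-1}]=0$.
\item \emph{AsymLP.} Here $S_t=\mathbbm{1}\{\delta>0\}$, so the IRF is $\sigma\phi_1^h\delta\pm m q_h\delta=\sigma\phi_1^h\delta+m q_h|\delta|$. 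The error is $a_h s_{t-1}\delta+q_h(\delta^2-m|\delta|)$, and the same argument gives the stated loss. For $\delta=0$ both sides vanish, so the sign convention at zero does not matter.
\item \emph{LagLP.} Using $\beta_h^{(0)}=\sigma\phi_1^h-\beta_h^{(1)}\EE[y_{t-1}]$, the error is $\delta\big(a_h s_{t-1}-\beta_h^{(1)}(y_{t-1}-\EE y_{t-1})\big)+q_h\delta^2$.
\item \emph{Feas.} By Proposition~\ref{prop:irf.feas}, the error is the same as for LagLP without the $q_h\delta^2$ term.
\end{itemize}

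\textbf{Step 2: the key covariance identity.} For LagLP and Feas we must show that $\beta_h^{(1)}$ is the population regression coefficient of $a_h s_{t-1}$ on $y_{t-1}$. That requires $\sigma_{sy}=\sigma^2/(1-\phi_1^2)=\sigma_s^2$, i.e. $\Cov(s_{t-1},y_{t-1}-s_{t-1})=0$. We would verify this by writing $y_t-s_t=\phi_1(y_{t-1}-s_{t-1})+\phi_2 s_{t-1}^2+\gamma\sigma s_{t-1}u_t$. Solving backward expresses $y_{t-1}-s_{t-1}$ as a sum of $\phi_2 s_{t-j}^2$ terms and $\gamma\sigma s_{t-j}u_{t-j+1}$ terms. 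The first type is uncorrelated with $s_{t-1}$ because, by joint Gaussianity, $\EE[s_{t-1}s_{t-j}^2]=0$ (odd third moment). The second type has $\EE[s_{t-1}s_{t-j}u_{t-j+1}]=0$ because $u_{t-j+1}$ enters $s_{t-1}$ linearly and the remaining factor is again an odd Gaussian moment. This is the main obstacle. If this route stalls, Proposition~\ref{prop:irf.lag.state} already encodes the identity, since $\beta_h^{(1)}=a_h\sigma_{sy}/\sigma_y^2$ is its natural form. Given the identity, the residual $e\equiv s_{t-1}-(\sigma_{sy}/\sigma_y^2)(y_{t-1}-\EE y_{t-1})$ has mean zero and variance $\sigma_{s|y}^2$. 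Since $e$ has mean zero, the cross term with $q_h\delta^2$ again drops, which yields $\mathcal L^{LagLP}_h$ and $\mathcal L^{Feas}_h$.

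\textbf{Step 3: rankings and gains.} These follow by direct comparison of the four formulas. We have $0\le\sigma^2_{s|y}\le\sigma_s^2$ by Cauchy--Schwarz, $q_h^2\delta^4\ge0$, and $q_h^2(\delta^2-m|\delta|)^2\ge0$ together with $\sigma^2_{s|y}\le\sigma_s^2$ for the AsymLP comparison. The gain formulas come from subtraction: $\sigma_s^2-\sigma^2_{s|y}=\sigma_{sy}^2/\sigma_y^2$, and $\delta^4-(\delta^2-m|\delta|)^2=2m|\delta|^3-m^2\delta^2$.
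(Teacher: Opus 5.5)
Your proposal is correct and follows essentially the paper's argument: you compute the error for each specification given $u_t=\delta$, drop the cross terms because they have mean zero, write $\beta_h^{(1)}=a_h\sigma_{sy}/\sigma_y^2$ using $\Cov(s_{t-1},y_{t-1})=\Var(s_{t-1})$, and obtain the rankings and gains by subtraction. The only difference is how you verify that covariance identity: the paper (Lemma~\ref{lem:align_y_to_s_main}) uses the stationarity fixed point $C=\phi_1^2C+\sigma^2$ for $C=\EE[s_ty_t]$, whereas you solve $y_t-s_t$ backward and kill each term with vanishing odd Gaussian moments, and both routes are valid.
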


Theorem~\ref{thm:cond_u_main} provides a direct analytical explanation for the shock-bin patterns in Figure~\ref{fig:avg.dist}.
The true QAR response \eqref{eq:car_three_terms} decomposes as
$\operatorname{CAR}_h(s,\delta)=\sigma\phi_1^{h}\delta + a_h\,s\delta + q_h\,\delta^{2}$,
so conditioning on $u_t=\delta$ isolates (i) a state-dependent component $a_h s_{t-1}\delta$ and (ii) a higher-order shock component $q_h\delta^2$.
The theorem's conditional-MSE expressions make clear that the specifications differ in which of these components they can approximate.
\textit{Linear} omits both, and therefore its loss contains the quartic term $q_h^2\delta^4$, which becomes dominant as $|\delta|$ grows.
\textit{LagLP} only alters the state-dependent part---replacing $\sigma_s^2$ with the smaller proxy-based variance $\sigma_{s|y}^2$---but leaves the same $q_h^2\delta^4$ term unchanged.
\textit{AsymLP}, in contrast, leaves the state-dependent part unchanged and targets the higher-order component by approximating $\delta^2$ with $m|\delta|$, so its remaining higher-order error is $q_h^2(\delta^2-m|\delta|)^2$.
Finally, \textit{Feas} combines both ingredients (conditioning on $y_{t-1}$ and including $u_t^2$), which removes the quartic term entirely and yields the smallest shock-conditional MSE for every realized $\delta$.

These expressions map directly into the right column of Figure~\ref{fig:avg.dist}: \textit{Linear} and \textit{LagLP} both have the same leading quartic growth, $q_h^2\delta^4$, so their distances rise rapidly in the tail-shock bins.
The difference between them is of lower order $\delta^2$, arising only from the state-dependent component and therefore small relative to the common quartic term.
By contrast, \textit{AsymLP} improves tail-shock performance through a lower-order correction to the quartic benchmark, and the resulting gain over \textit{Linear} is of order $|\delta|^3$.
Specifically, Theorem~\ref{thm:cond_u_main} shows that $\mathcal{L}^{AsymLP}_h(\delta)\le \mathcal{L}^{Linear}_h(\delta)$ if and only if $|\delta|\ge m/2$, and for $|\delta|>m/2$ the gain
$\mathcal{L}^{Linear}_h(\delta)-\mathcal{L}^{AsymLP}_h(\delta)=q_h^2\,m\,\delta^2(2|\delta|-m)$
is strictly increasing in $|\delta|$ and grows on the order of $|\delta|^3$.

\medskip
\begin{corollary}[Unconditional ranking over the joint distribution of shocks and states]
\label{cor:uncond_rank}
Maintain Assumption~\ref{assump:qar.dgp}. For any finite horizon collection $0,\ldots,H$, the aggregate distance in Definition~\ref{def:cond_mse} satisfies
\[
D^{Feas}(\Omega)\le D^{LagLP}(\Omega)\le D^{Linear}(\Omega),
\qquad
D^{Feas}(\Omega)\le D^{AsymLP}(\Omega).
\]
\end{corollary}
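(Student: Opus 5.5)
The corollary follows from Theorem~\ref{thm:cond_u_main} by integrating the shock-conditional losses over the marginal distribution of $u_t$ and summing over horizons. By Remark~\ref{rem:distance_special_cases}, the law of iterated expectations gives
\[
D^{spec}(\Omega)^2=\sum_{h=0}^{H}\EE\!\left[\mathcal{L}^{spec}_h(u_t)\right]
\]
for each $spec\in\{\textit{Linear},\textit{AsymLP},\textit{LagLP},\textit{Feas}\}$. Since $t\mapsto t^{1/2}$ is monotone on $[0,\infty)$, it suffices to compare the sums of expectations.

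The key step is that the pointwise rankings in Theorem~\ref{thm:cond_u_main} hold for every $\delta\in\RR$. These are
\[
\mathcal{L}^{Feas}_h(\delta)\le\mathcal{L}^{LagLP}_h(\delta)\le\mathcal{L}^{Linear}_h(\delta),
\qquad
\mathcal{L}^{Feas}_h(\delta)\le\mathcal{L}^{AsymLP}_h(\delta).
\]
One can check them directly from the closed forms. The first inequality holds because $q_h^2\delta^4\ge 0$. The second holds because $\sigma^2_{s|y}\le\sigma_s^2$, with difference $\sigma_{sy}^2/\sigma_y^2\ge 0$. The third holds because $\sigma^2_{s|y}\le\sigma_s^2$ and $q_h^2(\delta^2-m|\delta|)^2\ge0$.

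Because these inequalities hold pointwise in $\delta$, evaluating at $\delta=u_t$ and taking expectations preserves them by monotonicity of expectation. We then sum over $h=0,\dots,H$ and take square roots.

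The only thing to verify is that the expectations are finite, so that the comparisons are between real numbers. Each loss is a polynomial of degree at most four in $u_t\sim\mathcal N(0,1)$, whose moments are all finite; for instance, $\EE[\mathcal{L}^{Linear}_h(u_t)]=a_h^2\sigma_s^2+3q_h^2$. Stationarity under $|\phi_1|<1$ (Assumption~\ref{assump:qar.dgp}) ensures $\sigma_s^2$, $\sigma_y^2$ and $\sigma_{sy}$ are finite, with $\sigma_y^2>0$.

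I expect no real obstacle. The substantive work, deriving the closed-form conditional losses, is already contained in Theorem~\ref{thm:cond_u_main}. The only subtlety is that the corollary does not rank \textit{AsymLP} against \textit{Linear} or \textit{LagLP}. That omission is deliberate: $\mathcal{L}^{AsymLP}_h(\delta)>\mathcal{L}^{Linear}_h(\delta)$ for $|\delta|<m/2$, so the pointwise argument fails for that comparison, and no such ranking should be claimed.
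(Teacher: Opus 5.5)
Your proof is correct and follows the paper's argument: integrate the pointwise-in-$\delta$ inequalities of Theorem~\ref{thm:cond_u_main} over the shock distribution (using $u_t\perp s_{t-1}$), sum over horizons, and take square roots. One correction to your closing remark: the pointwise argument does fail for \textit{AsymLP} versus \textit{Linear}, but the unconditional ranking $D^{AsymLP}(\Omega)\le D^{Linear}(\Omega)$ still holds, because $\EE[\mathcal{L}^{Linear}_h(u_t)-\mathcal{L}^{AsymLP}_h(u_t)]=q_h^2 m\,(4\sqrt{2/\pi}-m)=(3-\nu_m)q_h^2\ge 0$ (since $m\approx 2.20<4\sqrt{2/\pi}\approx 3.19$, consistent with Theorem~\ref{thm:cond_s_main}), so only the \textit{AsymLP}--\textit{LagLP} comparison is genuinely parameter-dependent.
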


Theorem~\ref{thm:cond_u_main} establishes pointwise (in $\delta$) conditional-MSE dominance of \textit{Feas}; together with the independence $u_t\perp s_{t-1}$ in Assumption~\ref{assump:qar.dgp}, integrating these inequalities over the shock distribution yields the unconditional ranking above.
Equivalently, in the integrated mean-square metric $D^{spec}(\Omega)$, \textit{Feas} is the best-performing specification among \textit{Linear}, \textit{AsymLP}, \textit{LagLP}, and \textit{Feas}.

\medskip
\begin{theorem}[Conditional MSE comparisons given $s_{t-1}=s$]
\label{thm:cond_s_main}
Maintain Assumption~\ref{assump:qar.dgp}. Fix $h\ge 0$ and let $s\in\RR$.
Let $a_h$ and $q_h$ be as in \eqref{eq:ah_qh_def}, and define
\[
\lambda \equiv \frac{\Cov(s_{t-1},y_{t-1})}{\Var(y_{t-1})},
\qquad
\nu_m \equiv \EE\!\left[(u_t^2-m|u_t|)^2\right]
= 3 - 4m\sqrt{2/\pi} + m^2.
\label{eq:nu_m_main}
\]
Let $\mu_y\equiv \EE[y_{t-1}]$ and define the state-conditioned proxy error
\[
\Xi(s)
\equiv
\EE\!\left[
\Big(s_{t-1}-\lambda (y_{t-1}-\mu_y)\Big)^2
\Bigm|\;
s_{t-1}=s
\right].
\]
Then the state-conditioned losses in Definition~\ref{def:cond_mse} satisfy
\[
\mathcal{R}^{Linear}_h(s)=a_h^2 s^2 + 3q_h^2,
\qquad
\mathcal{R}^{LagLP}_h(s)=a_h^2\Xi(s) + 3q_h^2,
\qquad
\mathcal{R}^{Feas}_h(s)=a_h^2\Xi(s),
\label{eq:loss_s_lin_lag_feas_main}
\]
and
\[
\mathcal{R}^{AsymLP}_h(s)=a_h^2 s^2 + \nu_m q_h^2.
\label{eq:loss_s_asym_main}
\]
In particular, for every $s\in\RR$,
\[
\mathcal{R}^{Feas}_h(s)\le \mathcal{R}^{LagLP}_h(s),
\qquad
\mathcal{R}^{AsymLP}_h(s)\le \mathcal{R}^{Linear}_h(s),
\label{eq:rank_s_main}
\]
and the corresponding gaps are
\[
\mathcal{R}^{LagLP}_h(s)-\mathcal{R}^{Feas}_h(s)=3q_h^2,
\qquad
\mathcal{R}^{Linear}_h(s)-\mathcal{R}^{AsymLP}_h(s)=(3-\nu_m)q_h^2.
\label{eq:gains_s_main}
\]
\end{theorem}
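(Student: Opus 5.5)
The plan is to condition on $s_{t-1}=s$ and write each pointwise approximation error $\operatorname{CAR}_h(s_{t-1},u_t)-\operatorname{IRF}^{spec}(z_t,u_t;h)$ as a sum of a state-related piece and a shock-related piece. The squared error then splits into two variances plus a cross term, and the cross term will vanish by independence and symmetry of $u_t$. Throughout I use Proposition~\ref{prop:true.car} for the CAR, evaluated at $\delta=u_t$. The specification-implied IRFs come from Propositions~\ref{prop:failure.lp}, \ref{prop:irf.shock.state}, \ref{prop:irf.lag.state} and \ref{prop:irf.feas}. The shock $u_t\sim\mathcal N(0,1)$ is independent of $(s_{t-1},y_{t-1})$.

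\textbf{Linear and AsymLP.} For \textit{Linear}, the error is $a_h s u_t + q_h u_t^2$. Conditional on $s_{t-1}=s$ I use $\EE[u_t^2]=1$, $\EE[u_t^3]=0$ and $\EE[u_t^4]=3$, which gives $a_h^2s^2+3q_h^2$. For \textit{AsymLP}, $\beta_h^{(\pm)}u_t = \sigma\phi_1^h u_t + m q_h |u_t|$, because $S_t u_t - (1-S_t)u_t = |u_t|$. The error is therefore $a_h s u_t + q_h(u_t^2-m|u_t|)$. The cross term $\EE[u_t(u_t^2-m|u_t|)]$ is the expectation of an odd function, so it is zero. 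This leaves $a_h^2s^2+q_h^2\nu_m$. The half-normal moments $\EE|u|=\sqrt{2/\pi}$ and $\EE|u|^3=2\sqrt{2/\pi}$ give $\nu_m=3-4m\sqrt{2/\pi}+m^2$.

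\textbf{LagLP and Feas.} Using Proposition~\ref{prop:irf.lag.state}, the linear-in-$u_t$ part of the error is
\[
\bigl[a_h s_{t-1}-\beta_h^{(1)}(y_{t-1}-\mu_y)\bigr]u_t.
\]
The step needing care is the identity $\beta_h^{(1)}=a_h\lambda$, i.e.\ $\Cov(s_{t-1},y_{t-1})=\sigma^2/(1-\phi_1^2)=\Var(s_{t-1})$. To verify it, I write $y_{t-1}$ via its recursion and show $\Cov(s_{t-1},y_{t-1})=\Var(s_{t-1})$. Each $y$ equation contributes $\Cov(s_{t-1},\sigma u_{t-1})$ from the linear shock term. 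The terms $\phi_2 s^2$ and $\gamma s\sigma u$ have zero covariance with the Gaussian $s$ by third-moment symmetry. Together with the AR(1) recursion this gives the same fixed point as $\Var(s)$.

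With that identity, the linear part becomes $a_h\bigl(s_{t-1}-\lambda(y_{t-1}-\mu_y)\bigr)u_t$. For \textit{LagLP} the remaining error adds $q_h u_t^2$. Conditioning on $s_{t-1}=s$, independence of $u_t$ from $(s_{t-1},y_{t-1})$ gives $a_h^2\Xi(s)\EE[u_t^2]+q_h^2\EE[u_t^4]$. The cross term contains $\EE[u_t^3]=0$ and drops. The result is $a_h^2\Xi(s)+3q_h^2$. For \textit{Feas}, Proposition~\ref{prop:irf.feas} removes the $q_h u_t^2$ term exactly, leaving $a_h^2\Xi(s)$.

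The inequalities and gaps then follow by subtraction. The first gap is immediate. For $3-\nu_m\ge0$, I note that $m$ minimizes $\EE[(u^2-c|u|)^2]$ over $c$: $m=\EE|u|^3/\EE u^2-(\ldots)$ is the least-squares coefficient given the constant. Alternatively, one computes numerically that $\nu_m\approx0.63<3$.

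The main obstacle is the covariance identity $\sigma_{sy}=\sigma_s^2$. Its proof relies on Gaussianity of $s$ and on $u_{t-1}$ being independent of $s_{t-2}$, and both must be tracked carefully through the stationary recursion.
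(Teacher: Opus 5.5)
Your argument matches the paper's proof essentially step for step. The shared ingredients are the CAR decomposition evaluated at $\delta=u_t$, the identity $\beta_h^{(1)}=a_h\lambda$ from $\Cov(s_{t-1},y_{t-1})=\Var(s_{t-1})$, independence of $u_t$ from $(s_{t-1},y_{t-1})$ together with the Gaussian moments $1,0,3$, and the odd-function cancellation of the cross term for \textit{AsymLP}. The covariance identity is proved in the paper's projection lemma by the same fixed-point argument $C=\phi_1^2C+\sigma^2$ that you sketch.

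The one step that fails as written is your justification of $\nu_m\le 3$.
\begin{itemize}
\item The claim that $m$ minimizes $\EE[(u^2-c|u|)^2]$ over $c$ is false. That minimizer is $c^\ast=\EE|u|^3/\EE u^2=2\sqrt{2/\pi}\approx 1.60$, whereas $m=\sqrt{2/\pi}/(1-2/\pi)\approx 2.20$.
\item What is true is that $m=\Cov(u^2,|u|)/\Var(|u|)$, the least-squares slope of $u^2$ on $|u|$ when an intercept is included. But that intercept is absorbed into $\alpha_h^{(\pm)}$ and does not enter the IRF. The error $q_h(u_t^2-m|u_t|)$ therefore omits it, and the least-squares property gives no comparison with $\EE[u^4]$.
\item Your fallback number is also wrong: $\nu_m\approx 0.81$, not $0.63$.
\end{itemize}
The conclusion still holds and follows in one line from the closed form:
\[
3-\nu_m \;=\; m\bigl(4\sqrt{2/\pi}-m\bigr) \;>\; 0 .
\]
Here $m<4\sqrt{2/\pi}$ is equivalent to $1-2/\pi>1/4$, i.e.\ $\pi>8/3$. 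The paper simply asserts $\EE[(u_t^2-m|u_t|)^2]\le \EE[u_t^4]$. That is true for this $m$ but not automatic, since it is equivalent to $m\le 2\EE|u|^3$, so this explicit check is worth including.
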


Conditioning on the realized state $s_{t-1}=s$ yields the state-side analogue of the shock-conditioned comparison.
For the state-conditional losses $\mathcal{R}^{spec}_h(s)$, the term $a_h^{2}s^{2}$ is the cost of ignoring state dependence (shared by \textit{Linear} and \textit{AsymLP}), whereas the constants in $q_h^{2}$ capture the cost of omitting the quadratic-shock component (shared by \textit{Linear} and \textit{LagLP}).
Thus \textit{AsymLP} only reduces the higher-order part by replacing $3q_h^{2}$ with $\nu_m q_h^{2}$, while \textit{LagLP} can only reduce the state-dependent part by replacing $s^{2}$ with the state-conditioned proxy error $\Xi(s)$.
\textit{Feas} removes the higher-order component altogether and therefore dominates \textit{LagLP} for every state $s$.

\begin{figure}[t!]
  \setlength{\abovecaptionskip}{0cm}
	\caption{Conditional MSE Given $s_{t-1}=s$}
	\label{fig:analytic.loss}  
    \begin{center}
        \includegraphics[width=0.6\linewidth]{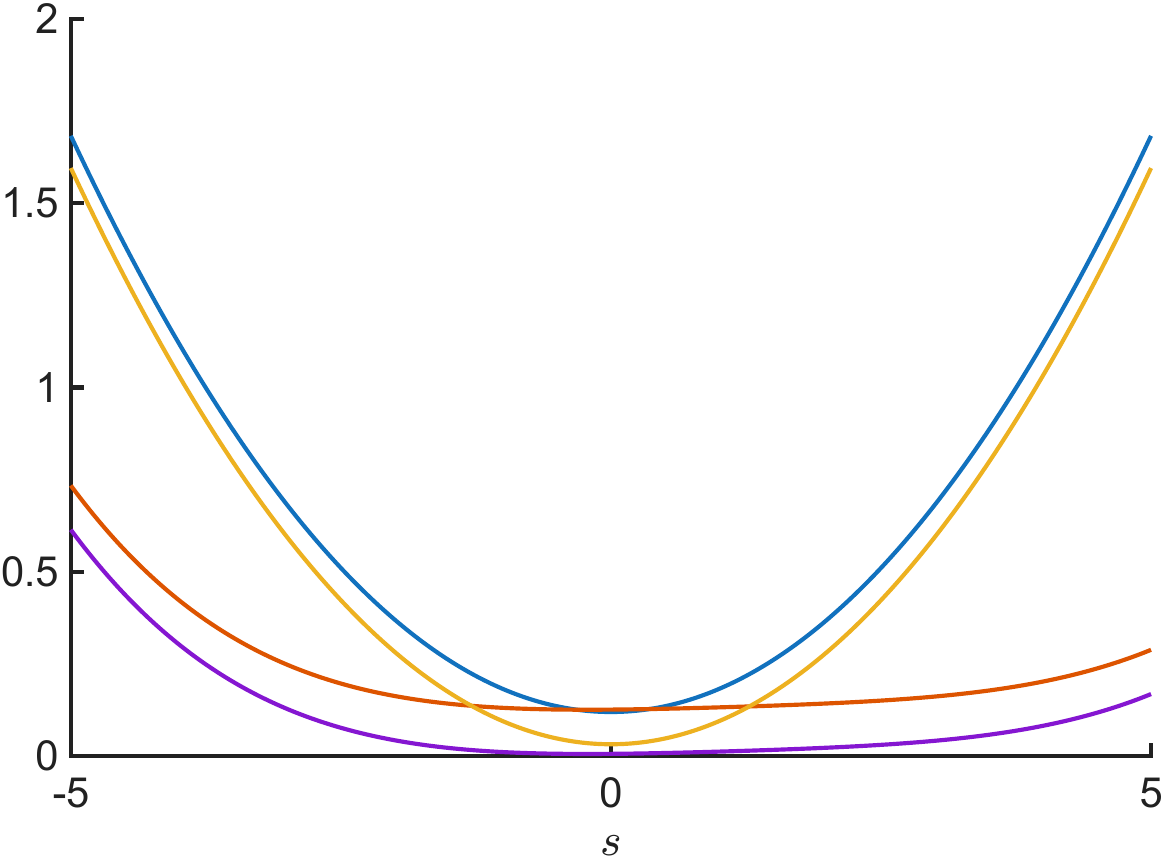}
    \end{center}
	{\footnotesize {\em Notes}: Blue line: \textit Linear; yellow line: \textit AsymLP; orange line: \textit LagLP; purple line: \textit Feas. QAR model parameters are chosen as in Section~\ref{subsec:mc.exp}.}\setlength{\baselineskip}{4mm}
\end{figure}

Theorem~\ref{thm:cond_s_main} also makes precise the caveat behind the lag-based specification:
\[
  \mathcal R^{Linear}_h(s)-\mathcal R^{LagLP}_h(s)
        = a_h^2\big(s^2-\Xi(s)\big).
\]
By definition, $\lambda(y_{t-1}-\EE[y_{t-1}])$ is the best linear prediction for $s_{t-1}$ in mean square, and $\Xi(s)$ is its conditional mean-squared proxy error at state $s$. Since $\EE[s_{t-1}]=0$, the benchmark quantity $s^2$ is the conditional mean-squared error of the unconditional predictor $0$.
Hence $s^2-\Xi(s)$ measures how much conditioning on the proxy $y_{t-1}$ improves the approximation at state $s$; without additional assumptions, it can change sign, so \textit{LagLP} need not dominate \textit{Linear} uniformly across the state space.
In our numerical illustration, $s^2-\Xi(s)$ is negative only in a small neighborhood around $s=0$ (approximately $s\in[-0.32,0.36]$), where proxy noise dominates, and positive for essentially all empirically relevant tail states, which is why \textit{LagLP} gains concentrate in the tails in Figure~\ref{fig:analytic.loss}.

\section{Multivariate Extension: QVAR(1,1)}
\label{sec:multi.extension}

So far, we have focused on the univariate QAR(1,1) model as the DGP. However, this framework is not sufficiently general to capture the rich dynamic interactions among multiple macroeconomic variables. To address this limitation, we consider a multivariate extension, namely the QVAR(1,1) model. This setup offers a more empirically relevant context in which to evaluate LP specifications. We will show that, the main takeaways in the previous sections still hold in the multivariate case.

\medskip
\noindent {\bf Model.\ }  
Let $y_t \in \mathbb{R}^{n}$ be a vector of outcome variables, $s_t \in \mathbb{R}^{n}$ a vector of state variables,
and $\eta_{t} \in \mathbb{R}^{n}$ a vector of reduced-form shocks. The QVAR(1,1) model can be written as
\be
\begin{aligned}
y_t &= \Phi_1 y_{t-1} + \Phi_2 \operatorname{vech}(s_{t-1}s_{t-1}') + (1_n + \mathsf{G} s_{t-1}) \odot \eta_t, \quad \eta_t \overset{\text{i.i.d.}}{\sim} \mathcal{N}(0, \Sigma),\\
s_t &= \Phi_1 s_{t-1} + \eta_t, \quad \rho(\Phi_1)<1,
\end{aligned}
\label{eq:qvar.model}
\ee
where $\vech(A)$ is the operator that stacks the lower-triangular portion of a symmetric matrix $A$ into a vector, and $\odot$ denotes the element-wise product. The coefficient matrix dimensions are
$$
\Phi_1: n \times n; \quad \Phi_2: n \times n(n+1)/2; \quad \mathsf{G}: n \times n; \quad \Sigma: n \times n.
$$

Suppose the researcher is interested in the impulse responses to structural shocks $u_t$, related to the reduced-form shocks by $\eta_t = \Sigma_{tr} \Omega u_t$, where $\Sigma_{tr}$ is the lower-triangular
Cholesky factor of $\Sigma$, $\Omega$ is an orthogonal matrix, and $u_t \overset{\text{i.i.d.}}{\sim} \mathcal{N}(0, I_n).$ We assume $\Omega = I_n$ for simplicity. For notational convenience, write $B\equiv \Sigma_{tr}$ and $b_r\equiv Be_r$ for $r=1,\ldots,n$.

As in the univariate case, we can represent $y_{t+h}$ as a structural function $\tilde{\psi}_h$ of shocks $u_t$ and other variables $U_{h,t+h}=\left(y_{t-1}, s_{t-1}, u_{t+1},\ldots, u_{t+h} \right)$ independent of $u_t.$ We provide the formula of $\tilde{\psi}_h$ in the Online Appendix.

\medskip
\noindent {\bf CAR.\ }
The CAR of the $j$-th variable $y_{jt}$ to a one-time disturbance of magnitude $\delta_i$ in the $i$-th structural shock $u_{it}$ is defined as
$$
\operatorname{CAR}^{(j,i)}_h(\mathcal{F}, \delta_i)
= \mathbb{E}\left[\tilde{\psi}_{jh}(u_{t}+\delta_i e_i, U_{h,t+h})- \tilde{\psi}_{jh}(u_{t}, U_{h,t+h}) \mid \mathcal{F}_{t-1} = \mathcal{F} \right], 
$$
where $\tilde{\psi}_{jh}$ is the $j$-th component of function $\tilde{\psi}_{h}$, and $e_i$ denotes the $i$-th basis vector in $\mathbb{R}^{n}$.

Proposition \ref{prop:qvar.true.car} provides the formula of CAR for the QVAR(1,1) model.
\begin{proposition} \label{prop:qvar.true.car}
The CAR for the QVAR(1,1) model is: for $h\ge 1$,
\[
\begin{aligned}
\operatorname{CAR}^{(j,i)}_h(s, \delta_i)
& = e_j'\Bigg[
\Phi_1^h (1_n +\mathsf{G}\,s)\odot(\delta_i b_i) + \sum_{k=1}^h\Phi_1^{h-k}\,\Phi_2\,\vech\bigl(\delta_i(\Phi_1^k s)(\Phi_1^{k-1} b_i)' + \delta_i(\Phi_1^{k-1} b_i)(\Phi_1^k s)'\bigr) \\
& \qquad\qquad + \sum_{k=1}^h\Phi_1^{h-k}\Phi_2\,\vech\Bigl(\delta_i^2 (\Phi_1^{k-1} b_i)(\Phi_1^{k-1} b_i)'\Bigr)
\Bigg],
\end{aligned}
\]
and for $h=0$,
\[
\operatorname{CAR}^{(j,i)}_0(s, \delta_i)
=  e_j'\Big[(1_n +\mathsf{G}\,s)\odot(\delta_i b_i)\Big],
\]
where $s$ is the realized value of state $s_{t-1}$.
\end{proposition}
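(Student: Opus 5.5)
We will write $y_{t+h}$ explicitly as the structural function $\tilde\psi_h(u_t,U_{h,t+h})$ by iterating the recursions in \eqref{eq:qvar.model}, and then take the difference between the perturbed and baseline paths. Because $\eta_t = Bu_t$, perturbing $u_t$ by $\delta_i e_i$ perturbs $\eta_t$ by $\delta_i b_i$. This perturbation enters only at date $t$, so $s_{t-1}$, $y_{t-1}$ and all future shocks $u_{t+1},\ldots,u_{t+h}$ are unchanged.

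The first step is to track the state. From $s_t=\Phi_1 s_{t-1}+\eta_t$ we get
\[
s_{t+k}=\Phi_1^{k+1}s_{t-1}+\sum_{\ell=0}^{k}\Phi_1^{k-\ell}\eta_{t+\ell}.
\]
The perturbed state is therefore $\tilde s_{t+k}=s_{t+k}+\delta_i\Phi_1^{k}b_i$ for all $k\ge0$; the shift is deterministic and additive.

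The second step is to track the outcome. Let $d_k\equiv \tilde y_{t+k}-y_{t+k}$.

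\emph{Impact.} At $k=0$ only the term $(1_n+\mathsf G s_{t-1})\odot\eta_t$ changes, so $d_0=(1_n+\mathsf G s_{t-1})\odot(\delta_i b_i)$. Conditioning on $\mathcal F_{t-1}$ fixes $s_{t-1}=s$, which gives the $h=0$ formula.

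\emph{Later horizons.} For $k\ge1$,
\[
d_k=\Phi_1 d_{k-1}+\Phi_2\vech(\tilde s_{t+k-1}\tilde s_{t+k-1}'-s_{t+k-1}s_{t+k-1}')+(\mathsf G(\tilde s_{t+k-1}-s_{t+k-1}))\odot\eta_{t+k}.
\]
Write $c_{k-1}\equiv\delta_i\Phi_1^{k-1}b_i$ for the state shift. Then
\[
\tilde s\tilde s'-ss'=s_{t+k-1}c_{k-1}'+c_{k-1}s_{t+k-1}'+c_{k-1}c_{k-1}',
\]
and the last term of $d_k$ equals $(\mathsf G c_{k-1})\odot\eta_{t+k}$. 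Unrolling the linear recursion in $d_k$ gives
\[
d_h=\Phi_1^h d_0+\sum_{k=1}^h\Phi_1^{h-k}\bigl[\Phi_2\vech(\cdots)+(\mathsf G c_{k-1})\odot\eta_{t+k}\bigr].
\]

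The third step is to take $\mathbb E[\cdot\mid\mathcal F_{t-1}]$, using that $u_t,u_{t+1},\ldots$ are i.i.d. mean zero and independent of $\mathcal F_{t-1}$.

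\emph{Heteroskedasticity terms.} The terms $(\mathsf G c_{k-1})\odot\eta_{t+k}$ are deterministic vectors times future shocks, so their conditional mean is zero.

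\emph{Cross terms.} In $s_{t+k-1}c_{k-1}'$, the state $s_{t+k-1}$ has conditional mean $\Phi_1^{k}s$, since the shocks $\eta_t,\ldots,\eta_{t+k-1}$ it contains are all mean zero. Because $\vech$ is linear, this yields $\delta_i(\Phi_1^k s)(\Phi_1^{k-1}b_i)'$ plus its transpose.

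\emph{Quadratic terms.} The term $c_{k-1}c_{k-1}'$ is deterministic and yields $\delta_i^2(\Phi_1^{k-1}b_i)(\Phi_1^{k-1}b_i)'$.

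Premultiplying by $e_j'$ then gives the stated expression for $h\ge1$. As a sanity check, setting $n=1$ should reproduce Proposition~\ref{prop:true.car}.

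The main obstacle is bookkeeping rather than conceptual. We must handle the cross term $\eta_{t+k-1}$ inside $s_{t+k-1}$ correctly: it appears in the state only linearly, multiplied by the deterministic shift $c_{k-1}$, so its conditional mean vanishes. We must also confirm that the interaction between the heteroskedastic factor and the perturbed state involves only future shocks, so that it averages out. We would verify both points carefully with the explicit expansion of $s_{t+k-1}$.
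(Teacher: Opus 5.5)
Your proof is correct and is essentially the paper's argument: you difference the structural function at $u_t+\delta_i e_i$ versus $u_t$, then use that $u_t,u_{t+1},\ldots,u_{t+h}$ are mean zero and independent of $\mathcal F_{t-1}$. The only difference is bookkeeping. You difference the recursion first, exploiting that the state shift $\delta_i\Phi_1^{k}b_i$ is deterministic, whereas the paper first expands $y_{t+h}$ as $R_{t,h}+L_{t,h}+C_{t,h}+Q_{t,h}$ by powers of $u_t$ and differences term by term; its vanishing pieces (all of $C_{t,h}$ and the $u_t$-linear part of $Q_{t,h}$) correspond exactly to your heteroskedastic term $(\mathsf G c_{k-1})\odot\eta_{t+k}$ and to the shock content of $s_{t+k-1}$ in your cross term.
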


As in the univariate case, when $h \geq 1$, the first-order effect depends on the value of the state variable $s_{t-1}$, whereas the second-order effect does not. For $h=0$, the second-order effect is absent.

\medskip
\noindent {\bf Failure of Linear LP.\ }  
As before, we first consider the purely linear LP specification \textit{Linear}:
\be
y_{j,t+h} = \beta_h^{(j,i)} u_{it} + \pi_h^{\prime} W_t + \epsilon_{h,t+h},
\label{multi.linear.spec}
\ee
where $y_{j,t+h}$ is the $j$-th outcome variable, $u_{it}$ is the $i$-th shock, and $W_t$ is a vector of controls that are independent of shocks $u_{t}$. What will the population IRF implied by this specification recover when the true DGP is the QVAR(1,1) model (Assumption \ref{assump:qvar.dgp})?

\begin{assumption} \label{assump:qvar.dgp}
 Assume that $\left\{y_t, s_t, u_t\right\}$ is generated by the QVAR(1,1) model (\ref{eq:qvar.model}), with the process initialized in the infinite past. 
\end{assumption}

The following proposition states that a purely linear LP again fails to capture any nonlinearities in the model.

\begin{proposition} \label{prop:multi.failure.lp}
   Under Assumption~\ref{assump:qvar.dgp}, the population IRF of the $j$-th outcome variable to a shock of magnitude $\delta_i$ to $u_{it}$ implied by Linear is given by $\text{IRF}_{j,i}^{Linear}(\delta_i; h) =  e_j^{\prime} \Phi_1^h \Sigma_{tr} \delta_i e_i,$ where $e_i$ is a $n\times 1$ vector whose $i$-th entry is 1 and all other entries are $0$. This expression coincides with the population IRF under the true DGP being the VAR(1) model, i.e.,
\be
\begin{aligned}
y_t &= \Phi_1 y_{t-1} + \eta_t, \quad \eta_t \sim \mathcal{N}(0, \Sigma), \quad \eta_t = \Sigma_{tr} u_t.
\end{aligned}
\label{eq:var.model}
\ee
\end{proposition}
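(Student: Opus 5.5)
The strategy is to compute the population coefficient on $u_{it}$ directly. Since $W_t$ is independent of $u_t$ and $u_{it}$ has mean zero, the Frisch--Waugh--Lovell theorem gives $\beta_h^{(j,i)}=\EE[u_{it}\,y_{j,t+h}]/\EE[u_{it}^2]=\EE[u_{it}\,y_{j,t+h}]$. If $W_t$ omits a constant, the same conclusion holds: the cross moments $\EE[u_{it}W_t]=0$, so the projection on $u_{it}$ decouples from $W_t$. The task therefore reduces to computing the covariance $\EE[u_{it}\,y_{j,t+h}]$ and showing that it equals $e_j'\Phi_1^h B e_i$.

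First I write out $y_{t+h}$ by iterating the recursive structure of \eqref{eq:qvar.model} forward from $t-1$. The $y$-equation is linear in $y_{t-1}$, so
\[
y_{t+h}=\Phi_1^{h+1}y_{t-1}+\sum_{k=0}^{h}\Phi_1^{h-k}\Big[\Phi_2\vech(s_{t+k-1}s_{t+k-1}')+(1_n+\mathsf{G}s_{t+k-1})\odot(Bu_{t+k})\Big],
\]
with $s_{t+k-1}=\Phi_1^{k}s_{t-1}+\sum_{l=0}^{k-1}\Phi_1^{k-1-l}Bu_{t+l}$. This is the structural function $\tilde\psi_h$. I then multiply by $u_{it}$ and take expectations term by term, using that $(y_{t-1},s_{t-1})$, $u_t$ and the future shocks $u_{t+1},\dots,u_{t+h}$ are mutually independent, with $u_t\sim\mathcal N(0,I_n)$.

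The terms contribute as follows.

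\textbf{The $y_{t-1}$ term} contributes zero, because $y_{t-1}$ is independent of $u_t$.

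\textbf{The first-order shock terms.} For $k=0$, the term is $(1_n+\mathsf{G}s_{t-1})\odot Bu_t$. Since $\EE[s_{t-1}]=0$ by stationarity of the AR state and $s_{t-1}\perp u_t$, its contribution is $\EE[u_{it}Bu_t]=b_i$, plus $\EE[\mathsf{G}s_{t-1}]\odot b_i=0$. For $k\ge1$, the term $(1_n+\mathsf{G}s_{t+k-1})\odot Bu_{t+k}$ involves $u_{t+k}$, which is mean zero and independent of both $u_t$ and $s_{t+k-1}$, so its contribution vanishes.

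\textbf{The quadratic terms.} Each entry of $\vech(s_{t+k-1}s_{t+k-1}')$ is a quadratic form in the jointly Gaussian, mean-zero vector $(s_{t-1},u_t,\dots,u_{t+k-1})$, which is Gaussian under stationarity. Multiplying by $u_{it}$ gives a third moment of a centered Gaussian vector, which is zero. This is exactly the multivariate analogue of the symmetry argument in Proposition~\ref{prop:failure.lp}. Gaussianity of $s_{t-1}$ is not even needed: only symmetry of $u_t$ and independence matter. Expanding $s_{t+k-1}s_{t+k-1}'$, the cross term $\Phi_1^k s_{t-1}u_t'B'(\cdots)'$ has expectation $\EE[s_{t-1}]\cdot\EE[u_{it}u_t']=0$. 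The pure $u_t$ term gives $\EE[u_{it}u_{at}u_{bt}]=0$ by symmetry. Every other term involves either an odd power of $u_t$ or a mean-zero future shock appearing linearly.

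Collecting these pieces gives $\beta_h^{(j,i)}=e_j'\Phi_1^h b_i=e_j'\Phi_1^h\Sigma_{tr}e_i$, and the IRF to a shock of size $\delta_i$ is $\beta_h^{(j,i)}\delta_i$. For comparison, the VAR(1) in \eqref{eq:var.model} has $y_{t+h}=\Phi_1^{h+1}y_{t-1}+\sum_{k=0}^h\Phi_1^{h-k}\Sigma_{tr}u_{t+k}$, and the same projection yields the identical coefficient.

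The main obstacle is the bookkeeping in the quadratic term: verifying that no product survives requires tracking the $\vech$ entries of cross terms between $s_{t-1}$, $u_t$ and later shocks. The plan is to handle this compactly with two facts: $\EE[s_{t-1}]=0$ together with independence, and the vanishing of odd moments of $u_t$. A secondary point is that the covariance $\EE[u_{it}y_{j,t+h}]$ must be finite, which requires finite second moments of $y$. This follows from $\rho(\Phi_1)<1$ and the pruned structure, since the quadratic term has finite variance because $s$ is a stable Gaussian AR process.
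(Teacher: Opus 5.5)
Your proposal is correct and follows essentially the same route as the paper. Both compute $\beta_h^{(j,i)}=\EE[u_{it}y_{j,t+h}]$ from the forward-iterated structural representation and eliminate every term except $\Phi_1^h b_i$ using independence of $u_t$ from the past and from future shocks, $\EE[s_{t-1}]=0$, and vanishing odd Gaussian moments; the only difference is that the paper groups terms by order in $u_t$ (its $R_{t,h},L_{t,h},C_{t,h},Q_{t,h}$ decomposition) rather than by the terms of the recursion, and your block-diagonality argument for dropping $W_t$ and your finite-second-moment remark fill in steps the paper leaves implicit.
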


\medskip
\noindent {\bf Targeting the True IRF. \ } We consider the following infeasible specification \textit{Infeas}:
\be
y_{j,t+h} = \kappa_{h0}^{(j,i)} + \kappa_{h1}^{(j,i)} u_{it} + {\kappa_{h2}^{(j,i)}}^{\prime} s_{t-1} u_{it} + \kappa_{h3}^{(j,i)} u_{it}^2 + \epsilon^{(j,i)}_{h,t+h}.
\label{multi.infeasible.spec}
\ee

Proposition~\ref{prop:irf.multi.infeas} shows that \textit{Infeas} can recover the true impulse response when the true DGP is the QVAR(1,1) model.
\begin{proposition}  \label{prop:irf.multi.infeas}
 Under Assumption~\ref{assump:qvar.dgp}, the population IRF of the $j$-th outcome variable to a shock of magnitude $\delta_i$ to $u_{it}$ implied by \textit{Infeas} is $\operatorname{IRF}_{j,i}^{Infeas}(s, \delta_i; h) = \kappa_{h1}^{(j,i)} \delta_i + {\kappa_{h2}^{(j,i)}}^{\prime} s \delta_i + \kappa_{h3}^{(j,i)} \delta_i^2$, with $\kappa_{h1}^{(j,i)}$, $\kappa_{h2}^{(j,i)}$, and $\kappa_{h3}^{(j,i)}$  such that the IRF exactly recovers the true CAR.
\end{proposition}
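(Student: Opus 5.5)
Our plan is to show that the population linear projection of $y_{j,t+h}$ on $(1,u_{it},s_{t-1}u_{it},u_{it}^2)$ has a conditional-mean structure that lies exactly in the span of these regressors, up to an additive term uncorrelated with them. The first step is to write $y_{j,t+h}=\tilde\psi_{jh}(u_t,U_{h,t+h})$ and compute $g_h(s,u)\equiv \EE[y_{j,t+h}\mid s_{t-1}=s,u_t=u]$. Iterating the state equation gives $s_{t+k-1}=\Phi_1^k s_{t-1}+\sum_{\ell=0}^{k-1}\Phi_1^{k-1-\ell}Bu_{t+\ell}$. Iterating the outcome equation backward to $y_{t-1}$ gives $y_{t+h}=\Phi_1^{h+1}y_{t-1}+\sum_{k=0}^{h}\Phi_1^{h-k}[\Phi_2\vech(s_{t+k-1}s_{t+k-1}')+(1_n+\mathsf G s_{t+k-1})\odot Bu_{t+k}]$.

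We then take expectations conditional on $(s_{t-1},u_t)$, using three facts. First, $u_{t+1},\dots,u_{t+h}$ are independent of $(s_{t-1},u_t,y_{t-1})$ with mean zero. Second, for $k\ge1$, $u_{t+k}$ is independent of $s_{t+k-1}$. Third, we need $\EE[y_{t-1}\mid s_{t-1},u_t]=\EE[y_{t-1}\mid s_{t-1}]$. As a result, $g_h(s,u)$ is the sum of $\Phi_1^{h+1}\EE[y_{t-1}\mid s_{t-1}=s]$, terms that depend on $s$ alone (quadratic in $s$, plus constants from future-shock variances), and the $u$-dependent part. The $u$-dependent part is
\[
\Phi_1^h(1_n+\mathsf G s)\odot(Bu)+\sum_{k=1}^h\Phi_1^{h-k}\Phi_2\vech\bigl(u(\Phi_1^ks)(\Phi_1^{k-1}b_i)'+u(\Phi_1^{k-1}b_i)(\Phi_1^ks)'+u^2(\Phi_1^{k-1}b_i)(\Phi_1^{k-1}b_i)'\bigr),
\]
together with the analogous cross terms in the other components $u_{r t}$, $r\neq i$. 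Comparing with Proposition~\ref{prop:qvar.true.car}, the part of $g_h$ involving $u_{it}$ reproduces the CAR exactly as $g_h(s,u+\delta_ie_i)-g_h(s,u)$. Its $u_{it}$-loadings are a constant, a linear function of $s$, and a constant on $u_{it}^2$.

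The second step is to decompose $y_{j,t+h}=\kappa^{\ast}_{h0}+\kappa^{\ast}_{h1}u_{it}+{\kappa^{\ast}_{h2}}'s_{t-1}u_{it}+\kappa^{\ast}_{h3}u_{it}^2+R_{t}$. Here $\kappa^\ast$ are read off from the display above, and $R_t$ collects everything else: functions of $(s_{t-1},y_{t-1})$ alone, terms in $u_{rt}$ for $r\ne i$ (including $u_{rt}u_{it}$, $u_{rt}^2$, and $s_{t-1}u_{rt}$), and all terms involving future shocks. We will show that $R_t$ is orthogonal to each regressor, after centering $u_{it}^2$ against the constant.

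The orthogonality checks go as follows. Any function $f(s_{t-1},y_{t-1})$ satisfies $\EE[f\,u_{it}]=\EE[f\,s_{t-1}u_{it}]=0$ because $u_{it}$ is independent with mean zero, and $\Cov(f,u_{it}^2)=0$ by independence. Terms in $u_{rt}$ and $u_{rt}u_{it}$ are orthogonal to $u_{it}$, $s_{t-1}u_{it}$, and $u_{it}^2$ because the $u_t$ components are independent standard normals with zero odd moments. The term $u_{rt}^2$ is uncorrelated with $u_{it}^2$. Future-shock terms are mean zero conditional on time-$t$ information, except for quadratic terms whose means are constants. The remaining question is whether $\EE[R_t]$ is absorbed by $\kappa_{h0}$; it is, once we define $\kappa_{h0}$ to absorb all means. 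By the normal equations, the population coefficients then equal $\kappa^\ast$, since the regressors have a nonsingular second-moment matrix. Nonsingularity holds because $u_{it}$ is Gaussian and independent of $s_{t-1}$, and $\Var(s_{t-1})$ is positive definite.

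The main obstacle is the third fact from the first step, together with a careful accounting of the cross-shock terms $u_{rt}u_{it}$ arising from $\vech(s_ts_t')$ when $B$ is non-diagonal. Under Gaussianity these are uncorrelated with $u_{it}$ and $u_{it}^2$. For $s_{t-1}u_{it}$ we need $\EE[s_{t-1}u_{rt}u_{it}]=0$, which holds because $u_t\perp s_{t-1}$ and $\EE[u_{rt}u_{it}]=0$. The $h=0$ case is immediate, since the only $u_{it}$-terms are $(1_n+\mathsf Gs_{t-1})\odot Bu_t$, giving $\kappa_{h3}=0$. Finally, we plug the coefficients into the IRF definition, $\kappa_{h1}\delta_i+\kappa_{h2}'s\delta_i+\kappa_{h3}\delta_i^2$, and match it term by term to Proposition~\ref{prop:qvar.true.car}.
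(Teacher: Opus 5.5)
Your proposal is correct and takes essentially the paper's route. You decompose $y_{j,t+h}$ via the structural function into $u_{it}$, $s_{t-1}u_{it}$, and $u_{it}^2$ components plus a remainder that is orthogonal to every regressor once centered, read off the population coefficients from the normal equations, and match them term by term to Proposition~\ref{prop:qvar.true.car}; your explicit check that $\EE[s_{t-1}s_{t-1}']\succ 0$ makes the design matrix invertible is a detail the paper leaves implicit.

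The conditional-mean detour through $g_h(s,u)$ is not needed. Strictly, $g_h(s,u+\delta_ie_i)-g_h(s,u)$ equals the CAR only after averaging over the baseline $u_t$, because it still contains mean-zero pieces such as $2\kappa_{h3}^{(j,i)}u_i\delta_i$ and $\delta_i\sum_{\ell\neq i}\chi_{h,j,i\ell}u_\ell$; your final term-by-term comparison does not rely on that claim.
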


In practice, the number of state variables can be very large, so it is empirically infeasible to include all interaction terms $s_{t-1}u_{it}$ (or their proxies) as regressors. Therefore, as suggested by \cite{AndreasenEtAl2018}, a more reasonable option would be conditioning the IRFs on the set $\mathcal{A}$ with a clear economic interpretation. Specifically, define our new causal parameter, conditional CAR, as
$$
\operatorname{cCAR}^{(j,i)}_h(\mathcal{A}, \delta_i) = \mathbb{E}\left[\operatorname{CAR}^{(j,i)}_h(s_{t-1}, \delta_i) \Big|\ s_{t-1}\in \mathcal{A}\right].
\footnote{We interpret $\mathbb{E}[\cdot\mid s_{t-1}\in \mathcal{A}]$ via a regular conditional expectation.
For zero-probability slices $\mathcal{A}=\{s_{t-1} | s_{t-1,I}=c_0\}$ where $s_{t-1,I}$ is a subvector of $s_{t-1}$,
$\mathbb{E}[\cdot\mid s_{t-1}\in \mathcal{A}]=\mathbb{E}[\cdot\mid s_{t-1,I}=c_0]$.} 
$$
For example, suppose the state vector $s_{t-1}$ includes three variables: GDP, the unemployment rate (UR), and the inflation rate (INFL). To study impulse responses in the high-inflation regime, one can define the conditioning set as $\mathcal{A}$ as $\{s_{t-1} = (\text{GDP}_{t-1}, \text{UR}_{t-1}, \text{INFL}_{t-1}) \mid \text{INFL}_{t-1} > 5\%\}$ and focus on cCAR when the state of the economy falls within this regime.

Just as with the CAR, we can design empirical specifications to recover the conditional CAR. Proposition \ref{prop:ccar.multi.infeas} formally states this result.

\begin{proposition} \label{prop:ccar.multi.infeas}
Suppose Assumption~\ref{assump:qvar.dgp} holds.

\noindent (i) If set $\mathcal{A}$ is of positive probability, i.e., $\mathbb{P}(\mathcal{A})>0$.  Consider the following infeasible empirical specification 
\textit{Infeas-Cond1}
\be
y_{j,t+h} = \xi_{h0}^{(j,i)} + \xi_{h1}^{(j,i)} \mathbbm{1}\{s_{t-1}\in \mathcal{A}\} u_{it} + \xi_{h2}^{(j,i)} u_{it}^2+ \epsilon^{(j,i)}_{h,t+h}.
\ee
The conditional CAR can be recovered from the population coefficients of \textit{Infeas-Cond1}:
\begin{equation*}
    \operatorname{cCAR}^{(j,i)}_h(\mathcal{A}, \delta_i) = \xi_{h1}^{(j,i)} \delta_i + \xi_{h2}^{(j,i)} \delta_i^2.
\end{equation*}

\noindent (ii) If $\mathcal{A} = \{s_{t-1} | s_{t-1,I}= c_0\}$ and $\mathbb{P}(\mathcal{A})=0$, where $s_{t-1,I}$ denotes the subvector of $s_{t-1}$ containing the elements indexed by $I \subseteq \{1, \ldots, n\}$. Consider the following (infeasible) empirical specification \textit{Infeas-Cond2}
\be
y_{j,t+h} = \zeta_{h0}^{(j,i)} + \zeta_{h1}^{(j,i)} u_{it} + {\zeta_{h2}^{(j,i)}}^{\prime} s_{t-1,I} \,u_{it} + \zeta_{h3}^{(j,i)} u_{it}^2+ \epsilon^{(j,i)}_{h,t+h}.
\ee
The conditional CAR can be recovered from the population coefficients of \textit{Infeas-Cond2}:
\begin{equation*}
    \operatorname{cCAR}^{(j,i)}_h(c_0, \delta_i) = \zeta_{h1}^{(j,i)} \delta_i + {\zeta_{h2}^{(j,i)}}^{\prime} c_0\,\delta_i + \zeta_{h3}^{(j,i)} \delta_i^2.
\end{equation*}
\end{proposition}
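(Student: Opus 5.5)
The plan is to reduce both parts to an orthogonal-projection argument built on the closed-form CAR in Proposition~\ref{prop:qvar.true.car}. First I will rewrite that formula in the compact form
\[
\operatorname{CAR}^{(j,i)}_h(s,\delta_i)=\alpha_h\,\delta_i+\kappa_h' s\,\delta_i+q_h\,\delta_i^2 ,
\]
where $\alpha_h=e_j'\Phi_1^h b_i$. The vector $\kappa_h$ collects the $s$-linear terms, namely $\Phi_1^h(\mathsf{G}s)\odot b_i$ and the two cross $\vech$ terms, which are linear in $s$. The scalar $q_h$ is the $\delta_i^2$ coefficient, with $q_0=0$. Taking conditional expectations then gives $\operatorname{cCAR}^{(j,i)}_h(\mathcal{A},\delta_i)=\alpha_h\delta_i+\kappa_h'\EE[s_{t-1}\mid s_{t-1}\in\mathcal{A}]\,\delta_i+q_h\delta_i^2$. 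What remains is to show that the population coefficients of each regression deliver exactly these quantities.

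The second step, and the main obstacle, is bookkeeping in the structural function $\tilde\psi_{jh}$. I will iterate \eqref{eq:qvar.model} forward and write $y_{j,t+h}$ as a sum of five groups of terms.
\begin{tlist}
\item A term $c(s_{t-1},y_{t-1})$ involving only predetermined variables.
\item The term $(\alpha_h+\kappa_h's_{t-1})u_{it}$.
\item The term $q_h u_{it}^2$.
\item Terms linear in $u_{kt}$ for $k\neq i$, with coefficients affine in $s_{t-1}$, together with quadratic terms $u_{it}u_{kt}$ and $u_{kt}u_{lt}$ for $k,l\neq i$.
\item Every term involving at least one future shock $u_{t+1},\dots,u_{t+h}$. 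This includes the heteroskedastic terms $\mathsf{G}s_{t+k-1}\odot\eta_{t+k}$ and the $\vech$ terms in which $s_{t+k-1}$ contains future shocks.
\end{tlist}
Group (v) has mean zero conditional on $(\mathcal{F}_{t-1},u_t)$, because each such term is a product of a time-$t$-measurable factor and a mean-zero future shock.

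The third step checks orthogonality of groups (i), (iv) and (v) to the regressors. The regressors are all $(\mathcal{F}_{t-1},u_t)$-measurable: $\{1,\mathbbm{1}\{s_{t-1}\in\mathcal{A}\}u_{it},u_{it}^2\}$ in case (i) and $\{1,u_{it},s_{t-1,I}u_{it},u_{it}^2\}$ in case (ii). Group (v) is therefore orthogonal to them. In group (iv), $u_{kt}$ is independent of $(s_{t-1},u_{it})$ and has mean zero, so every product with a regressor has zero expectation. Terms $u_{kt}^2$ have their mean absorbed by the constant, and $u_{kt}^2-1$ is uncorrelated with every regressor. Group (i) is handled the same way: $c(\cdot)-\EE c$ is uncorrelated with any regressor containing $u_{it}$, since $u_{it}$ is independent of $\mathcal{F}_{t-1}$ and satisfies $\EE u_{it}=0$ and $\EE[u_{it}^2-1]=0$. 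Gaussianity and stationarity give finite fourth moments, so every projection is well defined. The problem thus reduces to projecting $(\alpha_h+\kappa_h's_{t-1})u_{it}+q_hu_{it}^2$ onto the regressors, and the $u_{it}^2$ coefficient equals $q_h$ because the linear-in-$u_{it}$ part is odd in $u_{it}$.

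The last step computes the projections in each case.
\begin{tlist}
\item Case (i). The coefficient on $\mathbbm{1}_{\mathcal{A}}u_{it}$ is $\EE[(\alpha_h+\kappa_h's_{t-1})\mathbbm{1}_{\mathcal{A}}u_{it}^2]/\EE[\mathbbm{1}_{\mathcal{A}}u_{it}^2]=\alpha_h+\kappa_h'\EE[s_{t-1}\mid\mathcal{A}]$, using independence and $\mathbb{P}(\mathcal{A})>0$. The residual is orthogonal to $1$ and to $u_{it}^2$ by oddness in $u_{it}$. On the complement of $\mathcal{A}$, the term $(\alpha_h+\kappa_h's_{t-1})u_{it}$ is orthogonal to $\mathbbm{1}_{\mathcal{A}}u_{it}$ because $\mathbbm{1}_{\mathcal{A}^c}\mathbbm{1}_{\mathcal{A}}=0$.
\item Case (ii). The process $s_{t-1}$ is a stationary, mean-zero Gaussian VAR, so $\EE[s_{t-1}\mid s_{t-1,I}]=\Lambda s_{t-1,I}$ with $\Lambda=\Cov(s,s_I)\Var(s_I)^{-1}$. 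The residual $\kappa_h'(s_{t-1}-\Lambda s_{t-1,I})u_{it}$ is then orthogonal to $u_{it}$ and to $s_{t-1,I}u_{it}$, which gives $\zeta_{h1}=\alpha_h$, $\zeta_{h2}=\Lambda'\kappa_h$ and $\zeta_{h3}=q_h$.
\end{tlist}
Substituting these coefficients reproduces $\operatorname{cCAR}$ in both cases, with $\EE[s_{t-1}\mid s_{t-1,I}=c_0]=\Lambda c_0$ interpreted as the regular conditional expectation in the footnote. The case $h=0$ follows identically with $q_0=0$ and no cross $\vech$ terms.
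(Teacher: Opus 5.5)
Your proposal is correct and follows essentially the paper's route: write $y_{j,t+h}$ as the exact \textit{Infeas} representation $\kappa_{h0}+\kappa_{h1}u_{it}+\kappa_{h2}'s_{t-1}u_{it}+\kappa_{h3}u_{it}^2$ plus a remainder orthogonal to the regressors, solve the block-diagonal normal equations, and use the Gaussian identity $\EE[s_{t-1}\mid s_{t-1,I}=c_0]=\Lambda c_0$ for part (ii). Your direct check that the remainder is orthogonal to $\mathbbm{1}\{s_{t-1}\in\mathcal{A}\}u_{it}$ is, if anything, more explicit than the paper's appeal to orthogonality with the \textit{Infeas} regressors.

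One small slip: group (v) is not conditionally mean zero given $(\mathcal{F}_{t-1},u_t)$. Terms quadratic in a single future shock, namely $\vech(v_kv_k')$ with $v_k=\sum_{r=1}^{k-1}\Phi_1^{k-1-r}Bu_{t+r}$, have conditional mean $\sum_{r=1}^{k-1}\vech\bigl(\Phi_1^{k-1-r}BB'(\Phi_1^{k-1-r})'\bigr)$. This mean is nonrandom, so the intercept absorbs it exactly as it absorbs your $u_{kt}^2$ terms, and your slope coefficients are unaffected.
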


\section{Estimation and Inference}
\label{sec:feas.inference}

The theoretical comparisons in Theorems~\ref{thm:cond_u_main}--\ref{thm:cond_s_main}, together with the numerical evidence in Section~\ref{subsec:mc.exp}, suggest using the feasible specification \textit{Feas} as a practical default.
This section illustrates how to estimate this specification and conduct inference for its implied impulse responses.

Consider the true data-generating process as the QVAR(1,1) model, i.e., Assumption~\ref{assump:qvar.dgp} holds. Fix a horizon $h\ge 0$, a shock index $i\in\{1,\ldots,n\}$, and an outcome index $j\in\{1,\ldots,n\}$.
Let $z_{t-1}\in\mathbb{R}^{k}$ denote a vector of observable state proxies (e.g., a subset of $y_{t-1}$), and let $W_{t-1}$ denote additional controls (typically lags of $y_t$ and $u_t$) that are $\mathcal{F}_{t-1}$-measurable.
For each horizon $h$, we estimate the feasible specification
\be
y_{j,t+h} = x_{it}' \vartheta_h + \epsilon_{h,t+h},
\qquad
x_{it} \equiv \big(1,\,u_{it},\, z_{t-1}^{\prime}u_{it},\,u_{it}^2,\,W_{t-1}^\prime\big)',
\label{eq:feas_reg_vector}
\ee
where $\vartheta_h \equiv (\theta_{h0},\theta_{h1},\theta^\prime_{h2},\theta_{h3},\pi^\prime_h)^\prime$ is the population linear projection coefficient of $y_{j,t+h}$ onto $x_{it}$, and $\epsilon_{h,t+h} \equiv y_{j,t+h}-x_{it}'\vartheta_h$ is the corresponding projection error. 

Throughout this section, we suppress the $(j,i)$ superscript on the coefficients.

Let $T_h\equiv T-h$ and define the OLS estimator and residuals by
\be
\hat\vartheta_h \equiv \arg\min_\vartheta \sum_{t=1}^{T_h} \big(y_{j,t+h} - x_{it}'\vartheta\big)^2,
\qquad
\hat\epsilon_{h,t+h} \equiv y_{j,t+h} - x_{it}'\hat\vartheta_h.
\label{eq:feas_ols_def}
\ee
For a shock of magnitude $\delta_i$ and an evaluation state $z\in\mathbb{R}^{k}$, the implied impulse response is
\be
\widehat{\operatorname{IRF}}^{Feas}_{j,i}(z,\delta_i;h)
\equiv
\hat\theta_{h1}\, \delta_i
+
\hat\theta_{h2}^{\prime}\, z \,\delta_i
+
\hat\theta_{h3}\, \delta_i^2.
\label{eq:feas_irf_hat}
\ee

We impose the following regularity conditions on the state proxies and the controls.

\begin{assumption}[Proxies and Controls]\label{ass:feas.regularity}
There exists an integer $L<\infty$, vectors $a_z$ and $a_W$, and matrices $\{A_{z,\ell},B_{z,\ell},A_{W,\ell},B_{W,\ell}\}_{\ell=1}^L$ such that
\[
z_{t-1}
=
a_z
+
\sum_{\ell=1}^L A_{z,\ell}\,y_{t-\ell}
+
\sum_{\ell=1}^L B_{z,\ell}\,u_{t-\ell},
\]
and
\[
W_{t-1}
=
a_W
+
\sum_{\ell=1}^L A_{W,\ell}\,y_{t-\ell}
+
\sum_{\ell=1}^L B_{W,\ell}\,u_{t-\ell}.
\]
Also the second-moment matrix
$
Q_h\equiv \EE[x_{it}x_{it}']
$
is finite and positive definite.
\end{assumption}

Assumption~\ref{ass:feas.regularity} requires both the state proxy and the control vector to be finite-lag linear filters of observables and identified shocks. This encompasses standard LP lag controls, as well as finite-lag detrending devices such as the real-time Hamilton filter used in the empirical application. This condition ensures that the regression score process has geometric physical dependence and finite moments of all orders.

The following proposition confirms that standard HAC/HAR inference is valid for \textit{Feas} under Assumption~\ref{ass:feas.regularity}.

\begin{proposition}[Asymptotic normality and HAC/HAR inference for \textit{Feas}]\label{prop:hac.feas}
Suppose Assumptions~\ref{assump:qvar.dgp} and \ref{ass:feas.regularity} hold.
Let $\psi_{ht} \equiv x_{it}\,\epsilon_{h,t+h}$. Define
\[
\Gamma_{h,m}\equiv \EE[\psi_{ht}\,\psi_{h,t-m}'],
\qquad
\Omega_h\equiv\sum_{m=-\infty}^{\infty}\Gamma_{h,m},
\qquad
V_h\equiv Q_h^{-1}\Omega_h Q_h^{-1}.
\]
Then, as $T\to\infty$,
\be
\sqrt{T}\big(\hat\vartheta_h-\vartheta_h\big)\xrightarrow{d} \mathcal{N}\!\big(0,\,V_h\big).
\label{eq:asymp_normal_feas}
\ee
As a result, for any fixed $(z,\delta_i)$,
\be
\sqrt{T}\Big\{\widehat{\operatorname{IRF}}^{Feas}_{j,i}(z,\delta_i;h)-\operatorname{IRF}^{Feas}_{j,i}(z,\delta_i;h)\Big\}
\xrightarrow{d}
\mathcal{N}\!\Big(0,\,g(z,\delta_i)^\prime\, V_h\, g(z,\delta_i)\Big),
\label{eq:asymp_normal_irf_feas}
\ee
with gradient $g(z,\delta_i)\equiv \left(0,\,\delta_i,\,(z\delta_i)^\prime,\,\delta_i^2,\,0_{\dim(W_{t-1})}^\prime \right)'$.
Moreover, let $\hat V_h$ be the HAC/HAR estimator based on a bounded symmetric kernel $K$ that is continuous at zero, satisfies $K(0)=1$, and uses a bandwidth sequence $b_T$ with $b_T\to\infty$ and $b_T/\sqrt{T}\to 0$. Then $\hat V_h \to_p V_h.$
\end{proposition}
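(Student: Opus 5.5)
The plan is to treat $\hat\vartheta_h$ as a standard OLS estimator in a misspecified linear projection with stationary, weakly dependent data, and to verify a law of large numbers, a central limit theorem and HAC consistency using physical-dependence (Wu 2005) arguments.

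\textbf{Step 1: representation and dependence.} Under Assumption~\ref{assump:qvar.dgp} with $\rho(\Phi_1)<1$, iterate \eqref{eq:qvar.model} backward. This gives $s_t=\sum_{k\ge0}\Phi_1^k B u_{t-k}$, and $y_t$ as an absolutely convergent series in $u_{t-k}$ and in quadratic products $u_{t-k}u_{t-l}'$. The coefficients are bounded by $C\varrho^{\max(k,l)}$ for some $\varrho\in(\rho(\Phi_1),1)$. Hence $y_t$ is a Bernoulli shift of the i.i.d. Gaussian sequence $\{u_t\}$, with polynomial degree at most two and geometrically decaying coefficients.

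By Assumption~\ref{ass:feas.regularity}, $z_{t-1}$ and $W_{t-1}$ are finite linear filters of $(y_{t-\ell},u_{t-\ell})_{\ell\le L}$. So $x_{it}$ and $y_{j,t+h}$ are polynomials (of degree at most four) in Gaussian variables, and so is $\psi_{ht}=x_{it}(y_{j,t+h}-x_{it}'\vartheta_h)$. This has two consequences:
\begin{itemize}
\item Gaussian hypercontractivity gives finite moments of every order.
\item Replacing $u_{t-k}$ by an independent copy changes $\psi_{ht}$ by at most $O(\varrho^{k})$ in $L^p$, so the physical dependence measure $\delta_p(k)$ is geometrically decaying.
\end{itemize}
The polynomial structure implies this directly: a coupling affects each term through coefficients of size $\varrho^k$, multiplied by factors with bounded $L^{2p}$ norms, and Hölder's inequality finishes the bound.

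\textbf{Step 2: limit theory.}
\begin{itemize}
\item \emph{Consistency.} Stationarity and ergodicity (Bernoulli shifts of i.i.d. sequences are ergodic) give $T_h^{-1}\sum x_{it}x_{it}'\to_p Q_h$, which is positive definite by assumption.
\item \emph{Score mean.} By definition of the population projection, $\EE[\psi_{ht}]=0$.
\item \emph{CLT.} Geometric decay gives $\sum_k\delta_2(k)<\infty$, so Wu's CLT for stationary causal processes yields $T^{-1/2}\sum\psi_{ht}\to_d\mathcal N(0,\Omega_h)$, with $\Omega_h$ absolutely summable.
\item \emph{Conclusion.} Slutsky's lemma gives \eqref{eq:asymp_normal_feas}, noting $T_h/T\to1$.
\end{itemize}
The IRF result \eqref{eq:asymp_normal_irf_feas} follows by the delta method, because the IRF is linear in $\vartheta_h$ with the stated gradient $g(z,\delta_i)$.

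Serial correlation in $\psi_{ht}$ is neither assumed away nor needed. The omitted nonlinear cross-shock terms and the proxy error make $\epsilon_{h,t+h}$ non-martingale, and this is exactly why the long-run variance $\Omega_h$ appears.

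\textbf{Step 3: HAC consistency (main obstacle).} The estimator $\hat V_h$ uses residuals $\hat\epsilon$ in place of $\epsilon$, and $\hat\psi_{ht}=\psi_{ht}-x_{it}x_{it}'(\hat\vartheta_h-\vartheta_h)$. I would proceed in two parts.
\begin{itemize}
\item \emph{Infeasible estimator.} First show that the kernel estimator built from the true $\psi_{ht}$ is consistent. The bias is $\sum_m (1-K(m/b_T))\Gamma_{h,m}\to0$ by dominated convergence, using summability of $\Gamma_{h,m}$, boundedness of $K$ and continuity of $K$ at zero. The variance is $O(b_T/T)$ by the fourth-cumulant summability available under geometric physical dependence, using the bounds of Liu and Wu (2010).
\item \emph{Replacement error.} Next bound the effect of replacing $\psi$ by $\hat\psi$. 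The cross terms are of order $b_T\cdot\|\hat\vartheta_h-\vartheta_h\|\cdot O_p(1)=O_p(b_T/\sqrt T)$, and this is $o_p(1)$ under $b_T/\sqrt T\to0$.
\end{itemize}
Finally, combine with $\hat Q_h\to_p Q_h$ to obtain $\hat V_h\to_p V_h$. The delicate point is the uniform-in-lag moment bounds needed for the cumulant summation, since the regressors are degree-four polynomials. I would handle this by the Gaussian-chaos moment equivalence, which controls all moments by second moments.
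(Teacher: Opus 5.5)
Your proposal is correct and follows essentially the same route as the paper. Both arguments use geometric physical dependence via coupling, Wu's CLT together with the ergodic theorem for $\hat Q_h$ and Slutsky, the delta method for the linear IRF functional, and HAC consistency from an infeasible-kernel step plus an $O_p(b_T/\sqrt{T})$ residual-replacement bound.

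The differences are minor. You get moments from Gaussian hypercontractivity rather than from $L^q$ convergence of the series. You also sketch the bias/variance argument for the infeasible kernel estimator, which the paper states without detail. Two small fixes are needed. First, $\psi_{ht}$ depends on $u_{t+1},\ldots,u_{t+h}$, so it must be shifted to $\psi_{h,t-h}$ before a causal CLT applies, as the paper does. Second, $\psi_{ht}$ is a Gaussian chaos of degree up to six rather than four, which is harmless for hypercontractivity.
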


A natural follow-up question is whether one can use a simpler ``lag-augmentation + EHW'' shortcut: add lagged controls to the LP regression and then report Eicker--Huber--White (EHW) standard errors. 
\citet{MontielOleaEtAl2021} establish that this strategy is valid for correctly specified linear LPs---i.e., when the true DGP is a linear VAR. In that setting, lag augmentation removes the predictable lag component and leaves a score of the form $u_t e_{h,t+h}$, where the residual $e_{h,t+h}$ is free of the current innovation. Since $u_t$ is mean independent of past and future innovations, the score is serially uncorrelated even though the multi-step residual $e_{h,t+h}$ is overlapping.

For the EHW discussion, it is convenient to work with the centered quadratic shock term $c_{it}\equiv u_{it}^2-1$, and define
\[
x_{it}^c\equiv \big(u_{it},\,(z_{t-1}u_{it})',\,c_{it}\big)',
\qquad
\psi_{ht}^c\equiv x_{it}^c \epsilon_{h,t+h}.
\]
As stated in the next proposition, EHW validity is governed by the serial dependence of the slope-score process $\{\psi_{ht}^c\}.$

\begin{proposition}[Criterion for EHW validity]\label{prop:ehw.criterion}
Suppose Assumptions~\ref{assump:qvar.dgp} and \ref{ass:feas.regularity} hold. Define
\[
Q_h^c\equiv \EE[x_{it}^c x_{it}^{c\prime}],
\qquad
\Gamma_{h,\ell}^c\equiv \EE[\psi_{ht}^c\psi_{h,t-\ell}^{c\prime}],
\qquad
\Omega_h^c\equiv \sum_{\ell=-\infty}^{\infty}\Gamma_{h,\ell}^c.
\]
Then the slope block $(\hat\theta_{h1},\hat\theta_{h2}',\hat\theta_{h3})'$ of regression \eqref{eq:feas_reg_vector} satisfies
\[
\sqrt{T}\big((\hat\theta_{h1},\hat\theta_{h2}',\hat\theta_{h3})'-(\theta_{h1},\theta_{h2}',\theta_{h3})'\big)
\xrightarrow{d}
\mathcal N\!\Big(0,\,(Q_h^c)^{-1}\Omega_h^c(Q_h^c)^{-1}\Big).
\]
Consequently, EHW is asymptotically valid for the slope block if and only if
\[
\sum_{\ell\neq0}\Gamma_{h,\ell}^c=0.
\]
\end{proposition}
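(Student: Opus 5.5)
The plan is to reduce the slope block to a centered regression via Frisch--Waugh--Lovell, invoke Proposition~\ref{prop:hac.feas} for the sandwich form, and then compare the long-run variance $\Omega_h^c$ with the EHW target $\Gamma_{h,0}^c$.

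\textbf{Step 1: partialling out.} Write $x_{it}=(1,\,x_{it}^{\prime},\,W_{t-1}^\prime)'$ up to reparametrization. Replacing $u_{it}^2$ by $c_{it}=u_{it}^2-1$ changes only the intercept. So the slope coefficients, their OLS estimates and the residuals $\epsilon_{h,t+h}$ are unchanged if we regress on $(1,\,x_{it}^{c\prime},\,W_{t-1}')'$.

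\textbf{Step 2: population orthogonality.} Under Assumption~\ref{assump:qvar.dgp}, $u_{it}$ is independent of $\mathcal F_{t-1}$, with $\EE[u_{it}]=0$ and $\EE[c_{it}]=0$. Also $z_{t-1}$ and $W_{t-1}$ are $\mathcal F_{t-1}$-measurable by Assumption~\ref{ass:feas.regularity}. Hence
\[
\EE[x_{it}^c]=0,\qquad \EE[x_{it}^c W_{t-1}']=\EE\big[\EE[x_{it}^c\mid\mathcal F_{t-1}]W_{t-1}'\big]=0 .
\]
For example, $\EE[z_{t-1}u_{it}W_{t-1}']=\EE[z_{t-1}W_{t-1}'\,\EE(u_{it}\mid\mathcal F_{t-1})]=0$. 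So $Q_h$ is block diagonal between the slope block and $(1,W_{t-1}')$, with slope block equal to $Q_h^c$, which is positive definite because $Q_h$ is. Consequently the slope block of $Q_h^{-1}$ is $(Q_h^c)^{-1}$. The slope rows of $\psi_{ht}$ are exactly $\psi_{ht}^c$, so the slope block of $V_h=Q_h^{-1}\Omega_hQ_h^{-1}$ is $(Q_h^c)^{-1}\Omega_h^c(Q_h^c)^{-1}$. The limit distribution then follows directly from \eqref{eq:asymp_normal_feas} in Proposition~\ref{prop:hac.feas} by selecting this block.

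\textbf{Step 3: the EHW target.} The EHW estimator for the slope block is $\hat Q^{-1}\big(T^{-1}\sum_t x_t x_t'\hat\epsilon_{h,t+h}^2\big)\hat Q^{-1}$, restricted to that block. I would show that its probability limit is $(Q_h^c)^{-1}\Gamma_{h,0}^c(Q_h^c)^{-1}$:

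1. Use ergodicity and the finite moments guaranteed by the geometric physical dependence invoked in Proposition~\ref{prop:hac.feas}.
2. Use consistency of $\hat\vartheta_h$ to replace $\hat\epsilon$ by $\epsilon$.
3. Use the block diagonality of $Q_h$ to isolate the slope block.

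The cross terms between the slope scores and the $(1,W)$ scores appear only in the middle matrix. Under block-diagonal $Q_h$, they do not enter the slope block of the sandwich. This is the point that needs care, so I would write out the partitioned inverse explicitly to confirm it.

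\textbf{Step 4: equivalence.} EHW is asymptotically valid iff the two sandwiches coincide. Because $(Q_h^c)^{-1}$ is nonsingular, that holds iff $\Omega_h^c=\Gamma_{h,0}^c$, i.e.\ $\sum_{\ell\ne0}\Gamma^c_{h,\ell}=0$. Absolute summability of $\Gamma_{h,\ell}^c$ from the dependence bounds makes $\Omega_h^c$ well defined.

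\textbf{Main obstacle.} The hardest step is Step 2, specifically verifying that centering via $c_{it}$ makes the slope regressors exactly orthogonal to the intercept and to the controls. If $z_{t-1}$ were not $\mathcal F_{t-1}$-measurable this would fail, and the slope block of $V_h$ would mix in the nuisance scores. Assumption~\ref{ass:feas.regularity} is exactly what delivers the measurability.
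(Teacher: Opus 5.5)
Your proposal is correct and follows essentially the paper's route. Both arguments rest on $u_{it}$ being independent of $\mathcal F_{t-1}$ with $\EE[u_{it}]=0$ and $\EE[u_{it}^2]=1$, which makes $x_{it}^c$ exactly orthogonal to $(1,W_{t-1}')'$; the CLT of Proposition~\ref{prop:hac.feas} then gives the sandwich $(Q_h^c)^{-1}\Omega_h^c(Q_h^c)^{-1}$, and EHW validity reduces to $\Omega_h^c=\Gamma_{h,0}^c$. The paper states the orthogonality through Frisch--Waugh--Lovell, while you state it as block-diagonality of the reparametrized second-moment matrix with $c_{it}$ in place of $u_{it}^2$; the two are equivalent, and yours reads the slope block directly off the full-vector limit in Proposition~\ref{prop:hac.feas}.
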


Proposition~\ref{prop:ehw.criterion} makes clear that the key question is whether lag augmentation renders the slope score $\psi_{ht}^c$ serially uncorrelated. To answer this question, we begin with the zero-proxy-error benchmark $z_{t-1}=s_{t-1}$, which corresponds to the infeasible specification \textit{InFeas} in (\ref{multi.infeasible.spec}). Define
\[
A_{h,t-1}^{(j,i)}
\equiv
\Proj_{L^2(\mathcal F_{t-1})}\!\left[
y_{j,t+h}
-
\kappa_{h1}^{(j,i)}u_{it}
-
\kappa_{h2}^{(j,i)\prime}s_{t-1}u_{it}
-
\kappa_{h3}^{(j,i)}u_{it}^2
\right],
\]
where $\kappa_{h1}^{(j,i)}$, $\kappa_{h2}^{(j,i)}$, and $\kappa_{h3}^{(j,i)}$ are the exact-state coefficients from Proposition~\ref{prop:irf.multi.infeas}. Here $\Proj_{L^2(\mathcal F_{t-1})}$ denotes the orthogonal projection onto the space of square-integrable $\mathcal F_{t-1}$-measurable random variables. The corresponding residual is
\[
e_{h,t+h}^{\star,(j,i)}
\equiv
y_{j,t+h}
-
A_{h,t-1}^{(j,i)}
-
\kappa_{h1}^{(j,i)}u_{it}
-
\kappa_{h2}^{(j,i)\prime}s_{t-1}u_{it}
-
\kappa_{h3}^{(j,i)}u_{it}^2,
\]
and let
\[
x_{it}^{c,\star}\equiv \big(u_{it},\,(s_{t-1}'u_{it})',\,c_{it}\big)'.
\]
Any lag augmentation removes an additive component that is measurable with respect to $\mathcal F_{t-1}$. Residualizing on $L^2(\mathcal F_{t-1})$ therefore provides a maximal benchmark, since it removes all such components at once. Hence, if the score $x_{it}^{c,\star}e_{h,t+h}^{\star,(j,i)}$ remains serially correlated after this residualization, that dependence cannot be eliminated by any lag augmentation, and EHW cannot be restored.

Proposition~\ref{prop:exactstate.pattern} shows that EHW is valid for $h=0$ but generically fails for $h\ge 1$ under the QVAR class. 

\begin{proposition}[Zero proxy error: $h=0$ validity and generic $h\ge 1$ failure]\label{prop:exactstate.pattern}
Maintain Assumptions~\ref{assump:qvar.dgp} and \ref{ass:feas.regularity}. Suppose proxy error is zero so that the interaction term in \eqref{eq:feas_reg_vector} uses the true state, $z_{t-1}=s_{t-1}$. Let
\[
\psi_{ht}^{c,\star}\equiv x_{it}^{c,\star}e_{h,t+h}^{\star,(j,i)},
\qquad
\psi_{ht}^{u,\star}\equiv u_{it}e_{h,t+h}^{\star,(j,i)}.
\]
Write $b_{jr}\equiv e_j'b_r$, and $\mathsf{g}_j'$ as the $j$-th row of $\mathsf{G}$ in (\ref{eq:qvar.model}). Then:
\begin{enumerate}
\item[(i)] At $h=0$, $\{\psi_{0t}^{c,\star}\}$ is a martingale difference sequence. Hence EHW is valid at $h=0$.
\item[(ii)] For each horizon $h\ge 1$ and each $\ell\neq i$, define the horizon-$h$ cross-shock quadratic coefficient
\[
\chi_{h,j,i\ell}
\equiv
\sum_{r=1}^{h} e_j'\Phi_1^{h-r}\Phi_2\,
\vech\!\left((\Phi_1^{r-1}b_i)(\Phi_1^{r-1}b_\ell)' + (\Phi_1^{r-1}b_\ell)(\Phi_1^{r-1}b_i)'\right).
\]
Then, for every $h\ge 1$,
\[
\EE\!\left[\psi_{ht}^{u,\star}\psi_{h,t-h}^{u,\star}\right]
=
(\mathsf{g}_j'\Phi_1^{h-1} b_i)\sum_{\ell\neq i} b_{j\ell}\chi_{h,j,i\ell}.
\]
Consequently, if for a given $h\ge 1$,
\[
(\mathsf{g}_j'\Phi_1^{h-1} b_i)\sum_{\ell\neq i} b_{j\ell}\chi_{h,j,i\ell}\neq 0,
\]
then the lag-$h$ slope-score autocovariance matrix has a nonzero off-zero-lag entry, and EHW fails at that horizon $h$.
\end{enumerate}
\end{proposition}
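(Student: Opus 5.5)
The plan is to make the residual $e_{h,t+h}^{\star,(j,i)}$ explicit by expanding $y_{j,t+h}$ with the structural function $\tilde\psi_h$ from the Online Appendix, i.e.\ by iterating \eqref{eq:qvar.model} forward from $(y_{t-1},s_{t-1})$. Write $s_{t+k-1}=\Phi_1^{k}s_{t-1}+\sum_{r=0}^{k-1}\Phi_1^{k-1-r}Bu_{t+r}$. Substituting this into the quadratic term $\vech(s_{t+k-1}s_{t+k-1}')$ and the heteroskedastic term $(1_n+\mathsf G s_{t+k-1})\odot Bu_{t+k}$ gives $y_{j,t+h}$ as a polynomial of degree at most two in $(u_t,\dots,u_{t+h})$, with coefficients that are affine or quadratic in $s_{t-1}$ plus an $\mathcal F_{t-1}$-measurable part. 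Using Proposition~\ref{prop:irf.multi.infeas}, the terms $\kappa_{h1}u_{it}+\kappa_{h2}'s_{t-1}u_{it}+\kappa_{h3}u_{it}^2$ absorb exactly the $u_{it}$, $s_{t-1}u_{it}$ and $u_{it}^2$ monomials. The projection $A_{h,t-1}$ absorbs every $\mathcal F_{t-1}$-measurable piece, including the means of squared shocks. What remains in $e^\star_{h,t+h}$ is a sum of monomials, each containing at least one factor that is mean zero given $\mathcal F_{t-1}\vee\sigma(u_{it})$: terms linear in $u_{\ell t}$ ($\ell\neq i$) or in $u_{t+k}$ ($k\ge1$), centered squares $u_{r,t+k}^2-1$ and $u_{\ell t}^2-1$, and cross products $u_{it}u_{\ell t}$, $u_{it}u_{r,t+k}$, $u_{r,t+k}u_{r',t+k'}$.

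\textbf{Part (i).} At $h=0$ the residual reduces to
\[
e^\star_{0,t}=(1+\mathsf g_j's_{t-1})\sum_{\ell\neq i}b_{j\ell}u_{\ell t}.
\]
I will check that this is indeed what survives, i.e.\ that $\kappa_{01}=b_{ji}$, $\kappa_{02}=b_{ji}\mathsf g_j$, $\kappa_{03}=0$ and $A_{0,t-1}=e_j'(\Phi_1y_{t-1}+\Phi_2\vech(s_{t-1}s_{t-1}'))$. Since $x_{it}^{c,\star}$ is measurable with respect to $\mathcal F_{t-1}\vee\sigma(u_{it})$ and $u_{\ell t}$ ($\ell\ne i$) is independent of that $\sigma$-field, we get $\EE[\psi^{c,\star}_{0t}\mid\mathcal F_{t-1}]=0$. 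Moreover $\psi^{c,\star}_{0t}$ is $\mathcal F_t$-measurable, so it is an MDS. Every $\Gamma^c_{0,\ell}$ with $\ell\neq0$ then vanishes, and Proposition~\ref{prop:ehw.criterion} gives EHW validity.

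\textbf{Part (ii).} Fix $h\ge1$. I compute $\EE[u_{it}e^\star_{h,t+h}\,u_{i,t-h}e^\star_{h,t}]$ by picking out the monomial pairs whose product has nonzero mean. The relevant term of $e^\star_{h,t+h}$ is the cross-shock quadratic term
\[
\sum_{\ell\ne i}\chi_{h,j,i\ell}\,u_{it}u_{\ell t},
\]
which comes from the $\vech$ terms in Proposition~\ref{prop:qvar.true.car} with $\delta_i e_i$ replaced by the full vector $u_t$; it is centered and cannot be absorbed by any regressor. The matching term of $e^\star_{h,t}$ (the residual at origin $t-h$) comes from the heteroskedastic factor at time $t$, namely $\mathsf g_j's_{t-1}\sum_{\ell}b_{j\ell}u_{\ell t}$. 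Here $s_{t-1}$ contains $\Phi_1^{h-1}b_iu_{i,t-h}$, which yields the monomial $(\mathsf g_j'\Phi_1^{h-1}b_i)\,u_{i,t-h}\sum_\ell b_{j\ell}u_{\ell t}$. Since $u_{\ell t}$ is a time-$t$ shock, it is not absorbed by the $u_{i,t-h}$-based regressors at origin $t-h$. Multiplying the two pieces gives
\[
(\mathsf g_j'\Phi_1^{h-1}b_i)\sum_{\ell\ne i}b_{j\ell}\chi_{h,j,i\ell}\;\EE[u_{it}^2u_{\ell t}^2]\,\EE[u_{i,t-h}^2]=(\mathsf g_j'\Phi_1^{h-1}b_i)\sum_{\ell\ne i}b_{j\ell}\chi_{h,j,i\ell},
\]
using independence and unit variances. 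The $(u,u)$ entry of $\Gamma^{c}_{h,h}$ is then nonzero under the stated condition, so Proposition~\ref{prop:ehw.criterion} shows EHW fails.

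\textbf{Main obstacle.} The hard step is the bookkeeping argument that no other pair of monomials contributes. After multiplying by $u_{it}u_{i,t-h}$, each cross pairing must contain either an unmatched mean-zero shock factor or an odd Gaussian moment. For example, terms involving $u_{t+h}$ in $e^\star_{h,t+h}$ are independent of everything in $\psi_{h,t-h}$ and drop out. I will organize this by the time index of the "newest" shock in each monomial. The pairings to rule out by independence and symmetry of the Gaussian are those where $e^\star_{h,t}$ contributes a time-$t$ shock that must pair with a time-$t$ factor of $u_{it}e^\star_{h,t+h}$. A second, milder subtlety is that $A_{h,t-1}$ is the $L^2(\mathcal F_{t-1})$ projection rather than a finite regression. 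This is harmless because the surviving monomials are orthogonal to $L^2(\mathcal F_{t-1})$ by construction.
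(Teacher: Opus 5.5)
Your proposal is correct and follows the paper's proof essentially step for step. At $h=0$ it uses the same explicit residual and martingale-difference argument. For $h\ge1$ it uses the same pairing of the omitted cross-shock term $\chi_{h,j,i\ell}u_{it}u_{\ell t}$ in $e^{\star,(j,i)}_{h,t+h}$ with the date-$t$ heteroskedastic piece of the lagged residual, via $\EE[u_{i,t-h}s_{t-1}]=\Phi_1^{h-1}b_i$.

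The paper handles your ``main obstacle'' more economically, in two steps. It first replaces $y_{j,t+h}$ with $\EE[y_{j,t+h}\mid\mathcal F_t]$, which is legitimate because $\psi^{u,\star}_{h,t-h}$ is $\mathcal F_t$-measurable; this removes all future-shock monomials at once. It then writes $e^{\star,(j,i)}_{h,t}=M_{h,t-1}+e_j'\diag(1_n+\mathsf G s_{t-1})Bu_t$ with $M_{h,t-1}$ $\mathcal F_{t-1}$-measurable, so after conditioning on $\mathcal F_{t-1}$ only date-$t$ Gaussian moments need checking.
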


At $h=0$, the residual contains only contemporaneous shocks $u_{\ell t}$ with $\ell\neq i$, so the score has conditional mean zero and hence forms a martingale difference sequence. For any $h\ge1$, however, the QVAR recursion generates an omitted cross-shock term $\chi_{h,j,i\ell}u_{it}u_{\ell t}$ in the residual. At the same time, the lagged score $\psi_{h,t-h}^{u,\star}$ contains the same $u_{\ell t}$ through the component $u_{i,t-h}e_j'\diag(1_n+\mathsf{G}s_{t-1})Bu_t$. Because $s_{t-1}=\Phi_1^{h-1}Bu_{t-h}+\cdots$ loads on $u_{i,t-h}$ via $\Phi_1^{h-1}b_i$, this component contributes an additional factor of $u_{i,t-h}$. Multiplying by the shock regressor $u_{it}$ therefore yields a term proportional to $u_{it}^2u_{i,t-h}^2u_{\ell t}^2$, which has nonzero expectation. Hence the lag-$h$ score covariance is nonzero whenever the condition in part~(ii) holds.

We now turn to the case with nonzero proxy error. Proposition~\ref{prop:proxyerror.h0} isolates an additional mechanism through which EHW can fail already at $h=0$.

\begin{proposition}[Proxy error creates an $h=0$ failure channel]
\label{prop:proxyerror.h0}
Maintain Assumptions~\ref{assump:qvar.dgp} and \ref{ass:feas.regularity}. Also assume that $z_{t-1}\in L^2$ and $\Var(z_{t-1})$ is nonsingular. Let
\[
\begin{aligned}
&\Lambda \equiv \Cov(s_{t-1},z_{t-1})\Var(z_{t-1})^{-1},\\
&a_\Lambda \equiv \EE[s_{t-1}] - \Lambda\EE[z_{t-1}],\qquad
\xi_{t-1}\equiv s_{t-1}-a_\Lambda-\Lambda z_{t-1}.
\end{aligned}
\]
and consider the feasible regression \eqref{eq:feas_reg_vector}. Let $e_t^{\star,(j,i)}$ denote the population $h=0$ residual obtained after removing $(u_{it}, z_{t-1}u_{it}, u_{it}^2)$ with their population coefficients, and then subtracting the orthogonal projection of the remainder onto $L^2(\mathcal F_{t-1})$. Then
\[
e_{t}^{\star,(j,i)}
=
b_{ji}(\mathsf{g}_j'\xi_{t-1})u_{it}
+
\sum_{\ell\neq i} b_{j\ell}(1+\mathsf{g}_j's_{t-1})u_{\ell t}.
\]
Hence the shock-score component $\psi_{0t}^{u,\star}\equiv u_{it}e_t^{\star,(j,i)}$ satisfies
\[
\EE[\psi_{0t}^{u,\star}\mid \mathcal F_{t-1}] = b_{ji}\mathsf{g}_j'\xi_{t-1}.
\]
If, in addition, $b_{j\ell}=0$ for every $\ell\neq i$, then
\[
\EE\!\left[\psi_{0t}^{u,\star}\psi_{0,t-1}^{u,\star}\right]
=
b_{ji}^2\,
\EE\!\left[(\mathsf{g}_j'\xi_{t-1})(\mathsf{g}_j'\xi_{t-2})u_{i,t-1}^2\right].
\]
Consequently, proxy error makes EHW fail at $h=0$ whenever the right-hand side is nonzero.
\end{proposition}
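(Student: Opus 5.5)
The plan is a direct computation of the $h=0$ projection. First I identify the population slope coefficients of \eqref{eq:feas_reg_vector}. Then I read off the residual and the score. The shock-based terms should carry all the signal, because every slope regressor $(u_{it},z_{t-1}u_{it},c_{it})$ has conditional mean zero given $\mathcal F_{t-1}$.

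\emph{Step 1 (decomposition of $y_{jt}$).} From \eqref{eq:qvar.model} with $\eta_t=Bu_t$, I write
\[
y_{jt}=A_{t-1}+\sum_{r=1}^n b_{jr}(1+\mathsf g_j's_{t-1})u_{rt},
\qquad
A_{t-1}\equiv e_j'\Phi_1y_{t-1}+e_j'\Phi_2\vech(s_{t-1}s_{t-1}'),
\]
where $A_{t-1}$ is $\mathcal F_{t-1}$-measurable. In the $r=i$ term I substitute $s_{t-1}=a_\Lambda+\Lambda z_{t-1}+\xi_{t-1}$. This gives
\[
b_{ji}(1+\mathsf g_j'a_\Lambda)u_{it}+b_{ji}(\Lambda'\mathsf g_j)'z_{t-1}u_{it}+b_{ji}(\mathsf g_j'\xi_{t-1})u_{it}.
\]

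\emph{Step 2 (population coefficients).} I use Frisch--Waugh, as in Proposition~\ref{prop:ehw.criterion}. Each slope regressor, multiplied by any $\mathcal F_{t-1}$-measurable variable, has mean zero; since $\EE[c_{it}\mid\mathcal F_{t-1}]=0$, the slope block equals the projection on $x_{it}^c$ alone, and $(1,W_{t-1})$ can be dropped. I then verify that the candidate
\[
\theta_{h1}=b_{ji}(1+\mathsf g_j'a_\Lambda),\qquad
\theta_{h2}=b_{ji}\Lambda'\mathsf g_j,\qquad
\theta_{h3}=0
\]
leaves a remainder orthogonal to $x_{it}^c$. The remainder is $A_{t-1}+b_{ji}(\mathsf g_j'\xi_{t-1})u_{it}+\sum_{\ell\neq i}b_{j\ell}(1+\mathsf g_j's_{t-1})u_{\ell t}$. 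The orthogonality checks are as follows:
\begin{itemize}
\item Against $u_{it}$: the requirement reduces to $\EE[\xi_{t-1}]=0$, which holds by the choice of $a_\Lambda$. The cross-shock terms vanish by independence of $u_{\ell t}$ and $u_{it}$.
\item Against $z_{t-1}u_{it}$: the requirement reduces to $\EE[\xi_{t-1}z_{t-1}']=0$, which is the defining property of the linear projection $\Lambda$ with intercept.
\item Against $c_{it}$: every term vanishes, because odd moments of $u_{it}$ are zero and $\EE[c_{it}\mid\mathcal F_{t-1}]=0$.
\end{itemize}
All moments involved are finite by Assumption~\ref{ass:feas.regularity} and Gaussianity.

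\emph{Step 3 (residual and score).} Projecting the remainder onto $L^2(\mathcal F_{t-1})$ removes exactly $A_{t-1}$ (together with any intercept and $W$ contribution). The other two pieces have conditional mean zero, so this yields the displayed formula for $e_t^{\star,(j,i)}$. Multiplying by $u_{it}$ and conditioning on $\mathcal F_{t-1}$ gives $\EE[\psi_{0t}^{u,\star}\mid\mathcal F_{t-1}]=b_{ji}\mathsf g_j'\xi_{t-1}\EE[u_{it}^2]+0=b_{ji}\mathsf g_j'\xi_{t-1}$.

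Next suppose $b_{j\ell}=0$ for all $\ell\neq i$. Then $\psi_{0t}^{u,\star}=b_{ji}(\mathsf g_j'\xi_{t-1})u_{it}^2$. Since $u_{it}^2$ is independent of $\mathcal F_{t-1}$, which contains $\xi_{t-1}$, $\xi_{t-2}$ and $u_{i,t-1}$, the lag-one product has expectation $b_{ji}^2\EE[(\mathsf g_j'\xi_{t-1})(\mathsf g_j'\xi_{t-2})u_{i,t-1}^2]$. This is the stated formula.

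\emph{Step 4 (EHW failure).} The final step, linking a nonzero lag-one autocovariance to failure of EHW via Proposition~\ref{prop:ehw.criterion}, is the delicate one. That criterion involves the full sum $\sum_{\ell\neq0}\Gamma^c_{0,\ell}$, not a single lag. I would first show that the score is not a martingale difference sequence, and that $\Gamma^c_{0,1}$ has a nonzero $(u,u)$ entry, with $\Gamma^c_{0,-1}=\Gamma^{c\prime}_{0,1}$ doubling it. I would then note that the higher lags $\EE[(\mathsf g_j'\xi_{t-1})(\mathsf g_j'\xi_{t-m-1})u_{i,t-m}^2]$ have the same structure. Their sum is a nontrivial analytic function of the parameters, so it vanishes only on a nongeneric set; EHW therefore fails whenever the stated lag-one quantity, and hence generically the full sum, is nonzero.
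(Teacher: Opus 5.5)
Your Steps 1--3 are the paper's proof. You use the same substitution $s_{t-1}=a_\Lambda+\Lambda z_{t-1}+\xi_{t-1}$ in the own-shock term, and the same projection facts $\EE[\xi_{t-1}]=0$ and $\EE[\xi_{t-1}z_{t-1}']=0$ to show the omitted interaction $b_{ji}(\mathsf g_j'\xi_{t-1})u_{it}$ is orthogonal to the included regressors. The conditional-mean step removing $A_{t-1}$ and the use of $u_{it}^2\perp\mathcal F_{t-1}$ for the lag-one moment also match. Writing out $\theta_{h1}=b_{ji}(1+\mathsf g_j'a_\Lambda)$, $\theta_{h2}=b_{ji}\Lambda'\mathsf g_j$, $\theta_{h3}=0$ via Frisch--Waugh is just a more explicit version of the paper's orthogonality checks. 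The paper then reads EHW failure directly off the nonzero lag-one autocovariance, exactly as in Proposition~\ref{prop:exactstate.pattern}.

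Your Step 4 tries to tighten that last inference, and as written it does not succeed. You are right that Proposition~\ref{prop:ehw.criterion} concerns the full sum $\sum_{\ell\neq0}\Gamma^c_{0,\ell}$. A genericity argument still cannot deliver the ``whenever'' in the statement, for three reasons:
\begin{itemize}
\item At a given parameter point with a nonzero lag-one term, the higher lags $b_{ji}^2\EE[(\mathsf g_j'\xi_{t-1})(\mathsf g_j'\xi_{t-m-1})u_{i,t-m}^2]$, $m\ge2$, could offset it.
\item The claim that the full sum is a not-identically-zero analytic function is asserted, not shown. You would at least need one parameter value where it is nonzero.
\item Failure of the martingale-difference property does not by itself make the sum nonzero.
\end{itemize}

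There is also an identification issue. The $(u,u)$ entry of $\Gamma^c_{0,1}$ is built from the regression's projection error, not from $e_t^{\star,(j,i)}$. Under the finite-lag controls of Assumption~\ref{ass:feas.regularity}, that error is $\epsilon_t=e_t^{\star,(j,i)}+\tilde A_{t-1}$ with $\tilde A_{t-1}\equiv A_{t-1}-\Proj(A_{t-1}\mid 1,W_{t-1})$. Conditioning on $\mathcal F_{t-1}$ (with $b_{j\ell}=0$ for $\ell\neq i$), that entry equals the displayed quantity plus $b_{ji}\EE[(\mathsf g_j'\xi_{t-1})u_{i,t-1}\tilde A_{t-2}]$. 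This extra term need not vanish. If $z_{t-1}$ loads on $y_{t-1}$ and $\mathsf G\neq0$, then $\EE[\xi_{t-1}u_{i,t-1}\mid\mathcal F_{t-2}]$ depends on $s_{t-2}$, while $\tilde A_{t-2}$ is orthogonal only to $(1,W_{t-2})$.

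So the displayed formula describes the score only in the paper's maximal $L^2(\mathcal F_{t-1})$-residualized benchmark. Either state the conclusion at that level, as the paper effectively does, or compute the full long-run sum for the actual regression error.
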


Even at horizon $0$, proxy error leaves the omitted interaction $b_{ji}(\mathsf{g}_j'\xi_{t-1})u_{it}$ in the residual, so the slope score already has a predictable component.
Under a Cholesky identification, the first row of $B$ has only $b_{11}\neq 0$, so the condition $b_{j\ell}=0$ for all $\ell\neq i$ holds automatically for $(j,i)=(1,1)$. The proposition therefore yields a direct $h=0$ failure channel.

Taken together, Propositions~\ref{prop:exactstate.pattern} and \ref{prop:proxyerror.h0} deliver a sharp message: EHW is not a general inference method for \textit{Feas} under the QVAR class. With zero proxy error, EHW is valid at $h=0$, but fails generically for horizons $h\ge 1$. With proxy error, EHW can fail from $h=0$. HAC/HAR should therefore be the default inference procedure for \textit{Feas}.

\section{Empirical Application}
\label{sec:empiric.exp}
In this section, we reassess the effects of monetary policy shocks following \citet{Ramey2016}. Specifically, we compare three LP specifications: the benchmark linear specification \textit{Linear}, our proposed state-dependent specification \textit{Feas}\footnote{One can view \textit{Feas} as a feasible version of \textit{Infeas-Cond2} in Section \ref{sec:multi.extension}.}, and the nonparametric specification \textit{NPLP} following Remark~\ref{rem:nplp}.

\medskip
\noindent{\textbf{Data, Shocks, and Specifications.}\ } 
For monetary policy shocks, we use the updated monthly \citet{RomerRomer2004} series provided by \citet{WielandYang2020}, which covers the period 1969--2007. We estimate the causal weights $\omega_u(u)$ in (\ref{eq:causal.weight}) using the software provided by KP. Figure~\ref{fig:rr.causal.weight} shows that the estimated weight function is centered near $u=0$ but negatively skewed. This suggests that linear LP may average out some of the nonlinearities, though the asymmetry prevents this averaging from being complete.

\begin{figure}
    \caption{Estimated Causal Weights for R\&R Shock}
    \label{fig:rr.causal.weight}
    \begin{center}
    \includegraphics[width=0.54\linewidth]{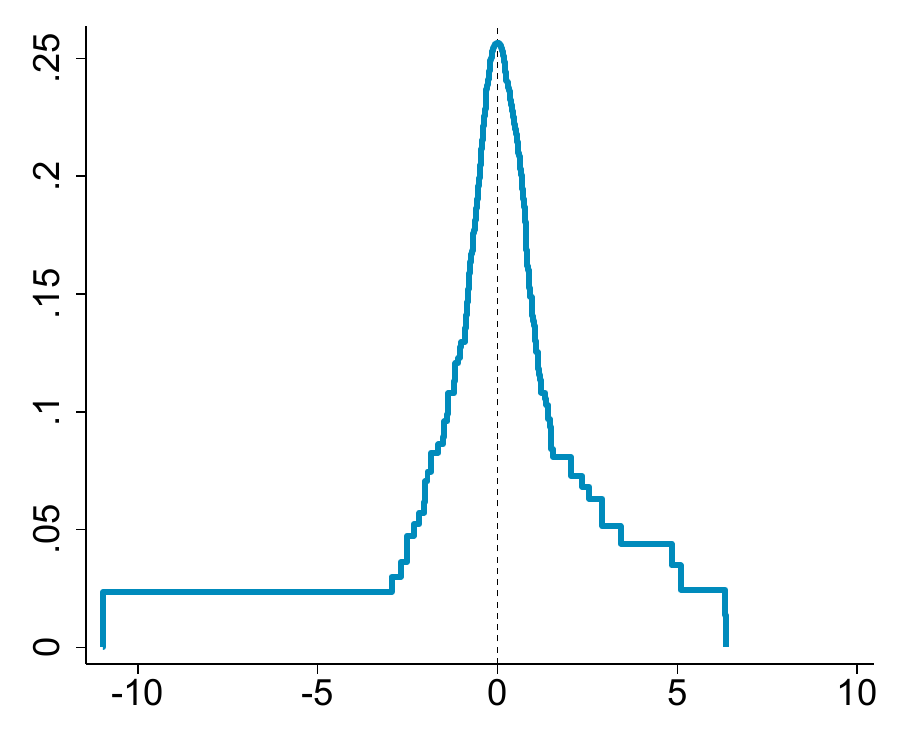}
    \end{center}
    {\footnotesize {\em Notes}: Horizontal axis in units of standard deviations. Total weight $\int_0^{\infty} \omega_u(u) d u$ on positive shocks is 0.47.}
\end{figure}

The \textit{Linear} specification is
\be
y_{t+h} = \beta_{h0} + \beta_{h1} {u}_{t} + \beta_{h2}^{\prime} W_{t} + \epsilon_{h,t+h},
\label{eq:emp.linear.spec}
\ee
the \textit{Feas} specification is
\be
y_{t+h} = \theta_{h0} + \theta_{h1} {u}_{t} + \theta_{h2}^{\prime}\, z_{t-1} \,u_{t} + \theta_{h3}\, u_{t}^2 + \theta_{h4}^{\prime} W_{t} + \epsilon_{h,t+h}.
\label{eq:emp.feas.spec}
\ee
and we consider a control-adjusted nonparametric specification \textit{NPLP}
\be
\EE[y_{t+h}\mid z_{t-1},u_t,W_t]=m_h(z_{t-1}, u_t)+\vartheta_h'W_t,
\label{eq:emp.nplp.spec}
\ee
where $m_h(\cdot,\cdot)$ is an unknown nonparametric function and the controls enter linearly.
Here, $y_{t+h}$ is the outcome variable of interest and $u_t$ is the R\&R shock. The control vector $W_t$ includes monthly lagged shocks and lagged aggregate variables (federal funds rate, industrial production, unemployment rate, CPI, and the commodity price index). Finally, $z_{t-1}$ denotes a vector of two state variables; their construction is described later. The implied impulse responses to a shock of size $\delta$ are
\be
\begin{aligned}
&\text{IRF}^{\textit{Linear}}(\delta; h) = \beta_{h1} \delta, \\
&\text{IRF}^{\textit{Feas}}(z, \delta; h) = \theta_{h1} \delta + \theta_{h2}^{\prime}\, z \delta + \theta_{h3}\, \delta^2, \\
&\text{IRF}^{\textit{NPLP}}(z, \delta; h) = \mathbb{E}\left[m_h(z,u_t+\delta)-m_h(z,u_t)\right].
\end{aligned}
\label{eq:emp.IRF}
\ee
We compute the IRF estimates and the associated confidence bands for \textit{Feas} as described in Section~\ref{sec:feas.inference}. For \textit{NPLP}, the estimation procedure, including the blocked cross-validation bandwidth choice, is described in the Online Appendix. Since we do not report confidence bands for \textit{NPLP}, we use it primarily as a qualitative benchmark.

\medskip
\noindent{\textbf{Choice of State Variables.}\ }
To capture state dependence in the impulse responses, we set $z_{t-1}$ to the cyclical components of industrial production and CPI. Rather than using the raw series, we use detrended series to better capture the cyclical behavior of the economic states. To avoid look-ahead bias, we extract the cyclical components from the log series using a real-time Hamilton filter based on \citet{Hamilton2018}, with monthly lead length $h=24$ and lag length $p=12$, estimated recursively so the lagged state vector is constructed using only information available through $t-1$. The construction details are provided in the Online Appendix. Table~\ref{tab:nber.peak} reports the values of state variables at three major NBER peak--trough episodes in our sample: 1973--1975, 1981--1982, and 2001. Both industrial production and CPI are uniformly higher at peaks than at troughs across the three episodes.

\begin{table}[t!]
  \setlength{\abovecaptionskip}{0cm}
  \caption{NBER Peaks and Troughs}
  \begin{center}
    \begin{tabular}{lcccccc}
    \hline \hline
    No. & Peak  & IP   & CPI  & Trough & IP   & CPI \\
    \hline
    1 & 1973 NOV & 0.096  & 0.014  & 1975 MAR & -0.145  & 0.007  \\
    2 & 1981 JUL & -0.022  & -0.023  & 1982 NOV & -0.110  & -0.058  \\
    3 & 2001 MAR & -0.001  & 0.028  & 2001 NOV & -0.067  & -0.008  \\
    \hline
    \end{tabular}%
   \end{center}
  \label{tab:nber.peak}
  {\footnotesize {\em Notes}: Cyclical components of log industrial production and CPI at NBER peak and trough months. Series are detrended with a real-time Hamilton filter as described in the text.}
  \setlength{\baselineskip}{4mm}
\end{table}%

\medskip
{\noindent\textbf{Results.}\ }  Figure~\ref{fig:irf.state.depend} compares average responses across the three NBER peaks and the three NBER troughs listed in Table~\ref{tab:nber.peak}. For $G\in\{\text{Peak},\text{Trough}\}$, let $z_{G,1},z_{G,2},z_{G,3}$ denote the corresponding benchmark states and let $\bar z_G=\frac{1}{3}\sum_{m=1}^3 z_{G,m}$. The object plotted for \textit{Feas} is
\[
\widehat{\text{IRF}}^{\textit{Feas}}_{G}(\sigma_{MP};h)
=
\frac{1}{3}\sum_{m=1}^{3}\widehat{\text{IRF}}^{\textit{Feas}}(z_{G,m},\sigma_{MP};h)
=
\hat\theta_{h1}\,\sigma_{MP}
+\hat\theta_{h2}^{\prime}\,\bar z_G\,\sigma_{MP}
+\hat\theta_{h3}\,\sigma_{MP}^2,
\]
where $\sigma_{MP}$ is one standard deviation of the R\&R shock (29.7 basis points); a positive shock is contractionary. We overlay the results for \textit{Linear}, \textit{Feas}, and \textit{NPLP}. This exercise is designed to assess the state dependence of the impulse responses---if the true DGP features such dependence, \textit{Feas} and \textit{NPLP} should yield different IRFs across states, whereas \textit{Linear} cannot capture this heterogeneity.

The \textit{Linear} estimates (black dashed line) yield three main findings. First, a contractionary shock leads to a persistent decline in industrial production and a rise in unemployment, both of which gradually revert to their pre-shock levels. Quantitatively, industrial production turns negative after several months and reaches about $-0.62\%$ at a horizon of 26 months, while unemployment peaks at about $0.15\%$ at a horizon of 28 months. Second, CPI increases modestly over short-to-medium horizons, with the point estimate remaining positive through about 26 months, indicating the presence of a price puzzle. Third, a contractionary shock raises the federal funds rate on impact, with the response peaking at about $0.67\%$ after two months, before gradually declining and turning modestly negative at longer horizons.

The \textit{Feas} estimates (red solid line) reveal clear state dependence. Our main finding is that contractionary monetary policy has stronger real effects in troughs than in peaks. Industrial production falls by about $1.46\%$ at a horizon of 26 months in the trough-average state, compared with about $0.45\%$ in the peak-average state, while unemployment rises to about $0.28\%$ in troughs versus about $0.19\%$ in peaks. Consistent with this pattern, the short-run federal funds rate response is substantially stronger in troughs, peaking at about $0.99\%$ at the two-month horizon, compared with about $0.54\%$ in peaks. CPI also shows state dependence at long horizons: by 60 months the response is about $-0.79\%$ in troughs versus about $-0.42\%$ in peaks.

The \textit{NPLP} estimates (green solid line) generate a state-dependence pattern similar to \textit{Feas} in our sample. In particular, trough responses are generally stronger than peak responses for industrial production and unemployment, CPI shows the same long-horizon ordering, and the federal funds rate response is stronger in troughs in the short run. With the cross-validation-tuned bandwidth, \textit{NPLP} yields responses that are close in shape to those of \textit{Feas} and not excessively jagged. We view this as evidence that the main state-dependence patterns are not artifacts of the parametric structure imposed by \textit{Feas}.

\begin{figure}[t!]
  \setlength{\abovecaptionskip}{0cm}
  \caption{State Dependence of Impulse Responses}
  \label{fig:irf.state.depend}
  \begin{center}
  \renewcommand{\arraystretch}{1} 
  \begin{tabular}{ccc}
      & Peak & Trough \\[-10pt]
      \rotatebox{90}{\parbox{5 cm}{\centering \textit{IP}}}
        & \includegraphics[width=0.38\textwidth]{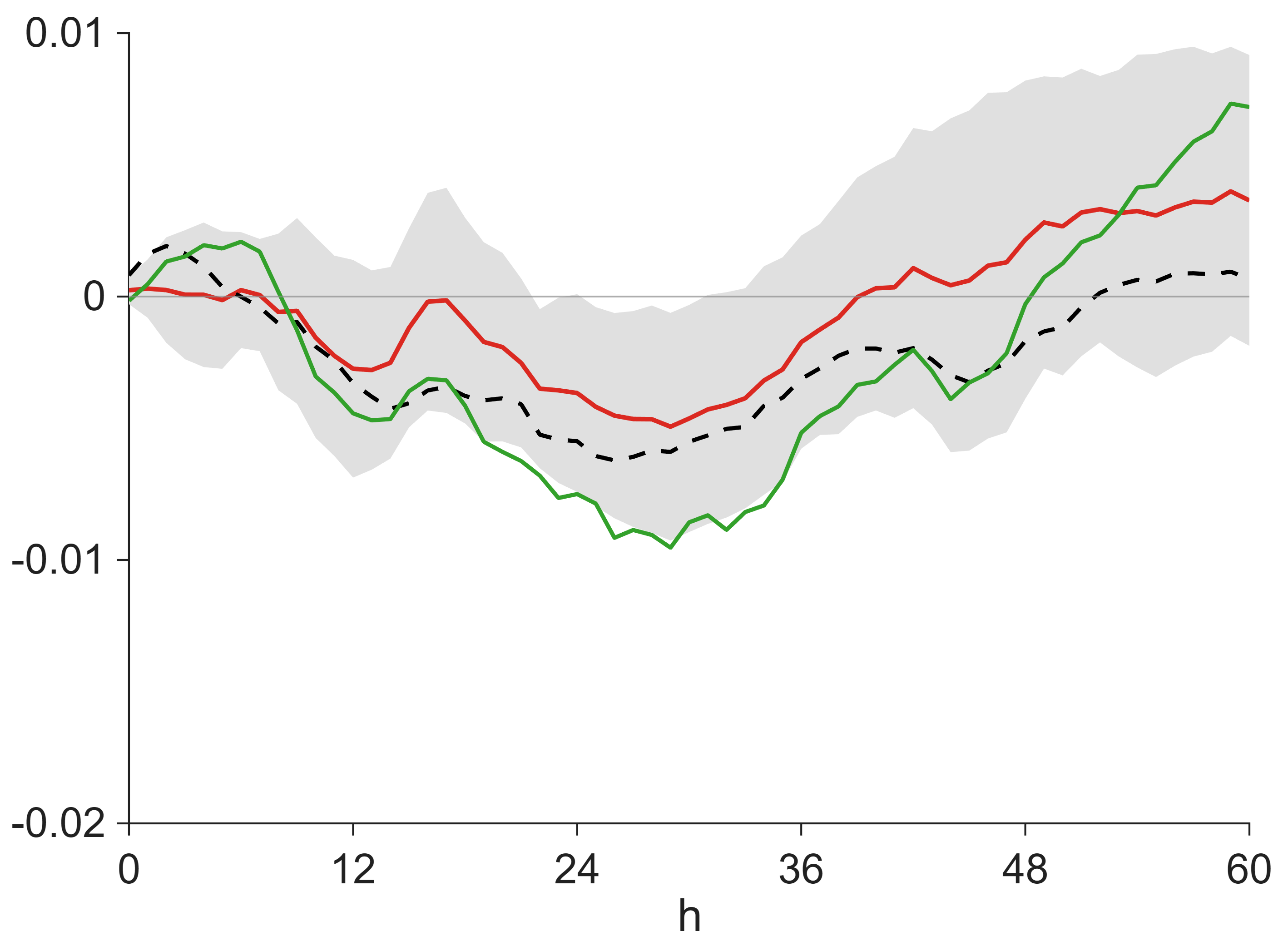}
        & \includegraphics[width=0.38\textwidth]{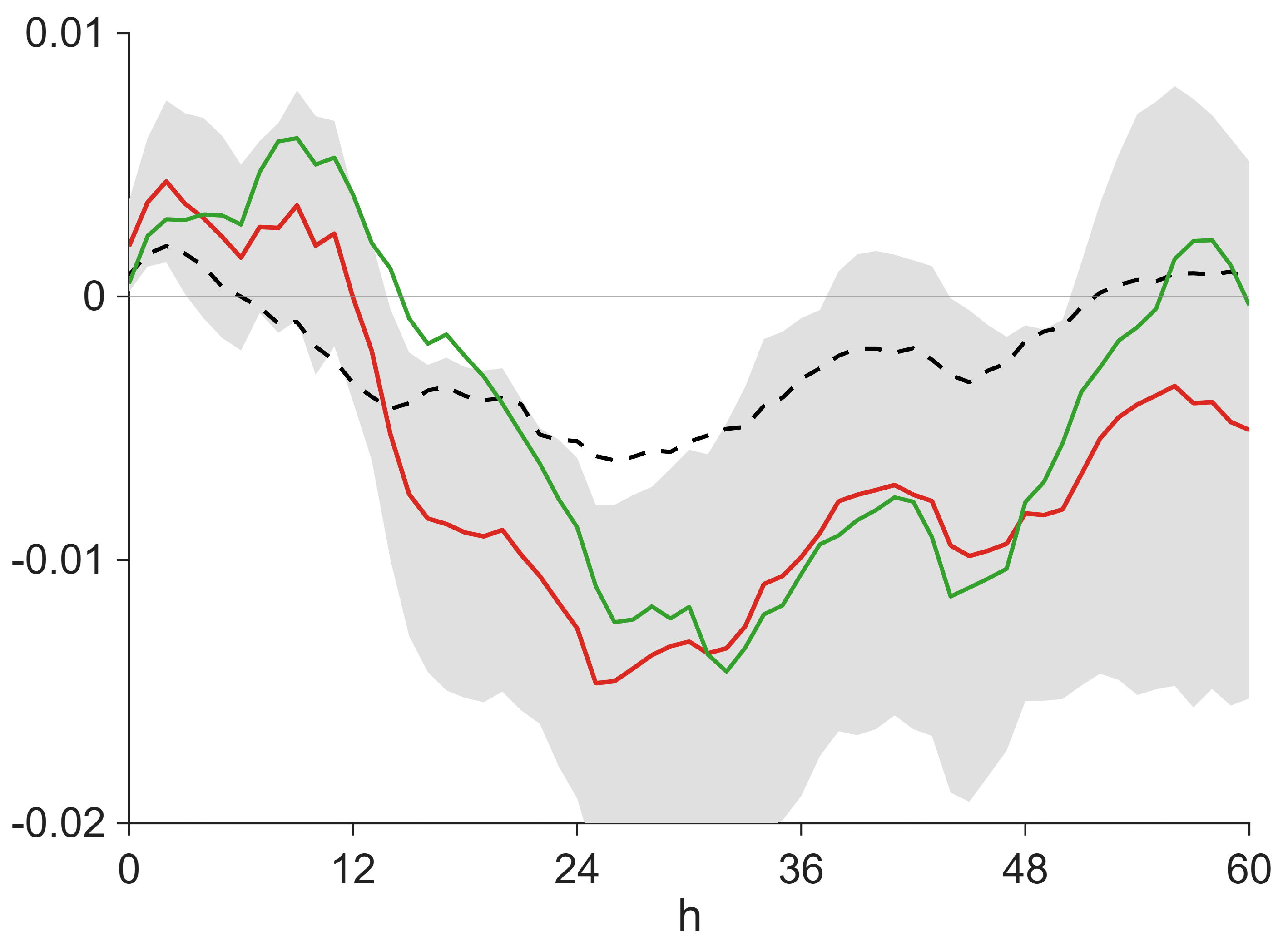} \\[-10pt]
     \rotatebox{90}{\parbox{5 cm}{\centering \textit{Urate}}}
        & \includegraphics[width=0.38\textwidth]{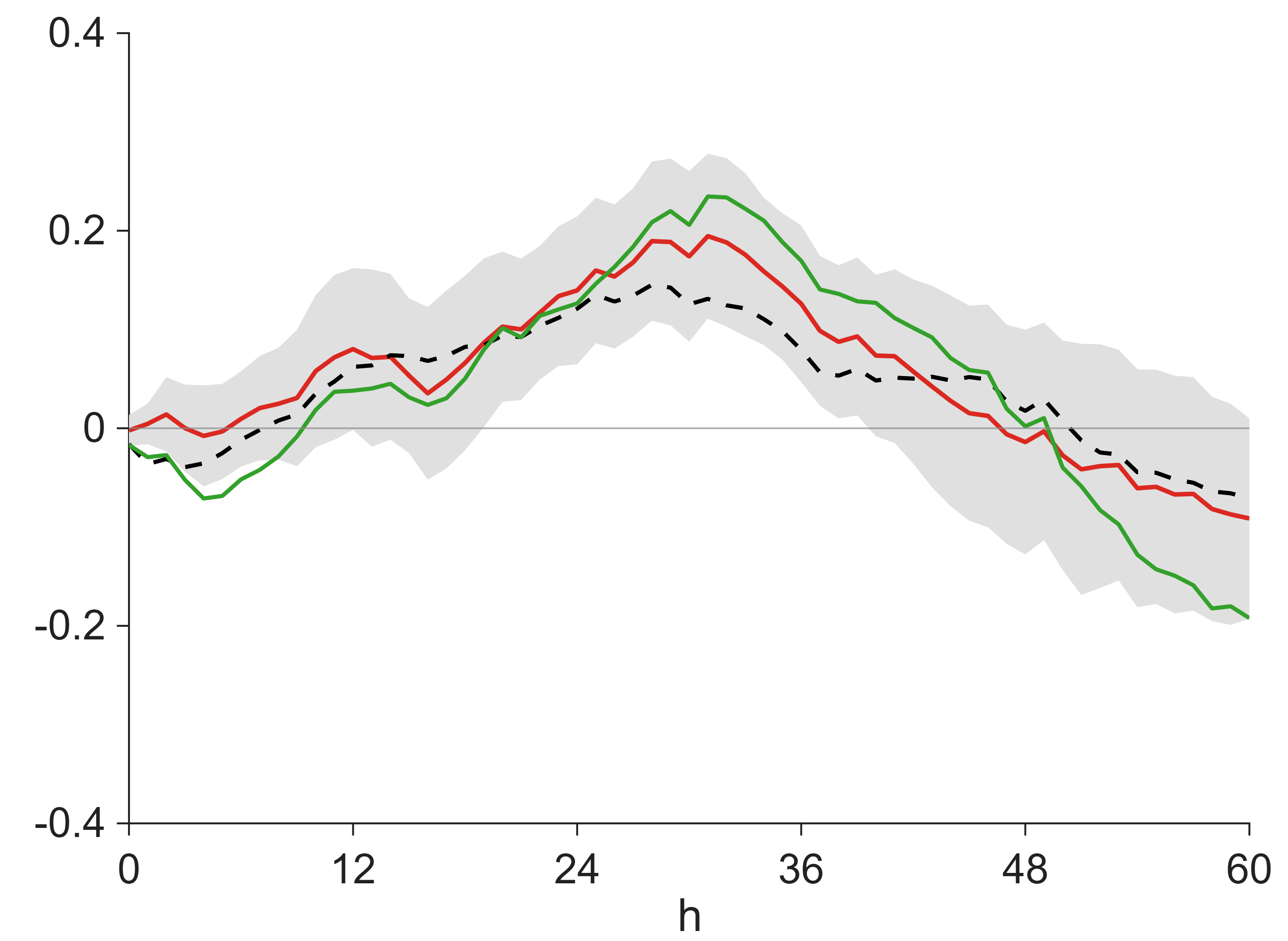}
        & \includegraphics[width=0.38\textwidth]{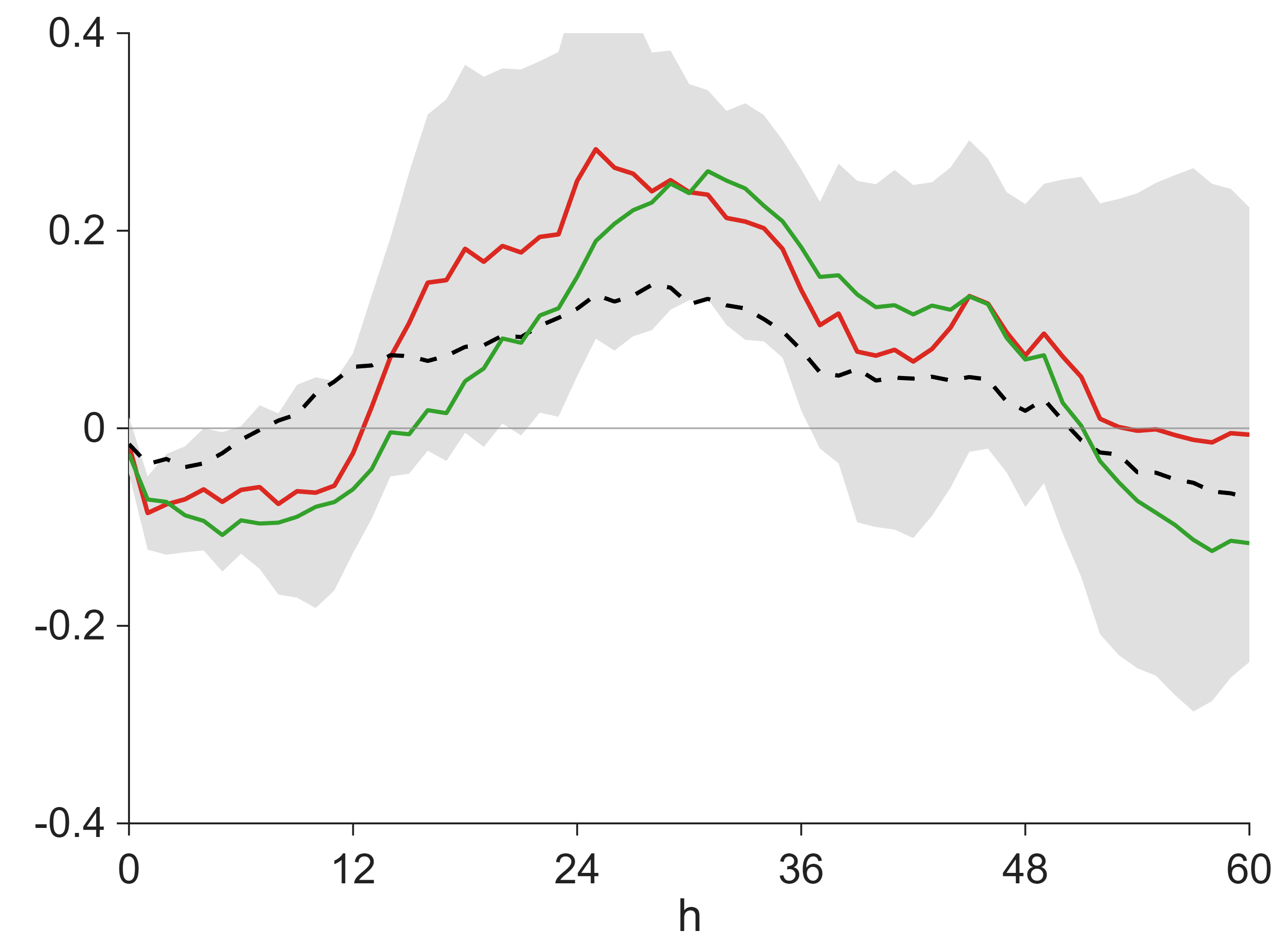} \\[-10pt]
     \rotatebox{90}{\parbox{5 cm}{\centering \textit{CPI}}}
        & \includegraphics[width=0.38\textwidth]{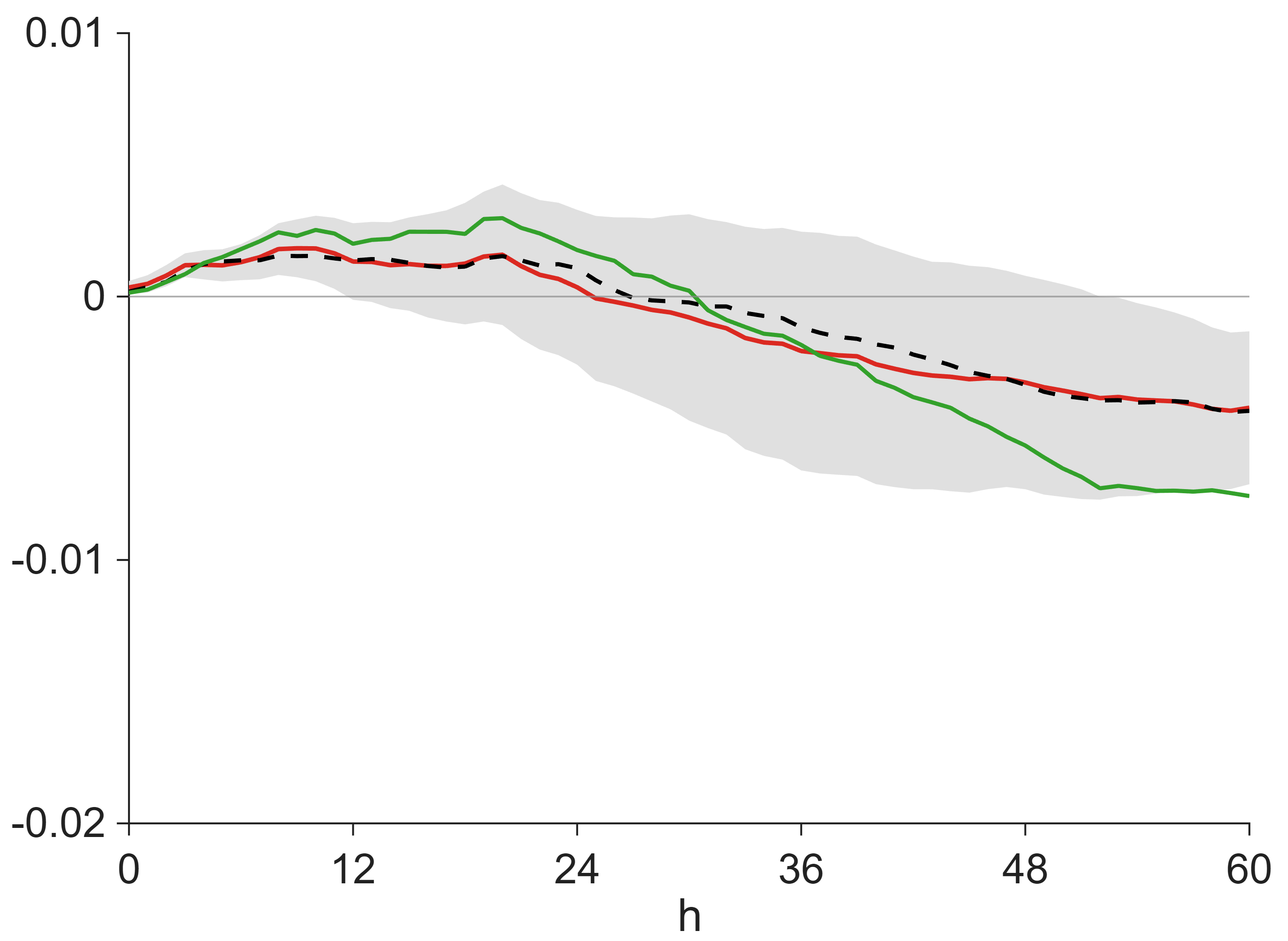}
        & \includegraphics[width=0.38\textwidth]{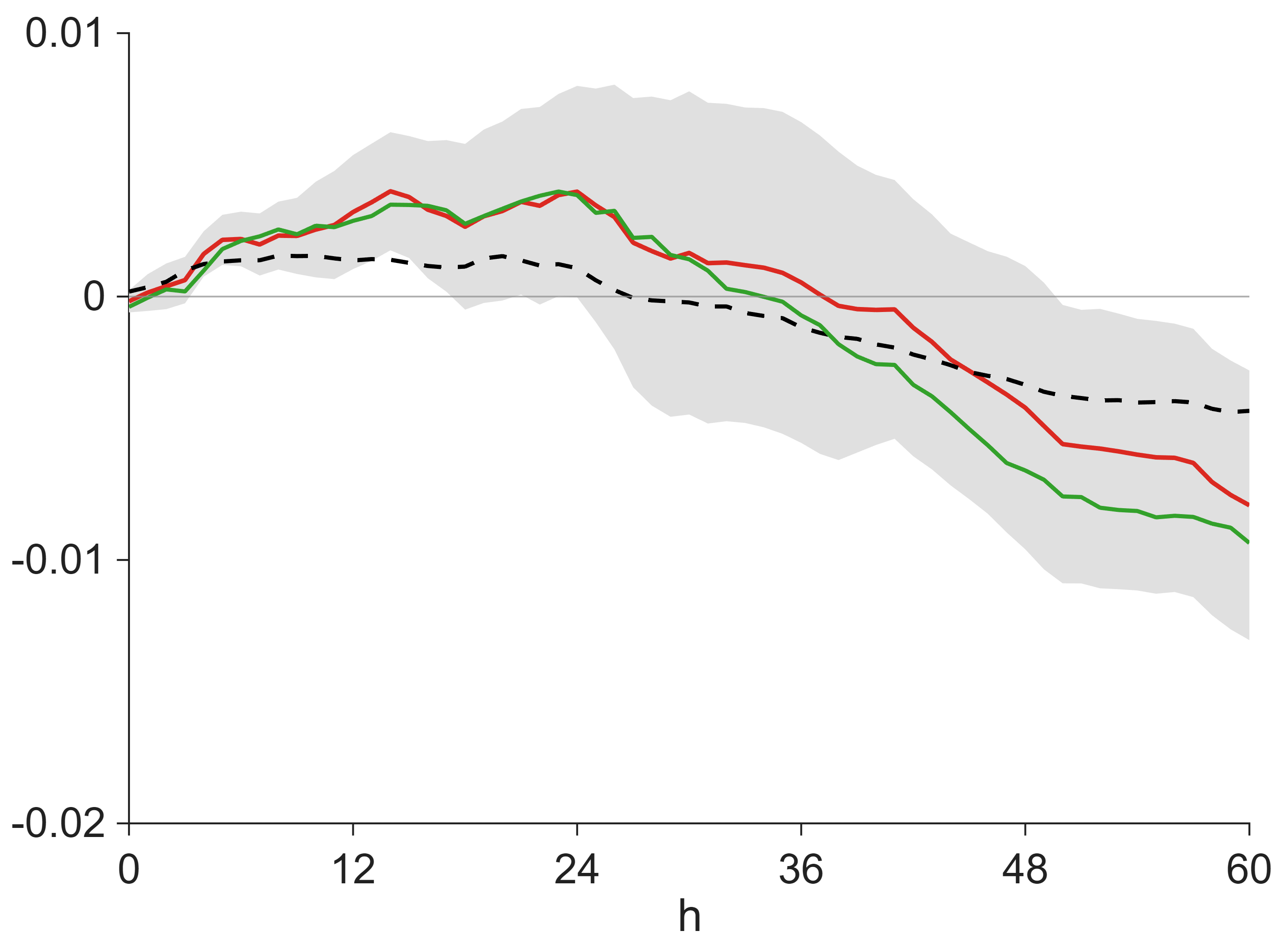} \\[-10pt]
    \rotatebox{90}{\parbox{5 cm}{\centering \textit{FFR}}}
        &  \includegraphics[width=0.38\textwidth]{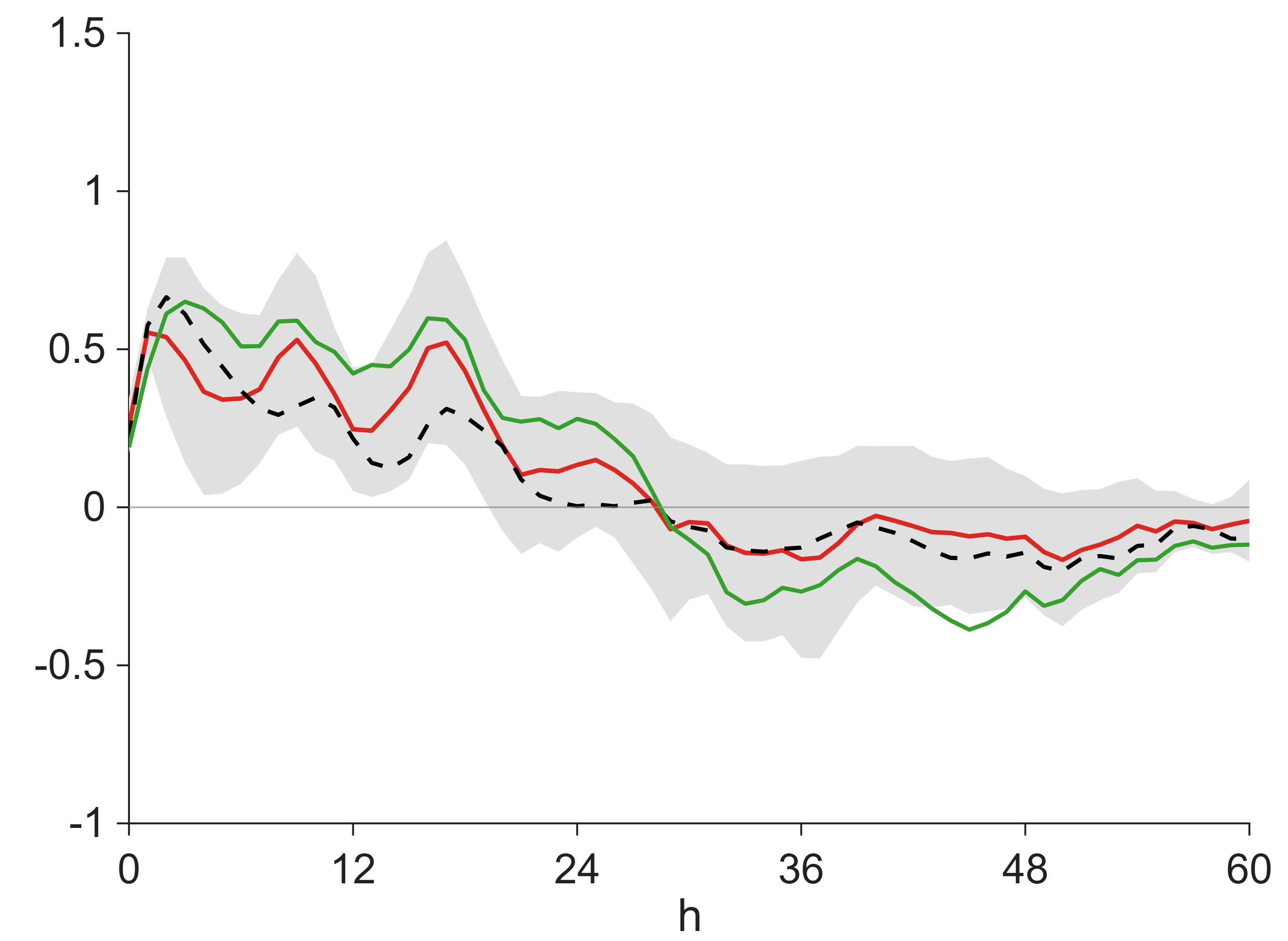} 
        &  \includegraphics[width=0.38\textwidth]{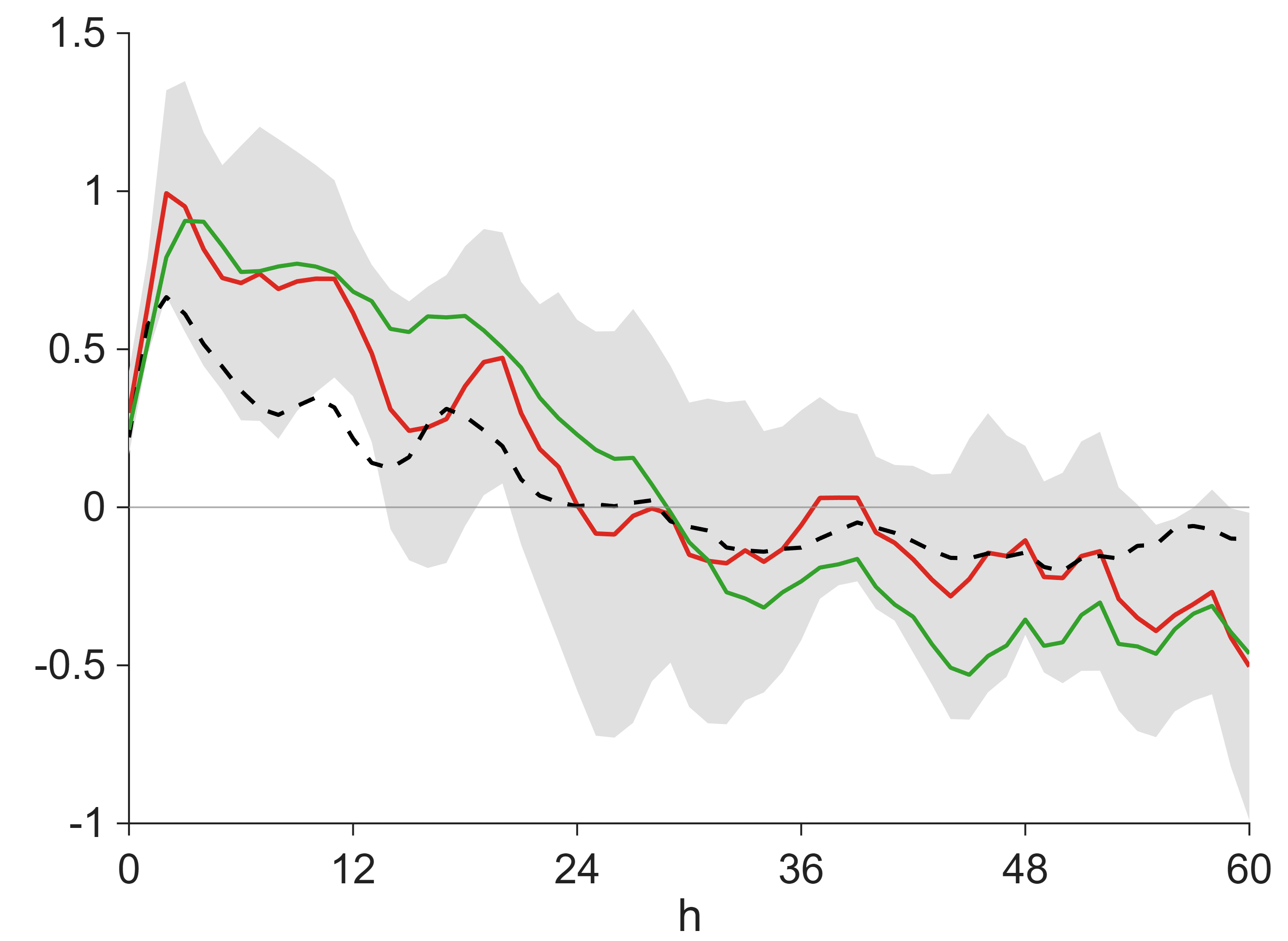} \\
  \end{tabular}
  \end{center}
    {\footnotesize 
    \emph{Notes}: 
    Average peak and trough responses of different outcome variables to a standard deviation contractionary monetary policy shock under different specifications. \textit{IP}, \textit{Urate}, \textit{CPI}, and \textit{FFR} represent industrial production, unemployment rate, CPI, and fed funds rate, respectively. The red solid line denotes \textit{Feas}, the black dashed line denotes \textit{Linear}, and the green solid line denotes \textit{NPLP}. The shaded area represents the 90\% confidence band for \textit{Feas}.
    }
  \setlength{\baselineskip}{4mm}
\end{figure}

To assess higher-order effects, we compare the scaled impulse response 
$$\frac{\text{IRF}^{\textit{Feas}}(z, k\,\sigma_{MP}; h)}{k}$$ for different values of $k$.
Substituting (\ref{eq:emp.IRF}), we can further calculate the scaled responses as
$\theta_{h1}\sigma_{MP} + \theta_{h2}^{\prime}\, z \sigma_{MP} + k\,\theta_{h3}\,\sigma_{MP}^2.$  If the IRF is linear in the shock size, i.e., $\theta_{h3}=0$, these scaled responses should coincide across $k$. Otherwise, the scaled impulse responses should differ, and the larger the higher-order effects, the greater the divergence across $k$. Figure~\ref{fig:irf.high.order} plots the scaled impulse response estimates $$\frac{\widehat{\text{IRF}}^{\textit{Feas}}(z_{ss}, k\,\sigma_{MP}; h)}{k}, \quad k\in\{-1,1,2\},$$ where the state variables are fixed at their steady-state values $z_{ss}=(0,0).$ Differences across $k$ are comparatively small for industrial production and CPI, but more pronounced for unemployment at medium horizons and for the federal funds rate at short-to-medium horizons. This pattern is consistent with the estimated quadratic coefficients (see the Online Appendix).

\begin{figure}[t!]
  \setlength{\abovecaptionskip}{0cm}
	\caption{Higher-Order Effects of Impulse Responses}
	\label{fig:irf.high.order}  
	\begin{center} 
		\begin{tabular}{cc}
			\textit{IP} & \textit{Urate} \\
			\includegraphics[width=0.45\textwidth]{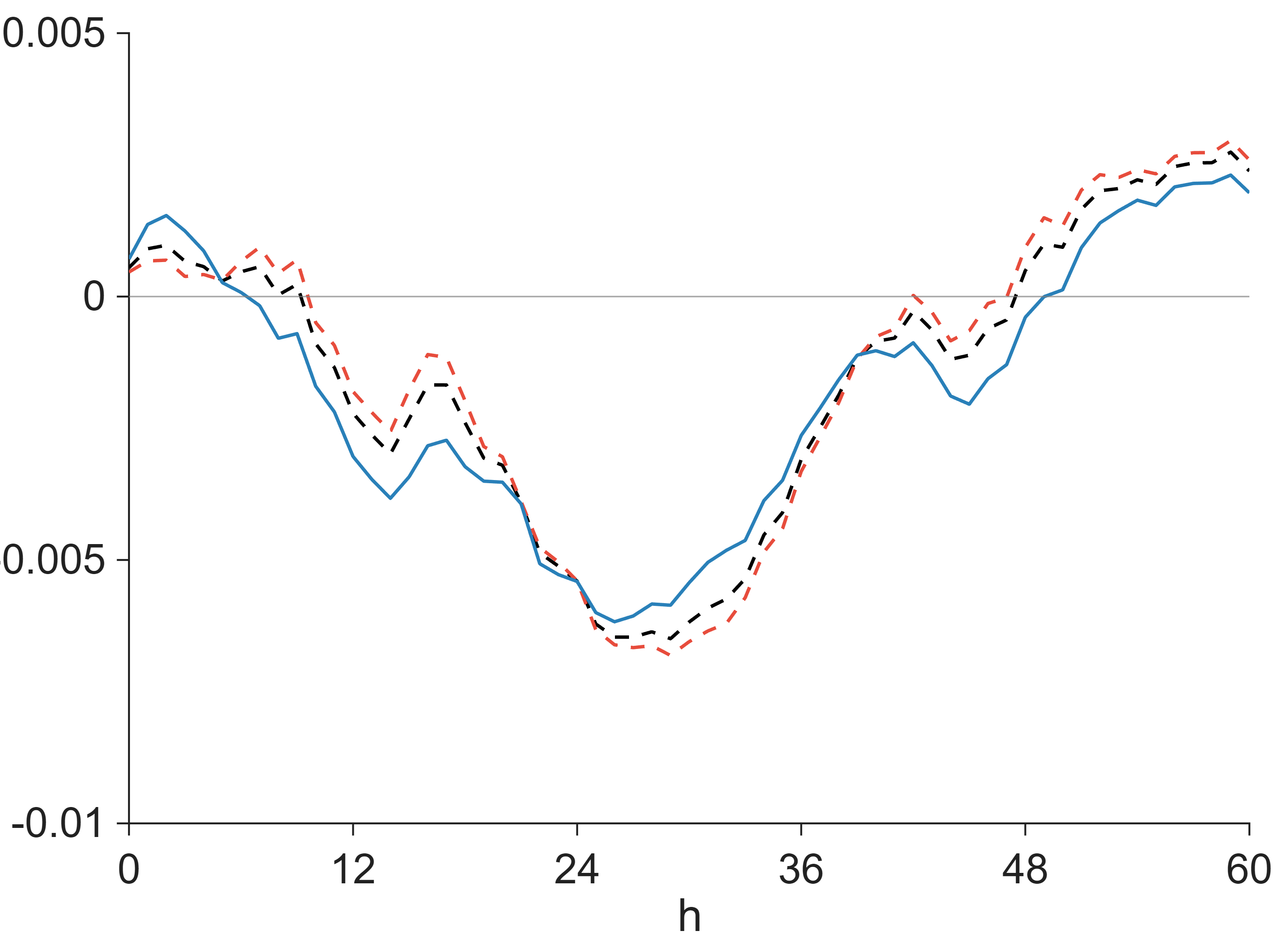} &
			\includegraphics[width=0.45\textwidth]{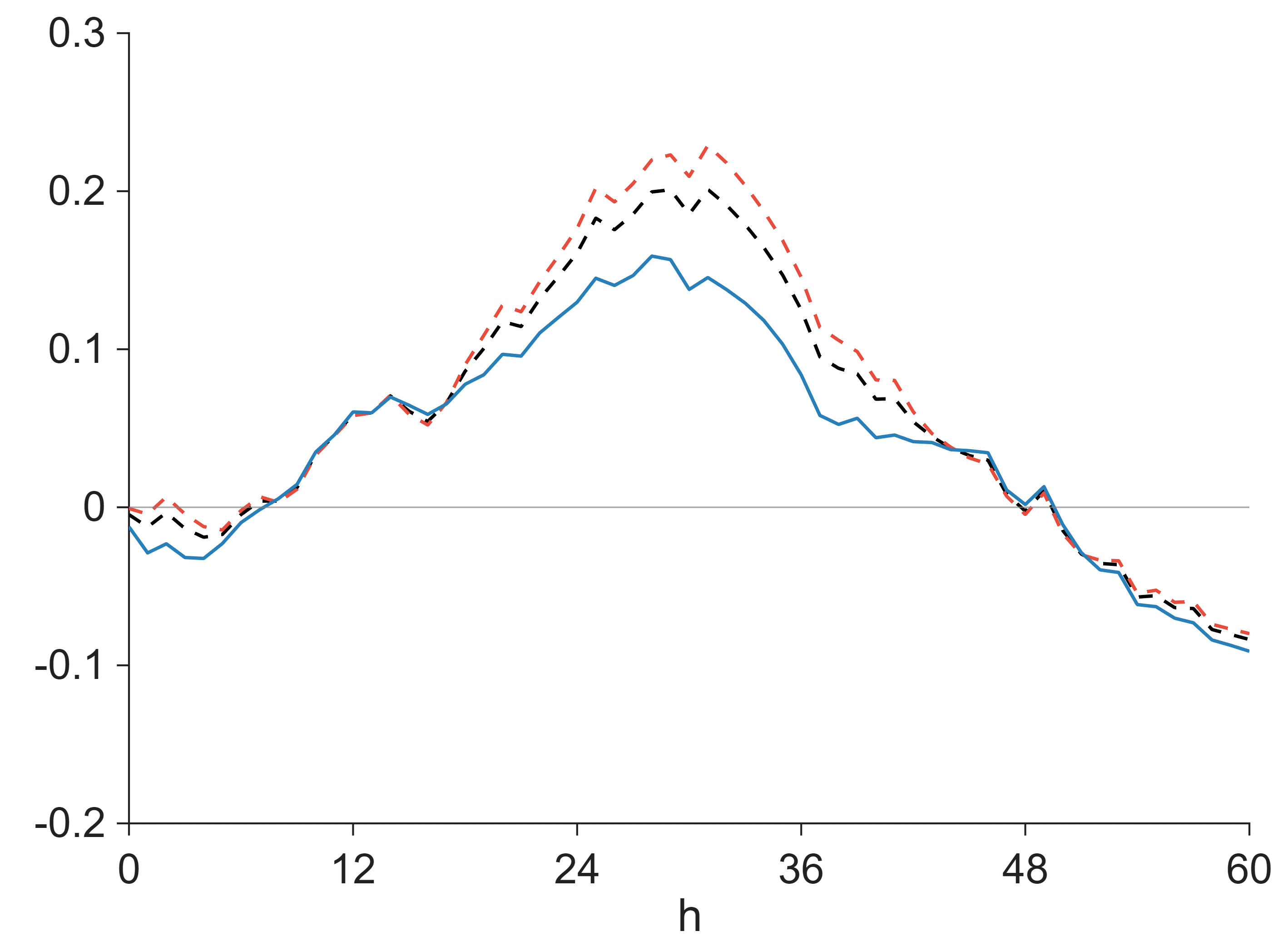} \\
      \textit{CPI} & \textit{FFR}\\ 
      \includegraphics[width=0.45\textwidth]{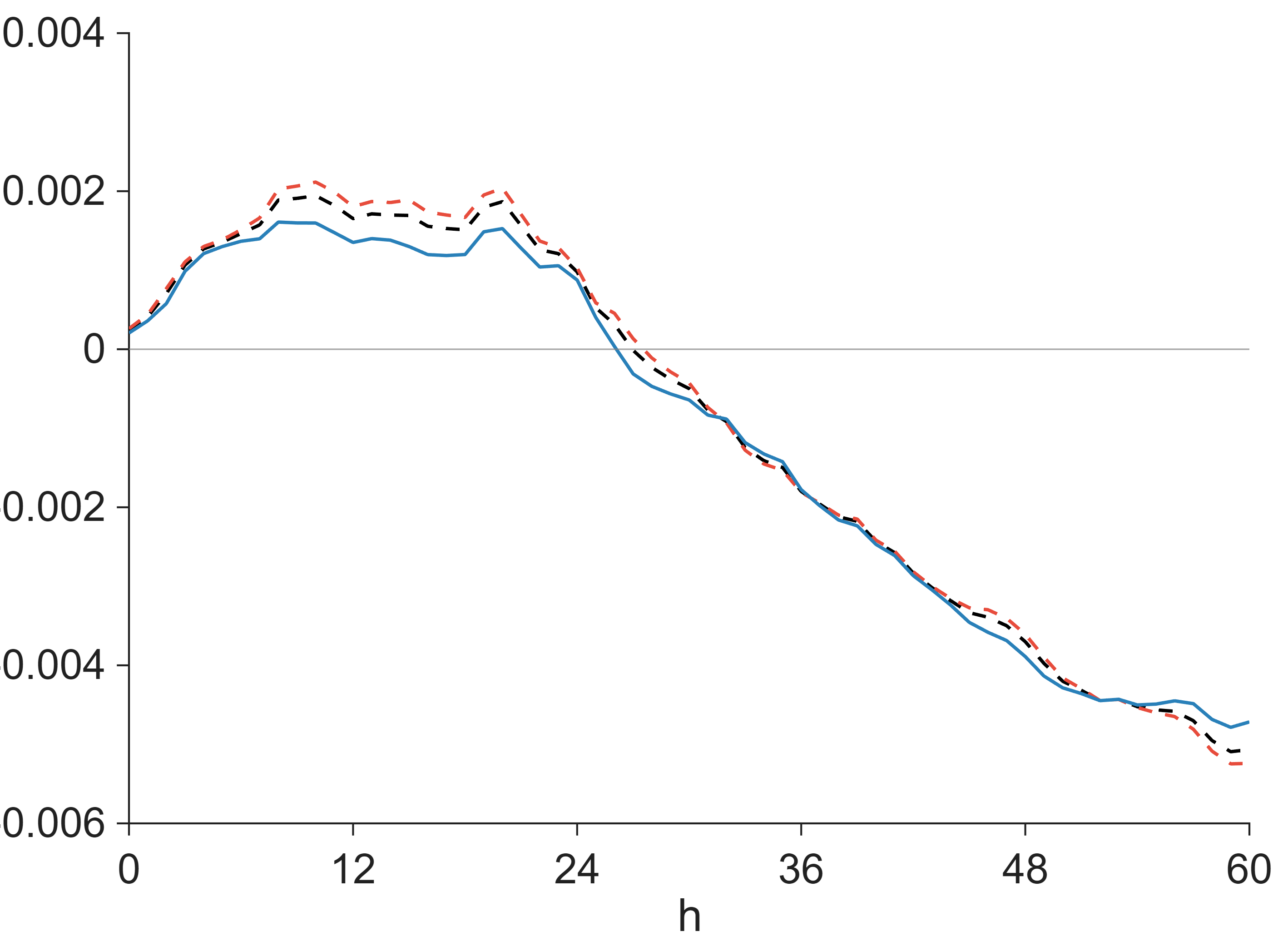} & 
      \includegraphics[width=0.45\textwidth]{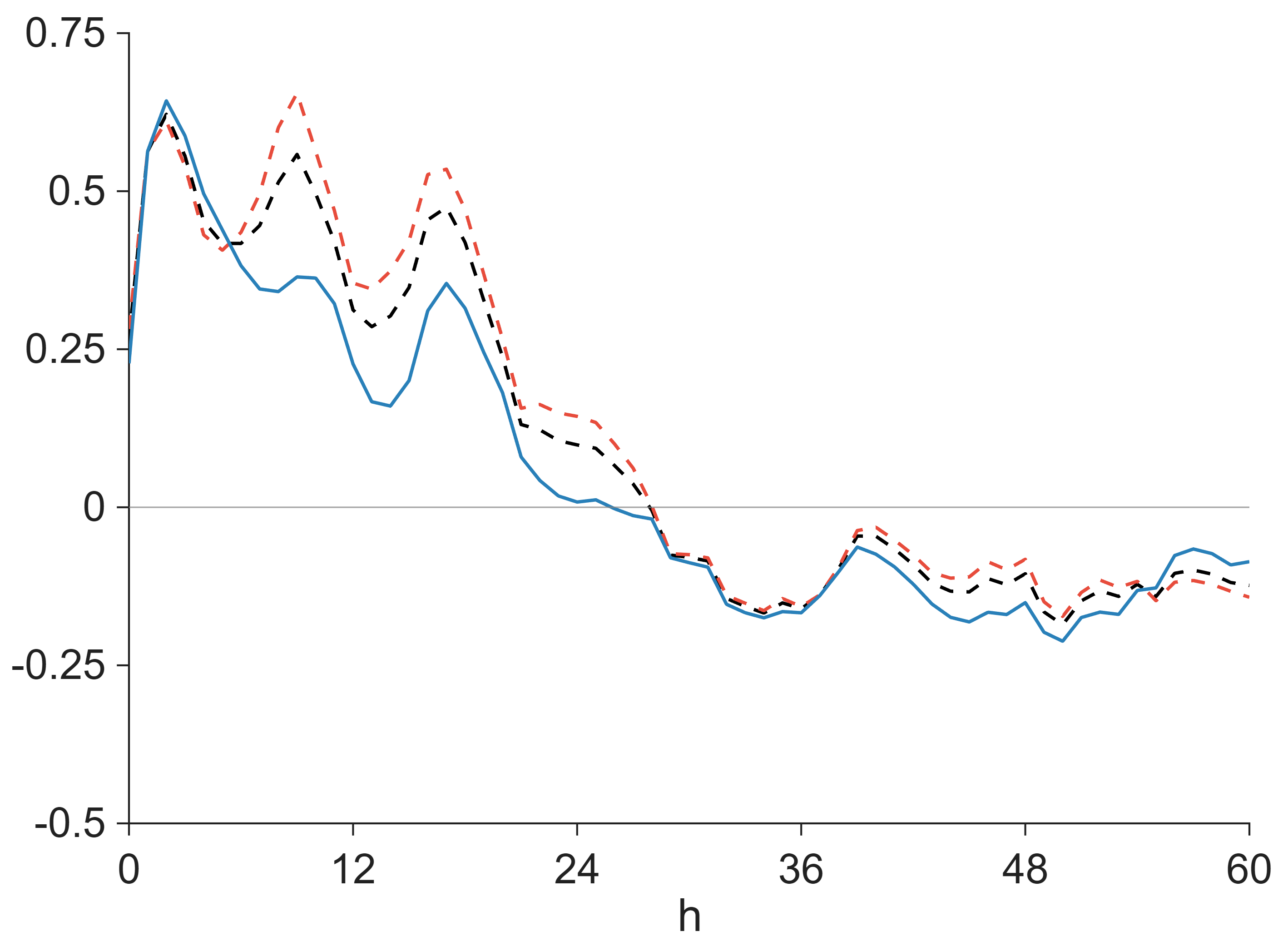}
		\end{tabular}
	\end{center}
	{\footnotesize {\em Notes}: Scaled impulse response estimates for different shock sizes. Black dashed line: $+1$ std shock $\text{IRF}^{\textit{Feas}}(z_{ss}, \sigma_{MP}; h)$; red dashed line: $+2$ std shock $\text{IRF}^{\textit{Feas}}(z_{ss}, 2\sigma_{MP}; h)/2$; blue solid line: $-1$ std shock $-\text{IRF}^{\textit{Feas}}(z_{ss}, -\sigma_{MP}; h)$.}
  \setlength{\baselineskip}{4mm}
\end{figure}

\clearpage
\section{Conclusion}
\label{sec:conclusion}
We use QVARs as a laboratory to assess LP specifications in nonlinear environments and show that linear LP can average out nonlinearities under symmetric shocks, while existing state-dependent variants mainly deliver gains in tail shocks or tail states. For practitioner, we recommend the following workflow. 
\begin{enumerate}
\item \textit{Use \textit{Linear} as the benchmark.} It provides a useful baseline, but be cautious that it may fail to capture nonlinearities, especially when shocks are roughly symmetric. A descriptive plot of the shock distribution, or a plot of the estimated causal weight function like Figure~\ref{fig:rr.causal.weight}, can evaluate whether this concern is relevant.
\item \textit{Use \textit{Feas} as the default nonlinear specification.} Choose interpretable state proxies, report IRFs at a few benchmark states and shock sizes, and use HAC/HAR rather than EHW inference. Diagnostic plots like Figure~\ref{fig:irf.state.depend} and Figure~\ref{fig:irf.high.order} can help assess the form of nonlinearity present in the data.
\item \textit{Use \textit{NPLP} as a complementary benchmark.} \textit{NPLP} is useful for checking whether the main qualitative patterns uncovered by \textit{Feas} persist under a more flexible specification. Agreement between the two increases confidence that the findings are not driven by the structure imposed by \textit{Feas} (as in Figure~\ref{fig:irf.state.depend}); substantial discrepancies can be informative about possible misspecification.

\end{enumerate}

\small  \renewcommand{\baselinestretch}{1.1} \normalsize

\bibliography{ref_qar}

@article{KolesarPlagborg-Moller2025,
  title={Dynamic causal effects in a nonlinear world: the good, the bad, and the ugly},
  author={Koles{\'a}r, Michal and Plagborg-M{\o}ller, Mikkel},
  journal={Journal of Business \& Economic Statistics},
  volume={43},
  number={4},
  pages={737--754},
  year={2025},
  publisher={Taylor \& Francis}
}

@article{HerbstJohannsen2025,
  title={Discussion of “Dynamic Causal Effects in a Nonlinear World: The Good, the Bad, and the Ugly”},
  author={Herbst, Edward P and Johannsen, Benjamin K},
  journal={Journal of Business \& Economic Statistics},
  volume={43},
  number={4},
  pages={761--765},
  year={2025},
  publisher={Taylor \& Francis}
}

@article{Jorda2005,
  title={Estimation and inference of impulse responses by local projections},
  author={Jord{\`a}, {\`O}scar},
  journal={American Economic Review},
  volume={95},
  number={1},
  pages={161--182},
  year={2005},
  publisher={American Economic Association}
}

@article{AruobaEtAl2017,
  title={Assessing DSGE model nonlinearities},
  author={Aruoba, S Bora{\u{g}}an and Bocola, Luigi and Schorfheide, Frank},
  journal={Journal of Economic Dynamics and Control},
  volume={83},
  pages={34--54},
  year={2017},
  publisher={Elsevier}
}

@article{AndreasenEtAl2018,
  title={The pruned state-space system for non-linear DSGE models: Theory and empirical applications},
  author={Andreasen, Martin M and Fern{\'a}ndez-Villaverde, Jes{\'u}s and Rubio-Ram{\'\i}rez, Juan F},
  journal={The Review of Economic Studies},
  volume={85},
  number={1},
  pages={1--49},
  year={2018},
  publisher={Oxford University Press}
}

@article{BenZeevEtAl2023,
  title={Do government spending multipliers depend on the sign of the shock?},
  author={Ben Zeev, Nadav and Ramey, Valerie A and Zubairy, Sarah},
  journal={AEA Papers and Proceedings},
  volume={113},
  pages={382--387},
  year={2023},
  organization={American Economic Association 2014 Broadway, Suite 305, Nashville, TN 37203}
}

@article{FurceriEtAl2018,
  title={The effects of monetary policy shocks on inequality},
  author={Furceri, Davide and Loungani, Prakash and Zdzienicka, Aleksandra},
  journal={Journal of International Money and Finance},
  volume={85},
  pages={168--186},
  year={2018},
  publisher={Elsevier}
}

@article{AlbrizioEtAl2020,
  title={International bank lending channel of monetary policy},
  author={Albrizio, Silvia and Choi, Sangyup and Furceri, Davide and Yoon, Chansik},
  journal={Journal of International Money and Finance},
  volume={102},
  pages={102124},
  year={2020},
  publisher={Elsevier}
}

@article{AuerbachGorodnichenko2013,
  title={Output spillovers from fiscal policy},
  author={Auerbach, Alan J and Gorodnichenko, Yuriy},
  journal={American Economic Review},
  volume={103},
  number={3},
  pages={141--146},
  year={2013},
  publisher={American Economic Association}
}

@article{AuerEtAl2021,
  title={Corporate leverage and monetary policy effectiveness in the euro area},
  author={Auer, Simone and Bernardini, Marco and Cecioni, Martina},
  journal={European Economic Review},
  volume={140},
  pages={103943},
  year={2021},
  publisher={Elsevier}
}

@article{JordaEtAl2013,
  title={When credit bites back},
  author={Jord{\`a}, {\`O}scar and Schularick, Moritz and Taylor, Alan M},
  journal={Journal of Money, Credit and Banking},
  volume={45},
  number={s2},
  pages={3--28},
  year={2013},
  publisher={Wiley Online Library}
}

@article{RameyZubairy2018,
  title={Government spending multipliers in good times and in bad: evidence from US historical data},
  author={Ramey, Valerie A and Zubairy, Sarah},
  journal={Journal of Political Economy},
  volume={126},
  number={2},
  pages={850--901},
  year={2018},
  publisher={University of Chicago Press Chicago, IL}
}

@article{JordaEtAl2020,
  title={The effects of quasi-random monetary experiments},
  author={Jord{\`a}, {\`O}scar and Schularick, Moritz and Taylor, Alan M},
  journal={Journal of Monetary Economics},
  volume={112},
  pages={22--40},
  year={2020},
  publisher={Elsevier}
}

@article{GonccalvesEtAl2024,
  title={State-dependent local projections},
  author={Gon{\c{c}}alves, S{\'\i}lvia and Herrera, Ana Mar{\'\i}a and Kilian, Lutz and Pesavento, Elena},
  journal={Journal of Econometrics},
  volume={244},
  number={2},
  pages={105702},
  year={2024},
  publisher={Elsevier}
}

@article{KoopEtAl1996,
  title={Impulse response analysis in nonlinear multivariate models},
  author={Koop, Gary and Pesaran, M Hashem and Potter, Simon M},
  journal={Journal of Econometrics},
  volume={74},
  number={1},
  pages={119--147},
  year={1996},
  publisher={Elsevier}
}

@article{BlanchardPerotti2002,
  title={An empirical characterization of the dynamic effects of changes in government spending and taxes on output},
  author={Blanchard, Olivier and Perotti, Roberto},
  journal={The Quarterly Journal of Economics},
  volume={117},
  number={4},
  pages={1329--1368},
  year={2002},
  publisher={MIT Press}
}

@article{Fernald2014,
  title={A quarterly, utilization-adjusted series on total factor productivity},
  author={Fernald, John},
  journal={Federal Reserve Bank of San Francisco Working Paper},
  year={2014},
  number={2012-19}
}

@article{Ramey2011,
  title={Identifying government spending shocks: It's all in the timing},
  author={Ramey, Valerie A},
  journal={The Quarterly Journal of Economics},
  volume={126},
  number={1},
  pages={1--50},
  year={2011},
  publisher={MIT Press}
}

@article{AuerbachGorodnichenko2012,
  title={Measuring the output responses to fiscal policy},
  author={Auerbach, Alan J and Gorodnichenko, Yuriy},
  journal={American Economic Journal: Economic Policy},
  volume={4},
  number={2},
  pages={1--27},
  year={2012},
  publisher={American Economic Association}
}

@article{AlesinaEtAl2018,
  title={Is it the “How” or the “When” that Matters in Fiscal Adjustments?},
  author={Alesina, Alberto and Azzalini, Gualtiero and Favero, Carlo and Giavazzi, Francesco and Miano, Armando},
  journal={IMF Economic Review},
  volume={66},
  number={1},
  pages={144--188},
  year={2018},
  publisher={Palgrave Macmillan}
}

@article{AuerbachGorodnichenko2016,
  title={Effects of fiscal shocks in a globalized world},
  author={Auerbach, Alan J and Gorodnichenko, Yuriy},
  journal={IMF Economic Review},
  volume={64},
  number={1},
  pages={177--215},
  year={2016}
}

@article{BernardiniEtAl2020,
  title={Heterogeneous government spending multipliers in the era surrounding the great recession},
  author={Bernardini, Marco and De Schryder, Selien and Peersman, Gert},
  journal={Review of Economics and Statistics},
  volume={102},
  number={2},
  pages={304--322},
  year={2020},
  publisher={MIT Press One Rogers Street, Cambridge, MA 02142-1209, USA journals-info~…}
}

@article{BornEtAl2020,
  title={Does austerity pay off?},
  author={Born, Benjamin and M{\"u}ller, Gernot J and Pfeifer, Johannes},
  journal={Review of Economics and Statistics},
  volume={102},
  number={2},
  pages={323--338},
  year={2020},
  publisher={MIT Press One Rogers Street, Cambridge, MA 02142-1209, USA journals-info~…}
}

@article{Ramey2016,
  title={Macroeconomic shocks and their propagation},
  author={Ramey, Valerie A},
  journal={Handbook of Macroeconomics},
  volume={2},
  pages={71--162},
  year={2016},
  publisher={Elsevier}
}

@article{RomerRomer2004,
  title={A new measure of monetary shocks: Derivation and implications},
  author={Romer, Christina D and Romer, David H},
  journal={American Economic Review},
  volume={94},
  number={4},
  pages={1055--1084},
  year={2004},
  publisher={American Economic Association}
}

@article{WielandYang2020,
  title={Financial dampening},
  author={Wieland, Johannes F and Yang, Mu-Jeung},
  journal={Journal of Money, Credit and Banking},
  volume={52},
  number={1},
  pages={79--113},
  year={2020},
  publisher={Wiley Online Library}
}

@article{KimEtAl2008,
  title={Calculating and using second-order accurate solutions of discrete time dynamic equilibrium models},
  author={Kim, Jinill and Kim, Sunghyun and Schaumburg, Ernst and Sims, Christopher A},
  journal={Journal of Economic Dynamics and Control},
  volume={32},
  number={11},
  pages={3397--3414},
  year={2008},
  publisher={Elsevier}
}

@article{GoncalvesEtAl2024,
  title={Nonparametric Local Projections},
  author={Gon{\c{c}}alves, S{\'\i}lvia and Herrera, Ana Mar{\'\i}a and Kilian, Lutz and Pesavento, Elena},
  year={2024},
  journal={FRB of Dallas Working Paper}
}

@article{BrunnermeierSannikov2014,
  title={A macroeconomic model with a financial sector},
  author={Brunnermeier, Markus K and Sannikov, Yuliy},
  journal={American Economic Review},
  volume={104},
  number={2},
  pages={379--421},
  year={2014},
  publisher={American Economic Association 2014 Broadway, Suite 305, Nashville, TN 37203}
}

@article{EggertssonWoodford2003,
  title={Zero bound on interest rates and optimal monetary policy},
  author={Eggertsson, Gauti B and Woodford, Michael},
  journal={Brookings Papers on Economic Activity},
  volume={2003},
  number={1},
  pages={139--233},
  year={2003},
  publisher={Johns Hopkins University Press}
}

@article{EhrmannEtAl2003,
  title={Regime-dependent impulse response functions in a Markov-switching vector autoregression model},
  author={Ehrmann, Michael and Ellison, Martin and Valla, Natacha},
  journal={Economics Letters},
  volume={78},
  number={3},
  pages={295--299},
  year={2003},
  publisher={Elsevier}
}

@article{MontielOleaEtAl2021,
  title={Local projection inference is simpler and more robust than you think},
  author={Montiel Olea, Jos{\'e} Luis and Plagborg-M{\o}ller, Mikkel},
  journal={Econometrica},
  volume={89},
  number={4},
  pages={1789--1823},
  year={2021},
  publisher={Wiley Online Library}
}

@article{FernandezEtAl2015,
  title={Estimating dynamic equilibrium models with stochastic volatility},
  author={Fern{\'a}ndez-Villaverde, Jes{\'u}s and Guerr{\'o}n-Quintana, Pablo and Rubio-Ram{\'\i}rez, Juan F},
  journal={Journal of Econometrics},
  volume={185},
  number={1},
  pages={216--229},
  year={2015},
  publisher={Elsevier}
}

@article{Andreasen2012,
  title={An estimated DSGE model: Explaining variation in nominal term premia, real term premia, and inflation risk premia},
  author={Andreasen, Martin M},
  journal={European Economic Review},
  volume={56},
  number={8},
  pages={1656--1674},
  year={2012},
  publisher={Elsevier}
}

@article{KimRuge-Murcia2009,
  title={How much inflation is necessary to grease the wheels?},
  author={Kim, Jinill and Ruge-Murcia, Francisco J},
  journal={Journal of Monetary Economics},
  volume={56},
  number={3},
  pages={365--377},
  year={2009},
  publisher={Elsevier}
}

@incollection{FernandezEtAl2016,
  title={Solution and estimation methods for DSGE models},
  author={Fern{\'a}ndez-Villaverde, Jes{\'u}s and Rubio-Ram{\'\i}rez, Juan Francisco and Schorfheide, Frank},
  booktitle={Handbook of macroeconomics},
  volume={2},
  pages={527--724},
  year={2016},
  publisher={Elsevier}
}

@article{GoncalvesEtAl2025,
  title={Discussion of:“Dynamic Causal Effects in a Nonlinear World: the Good, the Bad, and the Ugly”},
  author={Gon{\c{c}}alves, S{\'\i}lvia and Herrera, Ana Mar{\'\i}a and Pesavento, Elena},
  journal={Journal of Business \& Economic Statistics},
  volume={43},
  number={4},
  pages={755--760},
  year={2025},
  publisher={Taylor \& Francis}
}

@article{Hamilton2018,
  title={Why you should never use the Hodrick-Prescott filter},
  author={Hamilton, James D},
  journal={Review of Economics and Statistics},
  volume={100},
  number={5},
  pages={831--843},
  year={2018},
  publisher={MIT Press One Rogers Street, Cambridge, MA 02142-1209, USA journals-info~…}
}

@article{Wu2005,
  title={Nonlinear system theory: Another look at dependence},
  author={Wu, Wei Biao},
  journal={Proceedings of the National Academy of Sciences},
  volume={102},
  number={40},
  pages={14150--14154},
  year={2005},
  doi={10.1073/pnas.0506715102},
  publisher={National Academy of Sciences}
}

@article{Wu2011,
  title={Asymptotic theory for stationary processes},
  author={Wu, Wei Biao},
  journal={Statistics and Its Interface},
  volume={4},
  number={2},
  pages={207--226},
  year={2011},
  doi={10.4310/SII.2011.v4.n2.a15},
  publisher={International Press}
}

\small  \renewcommand{\baselinestretch}{1.3} \normalsize
		

\clearpage

\renewcommand{\thepage}{A.\arabic{page}}
\setcounter{page}{1}

\begin{appendix}
	\markright{Online Appendix -- This Version: \today }
	
	\renewcommand{\theequation}{A.\arabic{equation}}
	\setcounter{equation}{0}	
	\renewcommand*\thetable{A-\arabic{table}}
	\setcounter{table}{0}
	\renewcommand*\thefigure{A-\arabic{figure}}
	\setcounter{figure}{0}
	\renewcommand*\thetheorem{A-\arabic{theorem}}
	\setcounter{theorem}{0}

	\begin{center}
		
		{\large {\bf Online Appendix: How Well Are State-Dependent Local Projections Capturing Nonlinearities?}}
		
		{\bf Zhiheng You}
	\end{center}
	
	\thispagestyle{empty} 
	
	\noindent This Appendix consists of the following sections: 
	
	\begin{itemize}
            \item[A.] Derivation of Structural Functions 
            \item[B.] Alternative IRF Definitions: Details
		    \item[C.] Proofs
            \item[D.] Details and Additional Results for Section \ref{sec:empiric.exp}
	\end{itemize}
	
	\newpage
	
\section{Derivation of Structural Functions}
\subsection{QAR(1, 1)}
In this section, we represent $y_{t+h}$ as a structural function $\psi_h$ of shock $u_t$ and other variables $U_{h,t+h}=(y_{t-1}, s_{t-1}, u_{t+1},\ldots, u_{t+h})$ independent of $u_t$ for the QAR(1,1) model.

For $h\ge1$, 
\begin{equation*}
\begin{aligned}
    y_{t+h} &= \phi_1 y_{t+h-1} + \phi_2 s_{t+h-1}^2 + (1+\gamma s_{t+h-1})\sigma u_{t+h},\\
    &= \phi_1 \left[\phi_1 y_{t+h-2} + \phi_2 s_{t+h-2}^2 + (1+\gamma s_{t+h-2})\sigma u_{t+h-1}\right] + \phi_2 s_{t+h-1}^2 + (1+\gamma s_{t+h-1})\sigma u_{t+h},\\
    &=\ldots\\
    &=\phi_1^{h+1} y_{t-1} + \phi_1^{h}\phi_2 s_{t-1}^2 + \sigma \phi_1^{h} u_t + \gamma\sigma \phi_1^h s_{t-1}u_t  + \phi_2 \sum_{k=0}^{h-1}\phi_1^{h-k-1} s_{t+k}^2 \\
    & + \sigma \sum_{k=0}^{h-1} \phi_1^{h-k-1} u_{t+1+k} + \gamma\sigma \sum_{k=0}^{h-1} \phi_1^{h-1-k} s_{t+k} u_{t+1+k},\\
    &=\phi_1^{h+1} y_{t-1} + \phi_1^{h}\phi_2 s_{t-1}^2 + \sigma \phi_1^{h} u_t + \gamma\sigma \phi_1^h s_{t-1}u_t \\
    & + \phi_2 \sigma^2 \sum_{k=0}^{h-1}\phi_1^{h-k-1} \left(\phi_1^{2k} u_{t}^2 +  \left(\sum_{\substack{j=0 \\ j\neq k}}^{\infty} \phi_1^j u_{t+k-j}\right)^2 + 2 \phi_1^k u_{t} \sum_{\substack{j=0 \\ j\neq k}}^{\infty} \phi_1^j u_{t+k-j}\right) \\
    & + \sigma \sum_{k=0}^{h-1} \phi_1^{h-k-1} u_{t+1+k} + \gamma\sigma^2 \sum_{k=0}^{h-1} \phi_1^{h-1-k} \left(\phi_1^k u_t + \sum_{\substack{j=0 \\ j\neq k}}^{\infty} \phi_1^j u_{t+k-j}\right) u_{t+1+k}.
\end{aligned}
\end{equation*}

Note that we can replace $$\sum_{\substack{j=0 \\ j\neq k}}^{\infty} \phi_1^j u_{t+k-j} = \phi_1^{k+1} \frac{s_{t-1}}{\sigma}+\sum_{j=0}^{k-1} \phi_1^j u_{t+k-j}$$ for $k\ge 1$ and $$\sum_{\substack{j=0 \\ j\neq k}}^{\infty} \phi_1^j u_{t+k-j} = \phi_1^{k+1} \frac{s_{t-1}}{\sigma}$$ for $k=0$ to get rid of past shocks $u_{t-1}, u_{t-2},\ldots$ (The two cases can be combined if we use the empty-sum convention when $k=0.$)

For $h=0$, structural function $\psi_h$ is simply $y_t=\phi_1 y_{t-1}+\phi_2 s_{t-1}^2+ (1+\gamma s_{t-1})\sigma u_t$.

\subsection{QVAR(1,1)}
In this section, we represent $y_{t+h}$ as a structural function $\tilde{\psi}_h$ of shock $u_t$ and other variables $U_{h,t+h}=\bigl(y_{t-1}, s_{t-1}, u_{t+1},\dots,u_{t+h}\bigr)$ independent of $u_t$ for the QVAR(1,1) model.

We start from the QVAR(1,1):
\begin{align}
y_t &= \Phi_1 y_{t-1} + \Phi_2 \operatorname{vech}(s_{t-1}s_{t-1}') + (1_n + \mathsf{G} s_{t-1}) \odot (\Sigma_{tr} u_t), \label{eq:y_def} \\
s_t &= \Phi_1 s_{t-1} + \Sigma_{tr} u_t, \label{eq:s_def}
\end{align}

For each $k=1,\dots,h$,
\[ s_{t-1+k} = \Phi_1^ks_{t-1} + \sum_{j=0}^{k-1} \Phi_1^{k-1-j}\Sigma_{tr} u_{t+j}. \]

Iterating \eqref{eq:y_def} forward,
\[ y_{t+h} = \Phi_1^{h+1}y_{t-1} + \sum_{k=0}^h \Phi_1^{h-k} \Bigl[ \Phi_2 \vech\left(s_{t-1+k}s_{t-1+k}'\right) + \left(1_n+\mathsf{G} s_{t-1+k}\right)\odot (\Sigma_{tr} u_{t+k}) \Bigr]. \]

For $k\ge1$, write
\[ \begin{aligned}
\vech\left(s_{t-1+k}s_{t-1+k}'\right) &= \vech\left( \left(\Phi_1^ks_{t-1}\right) \left(\Phi_1^ks_{t-1}\right)'\right)\\[6pt]
&\quad + \vech\left( \left(\Phi_1^ks_{t-1}\right)\left(\Phi_1^{k-1}\Sigma_{tr}u_t\right)' + \left(\Phi_1^{k-1}\Sigma_{tr}u_t\right)\left(\Phi_1^ks_{t-1}\right)' \right)\\[6pt]
&\quad + \sum_{j=1}^{k-1} \vech\left( \left(\Phi_1^ks_{t-1}\right)\left(\Phi_1^{k-1-j}\Sigma_{tr}u_{t+j}\right)' + \left(\Phi_1^{k-1-j}\Sigma_{tr}u_{t+j}\right)\left(\Phi_1^ks_{t-1}\right)' \right)\\[6pt]
&\quad + \vech\left( \left(\Phi_1^{k-1}\Sigma_{tr}u_t\right) \left(\Phi_1^{k-1}\Sigma_{tr}u_t\right)' \right)\\[6pt]
&\quad + \sum_{j=1}^{k-1} \vech\left( \left(\Phi_1^{k-1}\Sigma_{tr}u_t\right) \left(\Phi_1^{k-1-j}\Sigma_{tr}u_{t+j}\right)' + \left(\Phi_1^{k-1-j}\Sigma_{tr}u_{t+j}\right) \left(\Phi_1^{k-1}\Sigma_{tr}u_t\right)' \right)\\[6pt]
&\quad + \vech\left( \left(\sum_{j=1}^{k-1}\Phi_1^{k-1-j}\Sigma_{tr}u_{t+j}\right) \left(\sum_{j=1}^{k-1}\Phi_1^{k-1-j}\Sigma_{tr}u_{t+j}\right)' \right),
\end{aligned} \]
and
\[ \begin{aligned}
\left(1_n+\mathsf{G} s_{t-1+k}\right)\odot (\Sigma_{tr} u_{t+k}) &=  1_n\odot (\Sigma_{tr} u_{t+k}) + \left(\mathsf{G} \Phi_1^{k}s_{t-1}\right)\odot (\Sigma_{tr} u_{t+k})\\
&\quad + \left(\mathsf{G} \Phi_1^{k-1}\Sigma_{tr} u_t\right)\odot (\Sigma_{tr} u_{t+k}) + \sum_{j=1}^{k-1} \left(\mathsf{G} \Phi_1^{k-1-j}\Sigma_{tr}u_{t+j}\right)\odot \left(\Sigma_{tr} u_{t+k}\right).
\end{aligned} \]

Plugging these into the iterate for $y_{t+h}$ and collecting powers of $u_t$, we obtain for $h\ge 1$,
\[ y_{t+h} = R_{t,h} + L_{t,h}\left(u_t\right) + C_{t,h}\left(u_t,u_{t+1:t+h}\right) + Q_{t,h}\left(u_t\right), \]
where
\[ \begin{aligned}
L_{t,h}\left(u_t\right) &= \Phi_1^h \left(1_n+\mathsf{G} s_{t-1}\right)\odot \left(\Sigma_{tr} u_{t}\right)\\
&\quad + \sum_{k=1}^h \Phi_1^{h-k} \Phi_2 \vech\left( \left(\Phi_1^ks_{t-1}\right) \left(\Phi_1^{k-1}\Sigma_{tr}u_t\right)' + \left(\Phi_1^{k-1}\Sigma_{tr}u_t\right) \left(\Phi_1^ks_{t-1}\right)' \right),\\[6pt]
C_{t,h}\left(u_t,u_{t+1:t+h}\right) &= \sum_{k=1}^h \Phi_1^{h-k} \Phi_2 \sum_{j=1}^{k-1} \vech\left( \left(\Phi_1^{k-1}\Sigma_{tr}u_t\right) \left(\Phi_1^{k-1-j}\Sigma_{tr}u_{t+j}\right)' + \left(\Phi_1^{k-1-j}\Sigma_{tr}u_{t+j}\right) \left(\Phi_1^{k-1}\Sigma_{tr}u_t\right)' \right)\\
&\quad + \sum_{k=1}^h \Phi_1^{h-k} \left(\mathsf{G} \Phi_1^{k-1}\Sigma_{tr} u_t\right)\odot \left(\Sigma_{tr} u_{t+k}\right),\\[6pt]
Q_{t,h}\left(u_t\right) &= \sum_{k=1}^h \Phi_1^{h-k} \Phi_2 \vech\left( \left(\Phi_1^{k-1}\Sigma_{tr}u_t\right) \left(\Phi_1^{k-1}\Sigma_{tr}u_t\right)' \right),
\end{aligned} \]
and $R_{t,h}$ contains only the nuisance variables and no $u_t$. For $h=0$, the structural function $\tilde{\psi}_h$ is simply
\[ y_t = \Phi_1y_{t-1} + \Phi_2 \vech\left(s_{t-1}s_{t-1}'\right) + \left(1_n+\mathsf{G} s_{t-1}\right)\odot \left(\Sigma_{tr} u_{t}\right). \]

\clearpage
\section{Alternative IRF Definitions: Details} \label{app:irf_details}

\subsection{Infinitesimal versus Finite Shocks}
The conditional marginal response (CMR) proposed in \cite{GonccalvesEtAl2024} studies impulse response to an infinitesimal shock. Specifically,  CMR to an infinitesimal shock in $u_{t}$ is defined as 
$$
\operatorname{CMR}_h(\mathcal{F})=\lim _{\delta \rightarrow 0} \frac{\operatorname{CAR}_h(\mathcal{F}, \delta)}{\delta}.
$$
The CMR for QAR(1,1) is
$$
    \operatorname{CMR}_h(s) = \underbrace{\sigma \phi_1^{h} \left(1 + \gamma s + 2 \phi_2 s \frac{1-\phi_1^h}{1-\phi_1} \right)}_{\text{first-order effect}}.
$$
Since the CMR captures only first-order effects and not second-order effects, we do not use it as our IRF definition, to enable a fair comparison across specifications.

Like CAR, the generalized IRF proposed by \cite{KoopEtAl1996} studies impulse response to a finite shock:
$$\operatorname{GIRF}_h (\mathcal{F}, \delta)= \mathbb{E}[y_{t+h}|u_t = \delta, \mathcal{F}_{t-1}=\mathcal{F}] - \mathbb{E}[y_{t+h}|\mathcal{F}_{t-1}=\mathcal{F}].$$
\cite{HerbstJohannsen2025} calculate the GIRF for QAR(1,1) as
\begin{equation*}
  \operatorname{GIRF}_h (s, \delta) = \phi_1^h(1+\gamma s) \sigma \delta+\phi_2 \phi_1^{h-1} \frac{1-\phi_1^h}{1-\phi_1}\left[2 \phi_1 \sigma s \delta+\sigma^2\left(\delta^2-1\right)\right].
\end{equation*}
Hence, for this model, GIRF and CAR only differ by a constant that does not depend on the shock size or initial state:
\begin{equation*}
\operatorname{GIRF}_h\left(s, \delta\right) = \operatorname{CAR}_h\left(s, \delta\right) - \phi_2\phi_1^{h-1}\frac{1-\phi_1^h}{1-\phi_1}\sigma^2.
\end{equation*}

\subsection{Conditional versus Unconditional IRFs}
Our CAR and CMR definitions condition on the realized state $s_{t-1}$. In contrast, KP define unconditional causal parameters that integrate over states. The average marginal effect is:
$$\theta_h(\omega) = \int \omega(u) \Psi_h^{'}(u) \ du,$$
and the average causal effect of a shock with magnitude $\delta>0$ is:
$$\theta_h\left(\delta, \omega\right) \equiv \frac{1}{\delta} \int \omega(u)\left\{\Psi_h(u+\delta)-\Psi_h(u)\right\} d u,$$
where $\omega(\cdot)$ is a nonnegative weight function over baseline shock values that integrates to 1.

The relationship between our IRF definitions and KP's causal parameters can be established as follows:

\noindent (a) \textit{Avg. causal effect versus CAR}: Let $p_u$ be the distribution of $u_t$. Then:
$$\theta_h(\delta, p_u) = \frac{1}{\delta} \int p_u(u)\left\{\Psi_h(u+\delta)-\Psi_h(u)\right\} du = \frac{1}{\delta} \mathbb{E}[\Psi_h(u_t+\delta)-\Psi_h(u_t)] = \frac{\mathbb{E}[\operatorname{CAR}_h(s_{t-1}, \delta)]}{\delta}.$$

\noindent (b) \textit{Avg. marginal effect versus CMR}: Taking the limit as $\delta \rightarrow 0$:
$$\theta_h(p_u) = \lim_{\delta \rightarrow 0} \theta_h(\delta, p_u) = \lim_{\delta \rightarrow 0} \frac{\mathbb{E}[\operatorname{CAR}_h(s_{t-1}, \delta)]}{\delta} = \mathbb{E}[\text{CMR}_h(s_{t-1})].$$

Thus, KP's causal effect definitions rule out state-dependency by construction, averaging over all possible states.

\subsection{Evolving versus Fixed States in Regime-Switching Models}
In models with Markov-switching states $S_t$, IRFs can allow states to evolve naturally or hold them fixed. The regime-dependent impulse response of \cite{EhrmannEtAl2003} fixes the state throughout the horizon:
$$\operatorname{IRF}^{\text{fixed}}_h(s, \delta) = \mathbb{E}[y_{t+h}|u_t = \delta, S_{t} = \cdots = S_{t+h} = s] - \mathbb{E}[y_{t+h}|S_{t} = \cdots = S_{t+h} = s].$$
Fixed-state IRFs isolate within-regime dynamics for mechanism analysis, while evolving-state IRFs capture full propagation including endogenous regime changes, essential for policy assessment.

\clearpage
\section{Proofs}

\subsection{Proof of Proposition \ref{prop:true.car}} 
\begin{proof}
First, using the formula of structural function $\psi_h$,
\begin{align*}
& \psi_h(u_{t}+\delta, U_{h,t+h})- \psi_h(u_{t}, U_{h,t+h}) = \sigma \phi_1^{h} \delta + \gamma\sigma \phi_1^h s_{t-1}\delta + \phi_2 \sigma^2 \sum_{k=0}^{h-1}\phi_1^{h+k-1} \left(2\delta u_{t} + \delta^2\right) \\
& +  \delta \cdot 2 \phi_2 \sigma^2 \sum_{k=0}^{h-1} \phi_1^{h-1} \sum_{\substack{j=0 \\ j\neq k}}^{\infty} \phi_1^{j} u_{t+k-j}  + \delta \cdot \gamma\sigma^2 \sum_{k=0}^{h-1} \phi_1^{h-1} u_{t+1+k} \\
& = \sigma \phi_1^{h} \delta + \gamma\sigma \phi_1^h s_{t-1}\delta + \phi_2 \sigma^2 \cdot \frac{\phi_1^{h-1} - \phi_1^{2h-1}}{1 - \phi_1} \cdot \left(2\delta u_{t} + \delta^2\right) \\
& +  \delta \cdot 2 \phi_2 \sigma^2 \sum_{k=0}^{h-1} \phi_1^{h-1} \left( \phi_1^{k+1} \frac{s_{t-1}}{\sigma}+\sum_{j=0}^{k-1} \phi_1^j u_{t+k-j} \right)  + \delta \cdot \gamma\sigma^2 \sum_{k=0}^{h-1} \phi_1^{h-1} u_{t+1+k}
\end{align*}

Note that the only past variable in the above equation is $s_{t-1}$. Then 
\begin{align*}
\operatorname{CAR}_h(\mathcal{F}, \delta) &= \mathbb{E}\left[\psi_h(u_{t}+\delta, U_{h,t+h})- \psi_h(u_{t}, U_{h,t+h}) \mid \mathcal{F}_{t-1} = \mathcal{F} \right]\\
& = \mathbb{E}\left[\psi_h(u_{t}+\delta, U_{h,t+h})- \psi_h(u_{t}, U_{h,t+h}) \mid s_{t-1}=s\right]\\
& = \sigma \phi_1^{h} \delta + \gamma\sigma \phi_1^h s \delta + \phi_2 \sigma^2 \cdot \frac{\phi_1^{h-1} - \phi_1^{2h-1}}{1 - \phi_1} \cdot \delta^2 + \delta \cdot 2 \phi_2 \sigma^2 \sum_{k=0}^{h-1} \phi_1^{h-1} \left( \phi_1^{k+1} \frac{s}{\sigma} \right) \\
& = \underbrace{\sigma \phi_1^{h} \left(1 + \gamma s + 2 \phi_2 s \frac{1-\phi_1^h}{1-\phi_1} \right) \delta}_{\text{first-order effect}} + \underbrace{\phi_2 \sigma^2 \cdot \frac{\phi_1^{h-1} - \phi_1^{2h-1}}{1 - \phi_1} \cdot \delta^2}_{\text{second-order effect}} \\
& := \operatorname{CAR}_h(s, \delta).
\end{align*}
and
\begin{align*}
\operatorname{CMR}_h(\mathcal{F}) & = \lim _{\delta \rightarrow 0} \frac{\operatorname{CAR}_h(\mathcal{F}, \delta)}{\delta}  = \lim _{\delta \rightarrow 0} \frac{\operatorname{CAR}_h(s, \delta)}{\delta} \\
& = \lim _{\delta \rightarrow 0} \sigma \phi_1^{h} \left(1 + \gamma s + 2 \phi_2 s \frac{1-\phi_1^h}{1-\phi_1} \right) + \phi_2 \sigma^2 \cdot \frac{\phi_1^{h-1} - \phi_1^{2h-1}}{1 - \phi_1} \cdot \delta \\
& =  \underbrace{\sigma \phi_1^{h} \left(1 + \gamma s + 2 \phi_2 s \frac{1-\phi_1^h}{1-\phi_1} \right)}_{\text{first-order effect}} := \operatorname{CMR}_h(s).
\end{align*}
\end{proof}

\subsection{Proof of Proposition \ref{prop:failure.lp}}
\begin{proof}
Using the formula of structural function $y_{t+h} = \psi_h(u_t, U_{h,t+h})$, the linear LP population coefficient for $h\ge1$ can be calculated as 
\begin{equation*}
\begin{aligned}
    \beta_h & = \frac{\mathbb{E}[y_{t+h}u_t]}{\Var(u_t)} \\
    & = \phi_1^{h+1}\mathbb{E}[y_{t-1}u_t]
  + \phi_1^h\phi_2 \mathbb{E}[s_{t-1}^2u_t]
  + \sigma \phi_1^h \mathbb{E}[u_t^2]
  + \gamma\sigma \phi_1^h \mathbb{E}[s_{t-1}u_t^2] \\
&\quad
  + \phi_2\sigma^2 \sum_{k=0}^{h-1}\phi_1^{h-k-1}\phi_1^{2k}\mathbb{E}[u_t^3] 
  + \phi_2\sigma^2 \sum_{k=0}^{h-1}\phi_1^{h-k-1}
  \mathbb{E}\!\left[\left(\sum_{\substack{j=0\\j\neq k}}^{\infty}\phi_1^j u_{t+k-j}\right)^2 u_t\right] \\
&\quad
  + 2\phi_2\sigma^2 \sum_{k=0}^{h-1}\phi_1^{h-k-1}\phi_1^k
  \mathbb{E}\!\left[u_t^2\left(\sum_{\substack{j=0\\j\neq k}}^{\infty}\phi_1^j u_{t+k-j}\right)\right] \\
&\quad
  + \sigma \sum_{k=0}^{h-1}\phi_1^{h-k-1}\mathbb{E}[u_{t+1+k}u_t] 
  + \gamma\sigma^2 \sum_{k=0}^{h-1}\phi_1^{h-k-1}\phi_1^k \mathbb{E}[u_t^2u_{t+1+k}] \\
&\quad
  + \gamma\sigma^2 \sum_{k=0}^{h-1}\phi_1^{h-k-1}
  \mathbb{E}\!\left[\left(\sum_{\substack{j=0\\j\neq k}}^{\infty}\phi_1^j u_{t+k-j}\right)u_{t+1+k}u_t\right]\\
    & = \sigma \phi_1^{h}.
\end{aligned}
\end{equation*}
as $\{u_t\}$ are iid standard normal and $u_t$ is independent of $s_{t-1}$ and $y_{t-1}$.

The linear LP population coefficient for $h=0$ is
\begin{equation*}
    \beta_0 = \frac{\mathbb{E}[y_{t}u_t]}{\Var(u_t)}=\frac{\mathbb{E}[(\phi_1 y_{t-1}+\phi_2 s_{t-1}^2+ (1+\gamma s_{t-1})\sigma u_t)\cdot u_t]}{\Var(u_t)}=\sigma.
\end{equation*}
Therefore, we have $\beta_h=\sigma\phi_1^{h \text{ for } h=0,1,\ldots}.$ The corresponding IRF is $\text{IRF}(\delta; h) = \sigma \phi_1^h \delta.$ 

When the underlying DGP is an AR(1), the population coefficient is 
\begin{equation*}
\beta_h =  \frac{\mathbb{E}[y_{t+h}u_t]}{\Var(u_t)} = \frac{\mathbb{E}[\phi_1^h \sigma u_t \cdot u_t]}{\Var(u_t)} = \phi_1^h \sigma,
\end{equation*}
which coincides with that of QAR(1, 1).
\end{proof}

\subsection{Proof of Proposition \ref{prop:irf.shock.state}}
\begin{proof}
Denote $X_t^{(+)} = \left(S_t, S_tu_t, S_tW_t'\right)^{\prime}$, $X_t^{(-)} = \left(1-S_t, (1-S_t)u_t, (1-S_t)W_t'\right)^{\prime}$, and coefficients
$\gamma_h^{(+)} = \left(\alpha_h^{(+)}, \beta_h^{(+)}, {\pi_{h}^{(+)}}^{\prime} \right)^{\prime}$,
$\gamma_h^{(-)} = \left(\alpha_h^{(-)}, \beta_h^{(-)}, {\pi_{h}^{(-)}}^{\prime} \right)^{\prime}$.
The population coefficients solve the normal equations
\[
\EE[X_t^{(+)} {X_t^{(+)}}']\gamma_h^{(+)}=\EE[X_t^{(+)}y_{t+h}],
\qquad
\EE[X_t^{(-)} {X_t^{(-)}}']\gamma_h^{(-)}=\EE[X_t^{(-)}y_{t+h}].
\]
As in the derivations in the text, solving these normal equations yields
\be
 \beta_h^{(+)} = \frac{\EE\!\left[S_t\right] \cdot \EE\!\left[S_t u_t y_{t+h}\right] - \EE\!\left[S_t u_t \right] \cdot \EE\!\left[S_t y_{t+h}\right]}{\EE\!\left[S_t\right] \cdot \EE\!\left[S_t u_t^2\right] - \EE\!\left[S_t u_t\right]^2},
\label{eq:beta_plus_normal_eq}
\ee
and
\be
 \beta_h^{(-)} = \frac{\EE\!\left[1 - S_t\right] \cdot \EE\!\left[(1-S_t) u_t y_{t+h}\right] - \EE\!\left[(1-S_t) u_t \right] \cdot \EE\!\left[(1-S_t) y_{t+h}\right]}{\EE\!\left[1-S_t\right] \cdot \EE\!\left[(1-S_t) u_t^2\right] - \EE\!\left[(1-S_t) u_t\right]^2}.
\label{eq:beta_minus_normal_eq}
\ee

To simplify \eqref{eq:beta_plus_normal_eq}--\eqref{eq:beta_minus_normal_eq}, note that the structural function of QAR(1,1) implies the $h$-step-ahead outcome can be written as a quadratic function of the time-$t$ innovation,
\be
y_{t+h}
=
R_{t,h}
+
\kappa_h u_t
+
a_h s_{t-1}u_t
+
q_h u_t^2,
\label{eq:y_decomp_quadratic_ut}
\ee
where $R_{t,h}$ is measurable with respect to $\sigma(\mathcal{F}_{t-1},u_{t+1},\ldots,u_{t+h})$ and therefore independent of $u_t$,
\[
\kappa_h=\sigma\phi_1^h,
\qquad
a_h=\sigma \phi_1^{h}\left(\gamma + 2\phi_2\frac{1-\phi_1^h}{1-\phi_1}\right),
\qquad
q_h=\phi_2 \sigma^2 \frac{\phi_1^{h-1}-\phi_1^{2h-1}}{1-\phi_1}.
\]
Since $u_t\perp (R_{t,h},s_{t-1},W_t)$ and $\EE[s_{t-1}]=0$, we obtain
\bea
\EE[S_ty_{t+h}]
&=&
\EE[S_t]\EE[R_{t,h}]
+\kappa_h\EE[S_tu_t]
+q_h\EE[S_tu_t^2],
\nonumber\\
\EE[S_tu_ty_{t+h}]
&=&
\EE[S_tu_t]\EE[R_{t,h}]
+\kappa_h\EE[S_tu_t^2]
+q_h\EE[S_tu_t^3].
\nonumber
\eea
Plugging these expressions into \eqref{eq:beta_plus_normal_eq} shows that all terms involving $\EE[R_{t,h}]$ cancel and
\be
\beta_h^{(+)}
=
\kappa_h
+
q_h\cdot
\frac{\EE[S_t]\EE[S_tu_t^3]-\EE[S_tu_t]\EE[S_tu_t^2]}{\EE[S_t]\EE[S_tu_t^2]-\EE[S_tu_t]^2}.
\label{eq:beta_plus_simplify_ratio}
\ee
Because $u_t\sim\mathcal{N}(0,1)$ and $S_t=\mathbbm{1}\{u_t>0\}$, the truncated moments are
\[
\EE[S_t]=\tfrac12,
\qquad
\EE[S_tu_t]=\frac{1}{\sqrt{2\pi}},
\qquad
\EE[S_tu_t^2]=\tfrac12,
\qquad
\EE[S_tu_t^3]=\sqrt{\frac{2}{\pi}}.
\]
Substituting into \eqref{eq:beta_plus_simplify_ratio} yields
\[
\frac{\EE[S_t]\EE[S_tu_t^3]-\EE[S_tu_t]\EE[S_tu_t^2]}{\EE[S_t]\EE[S_tu_t^2]-\EE[S_tu_t]^2}
=
\frac{\sqrt{2/\pi}}{1-2/\pi}
\equiv m,
\]
so $\beta_h^{(+)}=\kappa_h+m q_h$, which is \eqref{eq:beta_asym_simplified}.

The expression for $\beta_h^{(-)}$ follows analogously from \eqref{eq:beta_minus_normal_eq} (or by symmetry), yielding
$\beta_h^{(-)}=\kappa_h-m q_h$.
\end{proof}
\subsection{Proof of Proposition \ref{prop:irf.lag.state}}
\begin{proof}
Denote $V_t = \left(u_t, W_t', y_{t-1}u_t, y_{t-1}W_t^{\prime}\right)^{\prime}$, and $\gamma_h = \left(\beta_h^{(0)}, {\pi_{h}^{(0)}}^{\prime}, \beta_h^{(1)}, {\pi_{h}^{(1)}}^{\prime}\right)^{\prime}.$ The population coefficients are
\begin{equation*}
\begin{aligned}
\gamma_h & = \mathbb{E}[V_t V_t^{\prime}]^{-1} \mathbb{E}[V_t y_{t+h}] \\
& = \mathbb{E}\left[\begin{pmatrix}
    u_t^2 & u_t W_t' & u_t^2 y_{t-1}  & u_t y_{t-1}W_t^{\prime} \\
    W_t u_t & W_t W_t' & W_t y_{t-1}u_t & W_t y_{t-1}W_t^{\prime}\\
    y_{t-1}u_t^2 & y_{t-1}u_tW_t' & y_{t-1}^2 u_t^2 & y_{t-1}^2 u_t W_t^{\prime}\\
    y_{t-1}W_t u_t & y_{t-1}W_t W_t' & y_{t-1}^2 W_tu_t & y_{t-1}^2 W_t W_t^{\prime}
\end{pmatrix}\right]^{-1} \mathbb{E}\left[\begin{pmatrix} u_t y_{t+h} \\
W_t^{\prime} y_{t+h}\\
y_{t-1} u_t y_{t+h}\\
y_{t-1} W_t^{\prime} y_{t+h}
\end{pmatrix}\right] \\
& = \begin{pmatrix}
    1 & 0 & \mathbb{E}[y_{t-1}]  & 0 \\
    0 & \mathbb{E}[W_t W_t'] & 0 & \mathbb{E}[W_t y_{t-1}W_t^{\prime}]\\
    \mathbb{E}[y_{t-1}] & 0 & \mathbb{E}[y_{t-1}^2] & 0\\
    0 & \mathbb{E}[y_{t-1}W_t W_t'] & 0 & \mathbb{E}[y_{t-1}^2 W_t W_t^{\prime}]
\end{pmatrix}^{-1} \begin{pmatrix} \mathbb{E}[u_t y_{t+h}] \\
\mathbb{E}[W_t^{\prime} y_{t+h}]\\
\mathbb{E}[y_{t-1} u_t y_{t+h}]\\
\mathbb{E}[y_{t-1} W_t^{\prime} y_{t+h}]
\end{pmatrix}
\end{aligned}
\end{equation*}
Then $\beta_h^{(0)}$ and $\beta_h^{(1)}$ can be found by solving
\begin{equation}
\begin{cases}
\mathbb{E}[u_{t} y_{t+h}] = \beta_h^{(0)} + \beta_h^{(1)} \mathbb{E}[y_{t-1}], \\
\mathbb{E}[y_{t-1} u_{t} y_{t+h}] = \beta_h^{(0)} \mathbb{E}[y_{t-1}] + \beta_h^{(1)} \mathbb{E}[y_{t-1}^2].
\end{cases}
\label{eq:laglp.system}
\end{equation}
Note that $ \mathbb{E}[y_{t-1}] = \frac{\phi_2\sigma^2}{(1-\phi_1)(1-\phi_1^2)}$ and $\mathbb{E}[y_{t+h}u_t] = \sigma\phi_1^h$. Also, for $h\ge 1$,
\begin{equation*}
\begin{aligned}
    y_{t+h}y_{t-1} &=\phi_1^{h+1} y_{t-1}^2 + \phi_1^{h}\phi_2 s_{t-1}^2 y_{t-1} + \sigma \phi_1^{h} y_{t-1} u_t + \gamma\sigma \phi_1^h s_{t-1} y_{t-1} u_t \\
    & + \phi_2 \sigma^2 \sum_{k=0}^{h-1}\phi_1^{h-k-1} y_{t-1} \left(\phi_1^{2k} u_{t^2} +  \left(\sum_{\substack{j=0 \\ j\neq k}}^{\infty} \phi_1^j u_{t+k-j}\right)^2 + 2 \phi_1^k u_{t} \sum_{\substack{j=0 \\ j\neq k}}^{\infty} \phi_1^j u_{t+k-j}\right) \\
    & + \sigma \sum_{k=0}^{h-1} \phi_1^{h-k-1} y_{t-1} u_{t+1+k} + \gamma\sigma^2 \sum_{k=0}^{h-1} \phi_1^{h-1-k} y_{t-1} \left(\phi_1^k u_t + \sum_{\substack{j=0 \\ j\neq k}}^{\infty} \phi_1^j u_{t+k-j}\right) u_{t+1+k}.
\end{aligned}
\end{equation*}
Thus 
\begin{equation*}
\begin{aligned}
\mathbb{E}[y_{t+h} y_{t-1} u_{t}] & = \sigma \phi_1^{h} \mathbb{E}[y_{t-1}] + \gamma\sigma \phi_1^h \mathbb{E}[s_{t-1} y_{t-1}] + 2 \phi_2 \sigma^2 \sum_{k=0}^{h-1}\phi_1^{h-1} \mathbb{E}\left[y_{t-1}  u_{t^2} \sum_{\substack{j=0 \\ j\neq k}}^{\infty} \phi_1^j u_{t+k-j} \right] \\
& =  \sigma \phi_1^{h} \frac{\phi_2\sigma^2}{(1-\phi_1)(1-\phi_1^2)} + \gamma \sigma\phi_1^h \frac{\sigma^2}{1-\phi_1^2} + 2 \phi_2 \sigma \phi_1^{h}\frac{1-\phi_1^{h}}{1-\phi_1}\frac{\sigma^2}{1-\phi_1^2}\\
& =\frac{\sigma^3 \phi_1^h}{\left(1-\phi_1\right)\left(1-\phi_1^2\right)}\left[\phi_2\left(3-2 \phi_1^h\right)+\gamma\left(1-\phi_1\right)\right] .
\end{aligned}
\end{equation*}
For $h=0$, 
\begin{equation*}
\mathbb{E}[y_{t} y_{t-1} u_{t}] = \sigma \mathbb{E}[y_{t-1}] + \gamma \sigma \mathbb{E}[s_{t-1} y_{t-1}] = \frac{\sigma^3}{\left(1-\phi_1\right)\left(1-\phi_1^2\right)}\left[\phi_2+\gamma\left(1-\phi_1\right)\right].
\end{equation*}

Besides,
\begin{equation*}
\begin{aligned}
\Var(y_t) & =\frac{1}{1-\phi_1^2}\left[\phi_2^2 \Var\left(s_t^2\right)+\sigma^2\left(1+\gamma^2 \mathbb{E}\left[s_t^2\right]\right)+2 \phi_1 \phi_2 \operatorname{Cov}\left(y_t, s_t^2\right)\right]\\
& = \frac{1}{1-\phi_1^2}\left[\phi_2^2 \frac{2\sigma^4}{(1-\phi_1^2)^2}+\sigma^2\left(1+\gamma^2 \frac{\sigma^2}{1-\phi_1^2}\right)+2 \phi_1 \phi_2 \frac{1}{1-\phi_1^3}\left(\phi_1^2 \phi_2 \frac{2\sigma^4}{(1-\phi_1^2)^2} +2 \phi_1 \gamma\frac{\sigma^4}{1-\phi_1^2}\right) \right]\\
&= \frac{2 \phi_2^2 \sigma^4}{\left(1-\phi_1^2\right)^3}+\frac{\sigma^2}{1-\phi_1^2}+\frac{\gamma^2 \sigma^4}{\left(1-\phi_1^2\right)^2}+\frac{4 \phi_1^3 \phi_2^2 \sigma^4}{\left(1-\phi_1^2\right)^3\left(1-\phi_1^3\right)}+\frac{4 \phi_1^2 \phi_2 \gamma \sigma^4}{\left(1-\phi_1^2\right)^2\left(1-\phi_1^3\right)}.
\end{aligned}
\end{equation*}

Solving the system (\ref{eq:laglp.system}),
\[
\begin{aligned}
\beta_h^{(1)}
& = \frac{\mathbb{E}[y_{t+h} y_{t-1} u_{t}] -\sigma \phi_1^h \cdot \EE[y_{t-1}]}{\Var\left(y_{t-1}\right)},\\
& =
\sigma \phi_1^{h}\left(\gamma + 2\phi_2\frac{1-\phi_1^h}{1-\phi_1}\right)\cdot \frac{\sigma^2/(1-\phi_1^2)}{\Var\left(y_{t-1}\right)}.
\end{aligned}
\]
which is \eqref{eq:beta_lag_simplified}. Finally, using the first normal equation
$\EE[u_ty_{t+h}]=\beta_h^{(0)}+\beta_h^{(1)}\EE[y_{t-1}]$ gives
\[
\beta_h^{(0)}=\sigma\phi_1^h-\beta_h^{(1)}\EE[y_{t-1}],
\]
so the implied population IRF is $\operatorname{IRF}^{LagLP}(y,\delta;h)=(\beta_h^{(0)}+\beta_h^{(1)}y)\delta$.
\end{proof}

\subsection{Proof of Proposition \ref{prop:irf.infeas}}
\begin{proof}
To back out the coefficients $\kappa_{h}^0$ to $\kappa_{h}^3$, recall that we can represent $y_{t+h}$ as
\begin{equation*}
\begin{aligned}
    y_{t+h} 
    &=\phi_1^{h+1} y_{t-1} + \phi_1^{h}\phi_2 s_{t-1^2} + \sigma \phi_1^{h} u_t + \gamma\sigma \phi_1^h  s_{t-1}u_t \\
    & + \phi_2 \sigma^2 \sum_{k=0}^{h-1}\phi_1^{h-k-1} \left(\phi_1^{2k} u_{t^2} +  \left(\phi_1^{k+1} \frac{s_{t-1}}{\sigma}+\sum_{j=0}^{k-1} \phi_1^j u_{t+k-j}\right)^2 + 2 \phi_1^k u_{t} \left(\phi_1^{k+1} \frac{ s_{t-1}}{\sigma}+\sum_{j=0}^{k-1} \phi_1^j u_{t+k-j}\right)\right) \\
    & + \sigma \sum_{k=0}^{h-1} \phi_1^{h-k-1} u_{t+1+k} + \gamma\sigma^2 \sum_{k=0}^{h-1} \phi_1^{h-1-k} \left(\phi_1^k u_t + \phi_1^{k+1} \frac{s_{t-1}}{\sigma}+\sum_{j=0}^{k-1} \phi_1^j u_{t+k-j} \right) u_{t+1+k}.
\end{aligned}
\end{equation*}

Matching the coefficients for terms $u_t$, $s_{t-1}u_t$, and $u_t^2$, and collecting the rest of terms (denoted as $r_{h,t+h}$) in the intercept $\alpha_{h0} = \mathbb{E}[r_{h,t+h}]$ and residual $\epsilon_{h,t+h} = r_{h,t+h} - \mathbb{E}[r_{h,t+h}]$, we obtain
\begin{equation*}
\begin{aligned}
    \kappa_{h1} &= \sigma \phi_1^{h}, \\
    \kappa_{h2} &= \gamma \sigma \phi_1^h 
    + 2 \phi_2 \sigma \phi_1^h \cdot \frac{1 - \phi_1^h}{1 - \phi_1}, \\
    \kappa_{h3} &= 
    \begin{cases}
        \phi_2 \sigma^2 \phi_1^{h-1} \cdot \dfrac{1 - \phi_1^h}{1 - \phi_1}, & \text{if } h \geq 1, \\[6pt]
        0, & \text{if } h = 0.
    \end{cases}
\end{aligned}
\end{equation*}
The matched coefficients coincide with the population coefficients under \textit{Infeas} because $\mathbb{E}[\epsilon_{h,t+h} u_t]= \mathbb{E}[\epsilon_{h,t+h} s_{t-1}u_t]=  \mathbb{E}[\epsilon_{h,t+h} u_t^2] = 0$.
It can also be easily verified that $\operatorname{CAR}(s, \delta) = \kappa_{h1} \delta + \kappa_{h2} s \delta + \kappa_{h3} \delta^2.$ Therefore, the infeasible specification (\ref{infeasible.spec}) recovers the true CAR exactly.
\end{proof}

\subsection{Proof of Proposition \ref{prop:irf.feas}}
\begin{proof}
Denote $R_t = \left(1, u_t, y_{t-1}u_t, u_t^2, W_t^{\prime}\right)^{\prime}$ and $\theta_h  = \left(\theta_{h0}, \theta_{h1}, \theta_{h2}, \theta_{h3}\right)^{\prime}.$ Then population coefficients 
$$
\begin{aligned}
    \left(\theta_h, \pi_h^{\prime}\right)^{\prime}  & =  \mathbb{E}[R_tR_t^{\prime}]^{-1} \mathbb{E}[R_t y_{t+h}]\\
    & = \begin{pmatrix}
        1 & 0 & 0 & 1 & \mathbb{E}[W_t]\\
        0 & 1 & \mathbb{E}[y_{t-1}] & 0 & 0 \\
        0 & \mathbb{E}[y_{t-1}] & \mathbb{E}[y_{t-1}^2] & 0 & 0\\
        1 & 0 & 0 & 3  & \mathbb{E}[W_t]\\
        \mathbb{E}[W_t] & 0  & 0  & \mathbb{E}[W_t] & \mathbb{E}[W_tW_t^{\prime}]
    \end{pmatrix}^{-1} \begin{pmatrix}
        \mathbb{E}\left[y_{t+h}\right]\\
        \mathbb{E}\left[u_t y_{t+h}\right] \\
        \mathbb{E}\left[u_t y_{t-1} y_{t+h}\right]\\
        \mathbb{E}\left[u_t^2 y_{t+h}\right] \\
        \mathbb{E}\left[W_t\right] 
    \end{pmatrix} \\
\end{aligned}
$$

$\theta_{h1}$ to $\theta_{h3}$ can be found by solving
\be
\left\{
\begin{aligned}
&\theta_{h0} + \theta_{h3} + \pi_h^{\prime} \mathbb{E}[W_t] = \mathbb{E}\left[y_{t+h}\right],\\
&\theta_{h1} + \theta_{h2} \mathbb{E}[y_{t-1}]  =  \mathbb{E}\left[u_t y_{t+h}\right], \\
&\theta_{h1} \mathbb{E}[y_{t-1}] + \theta_{h2} \mathbb{E}[y_{t-1}^2]  =  \mathbb{E}\left[u_t y_{t-1} y_{t+h}\right],\\
&\theta_{h0} + 3 \theta_{h3}  + \pi_h^{\prime} \mathbb{E}[W_t] = \mathbb{E}\left[u_t^2 y_{t+h}\right].\\
\end{aligned}
\right.
\label{eq:feas.system}
\ee

Note that the linear system determining $(\theta_{h1},\theta_{h2})$ is identical to the one determining $(\beta_h^{(0)},\beta_h^{(1)})$ in the proof of Proposition~\ref{prop:irf.lag.state}. Then $\theta_{h2}=\beta_h^{(1)}$ and $\theta_{h1}=\beta_h^{(0)}$. 

To solve for $\theta_{h3}$, the first and last equations in \ref{eq:feas.system} imply
$\theta_{h3}  = \frac{1}{2}\mathbb{E}\left[u_t^2 y_{t+h}\right] - \frac{1}{2}\mathbb{E}\left[y_{t+h}\right].$
Using the same decomposition of $y_{t+h}$ in terms of the time-$t$ innovation as in \eqref{eq:y_decomp_quadratic_ut}:
\[
y_{t+h}=R_{t,h}+\kappa_h u_t + a_h s_{t-1}u_t + q_h u_t^2,
\qquad u_t\perp (R_{t,h},s_{t-1}),
\]
where $q_h$ is the coefficient on the second-order term in Proposition~\ref{prop:true.car}.
Using $\EE[u_t^2]=1$, $\EE[u_t^3]=0$, and $\EE[u_t^4]=3$, we obtain
\[
\EE[u_t^2 y_{t+h}]
=
\EE[R_{t,h}]
+3q_h,
\qquad
\EE[y_{t+h}]
=
\EE[R_{t,h}]
+q_h,
\]
so $\theta_{h3}=\frac12\{\EE[u_t^2y_{t+h}]-\EE[y_{t+h}]\}=q_h$.

Therefore $\operatorname{IRF}^{Feas}(y,\delta;h)=\theta_{h1}\delta+\theta_{h2}y\delta+\theta_{h3}\delta^2$ with $(\theta_{h1},\theta_{h2},\theta_{h3})=(\beta_h^{(0)},\beta_h^{(1)},q_h)$, which is Proposition~\ref{prop:irf.feas}.
\end{proof}

\subsection{Lemma \ref{lem:align_y_to_s_main}}
\begin{lemma}[A projection identity]
\label{lem:align_y_to_s_main}
Under Assumption~\ref{assump:qar.dgp},
\be
\Proj(y_{t-1}\mid s_{t-1})=\EE[y_{t-1}] + s_{t-1}.
\label{eq:proj_y_on_s_main}
\ee
\end{lemma}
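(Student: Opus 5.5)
The plan is to read $\Proj(y_{t-1}\mid s_{t-1})$ as the population linear projection of $y_{t-1}$ onto $\{1,s_{t-1}\}$. This projection equals
\[
\EE[y_{t-1}]+\frac{\Cov(y_{t-1},s_{t-1})}{\Var(s_{t-1})}\bigl(s_{t-1}-\EE[s_{t-1}]\bigr).
\]
Under Assumption~\ref{assump:qar.dgp}, $s_t=\sigma\sum_{j\ge0}\phi_1^j u_{t-j}$ is a stationary Gaussian AR(1). Hence $\EE[s_{t-1}]=0$, $\EE[s_{t-1}^3]=0$, and $\Var(s_{t-1})=\sigma^2/(1-\phi_1^2)$. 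The lemma is therefore equivalent to the single identity
\[
\Cov(y_{t-1},s_{t-1})=\frac{\sigma^2}{1-\phi_1^2}.
\]
Because $\EE[s_{t-1}]=0$, this covariance equals $C\equiv\EE[y_{t-1}s_{t-1}]$.

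To compute $C$, I will multiply the two equations in \eqref{QAR.hetero} at date $t$ and take expectations, using stationarity so that $\EE[y_ts_t]=\EE[y_{t-1}s_{t-1}]=C$. I will also use that $u_t$ is independent of $(y_{t-1},s_{t-1})$, with $\EE[u_t]=0$ and $\EE[u_t^2]=1$. The six cross terms of $y_ts_t$ contribute as follows:
\begin{itemize}
\item $\phi_1y_{t-1}\cdot\phi_1s_{t-1}$ gives $\phi_1^2C$;
\item $\phi_2s_{t-1}^2\cdot\phi_1s_{t-1}$ gives $\phi_1\phi_2\EE[s_{t-1}^3]=0$;
\item $(1+\gamma s_{t-1})\sigma u_t\cdot\sigma u_t$ gives $\sigma^2(1+\gamma\EE[s_{t-1}])=\sigma^2$;
\item the three remaining terms are linear in $u_t$ times an $\mathcal F_{t-1}$-measurable factor, so each has expectation zero.
\end{itemize}
Collecting terms gives $C=\phi_1^2C+\sigma^2$, so $C=\sigma^2/(1-\phi_1^2)=\Var(s_{t-1})$. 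The projection slope is therefore $1$, which yields \eqref{eq:proj_y_on_s_main}.

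The only delicate point is justifying the stationarity step $\EE[y_ts_t]=\EE[y_{t-1}s_{t-1}]$ and the finiteness of the moments involved. I would handle this by writing $y_t$ as the convergent series $\sum_{k\ge0}\phi_1^k\bigl[\phi_2s_{t-1-k}^2+(1+\gamma s_{t-1-k})\sigma u_{t-k}\bigr]$, which is valid since $|\phi_1|<1$ and the process is initialized in the infinite past. Each summand is a polynomial of degree at most two in Gaussian variables with geometrically decaying coefficients. This gives strict stationarity and finite moments of all orders, so the moment recursion is legitimate.

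As a cross-check, the same $C$ can be computed directly from this series representation. The computation agrees with the value $\EE[s_{t-1}y_{t-1}]=\sigma^2/(1-\phi_1^2)$ already used in the proof of Proposition~\ref{prop:irf.lag.state}.
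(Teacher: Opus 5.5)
Your proposal is correct and matches the paper's proof essentially line by line: you reduce the claim to $\Cov(y_{t-1},s_{t-1})=\Var(s_{t-1})$, multiply the two QAR equations, and solve the stationary moment recursion $C=\phi_1^2C+\sigma^2$. You are slightly more careful than the paper in two places: you state explicitly why the terms $\phi_1\phi_2\EE[s_{t-1}^3]$ and $\gamma\sigma^2\EE[s_{t-1}]$ vanish, and you justify stationarity and finite moments through the series representation of $y_t$.
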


\begin{oldproof}
By stationarity it suffices to work with $(y_t,s_t)$.
Since $\EE[s_t]=0$, the population projection is
\[
\Proj(y_t\mid s_t)=\EE[y_t]+\frac{\Cov(y_t,s_t)}{\Var(s_t)}\,s_t.
\]

We now show that $\Cov(y_t,s_t)=\Var(s_t)$. Let $C\equiv \EE[s_ty_t]$ (so $C=\Cov(y_t,s_t)$ because $\EE[s_t]=0$).
Using the QAR(1,1) recursions \eqref{QAR.hetero} and the independence of $u_t$ from $\mathcal{F}_{t-1}$,
\bea
C=\EE[s_ty_t]
&=&
\EE\big[(\phi_1 s_{t-1}+\sigma u_t)\,(\phi_1 y_{t-1}+\phi_2 s_{t-1}^2+\sigma u_t+\gamma\sigma s_{t-1}u_t)\big]
\nonumber\\
&=&
\phi_1^2\,\EE[s_{t-1}y_{t-1}]
+
\sigma^2\,\EE[u_t^2].
\label{eq:C_recursion}
\eea
Since $\EE[u_t^2]=1$ and $\EE[s_{t-1}y_{t-1}]=C$ by stationarity, \eqref{eq:C_recursion} implies
$C=\phi_1^2 C+\sigma^2$, hence $C=\sigma^2/(1-\phi_1^2)$.

Finally, $\Var(s_t)=\sigma^2/(1-\phi_1^2)$ for the AR(1) state equation, so $\Cov(y_t,s_t)=\Var(s_t)$ and the projection slope equals one.
This proves \eqref{eq:proj_y_on_s_main}.
\end{oldproof}

\subsection{Proof of Theorem \ref{thm:cond_u_main}}
\begin{proof}
By Proposition~\ref{prop:true.car}, the CAR at horizon $h$ can be written as
\[
\operatorname{CAR}_h(s,\delta)=\kappa_h\delta+a_h s\,\delta+q_h\delta^2,
\]
with $\kappa_h=\sigma\phi_1^h$ and $a_h,q_h$ as defined in \eqref{eq:ah_qh_def}.
We compute $\mathcal{L}^{spec}_h(\delta)$ by substituting the corresponding population IRF and using that
$s_{t-1}$ is independent of $u_t$ and has mean zero.

\noindent \textit{Linear.} Since $\operatorname{IRF}^{Linear}(\delta;h)=\kappa_h\delta$,
\[
\operatorname{CAR}_h(s_{t-1},\delta)-\operatorname{IRF}^{Linear}(\delta;h)
=
a_h s_{t-1}\delta+q_h\delta^2,
\]
so, conditioning on $u_t=\delta$,
\[
\mathcal{L}^{Linear}_h(\delta)=a_h^2\delta^2\sigma_s^2+q_h^2\delta^4.
\]

\noindent \textit{LagLP and Feas.} Recall that $\beta_h^{(0)}=\sigma\phi_1^{h}-\beta_h^{(1)}\EE[y_{t-1}]$. Then by Proposition~\ref{prop:irf.lag.state},
\[
\operatorname{IRF}^{LagLP}(y,\delta;h)=(\beta_h^{(0)} + \beta_h^{(1)} y)\delta=\kappa_h\delta+\beta_h^{(1)}(y-\EE[y_{t-1}])\delta,
\]
and by Proposition~\ref{prop:irf.feas},
\[
\operatorname{IRF}^{Feas}(y,\delta;h)=\beta_h^{(0)} \delta+\beta_h^{(1)} y\delta+q_h \delta^2=\kappa_h\delta+\beta_h^{(1)}(y-\EE[y_{t-1}])\delta+q_h\delta^2.
\]
Using the closed-form coefficient in Proposition~\ref{prop:irf.lag.state} and the fact that $\Cov(s_{t-1},y_{t-1})=\sigma^2/(1-\phi_1^2)=\Var(s_{t-1})$ (see Lemma~\ref{lem:align_y_to_s_main}), we can write
$\beta_h^{(1)}=a_h\lambda$ where $\lambda=\Cov(s_{t-1},y_{t-1})/\Var(y_{t-1}).$
Therefore,
\[
\operatorname{CAR}_h(s_{t-1},\delta)-\operatorname{IRF}^{LagLP}(y_{t-1},\delta;h)
=
a_h\big(s_{t-1}-\lambda(y_{t-1}-\EE[y_{t-1}])\big)\delta
+
q_h\delta^2,
\]
and
\[
\operatorname{CAR}_h(s_{t-1},\delta)-\operatorname{IRF}^{Feas}(y_{t-1},\delta;h)
=
a_h\big(s_{t-1}-\lambda(y_{t-1}-\EE[y_{t-1}])\big)\delta.
\]
Conditioning on $u_t=\delta$ and using $\EE[s_{t-1}-\lambda(y_{t-1}-\EE[y_{t-1}])]=0$ gives
$$
\mathcal{L}^{LagLP}_h(\delta)=a_h^2\delta^2\sigma^2_{s|y} + q_h^2\delta^4, \qquad \mathcal{L}^{Feas}_h(\delta)=a_h^2\delta^2\sigma^2_{s|y}.
$$

\noindent \textit{AsymLP.} If $S=\mathbbm{1}\{\delta>0\}$, then Proposition~\ref{prop:irf.shock.state} implies
$\operatorname{IRF}^{AsymLP}(S,\delta;h)=\kappa_h\delta+m q_h|\delta|$.
Hence
\[
\operatorname{CAR}_h(s_{t-1},\delta)-\operatorname{IRF}^{AsymLP}(S_t,\delta;h)
=
a_h s_{t-1}\delta
+
q_h(\delta^2-m|\delta|),
\]
and conditioning on $u_t=\delta$ yields
$$
\mathcal{L}^{AsymLP}_h(\delta)=a_h^2\delta^2\sigma_s^2 + q_h^2(\delta^2-m|\delta|)^2.
$$

The rank ordering follows immediately because $\sigma^2_{s|y}\le \sigma^2_{s}$ and $q_h^2\delta^4\ge 0$.
Finally,
\[
\mathcal{L}^{Linear}_h(\delta)-\mathcal{L}^{LagLP}_h(\delta)
=
a_h^2\delta^2\big(\sigma^2_{s}-\sigma^2_{s|y}\big)
=
a_h^2\delta^2\cdot\frac{\sigma_{sy}^2}{\sigma^2_{y}},
\]
and
\[
\mathcal{L}^{Linear}_h(\delta)-\mathcal{L}^{AsymLP}_h(\delta)
=
q_h^2\Big(\delta^4-(\delta^2-m|\delta|)^2\Big)
=
q_h^2\big(2m|\delta|^3-m^2\delta^2\big).
\]
\end{proof}

\subsection{Proof of Theorem \ref{thm:cond_s_main}}
\begin{proof}
By Proposition~\ref{prop:true.car}, for a fixed horizon $h$,
\[
\operatorname{CAR}_h(s,u_t)=\kappa_h u_t+a_h s\,u_t+q_h u_t^2,
\]
with $\kappa_h=\sigma\phi_1^h$ and $a_h,q_h$ as in \eqref{eq:ah_qh_def}.
We compute each $\mathcal{R}^{spec}_h(s)$ by substituting the corresponding population IRF.
Under Assumption~\ref{assump:qar.dgp}, $u_t\sim\mathcal{N}(0,1)$ is independent of $(s_{t-1},y_{t-1})$.

\noindent \textit{Linear.}
Since $\operatorname{IRF}^{Linear}(u_t;h)=\kappa_h u_t$,
\[
\operatorname{CAR}_h(s,u_t)-\operatorname{IRF}^{Linear}(u_t;h)=a_h s\,u_t+q_h u_t^2.
\]
Conditioning on $s_{t-1}=s$ and using $\EE[u_t^2]=1$, $\EE[u_t^3]=0$, and $\EE[u_t^4]=3$ gives
\[
\mathcal{R}^{Linear}_h(s)=a_h^2 s^2\EE[u_t^2]+q_h^2\EE[u_t^4]=a_h^2 s^2+3q_h^2.
\]

\noindent \textit{LagLP and Feas.}
By Proposition~\ref{prop:irf.lag.state},
\[
\operatorname{IRF}^{LagLP}(y_{t-1},u_t;h)
=
\kappa_h u_t+\beta_h^{(1)}(y_{t-1}-\EE[y_{t-1}])u_t,
\]
and by Proposition~\ref{prop:irf.feas},
\[
\operatorname{IRF}^{Feas}(y_{t-1},u_t;h)
=
\kappa_h u_t+\beta_h^{(1)}(y_{t-1}-\EE[y_{t-1}])u_t+q_h u_t^2.
\]
Using the closed-form coefficient in Proposition~\ref{prop:irf.lag.state} and Lemma~\ref{lem:align_y_to_s_main},
we can write $\beta_h^{(1)}=a_h\lambda$ with $\lambda=\Cov(s_{t-1},y_{t-1})/\Var(y_{t-1})$.
Therefore,
\[
\operatorname{CAR}_h(s,u_t)-\operatorname{IRF}^{LagLP}(y_{t-1},u_t;h)
=
a_h\Big(s-\lambda(y_{t-1}-\EE[y_{t-1}])\Big)u_t+q_h u_t^2,
\]
and
\[
\operatorname{CAR}_h(s,u_t)-\operatorname{IRF}^{Feas}(y_{t-1},u_t;h)
=
a_h\Big(s-\lambda(y_{t-1}-\EE[y_{t-1}])\Big)u_t.
\]
Conditioning on $s_{t-1}=s$, using independence of $u_t$ from $(s_{t-1},y_{t-1})$, and $\EE[u_t^3]=0$ yields
\[
\mathcal{R}^{LagLP}_h(s)
=
a_h^2\,\EE\!\left[\Big(s-\lambda(y_{t-1}-\EE[y_{t-1}])\Big)^2\mid s_{t-1}=s\right]\EE[u_t^2]
+
q_h^2\EE[u_t^4]
=
a_h^2\Xi(s)+3q_h^2,
\]
and
\[
\mathcal{R}^{Feas}_h(s)
=
a_h^2\,\EE\!\left[\Big(s-\lambda(y_{t-1}-\EE[y_{t-1}])\Big)^2\mid s_{t-1}=s\right]\EE[u_t^2]
=
a_h^2\Xi(s).
\]

\noindent \textit{AsymLP.}
Proposition~\ref{prop:irf.shock.state} implies
$\operatorname{IRF}^{AsymLP}(S_t,u_t;h)=\kappa_h u_t+m q_h|u_t|$.
Hence
\[
\operatorname{CAR}_h(s,u_t)-\operatorname{IRF}^{AsymLP}(S_t,u_t;h)
=
a_h s\,u_t+q_h(u_t^2-m|u_t|).
\]
The cross term satisfies $\EE[u_t(u_t^2-m|u_t|)]=0$ by symmetry, so conditioning on $s_{t-1}=s$,
\[
\mathcal{R}^{AsymLP}_h(s)
=
a_h^2 s^2\EE[u_t^2]+q_h^2\EE[(u_t^2-m|u_t|)^2]
=
a_h^2 s^2+\nu_m q_h^2.
\]

The inequalities in \eqref{eq:rank_s_main} are immediate: $\mathcal{R}^{LagLP}_h(s)-\mathcal{R}^{Feas}_h(s)=3q_h^2\ge 0$,
and $\mathcal{R}^{Linear}_h(s)-\mathcal{R}^{AsymLP}_h(s)=(3-\nu_m)q_h^2\ge 0$ because
$\nu_m=\EE[(u_t^2-m|u_t|)^2]\le \EE[u_t^4]=3$.
The gap expressions in \eqref{eq:gains_s_main} follow from the formulas above.
\end{proof}

\subsection{Proof of Proposition \ref{prop:qvar.true.car}}
\begin{proof}

Using the formula of structural function $\tilde{\psi}_h$, it is convenient to work first with the vector-valued decomposition
\[
{
\begin{aligned}
\mathcal C_h^{(i)}(\mathcal{F}, \delta_i)
& \equiv \underbrace{\mathbb{E}[L_{t,h}(u_t+\delta_i e_i)-L_{t,h}(u_t)| \mathcal{F}_{t-1} = \mathcal{F}]}_{(1)}
+ \underbrace{\mathbb{E}[C_{t,h}(u_t+\delta_i e_i,\cdot)-C_{t,h}(u_t,\cdot)| \mathcal{F}_{t-1} = \mathcal{F}]}_{(2)} \\
&\quad + \underbrace{\mathbb{E}[Q_{t,h}(u_t+\delta_i e_i)-Q_{t,h}(u_t)| \mathcal{F}_{t-1} = \mathcal{F}]}_{(3)},
\end{aligned}
}
\]
so that $\operatorname{CAR}^{(j,i)}_h(\mathcal{F}, \delta_i)=e_j'\mathcal C_h^{(i)}(\mathcal{F}, \delta_i)$.

For $h\ge 1$, term (1):
\[
(1)
= \Phi_1^h (1_n +\mathsf{G}\,s)\odot(\delta_i b_i)
+ \sum_{k=1}^h\Phi_1^{h-k}\,\Phi_2\,\vech\bigl(\delta_i(\Phi_1^k s)(\Phi_1^{k-1}b_i)' + \delta_i(\Phi_1^{k-1}b_i)(\Phi_1^k s)'\bigr),
\]
where $s$ is the realized value of $s_{t-1}$.

Term (2):
\[
\begin{aligned}
(2)
& = \sum_{k=1}^h \Phi_1^{h-k} \Phi_2 \sum_{j=1}^{k-1} \vech\left( \mathbb{E}\left[\delta_i(\Phi_1^{k-1}b_i) \left(\Phi_1^{k-1-j}Bu_{t+j}\right)' + \left(\Phi_1^{k-1-j}Bu_{t+j}\right) \delta_i(\Phi_1^{k-1}b_i)' \,\Big|\, \mathcal{F}_{t-1} = \mathcal{F}\right] \right) \\
& \quad + \sum_{k=1}^h \Phi_1^{h-k} \mathbb{E}\left[\left(\mathsf{G} \Phi_1^{k-1}\delta_i b_i\right)\odot \left(B u_{t+k}\right)\,\Big|\, \mathcal{F}_{t-1} = \mathcal{F}\right] = 0,
\end{aligned}
\]
as future shocks are independent of $\mathcal{F}_{t-1}$.

Term (3):
\[
\begin{aligned}
(3)
&= \sum_{k=1}^h \Phi_1^{h-k}\Phi_2\,
\mathbb{E}\!\left[\vech\!\left(
(\Phi_1^{k-1}Bu_t+\delta_i\Phi_1^{k-1}b_i)
(\Phi_1^{k-1}Bu_t+\delta_i\Phi_1^{k-1}b_i)'
-(\Phi_1^{k-1}Bu_t)(\Phi_1^{k-1}Bu_t)'
\right)\middle| \mathcal{F}_{t-1}=\mathcal{F}\right]\\
&= \sum_{k=1}^h \Phi_1^{h-k}\Phi_2 \vech\Bigl(\mathbb{E}\bigl[
\delta_i(\Phi_1^{k-1}b_i)(\Phi_1^{k-1}Bu_t)'
+ \delta_i(\Phi_1^{k-1}Bu_t)(\Phi_1^{k-1}b_i)'
+ \delta_i^2(\Phi_1^{k-1}b_i)(\Phi_1^{k-1}b_i)'
\mid \mathcal{F}_{t-1}=\mathcal{F}\bigr]\!\Bigr) \\
& = \sum_{k=1}^h\Phi_1^{h-k}\Phi_2\,\vech\Bigl(\delta_i^2 (\Phi_1^{k-1}b_i)(\Phi_1^{k-1}b_i)'\Bigr).
\end{aligned}
\]

Therefore,
\[
\begin{aligned}
\operatorname{CAR}^{(j,i)}_h(\mathcal{F}, \delta_i)
& = e_j'\Bigg[
\Phi_1^h (1_n +\mathsf{G}\,s)\odot(\delta_i b_i)  + \sum_{k=1}^h\Phi_1^{h-k}\,\Phi_2\,\vech\bigl(\delta_i(\Phi_1^k s)(\Phi_1^{k-1} b_i)' + \delta_i(\Phi_1^{k-1} b_i)(\Phi_1^k s)'\bigr) \\
& \qquad\qquad + \sum_{k=1}^h\Phi_1^{h-k}\Phi_2\,\vech\Bigl(\delta_i^2 (\Phi_1^{k-1} b_i)(\Phi_1^{k-1} b_i)'\Bigr)
\Bigg].
\end{aligned}
\]
For $h=0$, one recovers
\[
\operatorname{CAR}^{(j,i)}_0(\mathcal{F}, \delta_i)
= e_j'\Big[(1_n +\mathsf{G}\,s)\odot(\delta_i b_i)\Big].
\]
\end{proof}

\subsection{Proof of Proposition \ref{prop:multi.failure.lp}}
\begin{proof}
Using the formula of structural function 
\[
y_{t+h} = \tilde{\psi}_h(u_t,U_{h,t+h}),
\]
the linear LP population coefficient vector
\(
\boldsymbol{\beta}_h^{(i)}=(\beta_h^{(1,i)},\dots,\beta_h^{(n,i)})'
\)
for the \(i\)-th shock is:
for \(h\ge1\),
\begin{equation*}
\begin{aligned}
\boldsymbol{\beta}_h^{(i)}
&= \mathbb{E}[y_{t+h}u_{it}]\;\Var(u_{it})^{-1}
= \mathbb{E}[y_{t+h}u_{it}]\\
&= \mathbb{E}\bigl[(R_{t,h}+L_{t,h}(u_t)
+ C_{t,h}(u_t,u_{t+1:t+h})
+ Q_{t,h}(u_t))\,u_{it}\bigr]\\
&= \underbrace{\mathbb{E}[R_{t,h}u_{it}]}_{(a)}
+ \underbrace{\mathbb{E}[L_{t,h}(u_t)u_{it}]}_{(b)}
+ \underbrace{\mathbb{E}[C_{t,h}(u_t,u_{t+1:t+h})u_{it}]}_{(c)}
+ \underbrace{\mathbb{E}[Q_{t,h}(u_t)u_{it}]}_{(d)}.
\end{aligned}
\end{equation*}

We now compute terms (a)--(d):

1. Term (a):  
Since \(R_{t,h}\) depends only on nuisance variables independent of \(u_{it}\),
\[
\mathbb{E}[R_{t,h}u_{it}]
= \mathbb{E}[R_{t,h}]\;\mathbb{E}[u_{it}]
= 0.
\]

2. Term (b):  
Recall
\[
L_{t,h}\left(u_t\right) = \Phi_1^h \left(1_n+\mathsf{G} s_{t-1}\right)\odot \left(B u_{t}\right)+ \sum_{k=1}^h \Phi_1^{h-k} \Phi_2 \vech\left( \left(\Phi_1^ks_{t-1}\right) \left(\Phi_1^{k-1}Bu_t\right)' + \left(\Phi_1^{k-1}Bu_t\right) \left(\Phi_1^ks_{t-1}\right)' \right).
\]
Condition on \(s_{t-1}\) and use \(u_t\perp s_{t-1}\):
\[
\begin{aligned}
\mathbb{E}[L_{t,h}(u_t)\,u_{it}\mid s_{t-1}]
&= \Phi_1^h \left(1_n+\mathsf{G} s_{t-1}\right)\odot b_i \\
&\quad + \sum_{k=1}^h \Phi_1^{h-k} \Phi_2 \vech\left( \left(\Phi_1^ks_{t-1}\right) \left(\Phi_1^{k-1}b_i\right)' + \left(\Phi_1^{k-1}b_i\right) \left(\Phi_1^ks_{t-1}\right)' \right).
\end{aligned}
\]
Taking expectations and using \(\mathbb{E}[s_{t-1}]=0\) gives
\[
\mathbb{E}[L_{t,h}(u_t)\,u_{it}]
= \Phi_1^h b_i
= \Phi_1^h B e_i.
\]

3. Term (c):
Each summand in \(C_{t,h}\) is of the form \(u_{l,t+j}\,u_{mt}\) for \(j\ge1\).  
Since \(u_{l,t+j}\) has zero mean and is independent of \(u_t\),
\[
\mathbb{E}[u_{l,t+j}u_{mt}u_{it}]
= \mathbb{E}[u_{l,t+j}]\;\mathbb{E}[u_{mt}u_{it}]
= 0,
\]
so \(\mathbb{E}[C_{t,h}u_{it}]=0\).

4. Term (d): 
Each summand in \(Q_{t,h}\) is of the form \(u_{lt}u_{mt}\), so multiplied by \(u_{it}\) it is
\(u_{lt}u_{mt}u_{it}\). For zero-mean Gaussian shocks,
\(\mathbb{E}[u_{lt}u_{mt}u_{it}]=0\),
hence \(\mathbb{E}[Q_{t,h}u_{it}]=0\).

Summing (a)--(d) yields for $h\ge 1$,
\[
\boldsymbol{\beta}_h^{(i)}
= \Phi_1^h b_i
= \Phi_1^h B e_i.
\]
For $h = 0$, the linear LP population coefficient is
\begin{equation*}
\begin{aligned}
    \boldsymbol{\beta}_0^{(i)} &= \mathbb{E}[y_{t} u_{it}]\Var(u_{it})^{-1}= \mathbb{E}[y_{t} u_{it}]\\
    & = \mathbb{E}\left[\left(\Phi_1y_{t-1}
  + \Phi_2\,\vech\bigl(s_{t-1}s_{t-1}'\bigr)
  + \bigl(1_n+\mathsf{G} s_{t-1}\bigr)\!\odot (B u_t)\right) u_{it} \right]\\
    & = \Phi_1 \mathbb{E}\left[y_{t-1}u_{it}\right] + \Phi_2 \mathbb{E}\left[\,\vech\bigl(s_{t-1}s_{t-1}'\bigr)u_{it} \right] + \mathbb{E}\left[\bigl(1_n+\mathsf{G} s_{t-1}\bigr)\!\odot \left(B u_t u_{it}\right) \right] \\
    & = b_i
    = B e_i.
\end{aligned}
\end{equation*}

Therefore, the linear LP coefficient is $\forall i, \boldsymbol{\beta}_h^{(i)} = \Phi_1^h b_i = \Phi_1^h B e_i \text{ for } h=0,1,\ldots,$ and the associated coefficient for the $j$-th variable is $\beta_h^{(j,i)} = e_j^{\prime} \Phi_1^h B e_i.$

When the underlying DGP is the VAR(1), the population coefficient is 
\begin{equation*}
\boldsymbol{\beta}_h^{(i)} = \mathbb{E}\left[y_{t+h} u_{it}\right] \Var(u_{it})^{-1} = \Phi_1^h B e_i,
\end{equation*}
which coincides with that of the QVAR(1,1).

\end{proof}

\subsection{Proof of Proposition \ref{prop:irf.multi.infeas}}
\begin{proof}
Based on the structural function, we can separate the contribution of $u_{it}$: 
$$
\begin{aligned}
y_{t+h} &= \Phi_1^h \bigl(1_n + \mathsf{G}\,s_{t-1}\bigr)\odot\bigl(\Sigma_{tr}e_i u_{it}\bigr)\\
&\quad  + \sum_{k=1}^h \Phi_1^{\,h-k}\,\Phi_2\,
    \vech\Bigl(
      \bigl(\Phi_1^k s_{t-1}\bigr)\bigl(\Phi_1^{\,k-1}\Sigma_{tr}e_i u_{it}\bigr)'
      + \bigl(\Phi_1^{\,k-1}\Sigma_{tr}e_i u_{it}\bigr)\bigl(\Phi_1^k s_{t-1}\bigr)'
    \Bigr)\\
&\quad
  + \sum_{k=1}^h \Phi_1^{\,h-k}\,\Phi_2\,
    \vech\Bigl(
      \bigl(\Phi_1^{\,k-1}\Sigma_{tr}e_i u_{it}\bigr)\bigl(\Phi_1^{\,k-1}\Sigma_{tr}e_i u_{it}\bigr)'
    \Bigr) + r^{(i)}_{h,t+h}.
\end{aligned}
$$
where the remainder term $r^{(i)}_{h,t+h}$ contains:
(1) terms linear in $u_{\ell t}$ for $\ell \neq i$;
(2) quadratic terms $u_{\ell t}^2$ for $\ell \neq i$;
(3) cross-product terms $u_{i t} u_{\ell t}$ for $\ell \neq i$;
(4) cross-product terms $u_{\ell t} u_{m t}$ for $\ell, m \neq i$;
(5) cross terms $s_{t-1} u_{\ell t}$ for $\ell \neq i$;
(6) purely past-dependent terms involving only $s_{t-1}$ or the $R_{t,h}$ component;
(7) terms involving future shocks.

Define the residual $\epsilon^{(i)}_{h,t+h} = r^{(i)}_{h,t+h} - \mathbb{E}[r^{(i)}_{h,t+h}]$ and intercept $\kappa_{h0}^{(i)} = \mathbb{E}[r^{(i)}_{h,t+h}]$. We further collect the terms $u_{it}$, $s_{t-1}u_{it}$, and $u_{it}^2$, and thus write $y_{t+h}$ equivalently as:
\be
y_{t+h} = \kappa_{h0}^{(i)} + \kappa_{h1}^{(i)} u_{it} + {K_{h2}^{(i)}} s_{t-1} u_{it} + \kappa_{h3}^{(i)} u_{it}^2+ \epsilon^{(i)}_{h,t+h},
\ee
where $\kappa_{h1}^{(i)} = \left(\kappa_{h1}^{(1,i)}, \ldots, \kappa_{h1}^{(n,i)}\right)^{\prime},$ $K_{h2}^{(i)}\in\mathbb R^{n\times n}$ is the matrix whose $j$-th row is ${\kappa_{h2}^{(j,i)}}^{\prime}$, and $\kappa_{h3}^{(i)} = \left(\kappa_{h3}^{(1,i)}, \ldots, \kappa_{h3}^{(n,i)}\right)^{\prime}.$

Since $u_{it}$s are iid standard normal shocks, it is easy to verify that  $$\mathbb{E}[u_{it}\epsilon^{(i)}_{h,t+h}]=\mathbb{E}[s_{t-1}u_{it}\epsilon^{(i)}_{h,t+h}]=\mathbb{E}[u_{it}^2\epsilon^{(i)}_{h,t+h}]=0.$$ Then the population coefficients of \textit{Infeas} coincide with $\kappa_{h1}^{(i)}$, $K_{h2}^{(i)}$, and $\kappa_{h3}^{(i)}$.
 
Finally, note that the implied IRF is
\[
\begin{aligned}
\text{IRF}_{j,i}^{Infeas}(s,\delta_i; h)
& = e_j'\Bigg[
\Phi_1^h (1_n +\mathsf{G}\,s)\odot(\Sigma_{tr} \delta_i e_i) \\
& \qquad + \sum_{k=1}^h\Phi_1^{h-k}\,\Phi_2\,\vech\bigl((\Phi_1^k s)(\Phi_1^{\,k-1}\Sigma_{tr}\,\delta_i e_i)' + (\Phi_1^{\,k-1}\Sigma_{tr} \delta_i e_i)(\Phi_1^k s)'\bigr) \\
& \qquad + \sum_{k=1}^h\Phi_1^{h-k}\Phi_2\,\vech\Bigl(\delta_i^2 (\Phi_1^{\,k-1}\Sigma_{tr} e_i)(\Phi_1^{\,k-1}\Sigma_{tr} e_i)'\Bigr)
\Bigg].
\end{aligned}
\]
One can see that the implied IRF exactly recovers the true CAR.
\end{proof}

\subsection{Lemma \ref{lemma:subvector.predictor}}
\begin{lemma} \label{lemma:subvector.predictor}
Under Assumption \ref{assump:qvar.dgp},
\[
\mathbb E\!\left[s_{t-1}\,\big|\,s_{t-1,I}=c_0\right]
=\mathbb E[s_{t-1}s_{t-1,I}']\,\mathbb E[s_{t-1,I}\, s_{t-1,I}']^{-1}c_0.
\]    
\end{lemma}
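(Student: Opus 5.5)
The plan is to exploit that, under Assumption~\ref{assump:qvar.dgp}, the state process is a stable Gaussian VAR(1), $s_t=\Phi_1 s_{t-1}+\Sigma_{tr}u_t$ with $u_t\overset{\text{i.i.d.}}{\sim}\mathcal N(0,I_n)$ and $\rho(\Phi_1)<1$. Since the process is initialized in the infinite past, $s_{t-1}=\sum_{k=0}^{\infty}\Phi_1^k\Sigma_{tr}u_{t-1-k}$. This series converges in $L^2$ (indeed almost surely) because $\rho(\Phi_1)<1$ makes $\sum_k\|\Phi_1^k\|^2$ finite.

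The first step is to show that $s_{t-1}$ is exactly multivariate normal with mean zero. The cleanest route is characteristic functions: each partial sum is a linear map of finitely many i.i.d.\ standard normal vectors, hence Gaussian with mean zero and covariance $\sum_{k=0}^{K}\Phi_1^k\Sigma\Phi_1^{k\prime}$. An $L^2$ limit of zero-mean Gaussian vectors is Gaussian, with covariance $\Gamma_0=\sum_{k\ge0}\Phi_1^k\Sigma\Phi_1^{k\prime}$. It follows that $\EE[s_{t-1}]=0$ and $\EE[s_{t-1}s_{t-1}']=\Gamma_0$. This limit step is the only place any care is needed, and I expect it to be the main (though mild) obstacle.

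The second step applies the standard Gaussian conditioning formula to the partition $(s_{t-1,I},s_{t-1,I^c})$. Let $S_I$ denote the selection matrix with $s_{t-1,I}=S_Is_{t-1}$. I need $\Var(s_{t-1,I})=S_I\Gamma_0S_I'$ to be invertible, and this holds because $\Gamma_0\succeq\Sigma\succ0$ (the $k=0$ term is already positive definite). The conditional distribution of $s_{t-1}$ given $s_{t-1,I}=c_0$ is then normal with mean
\[
\EE[s_{t-1}]+\Cov(s_{t-1},s_{t-1,I})\Var(s_{t-1,I})^{-1}\bigl(c_0-\EE[s_{t-1,I}]\bigr)=\EE[s_{t-1}s_{t-1,I}']\,\EE[s_{t-1,I}s_{t-1,I}']^{-1}c_0,
\]
where the equality uses the zero mean. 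To make the step self-contained, I would define $\xi=s_{t-1}-\Gamma_0S_I'(S_I\Gamma_0S_I')^{-1}s_{t-1,I}$ and check that it is uncorrelated with $s_{t-1,I}$. Since $(\xi,s_{t-1,I})$ is jointly Gaussian, $\xi$ is then independent of $s_{t-1,I}$, so $\EE[\xi\mid s_{t-1,I}]=\EE[\xi]=0$. This identity pins down the regular conditional expectation for $\mathbb P_{s_{t-1,I}}$-almost every $c_0$. Because the right-hand side is continuous in $c_0$, I take it as the version used in the footnote convention defining $\EE[\cdot\mid s_{t-1}\in\mathcal A]$.

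The argument never needs stationarity beyond the moving-average representation, and it never involves the $y$-equation, since only $s_t$ enters the lemma.
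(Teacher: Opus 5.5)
Your proposal is correct and follows essentially the same route as the paper. Both arguments residualize $s_{t-1}$ on $s_{t-1,I}$ using the coefficient $\Cov(s_{t-1},s_{t-1,I})\Var(s_{t-1,I})^{-1}$, use joint Gaussianity with mean zero to upgrade uncorrelatedness to independence, and read off the conditional mean as that coefficient times $c_0$. Your additional steps make explicit what the paper takes for granted: Gaussianity of the MA$(\infty)$ limit, invertibility of $\Var(s_{t-1,I})$ via $\Gamma_0\succeq\Sigma\succ 0$, and the almost-everywhere/continuous-version remark.
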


\begin{oldproof}
Define
\[
B:=\Cov(s_{t-1},s_{t-1,I})\,\Var(s_{t-1,I})^{-1}.
\]
By construction,
\[
\Cov\!\big(s_{t-1}-B\,s_{t-1,I},\,s_{t-1,I}\big)
=\Cov(s_{t-1},s_{t-1,I})-B\,\Var(s_{t-1,I})=0.
\]

Under Assumption~\ref{assump:qvar.dgp}, $(s_{t-1},s_{t-1,I})$ is jointly Gaussian with mean zero. 
For jointly Gaussian vectors, uncorrelatedness implies independence; hence 
$s_{t-1}-B\,s_{t-1,I}$ is independent of $s_{t-1,I}$. Therefore,
\[
\mathbb E[s_{t-1}\mid s_{t-1,I}=c_0]
=\mathbb E[s_{t-1}-B\,s_{t-1,I}\mid s_{t-1,I}=c_0]+B\,c_0
=\mathbb E[s_{t-1}-B\,s_{t-1,I}]+B\,c_0.
\]
Using $\mathbb E[s_{t-1}]=\mathbb E[s_{t-1,I}]=0$ gives 
$\mathbb E[s_{t-1}-B\,s_{t-1,I}]=0$, and thus
\[
\mathbb E[s_{t-1}\mid s_{t-1,I}=c_0]
= B\,c_0
= \Cov(s_{t-1},s_{t-1,I})\,\Var(s_{t-1,I})^{-1} c_0.
\]

Finally, since means are zero, we have
$$\Cov(s_{t-1},s_{t-1,I})=\mathbb E[s_{t-1}\,s_{t-1,I}'], \quad \Var(s_{t-1,I})=\mathbb E[s_{t-1,I}\,s_{t-1,I}'],$$ yielding the stated identity.
\end{oldproof}

\subsection{Proof of Proposition \ref{prop:ccar.multi.infeas}}
\begin{proof}
    Proposition \ref{prop:irf.multi.infeas} implies that we can equivalently write the model as 
\be
y_{j,t+h} = \kappa_{h0}^{(j,i)} + \kappa_{h1}^{(j,i)} u_{it} + {\kappa_{h2}^{(j,i)}}^{\prime} s_{t-1} u_{it} + \kappa_{h3}^{(j,i)} u_{it}^2+ \epsilon^{(j,i)}_{h,t+h},
\label{eq:equiv.multi.model}
\ee
with $\epsilon^{(j,i)}_{h,t+h}$ uncorrelated with the regressors.
The corresponding IRF is equal to $$ \operatorname{CAR}^{(j,i)}_h(s, \delta_i) =  \kappa_{h1}^{(j,i)} \delta_i + {\kappa_{h2}^{(j,i)}}^{\prime} s \delta_i + \kappa_{h3}^{(j,i)} \delta_i^2.$$ 

{\noindent \textbf{Proof of (i)}\quad } Under case (i), we can calculate the conditional CAR as
\begin{equation*}
\begin{aligned}
    &\operatorname{cCAR}^{(j,i)}_h(\mathcal{A}, \delta_i) = \frac{\mathbb{E}\left[\mathbbm{1}\{s_{t-1}\in \mathcal{A}\} \operatorname{CAR}^{(j,i)}_h(s_{t-1}, \delta_i)\right]}{\mathbb{P}\left(s_{t-1}\in \mathcal{A}\right)}\\
    &\quad \quad \quad= \frac{1}{\mathbb{P}\left(s_{t-1}\in \mathcal{A}\right)} \cdot \int \mathbbm{1}\{s_{t-1}\in \mathcal{A}\} p\left(s_{t-1}\right) \operatorname{CAR}^{(j,i)}_h(s_{t-1}, \delta_i)\ d s_{t-1},\\
    &\quad \quad \quad= \frac{1}{\mathbb{P}\left(s_{t-1}\in \mathcal{A}\right)} \cdot \int \mathbbm{1}\{s_{t-1}\in \mathcal{A}\} p\left(s_{t-1}\right) \left(\kappa_{h1}^{(j,i)} \delta_i + {\kappa_{h2}^{(j,i)}}^{\prime} s_{t-1} \delta_i + \kappa_{h3}^{(j,i)} \delta_i^2\right) d s_{t-1}, \\
    &\quad \quad \quad= \frac{\left(\kappa_{h1}^{(j,i)} \mathbb{E}\left[\mathbbm{1}\{s_{t-1}\in \mathcal{A}\}\right] + {\kappa_{h2}^{(j,i)}}^{\prime} \mathbb{E}\left[s_{t-1} \mathbbm{1}\{s_{t-1}\in \mathcal{A}\}\right] \right) \delta_i + \kappa_{h3}^{(j,i)} \mathbb{E}\left[\mathbbm{1}\{s_{t-1}\in \mathcal{A}\}\right] \delta_i^2}{\mathbb{P}\left(s_{t-1}\in \mathcal{A}\right)}.
\end{aligned}
\end{equation*}

Denote $V_t=\left(1, \mathbbm{1}\{s_{t-1}\in \mathcal{A}\}u_{it}, u_{it}^2 \right)^{\prime}$ and $\xi_h = \left(\xi_{h0}^{(j,i)}, \xi_{h1}^{(j,i)}, \xi_{h2}^{(j,i)}\right)^{\prime}.$
Then the population IRF implied by \textit{Infeas-Cond1} is 
\be
\begin{aligned}
    \xi_h  & =  \mathbb{E}[V_tV_t^{\prime}]^{-1} \mathbb{E}[V_t y_{j,t+h}]\\
    & = \begin{pmatrix}
        1 & 0 & 1\\
        0 & \mathbb{E}[\mathbbm{1}\{s_{t-1}\in \mathcal{A}\}] & 0\\
        1 & 0 & 3
    \end{pmatrix}^{-1} \begin{pmatrix}
        \mathbb{E}\left[y_{j,t+h}\right]\\
        \mathbb{E}\left[\mathbbm{1}\{s_{t-1}\in \mathcal{A}\}u_{it} y_{j,t+h}\right] \\
        \mathbb{E}\left[u_{it}^2 y_{j,t+h}\right]
    \end{pmatrix} \\
\end{aligned}
\ee

Since (\ref{eq:equiv.multi.model}) implies that
\begin{gather*}
    \mathbb{E}\left[y_{j,t+h}\right] = \kappa_{h0}^{(j,i)}+ \kappa_{h3}^{(j,i)},\\
    \mathbb{E}\left[\mathbbm{1}\{s_{t-1}\in \mathcal{A}\}u_{it} y_{j,t+h} \right] = \kappa_{h1}^{(j,i)} \mathbb{E}\left[\mathbbm{1}\{s_{t-1}\in \mathcal{A}\}\right] + {\kappa_{h2}^{(j,i)}}^{\prime} \mathbb{E}\left[s_{t-1} \mathbbm{1}\{s_{t-1}\in \mathcal{A}\}\right],\\
     \mathbb{E}\left[u_{it}^2 y_{j,t+h}\right] = \kappa_{h0}^{(j,i)}+ 3\kappa_{h3}^{(j,i)},
\end{gather*}
we obtain that
\begin{gather*}
     \xi_{h0}^{(j,i)} = \kappa_{h0}^{(j,i)},\\
     \xi_{h1}^{(j,i)} = \mathbb{E}\left[\mathbbm{1}\{s_{t-1}\in \mathcal{A}\}\right]^{-1}\left(\kappa_{h1}^{(j,i)} \mathbb{E}\left[\mathbbm{1}\{s_{t-1}\in \mathcal{A}\}\right] + {\kappa_{h2}^{(j,i)}}^{\prime} \mathbb{E}\left[s_{t-1} \mathbbm{1}\{s_{t-1}\in \mathcal{A}\}\right]\right),\\
    \xi_{h2}^{(j,i)} = \kappa_{h3}^{(j,i)}.
\end{gather*}

Since $\mathbb{E}\left[\mathbbm{1}\{s_{t-1}\in \mathcal{A}\}\right] = \mathbb{P}\left(s_{t-1}\in \mathcal{A}\right)$, we can represent the conditional CAR as
\begin{equation*}
    \operatorname{cCAR}^{(j,i)}_h(\mathcal{A}, \delta_i) = \xi_{h1}^{(j,i)} \delta_i + \xi_{h2}^{(j,i)} \delta_i^2.
\end{equation*}

{\noindent \textbf{Proof of (ii)}\quad } 
Under case (ii), we can calculate the conditional CAR as
$$
\begin{aligned}
\operatorname{cCAR}^{(j,i)}_h(c_0, \delta_i) & = \mathbb{E}\left[\operatorname{CAR}^{(j,i)}_h(s_{t-1}, \delta_i)\Big| s_{t-1,I}= c_0\right] \\
& = \kappa_{h1}^{(j,i)} \delta_i + {\kappa_{h2}^{(j,i)}}^{\prime}\, \mathbb{E}\left[s_{t-1}| s_{t-1,I}= c_0\right]\, \delta_i + \kappa_{h3}^{(j,i)} \delta_i^2
\end{aligned}
$$

Consider the following (infeasible) empirical specification \textit{Infeas-Cond2}
\be
y_{j,t+h} = \zeta_{h0}^{(j,i)} + \zeta_{h1}^{(j,i)} u_{it} + \zeta_{h2}^{(j,i)} s_{t-1,I}\,u_{it} + \zeta_{h3}^{(j,i)} u_{it}^2+ \epsilon^{(j,i)}_{h,t+h}.
\ee
The conditional CAR can be recovered from the population coefficients of \textit{Infeas-Cond2}:
\begin{equation*}
    \operatorname{cCAR}^{(j,i)}_h(c_0, \delta_i) = \zeta_{h1}^{(j,i)} \delta_i + {\zeta_{h2}^{(j,i)}}^{\prime}c_0\,\delta_i + \zeta_{h3}^{(j,i)} \delta_i^2.
\end{equation*}

Denote $U_t=\left(1, u_{it}, s^{\prime}_{t-1,I}\, u_{it}, u_{it}^2 \right)^{\prime}$ and $\zeta_h = \left(\zeta_{h0}^{(j,i)}, \zeta_{h1}^{(j,i)}, {\zeta_{h2}^{(j,i)}}^{\prime}, \zeta_{h3}^{(j,i)}\right)^{\prime}.$
Then the population IRF implied by \textit{Infeas-Cond2} is 
\be
\begin{aligned}
    \zeta_h  & =  \mathbb{E}[U_tU_t^{\prime}]^{-1} \mathbb{E}[U_t y_{j,t+h}]\\
    & = \begin{pmatrix}
        1 & 0 & 0 & 1\\
        0 & 1 & 0 & 0\\
        0 & 0 & \mathbb{E}[s_{t-1,I}\, s^{\prime}_{t-1,I}] & 0\\
        1 & 0 & 0 & 3
    \end{pmatrix}^{-1} \begin{pmatrix}
        \mathbb{E}\left[y_{j,t+h}\right]\\
        \mathbb{E}\left[u_{it} y_{j,t+h}\right]\\
        \mathbb{E}\left[s_{t-1,I}\, u_{it}\, y_{j,t+h}\right] \\
        \mathbb{E}\left[u_{it}^2 y_{j,t+h}\right]
    \end{pmatrix} \\
\end{aligned}
\ee

Since (\ref{eq:equiv.multi.model}) implies that
\begin{gather*}
    \mathbb{E}\left[y_{j,t+h}\right] = \kappa_{h0}^{(j,i)}+ \kappa_{h3}^{(j,i)},\\
    \mathbb{E}\left[u_{it}y_{j,t+h}\right] = \kappa_{h1}^{(j,i)},\\
    \mathbb{E}\left[s_{t-1,I}\, u_{it} y_{j,t+h} \right] = \mathbb{E}\left[s_{t-1,I}\, s_{t-1}^{\prime}\right] \kappa_{h2}^{(j,i)},\\
     \mathbb{E}\left[u_{it}^2 y_{j,t+h}\right] = \kappa_{h0}^{(j,i)}+ 3\kappa_{h3}^{(j,i)},
\end{gather*}
we obtain that
\begin{gather*}
     \zeta_{h0}^{(j,i)} = \kappa_{h0}^{(j,i)},\\
     \zeta_{h1}^{(j,i)} = \kappa_{h1}^{(j,i)},\\
     \zeta_{h2}^{(j,i)} = \mathbb{E}\left[s_{t-1,I}\, s_{t-1,I}^{\prime}\right]^{-1}\, \mathbb{E}\left[s_{t-1,I}\, s_{t-1}^{\prime}\right]\, \kappa_{h2}^{(j,i)},\\
     \zeta_{h3}^{(j,i)} = \kappa_{h3}^{(j,i)}.
\end{gather*}

Lemma \ref{lemma:subvector.predictor} implies that $$ {\zeta_{h2}^{(j,i)}}^{\prime} c_0= {\kappa_{h2}^{(j,i)}}^{\prime}\, \mathbb E\!\left[s_{t-1}\big|\,s_{t-1,I}=c_0\right].$$
Then we can represent the conditional CAR as
\begin{equation*}
    \operatorname{cCAR}^{(j,i)}_h(c_0, \delta_i) = \zeta_{h1}^{(j,i)} \delta_i + {\zeta_{h2}^{(j,i)}}^{\prime} c_0\,\delta_i + \zeta_{h3}^{(j,i)} \delta_i^2.
\end{equation*}
\end{proof}

\subsection{Lemma \ref{lem:qar.regularity}}

\begin{lemma}[Stable QVAR implies geometric physical dependence]\label{lem:qar.regularity}
Suppose Assumptions~\ref{assump:qvar.dgp} and \ref{ass:feas.regularity} hold.
For each integer $m\ge 1$, let $\{u_r^\ast\}_{r\in\mathbb Z}$ be an i.i.d. copy of $\{u_r\}_{r\in\mathbb Z}$ and construct the coupled innovation sequence
\[
u_r^{(m)}=
\begin{cases}
u_r, & r\neq t-m,\\
u_r^\ast, & r=t-m.
\end{cases}
\]
Let $(y_r^{(m)},s_r^{(m)})$ be the QVAR process driven by $\{u_r^{(m)}\}$, and define
$z_{t-1}^{(m)}$, $W_{t-1}^{(m)}$, $x_{it}^{(m)}$, $\epsilon_{h,t+h}^{(m)}$, and $\psi_{ht}^{(m)}$ analogously.
Then, for every $q\ge 1$, there exist constants $C_q<\infty$ and $\rho_q\in(0,1)$ such that for all $m\ge 1$,
\[
\|s_t-s_t^{(m)}\|_q + \|y_t-y_t^{(m)}\|_q \le C_q \rho_q^m,
\]
and
\[
\|z_{t-1}-z_{t-1}^{(m)}\|_q
+
\|W_{t-1}-W_{t-1}^{(m)}\|_q
+
\|x_{it}-x_{it}^{(m)}\|_q
+
\|\epsilon_{h,t+h}-\epsilon_{h,t+h}^{(m)}\|_q
+
\|\psi_{ht}-\psi_{ht}^{(m)}\|_q
\le
C_q \rho_q^m.
\]
In particular, $(y_t,s_t)$, $x_{it}$, and, for each fixed $h$, the shifted score sequence $\bar\psi_{ht}\equiv \psi_{h,t-h}$ are stationary ergodic causal Bernoulli shifts (in the sense of \citet{Wu2005}) with finite moments of all orders and absolutely summable physical-dependence coefficients.
\end{lemma}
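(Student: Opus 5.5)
The plan is to exploit the recursively linear (pruned) structure of the QVAR: the state is a linear Gaussian VAR, and the outcome is a linear VAR driven by polynomial functions of the state. The proof would run in four steps.

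\textbf{Step 1 (state).} Iterating $s_t=\Phi_1 s_{t-1}+Bu_t$ gives $s_t=\sum_{k\ge0}\Phi_1^kBu_{t-k}$. Replacing $u_{t-m}$ by $u_{t-m}^\ast$ changes only one summand, so
\[
s_t-s_t^{(m)}=\Phi_1^mB(u_{t-m}-u_{t-m}^\ast).
\]
Because $\rho(\Phi_1)<1$, there are $C<\infty$ and $\bar\rho\in(\rho(\Phi_1),1)$ with $\|\Phi_1^m\|\le C\bar\rho^m$. Gaussianity then gives $\|s_t-s_t^{(m)}\|_q\le C_q\bar\rho^m$ for every $q$. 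Moreover, $s_t$ and its coupled version have finite moments of all orders.

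\textbf{Step 2 (outcome).} Write $y_t=\sum_{k\ge0}\Phi_1^k v_{t-k}$ with
\[
v_r\equiv \Phi_2\vech(s_{r-1}s_{r-1}')+(1_n+\mathsf{G}s_{r-1})\odot(Bu_r).
\]
This series converges in every $L^q$, since $v_r$ is stationary with all moments finite. For $v_r-v_r^{(m)}$ there are two cases:
\begin{itemize}
\item for $r<t-m$, the difference is $0$;
\item for $r\ge t-m$, the difference is a sum of products of $(s_{r-1}-s_{r-1}^{(m)})$ or $(u_r-u_r^{(m)})$ with factors having all moments.
\end{itemize}
Hölder's inequality and Step 1 then give $\|v_r-v_r^{(m)}\|_q\le C_q\bar\rho^{\,r-1-(t-m)}$ for $r>t-m$, and a constant bound at $r=t-m$. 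Hence
\[
\|y_t-y_t^{(m)}\|_q\le \sum_{k=0}^{m} C\bar\rho^k\, C_q\bar\rho^{\,m-k-1}\lesssim m\bar\rho^m.
\]
Absorbing the factor $m$ by enlarging $\bar\rho$ to some $\rho_q<1$ yields the first display of Lemma~\ref{lem:qar.regularity}.

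\textbf{Step 3 (proxies, controls, regressors, residual, score).} By Assumption~\ref{ass:feas.regularity}, $z_{t-1}$ and $W_{t-1}$ are finite-lag ($L$ lags) linear filters of $y$ and $u$. Hence
\[
\|z_{t-1}-z_{t-1}^{(m)}\|_q\le \sum_{\ell\le L}\|A_{z,\ell}\|\,C_q\rho_q^{m-\ell},
\]
where the $u$-terms vanish once $m>L$ and are trivially bounded for $m\le L$. The same bound holds for $W_{t-1}$. The remaining objects are handled as follows:
\begin{itemize}
\item $x_{it}$ collects polynomials of degree at most two in $(u_{it},z_{t-1},W_{t-1})$, so Hölder transfers the geometric rate.
\item $\epsilon_{h,t+h}=y_{j,t+h}-x_{it}'\vartheta_h$ with $\vartheta_h$ fixed. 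Applying Step 2 at time $t+h$ gives distance $m+h$ from the perturbation, which is still geometric in $m$.
\item $\psi_{ht}=x_{it}\epsilon_{h,t+h}$ is a product, so Hölder's inequality with doubled moment orders $2q$ gives the bound.
\end{itemize}
Here $\vartheta_h$ is well defined because $Q_h$ is positive definite and all moments are finite.

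\textbf{Step 4 (Bernoulli shift).} Every object is a measurable function of $(u_r)_{r\le t}$, or of $(u_r)_{r\le t+h}$ for $\psi_{ht}$. This is why the shift $\bar\psi_{ht}=\psi_{h,t-h}$ is used: it makes the score causal. Stationarity and ergodicity follow because these are time-invariant functions of an i.i.d. sequence. Absolute summability of the physical-dependence coefficients is immediate from the geometric bounds.

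I expect Step 2 to be the main obstacle. The difficulty is bookkeeping the convolution of the two geometric rates, with the polynomial factor $m$ absorbed into $\rho_q$. It also requires keeping the uniform $L^q$ moment bounds on products explicit, and doing so uniformly in $m$ for the coupled process.
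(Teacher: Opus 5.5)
Your proposal is correct and follows the paper's proof essentially step for step. It uses the same moving-average representation $y_t=\sum_{\ell\ge0}\Phi_1^\ell G_{t-\ell}$ (your $v_r$ is the paper's $G_r$), the same three-way case split giving $\|y_t-y_t^{(m)}\|_q\le C_q(mr^m+r^m)$ with the factor $m$ absorbed into $\rho_q$, the same H\"older and finite-lag-filter transfer to $z_{t-1}$, $W_{t-1}$, $x_{it}$, $\epsilon_{h,t+h}$ and $\psi_{ht}$, and the same Bernoulli-shift conclusion (like the paper, you leave implicit that the finitely many physical-dependence coefficients of $\bar\psi_{ht}$ from perturbing $u_{t-h},\ldots,u_t$ are only bounded by constants, which is harmless for summability).
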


\begin{proof}
Fix $q\ge 1$ and choose a matrix norm. Since $\rho(\Phi_1)<1$, there exist constants $C_\Phi<\infty$ and $r\in(0,1)$ such that
\[
\|\Phi_1^\ell\|\le C_\Phi r^\ell
\qquad\text{for all }\ell\ge 0.
\]
Write $\eta_t=\Sigma_{tr}u_t$ and $\eta_t^\ast=\Sigma_{tr}u_t^\ast$.
The state recursion admits the stationary representation
\[
s_t=\sum_{\ell=0}^{\infty}\Phi_1^\ell \eta_{t-\ell},
\]
which converges almost surely and in $L^q$ because $\sum_{\ell\ge 0}\|\Phi_1^\ell\|<\infty$ and $\eta_t\in L^q$.
Since $u_t$ is Gaussian, $s_t$ is Gaussian and therefore has finite moments of all orders.

Define
\[
G_t
\equiv
\Phi_2\vech(s_{t-1}s_{t-1}')
+
\diag(1_n+\mathsf{G} s_{t-1})\,\eta_t.
\]
Then $G_t\in L^q$ for every $q\ge 1$.
The $y_t$ recursion can therefore be iterated backward to obtain
\begin{equation}
y_t=\sum_{\ell=0}^{\infty}\Phi_1^\ell G_{t-\ell},
\label{eq:y_rep_qvar_gmc}
\end{equation}
with convergence almost surely and in $L^q$.
Hence $y_t$ also has finite moments of all orders.

Because only the innovation at time $t-m$ is replaced,
\[
s_t-s_t^{(m)}
=
\Phi_1^m(\eta_{t-m}-\eta_{t-m}^\ast).
\]
Hence
\[
\|s_t-s_t^{(m)}\|_q
\le
\|\Phi_1^m\|\,\|\eta_{0}-\eta^\ast_{0}\|_q
\le
C_q r^m.
\]

Let
\[
G_t^{(m)}
\equiv
\Phi_2\vech(s_{t-1}^{(m)}{s_{t-1}^{(m)}}')
+
\diag(1_n+\mathsf{G} s_{t-1}^{(m)})\,\eta_t^{(m)},
\]
where $\eta_t^{(m)}=\Sigma_{tr}u_t^{(m)}$.
Using \eqref{eq:y_rep_qvar_gmc},
\[
y_t-y_t^{(m)}
=
\sum_{\ell=0}^{\infty}\Phi_1^\ell\bigl(G_{t-\ell}-G_{t-\ell}^{(m)}\bigr).
\]
If $0\le \ell\le m-1$, then $\eta_{t-\ell}^{(m)}=\eta_{t-\ell}$, so only the state term differs:
\[
G_{t-\ell}-G_{t-\ell}^{(m)}
=
\Phi_2\!\left\{\vech(s_{t-\ell-1}s_{t-\ell-1}')-\vech\bigl(s_{t-\ell-1}^{(m)}{s_{t-\ell-1}^{(m)}}'\bigr)\right\}
+
\diag\!\bigl(\mathsf{G}(s_{t-\ell-1}-s_{t-\ell-1}^{(m)})\bigr)\eta_{t-\ell}.
\]
Because
\[
aa'-bb' = a(a-b)' + (a-b)b',
\]
H\"older's inequality yields
\[
\Big\|\vech(s_{t-\ell-1}s_{t-\ell-1}')-\vech\bigl(s_{t-\ell-1}^{(m)}{s_{t-\ell-1}^{(m)}}'\bigr)\Big\|_q
\le
C_q\|s_{t-\ell-1}-s_{t-\ell-1}^{(m)}\|_{2q},
\]
and similarly
\[
\Big\|\diag\bigl(\mathsf{G}(s_{t-\ell-1}-s_{t-\ell-1}^{(m)})\bigr)\eta_{t-\ell}\Big\|_q
\le
C_q\|s_{t-\ell-1}-s_{t-\ell-1}^{(m)}\|_{2q}.
\]
Applying the bound for the state process at time $t-\ell-1$ gives
\[
\|G_{t-\ell}-G_{t-\ell}^{(m)}\|_q \le C_q r^{m-\ell}
\qquad (0\le \ell\le m-1).
\]
For $\ell=m$, the state at time $t-m-1$ is unchanged, so
\[
G_{t-m}-G_{t-m}^{(m)}
=
\diag(1_n+\mathsf{G} s_{t-m-1})\,(\eta_{t-m}-\eta_{t-m}^\ast),
\]
and therefore
\[
\|G_{t-m}-G_{t-m}^{(m)}\|_q \le C_q.
\]
If $\ell\ge m+1$, then the innovation at time $t-m$ cannot affect $G_{t-\ell}$, so
\[
G_{t-\ell}=G_{t-\ell}^{(m)}.
\]
Therefore
\[
\|y_t-y_t^{(m)}\|_q
\le
\sum_{\ell=0}^{m-1}\|\Phi_1^\ell\|\, C_q r^{m-\ell}
+
\|\Phi_1^m\|\,C_q
\le
C_q\bigl(mr^m+r^m\bigr).
\]
Choose any $\rho_q\in(r,1)$. Since $mr^m\le C_{\rho_q}\rho_q^m$, this implies
\[
\|y_t-y_t^{(m)}\|_q \le C_q \rho_q^m.
\]

Now use the finite-lag linear representations of $z_{t-1}$ and $W_{t-1}$ from Assumption~\ref{ass:feas.regularity}. Because only finitely many lagged $y$'s and $u$'s enter, the previous bounds imply
\[
\|z_{t-1}-z_{t-1}^{(m)}\|_q + \|W_{t-1}-W_{t-1}^{(m)}\|_q \le C_q \rho_q^m,
\]
after enlarging $C_q$ if necessary to cover the finitely many cases $m\le L$.
Since current $u_{it}$ is unchanged when $m\ge 1$,
\[
x_{it}-x_{it}^{(m)}
=
\Bigl(0,\;0,\;u_{it}(z_{t-1}-z_{t-1}^{(m)})',\;0,\;(W_{t-1}-W_{t-1}^{(m)})'\Bigr)',
\]
so H\"older's inequality gives
\[
\|x_{it}-x_{it}^{(m)}\|_q \le C_q \rho_q^m.
\]

Because $h$ is fixed, applying the bound for $y_t$ at time $t+h$ gives
\[
\|y_{j,t+h}-y_{j,t+h}^{(m)}\|_q \le C_q \rho_q^{m+h}\le C_q \rho_q^m,
\]
after another relabeling of the constant.
Therefore
\[
\epsilon_{h,t+h}-\epsilon_{h,t+h}^{(m)}
=
\bigl(y_{j,t+h}-y_{j,t+h}^{(m)}\bigr)
-
\bigl(x_{it}-x_{it}^{(m)}\bigr)'\vartheta_h,
\]
and hence
\[
\|\epsilon_{h,t+h}-\epsilon_{h,t+h}^{(m)}\|_q \le C_q \rho_q^m.
\]
Finally,
\[
\psi_{ht}-\psi_{ht}^{(m)}
=
(x_{it}-x_{it}^{(m)})\epsilon_{h,t+h}
+
x_{it}^{(m)}\bigl(\epsilon_{h,t+h}-\epsilon_{h,t+h}^{(m)}\bigr),
\]
so another application of H\"older's inequality gives
\[
\|\psi_{ht}-\psi_{ht}^{(m)}\|_q \le C_q \rho_q^m.
\]

The representations above show that $(y_t,s_t)$ and $x_{it}$ are measurable functions of the i.i.d. shock sequence $\{u_r\}_{r\le t}$, while $\psi_{ht}$ depends only on the finite lead block $u_{t+1},\ldots,u_{t+h}$ in addition to the past and current shocks. Hence, for fixed $h$, the shifted process $\bar\psi_{ht}\equiv\psi_{h,t-h}$ is a causal Bernoulli shift in the sense of \citet{Wu2005}. Writing $\bar\psi_{ht}^{(m)}\equiv \psi_{h,t-h}^{(m)}$, its physical-dependence coefficients are $\delta_q(m)\equiv \|\bar\psi_{ht}-\bar\psi_{ht}^{(m)}\|_q$, which are absolutely summable because the coupled $L^q$ differences are summable. Stationarity and ergodicity follow because the process is a measurable function of the i.i.d. shock sequence.
\end{proof}

\subsection{Proof of Proposition \ref{prop:hac.feas}}
\begin{proof}
Fix $h$ and a pair $(j,i)$.
Let $T_h\equiv T-h$, and write the population regression as
\[
y_{j,t+h}=x_{it}'\vartheta_h+\epsilon_{h,t+h},
\qquad
\psi_{ht}\equiv x_{it}\epsilon_{h,t+h}.
\]
By construction of the population linear projection, $\EE[\psi_{ht}]=0$.

Define the shifted score sequence
\[
\bar\psi_{ht}\equiv \psi_{h,t-h},
\qquad t\in\mathbb Z.
\]
Because $h$ is fixed, the asymptotics of
\[
\frac{1}{\sqrt{T_h}}\sum_{t=1}^{T_h}\psi_{ht}
\qquad\text{and}\qquad
\frac{1}{\sqrt{T_h}}\sum_{t=h+1}^{T}\bar\psi_{ht}
\]
are identical.
By Lemma~\ref{lem:qar.regularity}, for every $q>4$ the process $\{\bar\psi_{ht}\}$ is a stationary ergodic causal Bernoulli shift with finite $q$-th moment and physical-dependence coefficients
\[
\delta_q(m)\equiv \|\bar\psi_{ht}-\bar\psi_{ht}^{(m)}\|_q
\]
satisfying $\delta_q(m)\le C_q\rho_q^m$.
Let $\mathcal F_t\equiv \sigma(u_s:s\le t)$ and $P_j Z\equiv \EE[Z\mid\mathcal F_j]-\EE[Z\mid\mathcal F_{j-1}]$. For any conformable vector $a$, define the scalar process $X_t(a)\equiv a'\bar\psi_{ht}$. Its physical-dependence coefficients satisfy
\[
\delta_{2,a}(m)\equiv \|a'(\bar\psi_{ht}-\bar\psi_{ht}^{(m)})\|_2 \le \|a\|\,\delta_2(m),
\]
so $\sum_{m\ge 0}\delta_{2,a}(m)<\infty$. By \citet[Theorem~1(ii)]{Wu2005}, the corresponding projective dependence coefficients satisfy
\[
\theta_{2,a}(m)\equiv \|P_0 X_m(a)\|_2 \le \delta_{2,a}(m),
\]
and hence $\sum_{m\ge 0}\theta_{2,a}(m)<\infty$. Therefore \citet[Theorem~3(ii)]{Wu2011} yields
\[
\frac{1}{\sqrt{T_h}}\sum_{t=h+1}^{T} a'\bar\psi_{ht}
\xrightarrow{d}
\mathcal N\!\bigl(0,\,a'\Omega_h a\bigr),
\qquad
 a'\Omega_h a = \sum_{m=-\infty}^{\infty} a'\Gamma_{h,m}a.
\]
Since this holds for every $a$, the Cram\'er--Wold device implies
\[
\frac{1}{\sqrt{T_h}}\sum_{t=1}^{T_h}\psi_{ht}
\xrightarrow{d}
\mathcal{N}\!\big(0,\Omega_h\big),
\qquad
\Omega_h=\sum_{m=-\infty}^{\infty}\Gamma_{h,m},
\]
where $\Gamma_{h,m}\equiv \EE[\psi_{ht}\psi_{h,t-m}']$.
The same conditions imply that the infeasible HAC estimator based on the true scores is consistent:
if
\[
\tilde\psi_{ht}\equiv x_{it}\epsilon_{h,t+h},
\qquad
\tilde\Gamma_{h,m}\equiv \frac{1}{T_h}\sum_{t=m+1}^{T_h}\tilde\psi_{ht}\tilde\psi_{h,t-m}',
\]
and
\[
\tilde\Omega_h
\equiv
\tilde\Gamma_{h,0}
+
\sum_{m=1}^{b_T}K(m/b_T)\bigl(\tilde\Gamma_{h,m}+\tilde\Gamma_{h,m}'\bigr),
\]
then
\[
\tilde\Omega_h\to_p \Omega_h
\]
for every bounded symmetric kernel $K$ with $K(0)=1$ and every bandwidth sequence satisfying $b_T\to\infty$ and $b_T/\sqrt{T}\to 0$.

Next define
\[
\hat Q_h\equiv \frac{1}{T_h}\sum_{t=1}^{T_h}x_{it}x_{it}'.
\]
By Lemma~\ref{lem:qar.regularity}, $\{x_{it}x_{it}'\}$ is stationary and ergodic with finite mean, so the ergodic theorem yields
\[
\hat Q_h\to_p Q_h\equiv \EE[x_{it}x_{it}'].
\]
By Assumption~\ref{ass:feas.regularity}, $Q_h$ is positive definite.
Using the OLS formula and $y_{j,t+h}=x_{it}'\vartheta_h+\epsilon_{h,t+h}$,
\[
\hat\vartheta_h-\vartheta_h
=
\Big(\sum_{t=1}^{T_h}x_{it}x_{it}'\Big)^{-1}\sum_{t=1}^{T_h}x_{it}\epsilon_{h,t+h}.
\]
Therefore, since $h$ is fixed and $T_h/T\to 1$,
\[
\sqrt{T}\big(\hat\vartheta_h-\vartheta_h\big)
=
\hat Q_h^{-1}\,
\frac{1}{\sqrt{T_h}}\sum_{t=1}^{T_h}\psi_{ht}
+o_p(1)
\xrightarrow{d}
\mathcal{N}\!\big(0,V_h\big),
\]
with $V_h=Q_h^{-1}\Omega_h Q_h^{-1}$.
This proves \eqref{eq:asymp_normal_feas}.

For the feasible HAC/HAR estimator, let
\[
\hat\psi_{ht}\equiv x_{it}\hat\epsilon_{h,t+h},
\qquad
\hat\Gamma_{h,m}\equiv \frac{1}{T_h}\sum_{t=m+1}^{T_h}\hat\psi_{ht}\hat\psi_{h,t-m}',
\]
and define
\[
\hat\Omega_h
\equiv
\hat\Gamma_{h,0}
+
\sum_{m=1}^{b_T}K(m/b_T)\bigl(\hat\Gamma_{h,m}+\hat\Gamma_{h,m}'\bigr).
\]
It remains to show that replacing $\epsilon_{h,t+h}$ by $\hat\epsilon_{h,t+h}$ is asymptotically negligible.
Let $\Delta_h\equiv \hat\vartheta_h-\vartheta_h$.
Then
\[
\hat\epsilon_{h,t+h}-\epsilon_{h,t+h}=-x_{it}'\Delta_h,
\qquad
\hat\psi_{ht}-\tilde\psi_{ht}= -x_{it}x_{it}'\Delta_h.
\]
From the asymptotic normality just established, $\Delta_h=O_p(T^{-1/2})$.
Moreover, Lemma~\ref{lem:qar.regularity} gives finite moments of all orders for $x_{it}$; in particular,
\[
\EE\|x_{it}\|^4<\infty.
\]
Hence
\[
\frac{1}{T_h}\sum_{t=1}^{T_h}\|\hat\psi_{ht}-\tilde\psi_{ht}\|^2
\le
\|\Delta_h\|^2\,
\frac{1}{T_h}\sum_{t=1}^{T_h}\|x_{it}\|^4
=
O_p(T^{-1}),
\]
so
\[
\Big(\frac{1}{T_h}\sum_{t=1}^{T_h}\|\hat\psi_{ht}-\tilde\psi_{ht}\|^2\Big)^{1/2}
=
O_p(T^{-1/2}).
\]
Using Cauchy--Schwarz and
\[
\frac{1}{T_h}\sum_{t=1}^{T_h}\|\tilde\psi_{ht}\|^2=O_p(1),
\qquad
\frac{1}{T_h}\sum_{t=1}^{T_h}\|\hat\psi_{ht}\|^2=O_p(1),
\]
we obtain, uniformly over $0\le m\le b_T$,
\[
\|\hat\Gamma_{h,m}-\tilde\Gamma_{h,m}\|
=
O_p(T^{-1/2}).
\]
Therefore,
\[
\|\hat\Omega_h-\tilde\Omega_h\|
\le
C\sum_{m=0}^{b_T}\|\hat\Gamma_{h,m}-\tilde\Gamma_{h,m}\|
=
O_p\!\left(\frac{b_T}{\sqrt{T}}\right)
=o_p(1),
\]
where $C<\infty$ bounds the kernel weights.
Combining this with $\tilde\Omega_h\to_p\Omega_h$ yields
\[
\hat\Omega_h\to_p \Omega_h.
\]
Since $\hat Q_h\to_p Q_h$, it follows that
\[
\hat V_h\equiv \hat Q_h^{-1}\hat\Omega_h\hat Q_h^{-1}\to_p V_h.
\]

Finally, for any fixed $(z,\delta_i)$, the estimator \eqref{eq:feas_irf_hat} is the linear functional
\[
\widehat{\operatorname{IRF}}^{Feas}_{j,i}(z,\delta_i;h)=g(z,\delta_i)'\hat\vartheta_h,
\]
so \eqref{eq:asymp_normal_irf_feas} follows by the delta method.
\end{proof}

\subsection{Proof of Proposition \ref{prop:ehw.criterion}}
\begin{proof}
Let $a_{t-1}\equiv (1,W_{t-1}')'$ and let
\[
r_{it}\equiv \big(u_{it},\,(z_{t-1}u_{it})',\,u_{it}^2\big)'.
\]
Because $Q_h$ is positive definite, its principal submatrix $\EE[a_{t-1}a_{t-1}']$ is also positive definite, so the population linear projection of each block of $r_{it}$ on $a_{t-1}$ is unique. Since $u_{it}$ is independent of $\mathcal F_{t-1}$ with $\EE[u_{it}]=0$ and $\EE[u_{it}^2]=1$, we have
\[
\EE[a_{t-1}u_{it}]=0,\qquad
\EE[a_{t-1}(z_{t-1}u_{it})']=0,\qquad
\EE[a_{t-1}(u_{it}^2-1)]=0.
\]
Hence
\[
\Proj(u_{it}\mid a_{t-1})=0,\qquad
\Proj(z_{t-1}u_{it}\mid a_{t-1})=0,\qquad
\Proj(u_{it}^2\mid a_{t-1})=1,
\]
and therefore the residualized slope regressor is exactly
\[
r_{it}-\Proj(r_{it}\mid a_{t-1}) = x_{it}^c
= \big(u_{it},\,(z_{t-1}u_{it})',\,c_{it}\big)'.
\]

By the Frisch--Waugh--Lovell theorem, the slope block $(\theta_{h1},\theta_{h2}',\theta_{h3})'$ of the original regression can be analyzed using the partialled-out regression with regressor $x_{it}^c$ and residual $\epsilon_{h,t+h}$. The argument in Proposition~\ref{prop:hac.feas} therefore applies to this partialled-out regression and yields
\[
\sqrt{T}\big((\hat\theta_{h1},\hat\theta_{h2}',\hat\theta_{h3})'-(\theta_{h1},\theta_{h2}',\theta_{h3})'\big)
\xrightarrow{d}
\mathcal N\!\Big(0,\,(Q_h^c)^{-1}\Omega_h^c(Q_h^c)^{-1}\Big),
\]
where
\[
Q_h^c=\EE[x_{it}^c x_{it}^{c\prime}],
\qquad
\Omega_h^c=\sum_{\ell=-\infty}^{\infty}\Gamma_{h,\ell}^c,
\qquad
\Gamma_{h,\ell}^c=\EE[\psi_{ht}^c\psi_{h,t-\ell}^{c\prime}].
\]
An EHW estimator replaces the long-run variance $\Omega_h^c$ by the zero-lag covariance $\Gamma_{h,0}^c$. It is therefore asymptotically valid if and only if
\[
\Omega_h^c=\Gamma_{h,0}^c
\qquad\Longleftrightarrow\qquad
\sum_{\ell\neq0}\Gamma_{h,\ell}^c=0.
\]
This proves the criterion.
\end{proof}

\subsection{Proof of Proposition \ref{prop:exactstate.pattern}}
\begin{proof}
Let
\[
B\equiv \Sigma_{tr},\qquad b_r\equiv Be_r,\qquad D(s)\equiv \diag(1_n+\mathsf{G} s),
\]
and let $\mathsf{g}_j'$ denote the $j$-th row of $\mathsf{G}$.

For $h=0$, \eqref{eq:qvar.model} gives
\[
y_{jt}
=
\underbrace{e_j'\!\left[\Phi_1y_{t-1}+\Phi_2\vech(s_{t-1}s_{t-1}')\right]}_{A_{0,t-1}^{(j,i)}}
+
b_{ji}u_{it}
+
b_{ji}\mathsf{g}_j's_{t-1}u_{it}
+
\sum_{\ell\neq i} b_{j\ell}(1+\mathsf{g}_j's_{t-1})u_{\ell t}.
\]
Hence
\[
e_{0t}^{\star,(j,i)}=\sum_{\ell\neq i} b_{j\ell}(1+\mathsf{g}_j's_{t-1})u_{\ell t}.
\]
Now $x_{it}^{c,\star}=(u_{it},(s_{t-1}'u_{it})',c_{it})'$ is $\mathcal F_t$-measurable, and for each $\ell\neq i$,
\[
\EE[u_{it}u_{\ell t}\mid\mathcal F_{t-1}]=0,
\qquad
\EE[s_{t-1}u_{it}u_{\ell t}\mid\mathcal F_{t-1}]=s_{t-1}\EE[u_{it}u_{\ell t}]=0,
\]
\[
\EE[c_{it}u_{\ell t}\mid\mathcal F_{t-1}]=\EE[(u_{it}^2-1)u_{\ell t}]=0.
\]
Therefore
\[
\EE[\psi_{0t}^{c,\star}\mid\mathcal F_{t-1}]
=
\EE[x_{it}^{c,\star}e_{0t}^{\star,(j,i)}\mid\mathcal F_{t-1}]
=0,
\]
so $\{\psi_{0t}^{c,\star}\}$ is a martingale difference sequence. This proves part (i).

Fix now any $h\ge 1$, and define
\[
m_{h,t}^{(j)}\equiv \EE[y_{j,t+h}\mid\mathcal F_t].
\]
Since $\psi_{h,t-h}^{u,\star}=u_{i,t-h}e_{h,t}^{\star,(j,i)}$ is $\mathcal F_t$-measurable, iterated expectations imply
\[
\EE[\psi_{ht}^{u,\star}\psi_{h,t-h}^{u,\star}]
=
\EE\!\Big[
u_{it}\Big(m_{h,t}^{(j)}-A_{h,t-1}^{(j,i)}-\kappa_{h1}^{(j,i)}u_{it}
-\kappa_{h2}^{(j,i)\prime}s_{t-1}u_{it}
-\kappa_{h3}^{(j,i)}u_{it}^2\Big)\psi_{h,t-h}^{u,\star}
\Big].
\]
Thus only the $\mathcal F_t$-measurable part of $y_{j,t+h}$ matters for this covariance.

Iterating the QVAR recursion and taking $\mathcal F_t$-conditional expectations yields
\[
m_{h,t}
=
\Phi_1^h y_t
+
\sum_{r=1}^{h}\Phi_1^{h-r}\Phi_2\,\EE[\vech(s_{t+r-1}s_{t+r-1}')\mid\mathcal F_t],
\]
because $\EE[D(s_{t+r-1})Bu_{t+r}\mid\mathcal F_t]=0$ for every $r\ge1$. For each $r\ge1$, write
\[
s_{t+r-1}
=
\Phi_1^r s_{t-1}+\Phi_1^{r-1}Bu_t+v_{r,t},
\]
where $v_{r,t}$ depends only on $u_{t+1},\ldots,u_{t+r-1}$ and satisfies
\[
\EE[v_{r,t}\mid\mathcal F_t]=0,
\qquad
\EE[v_{r,t}u_t'\mid\mathcal F_t]=0.
\]
Hence
\[
\EE[\vech(s_{t+r-1}s_{t+r-1}')\mid\mathcal F_t]
=
\vech\!\left((\Phi_1^r s_{t-1}+\Phi_1^{r-1}Bu_t)(\Phi_1^r s_{t-1}+\Phi_1^{r-1}Bu_t)'\right)
+
C_{r,t-1},
\]
for some $\mathcal F_{t-1}$-measurable random vector $C_{r,t-1}$.

Substituting
\[
y_t=\Phi_1y_{t-1}+\Phi_2\vech(s_{t-1}s_{t-1}')+D(s_{t-1})Bu_t
\]
into the formula for $m_{h,t}$ and collecting the date-$t$ terms involving $u_t$, we obtain
\[
m_{h,t}^{(j)}
=
A_{h,t-1}^{(j,i)}
+\kappa_{h1}^{(j,i)}u_{it}
+\kappa_{h2}^{(j,i)\prime}s_{t-1}u_{it}
+\kappa_{h3}^{(j,i)}u_{it}^2
+\sum_{\ell\neq i}\chi_{h,j,i\ell}u_{it}u_{\ell t}
+\widetilde R_{ht}^{(j,i)},
\]
where $\widetilde R_{ht}^{(j,i)}$ is $\mathcal F_t$-measurable, satisfies
\[
\EE[\widetilde R_{ht}^{(j,i)}\mid\mathcal F_{t-1}]=0,
\]
and whose date-$t$ part is a linear combination of $u_{\ell t}$, $u_{\ell t}^2-1$, and $u_{\ell t}u_{mt}$ with $\ell,m\neq i$. The coefficient on $u_{it}u_{\ell t}$ is exactly
\[
\chi_{h,j,i\ell}
=
\sum_{r=1}^{h} e_j'\Phi_1^{h-r}\Phi_2\,
\vech\!\left((\Phi_1^{r-1}b_i)(\Phi_1^{r-1}b_\ell)' + (\Phi_1^{r-1}b_\ell)(\Phi_1^{r-1}b_i)'\right).
\]

Next, because $h\ge1$, every term subtracted in the definition of $e_{h,t}^{\star,(j,i)}$ is dated no later than $t-1$. Using again the one-step QVAR equation for $y_t$, we can therefore write
\[
e_{h,t}^{\star,(j,i)}=M_{h,t-1}^{(j,i)}+e_j'D(s_{t-1})Bu_t,
\]
for some $\mathcal F_{t-1}$-measurable random variable $M_{h,t-1}^{(j,i)}$. Hence
\[
\psi_{h,t-h}^{u,\star}
=
u_{i,t-h}M_{h,t-1}^{(j,i)}
+
u_{i,t-h}e_j'D(s_{t-1})Bu_t.
\]

Conditioning on $\mathcal F_{t-1}$, all products involving $u_{it}\widetilde R_{ht}^{(j,i)}$ have zero conditional expectation by parity, and the same is true for
\[
u_{it}\Big(\sum_{\ell\neq i}\chi_{h,j,i\ell}u_{it}u_{\ell t}\Big)\cdot u_{i,t-h}M_{h,t-1}^{(j,i)}.
\]
Therefore only the product of the omitted cross-shock term with the date-$t$ piece of the lagged residual survives:
\[
\EE[\psi_{ht}^{u,\star}\psi_{h,t-h}^{u,\star}]
=
\EE\!\left[
u_{it}\Big(\sum_{\ell\neq i}\chi_{h,j,i\ell}u_{it}u_{\ell t}\Big)
u_{i,t-h}\,e_j'D(s_{t-1})Bu_t
\right].
\]
Since
\[
e_j'D(s_{t-1})Bu_t
=
\sum_{m=1}^n b_{jm}(1+\mathsf{g}_j's_{t-1})u_{mt},
\]
conditioning once more on $\mathcal F_{t-1}$ gives
\[
\EE\!\left[
u_{it}\Big(\sum_{\ell\neq i}\chi_{h,j,i\ell}u_{it}u_{\ell t}\Big)e_j'D(s_{t-1})Bu_t
\ \Big|\ \mathcal F_{t-1}
\right]
=
\sum_{\ell\neq i}\chi_{h,j,i\ell}b_{j\ell}(1+\mathsf{g}_j's_{t-1}),
\]
because $\EE[u_{it}^2u_{\ell t}u_{mt}]=\mathbf 1\{m=\ell\}$ for every $\ell\neq i$.

Multiplying by $u_{i,t-h}$ and taking expectations yields
\[
\EE[\psi_{ht}^{u,\star}\psi_{h,t-h}^{u,\star}]
=
\sum_{\ell\neq i}\chi_{h,j,i\ell}b_{j\ell}\,
\EE\!\left[u_{i,t-h}(1+\mathsf{g}_j's_{t-1})\right].
\]
The constant term vanishes because $\EE[u_{i,t-h}]=0$. Under Assumption~\ref{assump:qvar.dgp},
\[
s_{t-1}=\sum_{r=0}^{\infty}\Phi_1^rBu_{t-1-r},
\]
so independence across dates implies
\[
\EE[u_{i,t-h}s_{t-1}] = \Phi_1^{h-1}b_i,
\qquad
\EE[u_{i,t-h}\mathsf{g}_j's_{t-1}] = \mathsf{g}_j'\Phi_1^{h-1}b_i.
\]
Therefore
\[
\EE[\psi_{ht}^{u,\star}\psi_{h,t-h}^{u,\star}]
=
(\mathsf{g}_j'\Phi_1^{h-1} b_i)\sum_{\ell\neq i} b_{j\ell}\chi_{h,j,i\ell}.
\]
If this quantity is nonzero, then the $(u,u)$ entry of the lag-$h$ slope-score autocovariance matrix is nonzero, so EHW fails at horizon $h$.
\end{proof}

\subsection{Proof of Proposition \ref{prop:proxyerror.h0}}
\begin{proof}
Let
\[
\begin{aligned}
B&\equiv \Sigma_{tr},
\qquad
b_r\equiv Be_r,
\qquad
\Lambda\equiv \Cov(s_{t-1},z_{t-1})\Var(z_{t-1})^{-1},\\
a_\Lambda&\equiv \EE[s_{t-1}] - \Lambda\EE[z_{t-1}],
\qquad
\xi_{t-1}\equiv s_{t-1}-a_\Lambda-\Lambda z_{t-1}.
\end{aligned}
\]
and let $\mathsf{g}_j'$ denote the $j$-th row of $\mathsf{G}$. Then $(a_\Lambda,\Lambda)$ are the coefficients of the linear projection of $s_{t-1}$ on $(1,z_{t-1}')'$, so
\[
\EE[\xi_{t-1}]=0,
\qquad
\EE[\xi_{t-1}z_{t-1}']=0.
\]

At $h=0$,
\[
y_{jt}
=
e_j'\!\left[\Phi_1y_{t-1}+\Phi_2\vech(s_{t-1}s_{t-1}')\right]
+
e_j'D(s_{t-1})Bu_t.
\]
Using $s_{t-1}=a_\Lambda+\Lambda z_{t-1}+\xi_{t-1}$ and isolating shock $i$,
\[
\begin{aligned}
e_j'D(s_{t-1})Bu_t
&=
b_{ji}(1+\mathsf{g}_j'a_\Lambda)u_{it}
+
b_{ji}\mathsf{g}_j'\Lambda z_{t-1}u_{it}
+
b_{ji}\mathsf{g}_j'\xi_{t-1}u_{it} \\
&\qquad
+
\sum_{\ell\neq i} b_{j\ell}(1+\mathsf{g}_j's_{t-1})u_{\ell t}.
\end{aligned}
\]
Set
\[
A_{0,t-1}^{(j,i)}\equiv e_j'\!\left[\Phi_1y_{t-1}+\Phi_2\vech(s_{t-1}s_{t-1}')\right].
\]
Since the original regression \eqref{eq:feas_reg_vector} includes an intercept, we can equivalently check orthogonality using $c_{it}\equiv u_{it}^2-1$ in place of $u_{it}^2$. Because $u_{it}$ is independent of $\mathcal F_{t-1}$, the omitted interaction
\[
b_{ji}\mathsf{g}_j'\xi_{t-1}u_{it}
\]
is orthogonal to $u_{it}$, $z_{t-1}u_{it}$, and $c_{it}$:
\[
\EE[(\mathsf{g}_j'\xi_{t-1})u_{it}^2]=\mathsf{g}_j'\EE[\xi_{t-1}]=0,
\]
\[
\EE[(\mathsf{g}_j'\xi_{t-1})z_{t-1}u_{it}^2]
=
\mathsf{g}_j'\EE[\xi_{t-1}z_{t-1}']=0,
\]
\[
\EE[(\mathsf{g}_j'\xi_{t-1})u_{it}(u_{it}^2-1)]
=
\mathsf{g}_j'\EE[\xi_{t-1}]\,\EE[u_{it}(u_{it}^2-1)]
=0.
\]
It is also orthogonal to every additive lag control $m_{t-1}\in L^2(\mathcal F_{t-1})$, because
\[
\EE[(\mathsf{g}_j'\xi_{t-1})u_{it}m_{t-1}]
=
\EE[(\mathsf{g}_j'\xi_{t-1})m_{t-1}]\,\EE[u_{it}]
=0.
\]
Likewise, the cross-shock term
\[
\sum_{\ell\neq i} b_{j\ell}(1+\mathsf{g}_j's_{t-1})u_{\ell t}
\]
is orthogonal to $u_{it}$, $z_{t-1}u_{it}$, $c_{it}$, and to every additive lag control in $L^2(\mathcal F_{t-1})$ by independence across shock coordinates and the zero mean of $u_{\ell t}$.

Therefore, after removing the included regressors $(u_{it},z_{t-1}u_{it},u_{it}^2)$ with their population coefficients and subtracting the orthogonal projection of the remainder onto $L^2(\mathcal F_{t-1})$, the residual is
\[
e_t^{\star,(j,i)}
=
b_{ji}(\mathsf{g}_j'\xi_{t-1})u_{it}
+
\sum_{\ell\neq i} b_{j\ell}(1+\mathsf{g}_j's_{t-1})u_{\ell t}.
\]
Therefore
\[
\psi_{0t}^{u,\star}
=
u_{it}e_t^{\star,(j,i)}
=
b_{ji}(\mathsf{g}_j'\xi_{t-1})u_{it}^2
+
\sum_{\ell\neq i} b_{j\ell}(1+\mathsf{g}_j's_{t-1})u_{it}u_{\ell t},
\]
so
\[
\EE[\psi_{0t}^{u,\star}\mid\mathcal F_{t-1}]
=
b_{ji}\mathsf{g}_j'\xi_{t-1}.
\]
If $b_{j\ell}=0$ for every $\ell\neq i$, the second sum vanishes and
\[
\psi_{0t}^{u,\star}=b_{ji}(\mathsf{g}_j'\xi_{t-1})u_{it}^2.
\]
Thus
\[
\EE[\psi_{0t}^{u,\star}\psi_{0,t-1}^{u,\star}]
=
b_{ji}^2\,
\EE\!\left[(\mathsf{g}_j'\xi_{t-1})(\mathsf{g}_j'\xi_{t-2})u_{it}^2u_{i,t-1}^2\right]
=
b_{ji}^2\,
\EE\!\left[(\mathsf{g}_j'\xi_{t-1})(\mathsf{g}_j'\xi_{t-2})u_{i,t-1}^2\right],
\]
because $u_{it}^2$ is independent of $\mathcal F_{t-1}$ with mean one. If the right-hand side is nonzero, then the lag-one shock-score autocovariance is nonzero and EHW fails already at $h=0$.
\end{proof}

\clearpage

\section{Details and Additional Results for Section \ref{sec:empiric.exp}}
{\noindent {\bf Construction of State Variables.}}

For each monthly log series $x_t\in\{\log(IP_t),\log(CPI_t)\}$, we construct the state using a real-time Hamilton filter, that is, a recursive version of the regression-based detrending approach of \citet{Hamilton2018}. Let $h_H=24$ and $p_H=12$. At each date $t$, we estimate the forecasting regression recursively using only observations available through $t$:
\[
x_{s+h_H}
=
a_t^x+\sum_{j=0}^{p_H-1} b_{jt}^x\, x_{s-j}+e_{s+h_H,t}^x,
\qquad s=p_H,\ldots,t-h_H.
\]
The associated real-time trend estimate at date $t$ is then
\[
\hat\tau_t^x
=
\hat a_t^x+\sum_{j=0}^{p_H-1}\hat b_{jt}^x\, x_{t-h_H-j},
\]
and the cyclical component is defined by
\[
c_t^x=x_t-\hat\tau_t^x.
\]
We apply this construction separately to log industrial production and log CPI, and the state entering the local projection at date $t$ is the lagged two-dimensional vector
\[
z_{t-1}=\big(c_{t-1}^{IP},\,c_{t-1}^{CPI}\big)'.
\]
Because both the regression coefficients and the regressors used to evaluate $\hat\tau_{t-1}^x$ are based only on observations dated $t-1$ and earlier, the resulting state is $\mathcal{F}_{t-1}$-measurable by construction.\\

\noindent {\bf Additional Results For $\textit{Feas}$.}
\begin{itemize}
	\item Figure~\ref{fig:app.exp.coef.quad} shows the coefficient estimates for the quadratic shock term in specification $\textit{Feas}$.
	\item Figure~\ref{fig:app.exp.coef.interact.cpi} shows the coefficient estimates for the interaction term between the shock and lagged CPI in specification $\textit{Feas}$.
        \item Figure~\ref{fig:app.exp.coef.interact.ip} shows the coefficient estimates for the interaction term between the shock and lagged IP in specification $\textit{Feas}$.
\end{itemize}
\clearpage

\begin{figure}[t!]
  \setlength{\abovecaptionskip}{0cm}
	\caption{Coefficients for Quadratic Shock Term}
	\label{fig:app.exp.coef.quad}  
	\begin{center} 
		\begin{tabular}{cc}
			\textit{IP} & \textit{Urate} \\
			\includegraphics[width=3in]{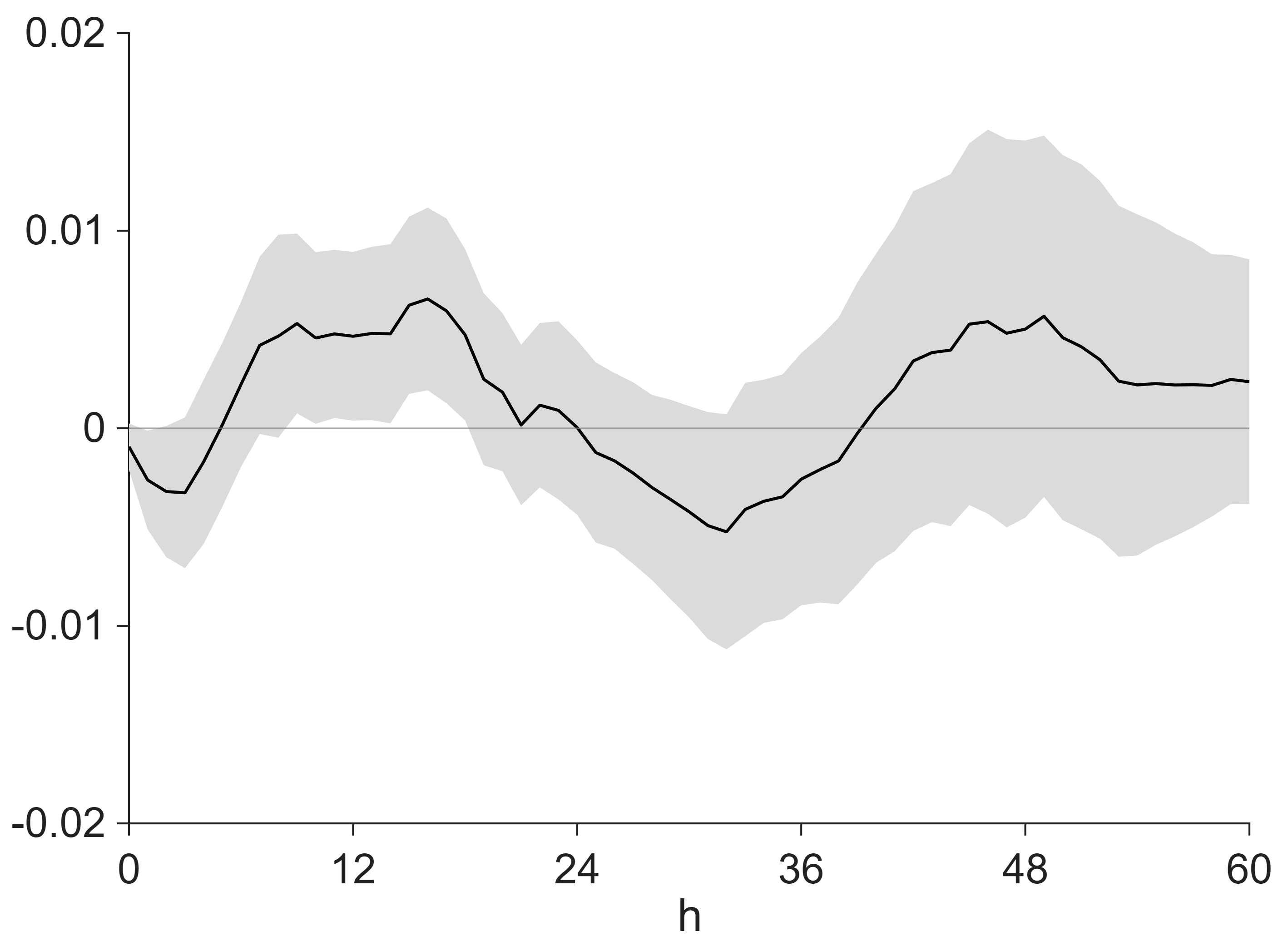} &
			\includegraphics[width=3in]{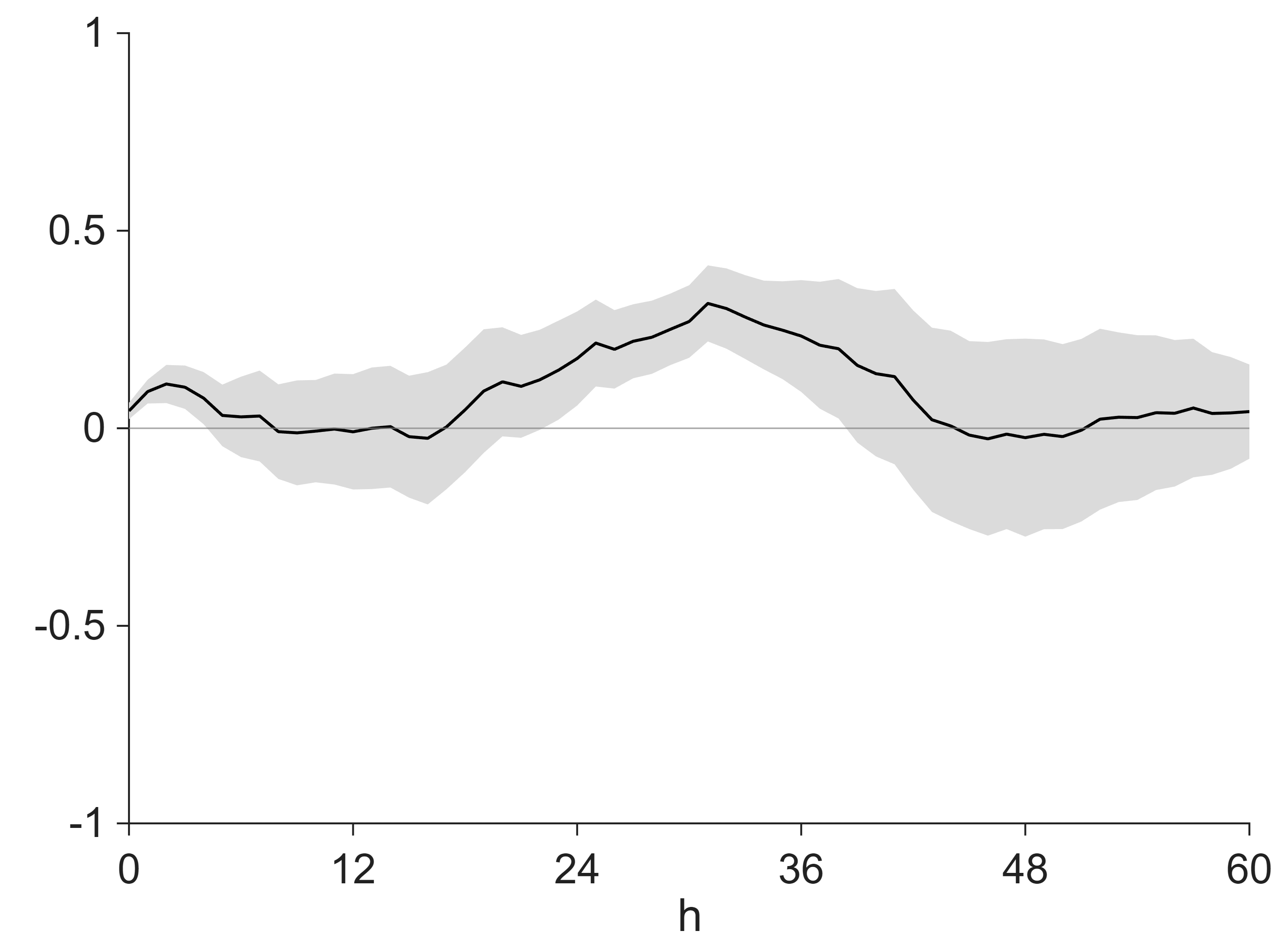} \\
                \textit{CPI} & \textit{FFR}\\ 
            \includegraphics[width=3in]{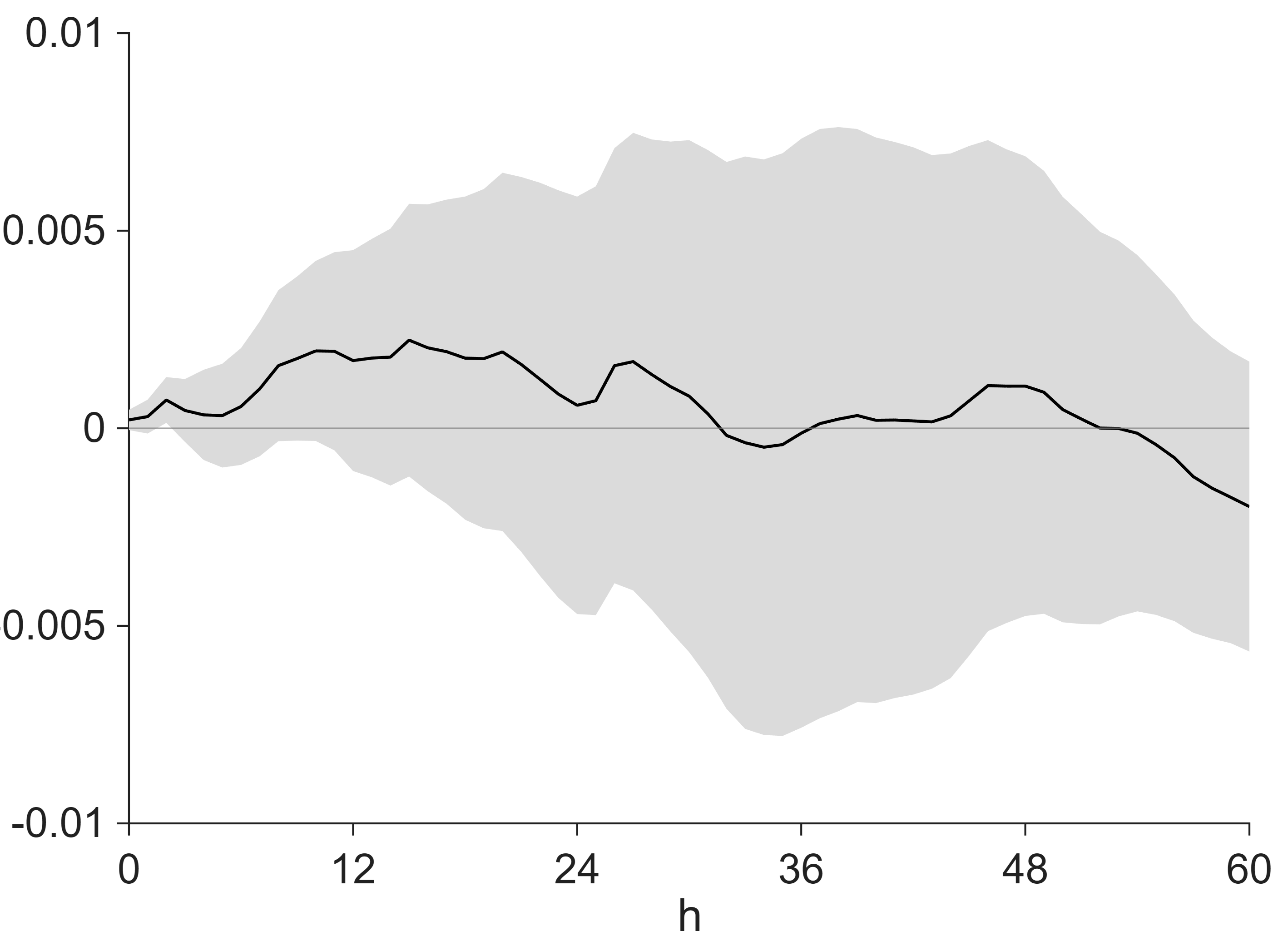} & \includegraphics[width=3in]{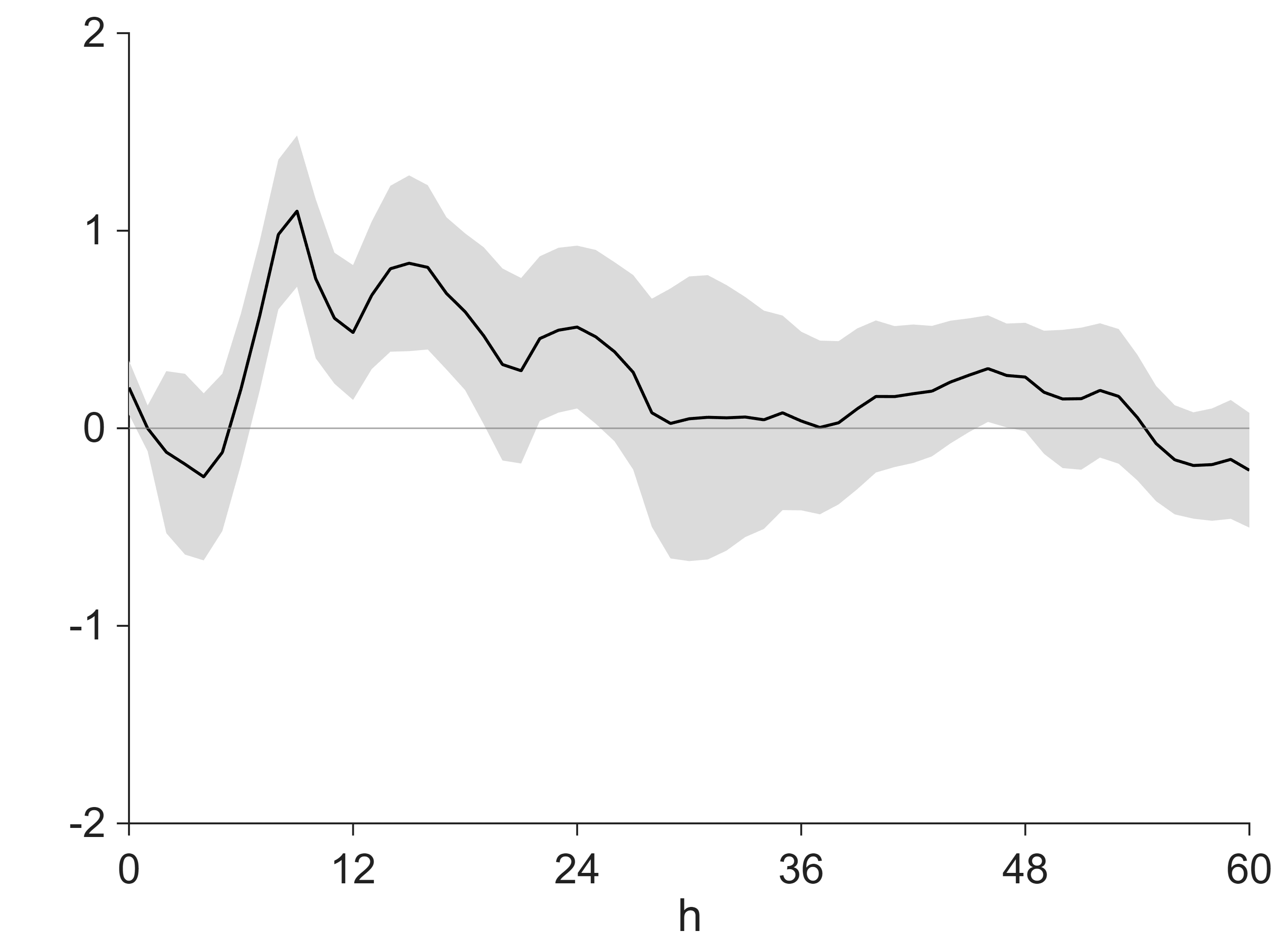}
            \\
		\end{tabular}
	\end{center}
	{\footnotesize {\em Notes}: Coefficient estimates for the quadratic shock term in specification $\textit{Feas}$. Shaded area represent the 90\% confidence band.}\setlength{\baselineskip}{4mm}
\end{figure}

\begin{figure}[t!]
  \setlength{\abovecaptionskip}{0cm}
	\caption{Coefficients for Interaction Terms: Lagged CPI}
	\label{fig:app.exp.coef.interact.cpi}  
	\begin{center} 
		\begin{tabular}{cc}
			\textit{IP} & \textit{Urate} \\
			\includegraphics[width=3in]{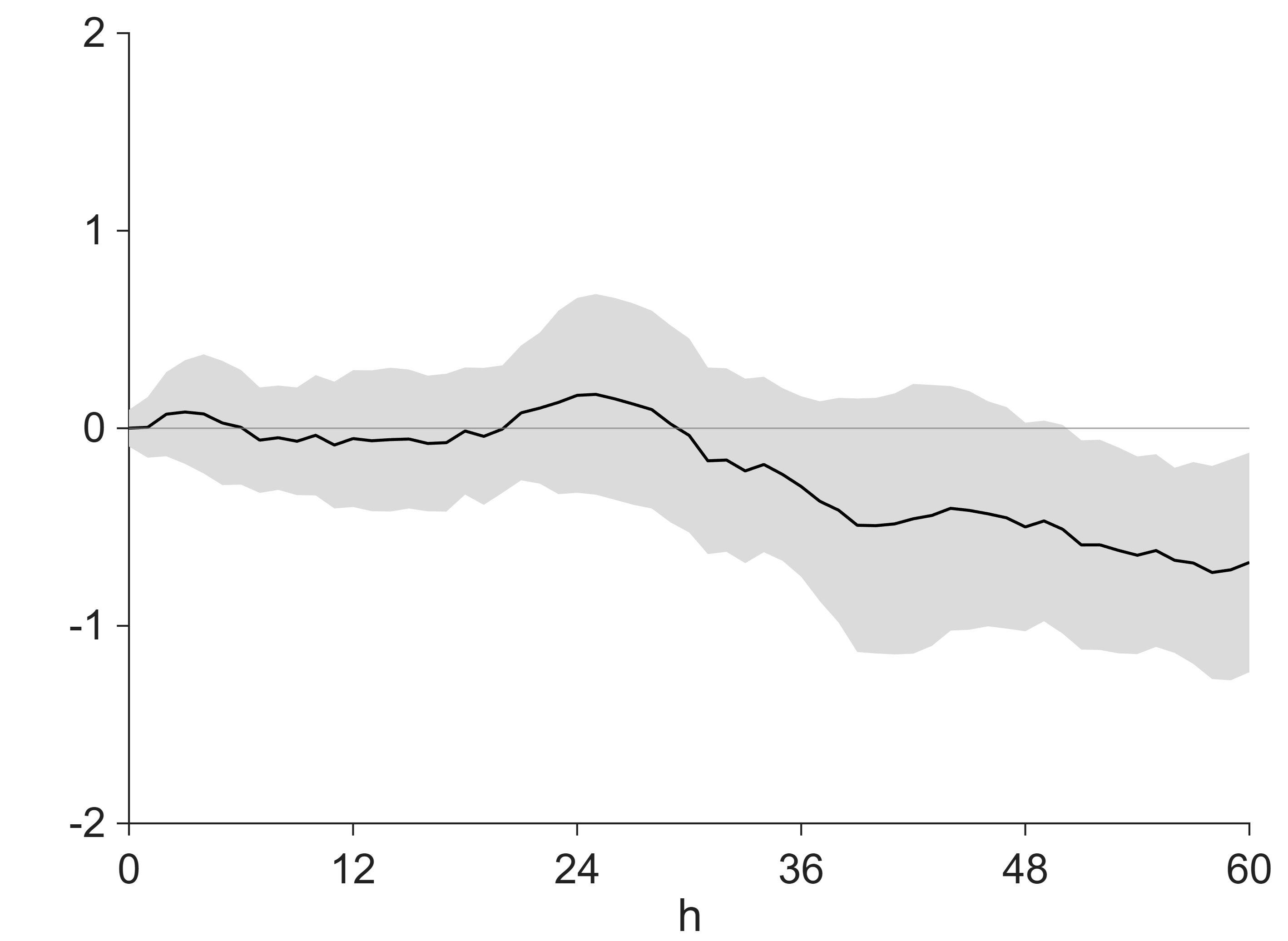} &
			\includegraphics[width=3in]{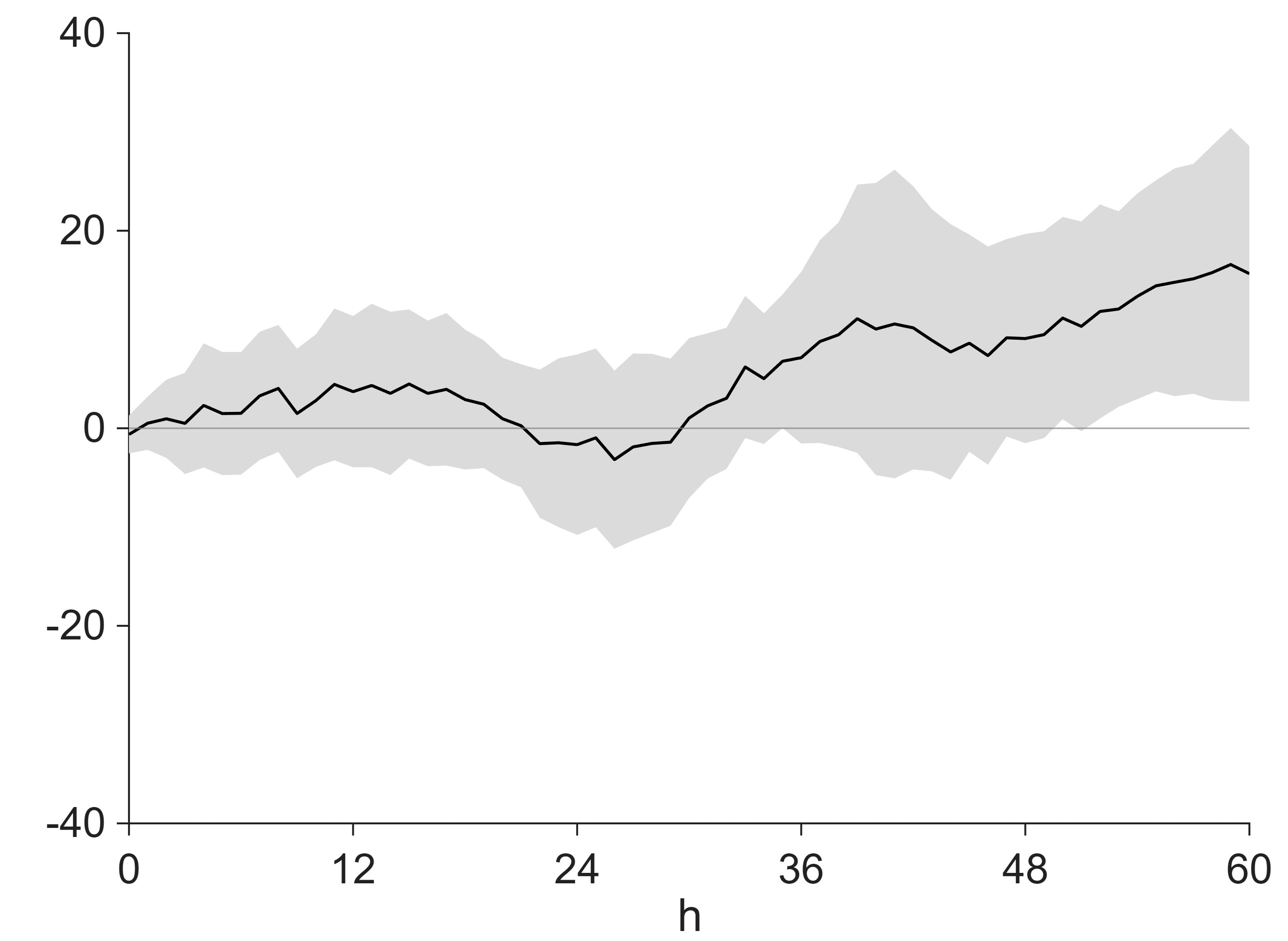} \\
                \textit{CPI} & \textit{FFR}\\ 
            \includegraphics[width=3in]{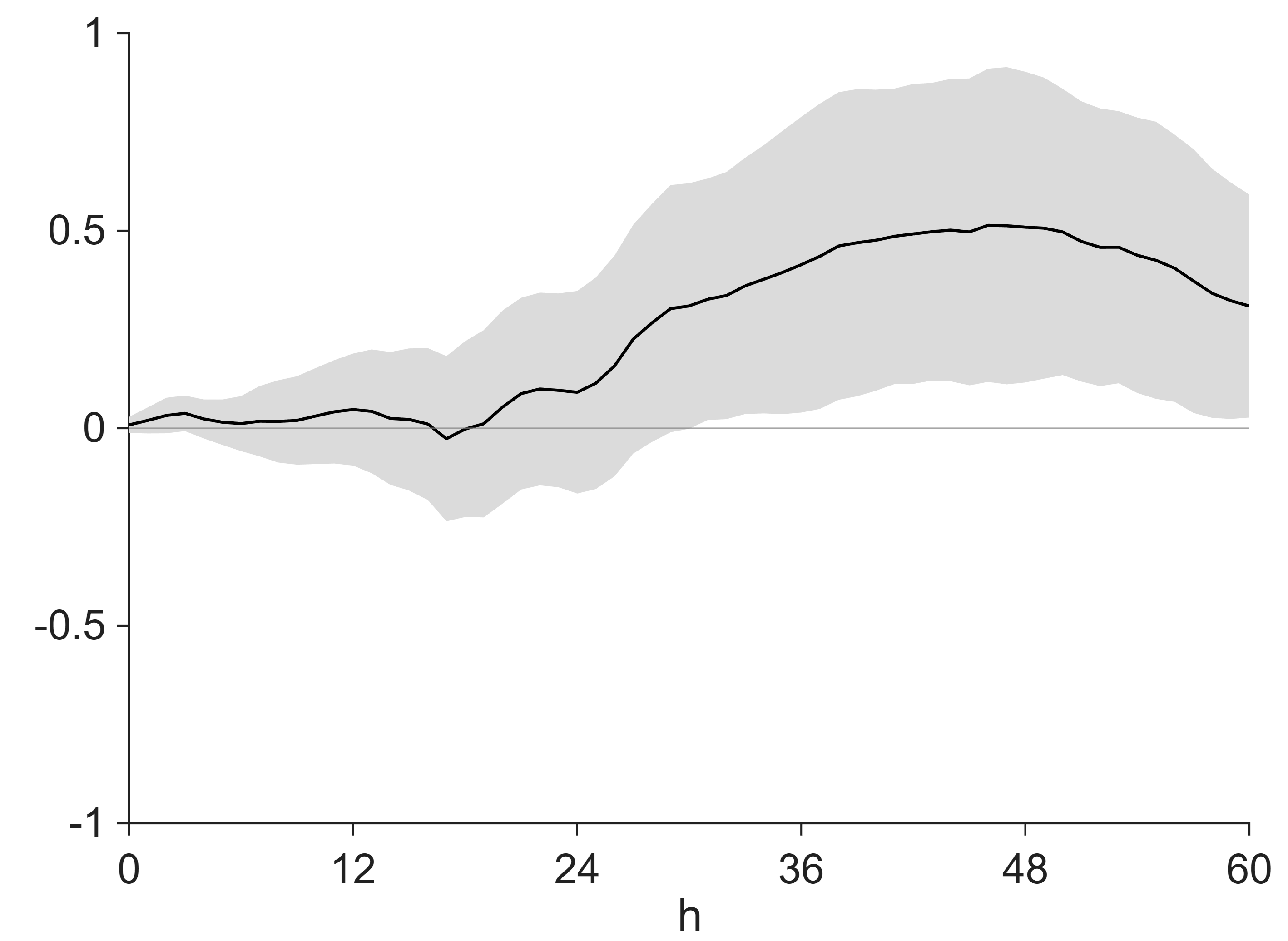} & \includegraphics[width=3in]{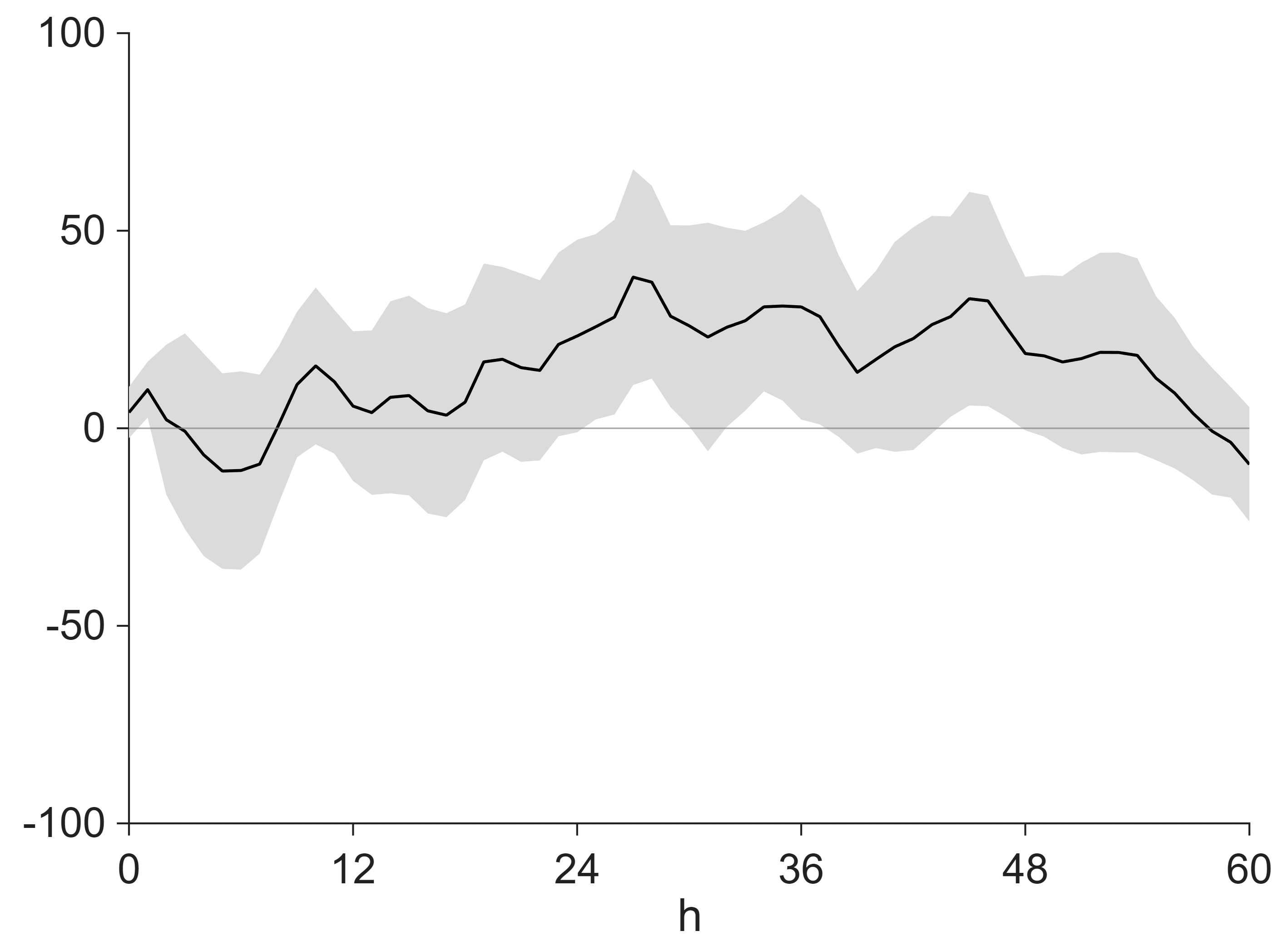}
            \\
		\end{tabular}
	\end{center}
	{\footnotesize {\em Notes}: Coefficient estimates for the interaction term between the shock and lagged CPI in specification $\textit{Feas}$. Shaded area represent the 90\% confidence band. }\setlength{\baselineskip}{4mm}
\end{figure}

\begin{figure}[t!]
  \setlength{\abovecaptionskip}{0cm}
	\caption{Coefficients for Interaction Terms: Lagged IP}
	\label{fig:app.exp.coef.interact.ip}  
	\begin{center} 
		\begin{tabular}{cc}
			\textit{IP} & \textit{Urate} \\
			\includegraphics[width=3in]{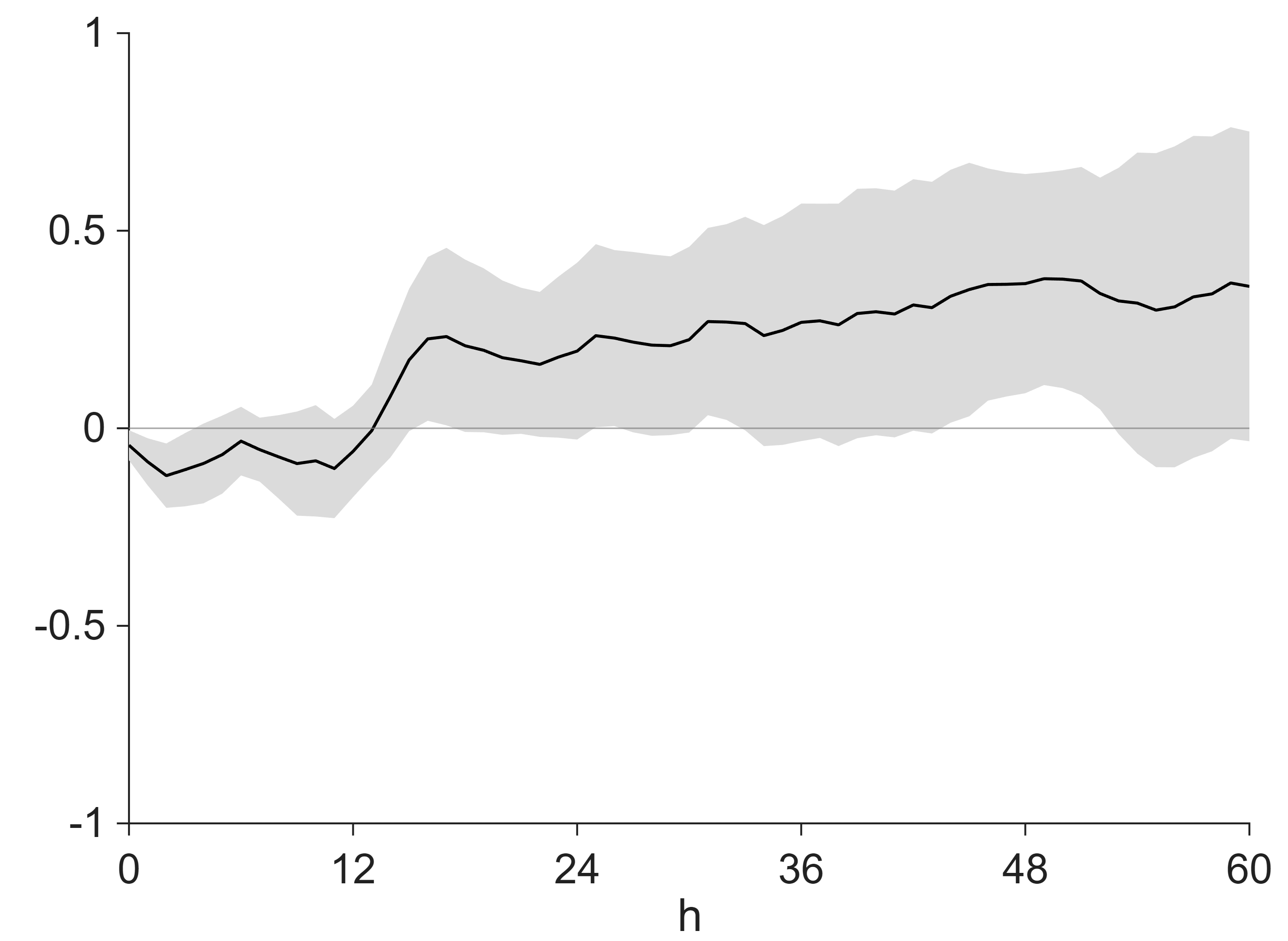} &
			\includegraphics[width=3in]{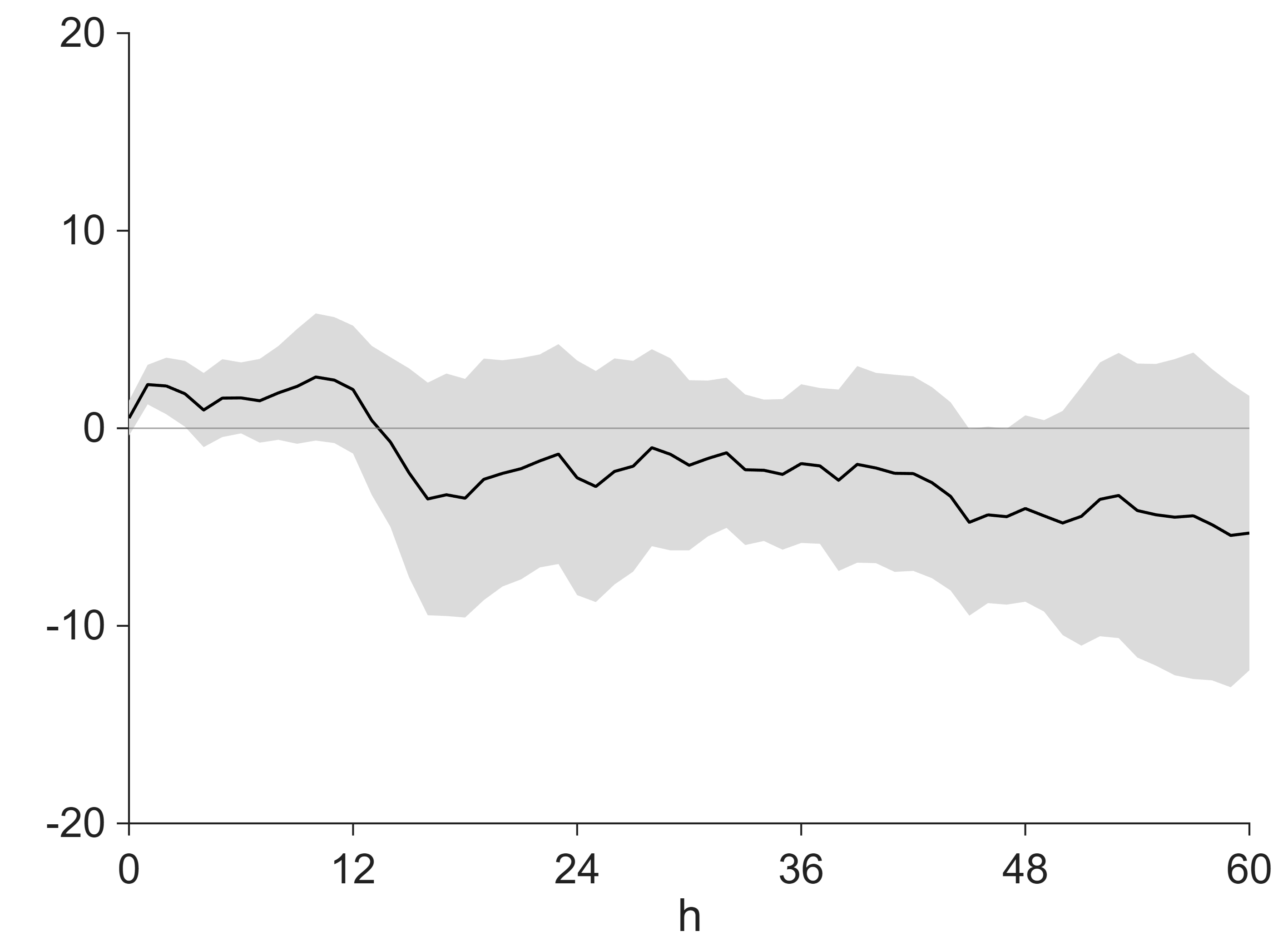} \\
                \textit{CPI} & \textit{FFR}\\ 
            \includegraphics[width=3in]{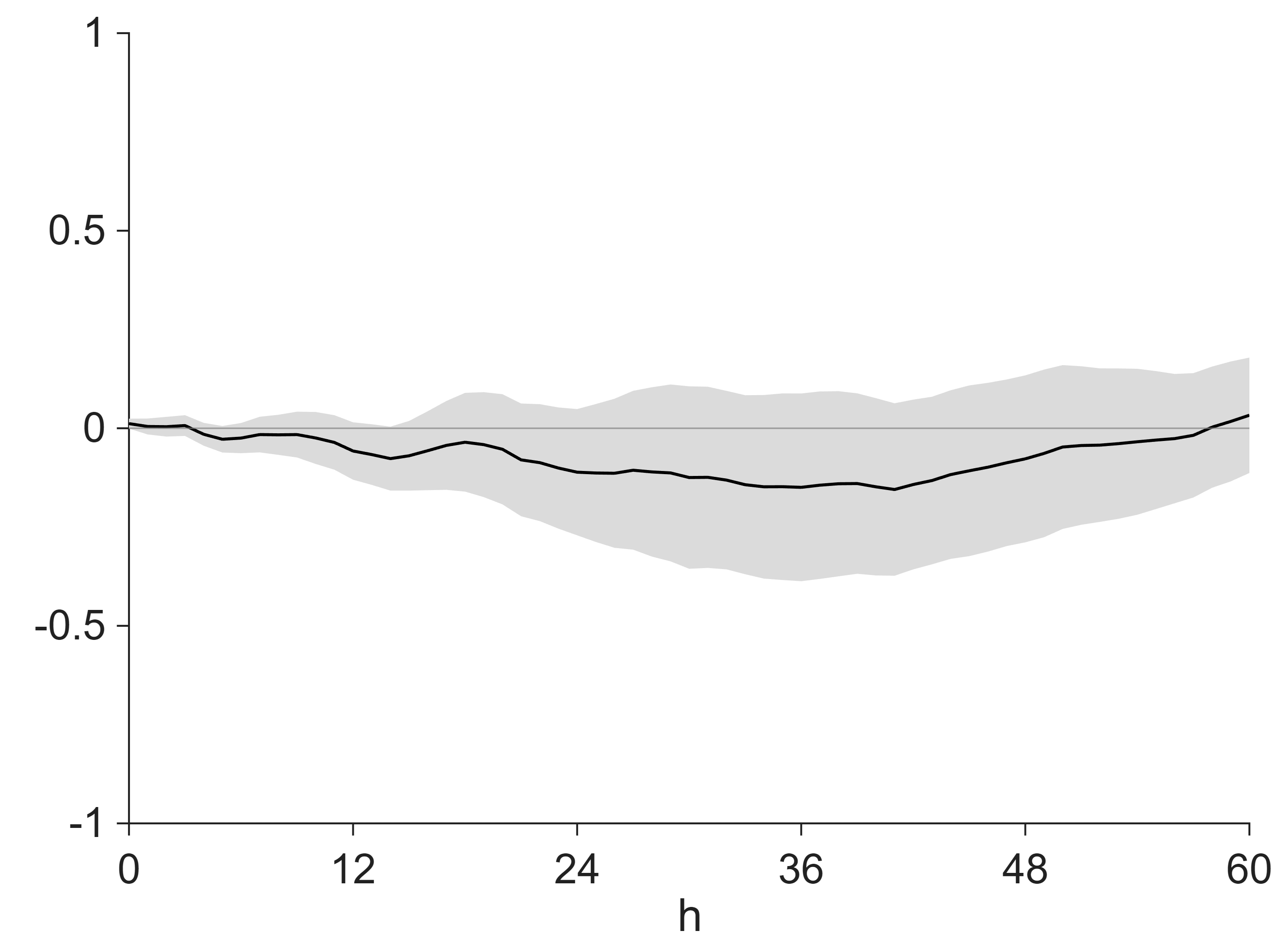} & \includegraphics[width=3in]{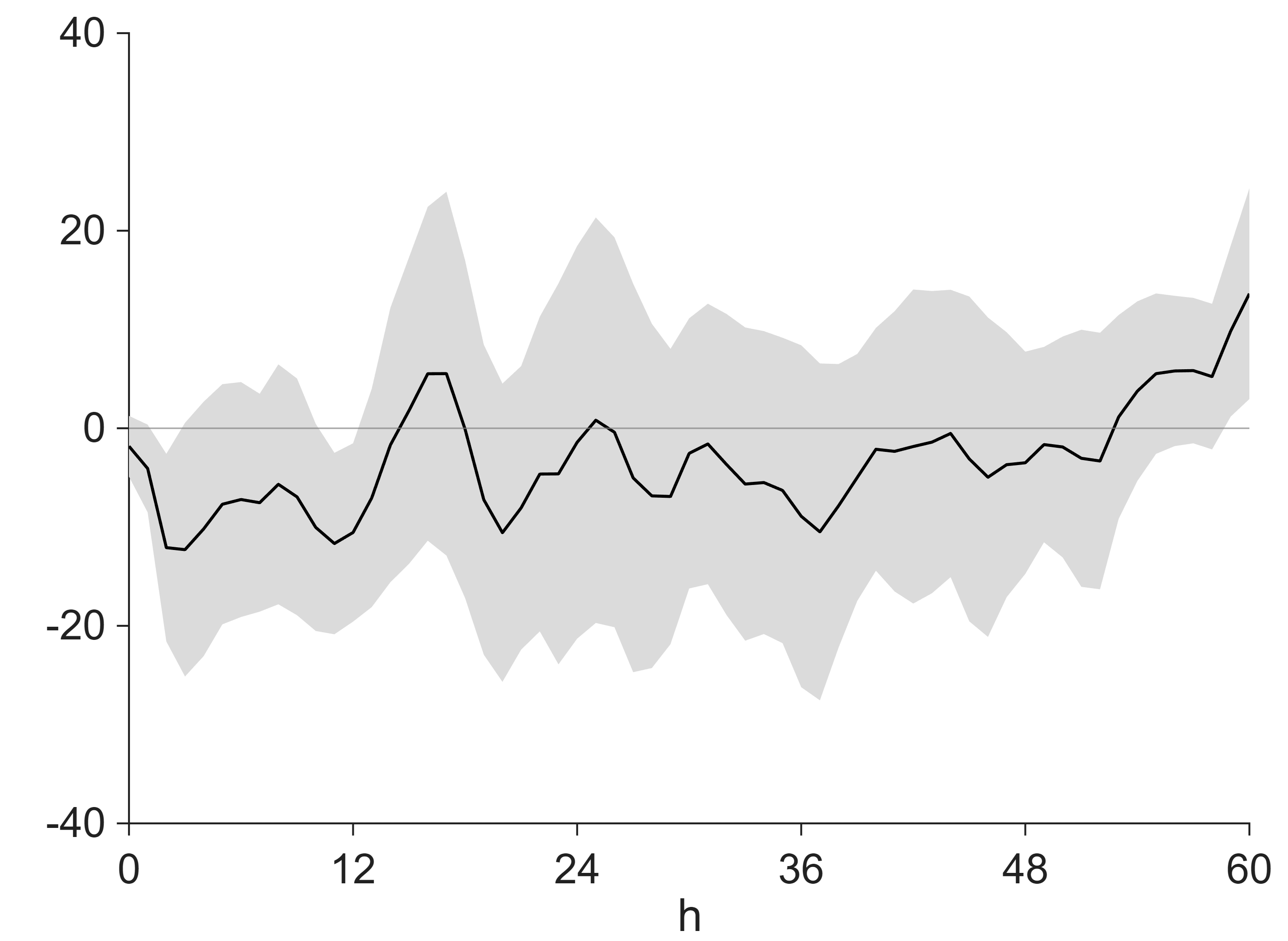}
            \\
		\end{tabular}
	\end{center}
	{\footnotesize {\em Notes}: Coefficient estimates for the interaction term between the shock and lagged IP in specification $\textit{Feas}$. Shaded area represent the 90\% confidence band. }\setlength{\baselineskip}{4mm}
\end{figure}

\clearpage
\noindent {\bf Estimation Details for  \textit{NPLP}.}

We present the estimation details for the control-adjusted nonparametric local projection estimator. We use the same sample, outcomes, shock series, state variables, and controls as in Section~\ref{sec:empiric.exp}.

For each horizon $h$ and each outcome, let
\[
U_t=\left(s^{IP}_{t-1},\ s^{CPI}_{t-1},\ u_t\right),
\]
where $\left(s^{IP}_{t-1},s^{CPI}_{t-1}\right)$ are the two lagged state variables used in \textit{Feas} and $u_t$ is the Romer--Romer shock. Let $W_t$ denote the same linear control vector used in \textit{Linear} and \textit{Feas}. We further impose the partially linear specification
\[
\EE[y_{t+h}\mid U_t,W_t]=m_h(U_t)+\theta_h'W_t,
\]
so the nonlinear component is a function of $(s^{IP}_{t-1},s^{CPI}_{t-1},u_t)$ only, while the controls enter linearly. This avoids conditioning nonparametrically on the full control vector.

The estimator follows Robinson-style partialling out. For a generic scalar variable $V_t$, equal either to $y_{t+h}$ or to one component of $W_t$, we estimate
\[
\mu_V(s^{IP},s^{CPI},u)=\EE[V_t\mid U_t=(s^{IP},s^{CPI},u)]
\]
by trivariate local-linear kernel regression. For the kernel, we work with prewhitened coordinates. Let $\bar U$ and $\widehat{\Sigma}_U$ denote the sample mean and covariance matrix of $U_t$, and define
\[
\widetilde U_t=(U_t-\bar U)\widehat{\Sigma}_U^{-1/2}.
\]
We use the same tilde notation for the transformed evaluation point $(s^{IP},s^{CPI},u)$. The local-linear estimate of $\mu_V(s^{IP},s^{CPI},u)$ is the fitted intercept from
\[
(\hat a,\hat b)
=
\argmin_{a,b}
\sum_{t=1}^{T_h}
\omega_t(s^{IP},s^{CPI},u)
\left(V_t-a-b'\left(\widetilde U_t-\left(\widetilde s^{IP},\widetilde s^{CPI}, \widetilde u\right)'\right)\right)^2,
\]
where
\[
\omega_t(s^{IP},s^{CPI},u)
=
\phi\!\left(\frac{\widetilde U_{t1}-\widetilde s^{IP}}{b_h}\right)
\phi\!\left(\frac{\widetilde U_{t2}-\widetilde s^{CPI}}{b_h}\right)
\phi\!\left(\frac{\widetilde U_{t3}-\widetilde u}{b_h}\right),
\quad
b_h=c\,T_h^{-1/7},
\]
and $\phi(\cdot)$ is the standard normal density. Rather than fixing the scalar constant $c$ a priori, we choose it separately for each outcome variable by blocked cross-validation. Let $\mathcal C=\{0.5,0.75,1,1.25,1.5,2,2.5,3,4\}$ denote the candidate set and let $\mathcal H_{CV}=\{0,6,12,\ldots,60\}$ denote the subset of horizons used for bandwidth selection. For each $h\in\mathcal H_{CV}$, we divide the available sample into five contiguous validation blocks. When one block is used for validation, we drop from the training sample not only that block but also an exclusion window of $\max\{6,h\}$ observations on each side, so as to reduce dependence induced by overlapping LP outcomes. We require at least 120 observations in the training sample and at least 20 observations in the validation block for a fold to be used.

Using the same smoother, we compute $\hat\mu_{y,h}(U_t)$ and $\hat\mu_W(U_t)$ at each observed $U_t$ and form
\[
\tilde y_{t+h}=y_{t+h}-\hat\mu_{y,h}(U_t),
\qquad
\tilde W_t=W_t-\hat\mu_W(U_t).
\]
We then estimate the linear control coefficients by OLS in the residualized regression
\[
\hat\theta_h=\argmin_{\theta}\sum_{t=1}^{T_h}\left(\tilde y_{t+h}-\theta'\tilde W_t\right)^2.
\]
The nonlinear component is recovered as
\[
\hat m_h(s^{IP},s^{CPI},u)
=
\hat\mu_{y,h}(s^{IP},s^{CPI},u)
-\hat\theta_h'\,\hat\mu_W(s^{IP},s^{CPI},u).
\]
For a target state $z=(s^{IP},s^{CPI})$ and shock size $\delta$, the estimated CAR is
\[
\widehat{CAR}^{NP}_h(z,\delta)
=
\frac{1}{T_h}\sum_{t=1}^{T_h}
\Big\{\hat m_h(z,u_t+\delta)-\hat m_h(z,u_t)\Big\}.
\]
For each candidate $c\in\mathcal C$, horizon $h\in\mathcal H_{CV}$, and fold $v$, we estimate the partially linear model on the corresponding training sample and compute the out-of-sample predictor
\[
\widehat y_{t+h}^{(-v,c)}
=
\hat\mu_{y,h}^{(-v,c)}(U_t)
+
\Big(W_t-\hat\mu_{W}^{(-v,c)}(U_t)\Big)'\hat\theta_h^{(-v,c)},
\qquad t\in\mathcal V_{h,v},
\]
where $\mathcal V_{h,v}$ is the validation set for horizon $h$ and fold $v$. We then aggregate the squared prediction errors across all usable horizons and folds and choose
\[
\widehat c
=
\arg\min_{c\in\mathcal C}
\frac{
\sum_{h\in\mathcal H_{CV}}
\sum_v
\sum_{t\in\mathcal V_{h,v}}
\Big(y_{t+h}-\widehat y_{t+h}^{(-v,c)}\Big)^2
}{
\sum_{h\in\mathcal H_{CV}}
\sum_v
|\mathcal V_{h,v}|
}.
\]
This procedure selects one bandwidth constant for each outcome variable, and that selected constant is then used for all horizons reported in the main text. In our sample, the selected constants are $\widehat c=2.5$ for industrial production and CPI, and $\widehat c=4.0$ for unemployment and the federal funds rate. Under this data-driven bandwidth choice, the resulting \textit{NPLP} responses in Figure~\ref{fig:irf.state.depend} display a state-dependence pattern qualitatively similar to the corresponding \textit{Feas} responses.

\end{appendix}

\end{document}